\documentclass{llncs}
\usepackage[dvipsnames,svgnames,x11names,hyperref]{xcolor}
\usepackage{amsmath}
\usepackage{xspace}
\usepackage{float}
\usepackage[T1]{fontenc}
\usepackage{mdframed}
\usepackage{mathtools}
\usepackage{enumitem}

\usepackage[linesnumbered,ruled]{algorithm2e}

\usepackage{algpseudocode}

\usepackage{thmtools}
\usepackage[colorlinks]{hyperref}
\hypersetup{colorlinks={true},linkcolor={blue},citecolor=magenta}

\usepackage{amssymb,amsfonts,amsmath,
	amscd,dsfont,mathrsfs}
\usepackage{complexity}
\usepackage{tikz}
\usetikzlibrary{decorations.pathreplacing,backgrounds}
\usepackage{multirow}
\usepackage{braket}
\usepackage{todonotes}
\usepackage{comment}
\usepackage{quantikz}

\usepackage{fullpage}

\usepackage{url}
\usepackage{bbm}
\usepackage[capitalize]{cleveref}
\hypersetup{breaklinks=true}

\renewcommand{\set}[1]{\left\{ #1 \right\}}
\newcommand{\abs}[1]{\left| #1 \right|}

\newcommand{\Id}{\mathbbm{1}}

\renewcommand{\proj}[1]{\ensuremath{|#1\rangle \langle #1|}}

\newcommand{\norm}[1]{\left\Vert {#1} \right\Vert}
 
\newcommand{\Tr}{\mathrm{Tr}}

\newcommand{\bit}{\{0,1\}}

\newcommand{\algo}{\mathcal}

\newcommand{\dbs}{{D^1D^2D^3}}
\newcommand{\keys}{{K_1K_2K_3}}
\newcommand{\dbswithc}{{D^1,D^2,D^3}}
\newcommand{\ppar}[1]{{^{(#1)}}}
\newcommand{\vf}{{V_\mathsf{fll}}}

\newcommand{\vr}{{V_\mathsf{rmv}}}
\newcommand{\dom}[1]{{\mathsf{Dom}(#1)}}
\newcommand{\im}[1]{{\mathsf{Im}(#1)}}
\newcommand{\cco}{\mathsf{cO}} 
\newcommand{\cp}{\mathsf{cpO}} 
\newcommand{\fc}{\mathsf{fC}} 
\newcommand{\pc}{\mathsf{pC}} 
\newcommand{\pu}{\mathsf{P}}
\newcommand{\fp}{\mathsf{F}}
\newcommand{\sw}{\mathsf{S}}
\newcommand{\dm}{\mathsf{M}}
\newcommand{\w}{\mathsf{W}}

\newcommand{\pkac}{E^{P_1, P_2}_{k_1,k_2,k_3}}
\newcommand{\vp}{{V_\mathsf{p}}}

\newcommand{\vpc}{{V_\mathsf{pc}}}
\newcommand{\vfpc}{{V_\mathsf{pc,fll}}}
\newcommand{\vrpc}{{V_\mathsf{pc,rmv}}}

\newcommand{\Pmiss}[3]{\Pi^{#1}_{#2,#3}}   

\newif\ifShowComments
\ShowCommentstrue
\newif\ifShowAuthors
\ShowAuthorstrue

\ifShowComments
\newcommand{\joe}[1]{{\noindent \textcolor{teal}{\emph{(Joe:  #1)}}}{}}
\newcommand{\cm}[1]{{\noindent \textcolor{red}{\emph{(Chris:  #1)}}}{}}
\newcommand{\st}[1]{{\noindent \textcolor{Orchid}{\emph{(Saliha:  #1)}}}{}}
\newcommand{\ga}[1]{{\noindent \textcolor{orange}{\emph{(Gorjan:  #1)}}}{}}
\else
\newcommand{\joe}[1]{}
\newcommand{\cm}[1]{}
\newcommand{\st}[1]{}
\newcommand{\ga}[1]{}
\fi

\everymath{\displaystyle}
\allowdisplaybreaks

\title{On The Simplest Quantum-Secure Block Cipher}
\ifShowAuthors
\author{Gorjan Alagic\inst{1,2} \and Joseph Carolan\inst{1} \and Christian Majenz\inst{3} \and Saliha Tokat\inst{3}}
\institute{University of Maryland \and National Institute of Standards and Technology \and Technical University of Denmark}
\else
\author{}
\institute{}
\fi

\begin{document}

\maketitle

\begin{abstract}
Pseudorandom permutations are ubiquitous in theoretical and applied cryptography. PRPs that offer security even against adversaries making quantum queries are of increasing interest, and used in applications ranging from constructing pseudorandom unitaries to separating SZK from BQP.

A successful framework for constructing classically-secure PRPs is the key-alternating Even-Mansour approach, which interleaves applications of public permutations with additions of round keys. The single-round construction is already classically secure in the ideal permutation model (IPM), with added rounds offering improved concrete security. However, in the quantum-query setting, the status of this framework is presently unclear. A simple quantum-query attack based on Simon's algorithm breaks the one-round cipher. For two or more rounds, security is only known against non-adaptive adversaries who must prepare all queries in advance.

In this work, we show that the two-round Even-Mansour cipher is information theoretically secure in the IPM against adversaries making polynomially-many adaptive forward and inverse quantum queries to all available oracles. Our proof uses compressed permutation oracles and a specially crafted isometry relating the ideal and real experiments. We also show that this construction is minimal, in the sense that essentially any cipher constructed via a single call to a public permutation is quantumly insecure. 

\end{abstract}

\section{Introduction}

A Key-Alternating Cipher (KAC) constructs a block cipher by alternating public permutation layers with secret key additions. The prototypical example is the $t$-round Even-Mansour construction~\cite{EM97,BKLSST12}, defined by 
\begin{equation}\label{eq:intro:em-multi}
    E_{k}(x) = P_t( \cdots P_2(P_1(x \oplus k_1) \oplus k_2) \cdots) \oplus k_{t+1}
\end{equation}
where $k = (k_1, k_2, \cdots, k_{t+1})$. Here each $P_j$ is a permutation of $\{0,1\}^n$, and each $k_j$ is sampled uniformly at random from $\{0,1\}^n$. Practical constructions such as AES follow a similar pattern to \eqref{eq:intro:em-multi}, with concrete public permutations and key additions performed according to a certain key schedule. 

\paragraph{Theoretical security and construction minimalism.} Studying the theoretical, provable security of concrete constructions such as AES has turned out to be difficult. A standard approach is thus to work in an appropriate idealized model~\cite{BR93,EM97,Shannon49,BRS02}. In the case of \eqref{eq:intro:em-multi}, the public permutations $P_j$ are then sampled uniformly at random, and access to $E_k$ and the $P_j$ is via (bidirectional) oracles only; this is the so-called \emph{ideal permutation model} (IPM). The first-order task is then to ascertain the number of queries a computationally unbounded adversary needs to distinguish this ``real world'' from an ``ideal world'' in which $E_k$ is replaced with an independently sampled permutation $R$. In this context, a natural question is how complex a KAC needs to be in order to achieve security, with \emph{round count} arguably being the most important metric. In the classical security setting, this question is settled: the one-round construction 
\begin{equation}\label{eq:intro:em}
    E_{k}(x) = P(x \oplus k_1) \oplus k_2
\end{equation}
is already secure, up to $O(2^{n/2})$ queries~\cite{EM97,DKS12}. 
This is not just an important theoretical result: the core idea of \eqref{eq:intro:em} is used in numerous practical applications, ranging from XTS-AES disk encryption to the Chaskey MAC~\cite{Rogaway04,Dworkin10XTS,MouhaEtAl14}.

\paragraph{The quantum-query setting.} The transition to quantum-resistant cryptography requires assessing the security of KAC constructions against quantum adversaries. An early surprise in this setting was that the standard one-round construction \eqref{eq:intro:em} has a quantum vulnerability: Simon's algorithm extracts the key using only $O(n)$ quantum queries to $E_k \oplus P$~\cite{KM12,KLLNP16}. Importantly, this does not yield a quantum-computational attack on real-world symmetric-key constructions like XTS-AES. Indeed, in those settings the key $k$ lives on a classical machine. The appropriate interface model should thus only grant \emph{classical} access to $E_k$, while retaining quantum access to $P$. Simon's algorithm is not applicable in this ``Q1 model'', and subsequent work showed that one-round Even-Mansour is secure in Q1, up to $O(2^{n/3})$ queries~\cite{ABKM22}. Moreover, the major real-world applications of \eqref{eq:intro:em} survive in the appropriate post-quantum setting (see, e.g.,~\cite{ABMS26}). 

At the same time, the security of $t$-round Even-Mansour for $t > 1$ in the setting where all oracles can be queried quantumly (called the ``Q2 model'') is unknown\footnote{Bai, Esmaili, and Mantri showed Q2 security for $t \geq 2$ against non-adaptive adversaries who must prepare all oracle queries prior to receiving any responses \cite{BEM25}.}. This leaves open the possibility that, in the quantum setting, the idea of constructing pseudorandom permutations via key alternation is \emph{structurally unsound}---regardless of the number of rounds---and only grants security when the Q1 interface is forced. This would mean that the KAC framework is unsuitable for the numerous applications requiring Q2-secure PRPs (or qPRPs), such as:
\begin{itemize}
    \item constructions of quantum pseudorandomness (e.g., unitaries, states, entanglement) and quantum money~\cite{MPSY24,MH25,FLMNW26,JMW24,BBSS23,JLS18}.
    \item classical encryption and authentication secure against superposition attacks~\cite{GHS16,AJOP20,AMRS20}.
    \item multi-copy public-key encryption of quantum data~\cite{AGKL24}.
    \item two-prover protocols for classical verification of quantum
    depth~\cite{CH22}.
    \item Post-quantum security proofs for hashing and signatures~\cite{Zhandry21,BHRV21}.
    \item efficient oracle separation between $\mathsf{SZK}$ and $\mathsf{BQP}$~\cite{AC17}.
\end{itemize}

\paragraph{Our results.} In this work, we show that the security of the key-alternating approach in the quantum setting does not rest fundamentally on the choice of interface model: two rounds suffice. Specifically, we show that the two-round Even-Mansour construction (i.e., \eqref{eq:intro:em-multi} for $t=2$) is a strong qPRP: it is indistinguishable from a truly random permutation in the Q2 model to any polynomial-query adversary. We also show that this is the best one can hope for, by giving a generic polynomial-quantum-query attack against any IPM block cipher construction that queries $P$ only once\footnote{Specifically, any construction which is itself a permutation of the same size as $P$.}. Arguably, two-round Even-Mansour is also quite simple when compared to known non-KAC qPRP constructions, such as the seven-round Feistel construction~\cite{Carolan26}.

\subsection{Technical summary of results}

We now give a more technical summary of our results and methods. As discussed above, our goal is to understand information-theoretically-secure block cipher (i.e., qPRP) constructions in the public permutation model. In our Q2 setting, all oracles are quantumly accessible. 

Specifically, let $E$ be a keyed cipher construction from a collection of public permutations $\{P_j\}$. The adversary gets oracle access to either (i.) $\{P_j\}$ and $E_k$ for uniformly random $k$ (the real world), or (ii.) $\{P_j\}$ and an independently sampled permutation $R$ (the ideal world). The adversary's task is to distinguish these two cases with non-negligible advantage. Access to all oracles is given via the standard quantum black-box oracle in both the forward and inverse directions. So, for example, access to $P_1$ means black-box access to the unitaries
\begin{equation}\label{eq:intro:quantum-oracle}
    \ket{x}\ket{y} \mapsto \ket{x}\ket{y \oplus P_1(x)}
    \qquad \text{and} \qquad
    \ket{x}\ket{y} \mapsto \ket{x}\ket{y \oplus P_1^{-1}(x)}\,.
\end{equation}
The adversary is granted unbounded computational power, and we are interested only in measuring their success probability as a function of the total number of queries made to the three oracles.
\subsubsection{All one-query constructions are insecure.}

We begin by showing that any construction that uses only a single query to the public permutation $P$ is insecure, provided the constructed cipher has a sufficiently large domain. This should be contrasted with the classical case, where one-round Even-Mansour provides a secure construction~\cite{EM97}. As discussed above, this is easily broken in Q2 using Simon's algorithm~\cite{KM12}. However, this attack takes crucial advantage of the structure of Even-Mansour~\cite{AR17}, and leaves open the possibility of an alternative one-query construction that is in fact secure.

A deterministic one-query cipher $Q:[D]\to[D]$ constructed from a public permutation $P:[N]\to[N]$ and secret key $k$, using one forward query to $P$, is without loss of generality of the form $Q_k^P(x)=b_k(x,P(a_k(x))),$
for publicly known keyed functions $a_k:[D]\to[N]$ and $b_k:[D]\times[N]\to[D]$ independent of $P$.
We require that $Q_k^P$ is a permutation of $[D]$ for every $P\in S_N$ and every key $k$. In particular, the query input $a_k(x)$ need not depend on the entirety of $x$, and the block sizes of $P$ and $Q$ need not agree. We take $K$ to represent the universe of keys, meaning $k\in K$, and write
\[
    N=2^n,\qquad D=2^m,\qquad
    \kappa=\lceil\log_2|K|\rceil=\operatorname{poly}(n).
\]
We assume that the domain of the cipher is not of polynomial size, as in this case the key may itself contain enough entropy to sample a random cipher without using $P$. In particular, we assume
\begin{equation}
    \kappa+1+\frac nm=o(D).
    \label{eq:one-query-size-regime}
\end{equation}
The adversary has quantum access to both directions of $P$ and $Q$, as above, although the attack below uses only forward queries. Our distinguisher uses $2t$ queries to $P$ and $Q$ to prepare $t$ copies of each of their Choi states, i.e.,
\begin{align}
    \ket{\psi_{P,Q}}
    &\coloneqq
    \left(
        \frac{1}{\sqrt N}\sum_{u\in[N]}\ket{u,P(u)}
    \right)^{\otimes t}
    \otimes
    \left(
        \frac{1}{\sqrt D}\sum_{x\in[D]}\ket{x,Q(x)}
    \right)^{\otimes t}.
    \label{eqn:choi-state}
\end{align}

We show that this is a valid distinguisher by showing that the ideal and real distributions on these states are information-theoretically distinguishable when the number of copies $t$ is $O(\kappa+1+n/m)$. In the ideal experiment, $P\sim S_N$ and $Q\sim S_D$ are independent.

\begin{lemma}
    Under \eqref{eq:one-query-size-regime}, the quantum distinguisher described above distinguishes any single-query block cipher construction from random with constant advantage using $O(\kappa+1+n/m)$ oracle queries in total. In particular, $O(\kappa+1)$ queries suffice when $m=\Omega(n)$.
    \label{lem:impossibility-result}
\end{lemma}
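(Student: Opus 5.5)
The plan is to reduce the problem to distinguishing two mixed states and then to show that the real-world state has small support. In the real world, the state $\ket{\psi_{P,Q}}$ with $Q=Q_k^P$ is averaged over $P\sim S_N$ and $k\sim K$, giving a density matrix
\[
\rho_{\mathrm{real}}=\mathbb{E}_{P,k}\proj{\psi_{P,Q_k^P}}.
\]
In the ideal world it is $\rho_{\mathrm{ideal}}=\mathbb{E}_{P}\proj{\phi_P}^{\otimes t}\otimes\mathbb{E}_{Q}\proj{\chi_Q}^{\otimes t}$, where $\phi_P$ and $\chi_Q$ are the single-copy Choi states. The optimal distinguisher is the Helstrom measurement, or simply the projector onto the support of $\rho_{\mathrm{real}}$. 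Its advantage is at least $1-\Tr(\Pi_{\mathrm{real}}\rho_{\mathrm{ideal}})$, and
\[
\Tr(\Pi_{\mathrm{real}}\rho_{\mathrm{ideal}})\le \mathrm{rank}(\rho_{\mathrm{real}})\cdot\lambda_{\max}(\rho_{\mathrm{ideal}}).
\]
It therefore suffices to upper bound the real-world rank and the ideal-world largest eigenvalue.

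For the ideal side I will use the fact that $\rho_{\mathrm{ideal}}$ is a tensor product. The $Q$-part $\mathbb{E}_Q\proj{\chi_Q}^{\otimes t}$ is a mixture over $D!$ pure states. For $t\le D$, distinct permutations give Choi-state $t$-th powers that are nearly orthogonal as vectors in the symmetric subspace. The cleaner route is to bound $\lambda_{\max}$ by the maximal overlap with a fixed unit vector. Expanding, $\ket{\chi_Q}^{\otimes t}=D^{-t/2}\sum_{\vec x}\ket{\vec x,Q(\vec x)}$. For $\ket{v}=\sum_{\vec x,\vec y}v_{\vec x\vec y}\ket{\vec x,\vec y}$, we get $\mathbb{E}_Q|\langle v|\chi_Q^{\otimes t}\rangle|^2\le \max_{\vec x,\vec y}\Pr_Q[Q(\vec x)=\vec y]$ up to combinatorial factors. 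A Cauchy--Schwarz computation, restricted to tuples $\vec x$ with distinct entries (which dominate once $t\ll\sqrt D$, and otherwise are handled by symmetrization), should give $\lambda_{\max}\lesssim t!/(D)_t\approx D^{-t}\,t^{O(t)}$. The $P$-part similarly contributes $\lesssim N^{-t}t^{O(t)}$. Together this gives $\lambda_{\max}(\rho_{\mathrm{ideal}})\le (ND)^{-t}\cdot 2^{O(t\log t)}$. I expect this eigenvalue estimate, done carefully for permutations rather than functions, to be the main obstacle. The first thing I will try is to note that $\mathbb{E}_Q\proj{\chi_Q}^{\otimes t}$ commutes with the $S_D\times S_D$ action $x\mapsto\sigma x$, $y\mapsto\tau y$, and decompose accordingly. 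If that becomes messy, I will instead replace the permutation by a random function (the two are close in $t$-wise statistics for $t\ll\sqrt D$) and compute for functions, where the state is an explicit product.

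For the real side, the key observation is that, for fixed $k$, the state $\ket{\psi_{P,Q_k^P}}$ is a deterministic isometric image of the $P$-data alone. Concretely, there is a key-dependent isometry $V_k$ (independent of $P$) such that $\ket{\psi_{P,Q_k^P}}=V_k\bigl(\ket{\phi_P}^{\otimes t}\otimes\ket{\phi_P}^{\otimes t}\bigr)$. It is built from controlled-$a_k$ and $b_k$ maps acting on extra copies of the $P$ Choi state: relabel $\ket{u,P(u)}$ into $\ket{x,b_k(x,P(a_k(x)))}$ via the uniform superposition over $x$. If this exact isometric form fails because $a_k$ is not injective (amplitudes then pick up multiplicities), I will instead argue directly that the span of $\{\ket{\psi_{P,Q_k^P}}\}_{P,k}$ lies in $\bigcup_k V_k(\mathrm{Sym}^{2t})$ for a suitable fixed isometry. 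In either case,
\[
\mathrm{rank}(\rho_{\mathrm{real}})\le |K|\cdot\dim\mathrm{span}\{\ket{\phi_P}^{\otimes 2t}\}\le 2^{\kappa}\cdot\binom{N^2+2t}{2t}.
\]
The more refined bound I am aiming for is $2^\kappa\cdot N^{t}\cdot 2^{O(t\log t)}$ on the $P$-side times $N^{t}2^{O(t\log t)}$ for the $Q$-side. This reflects that the $Q$-register carries no fresh entropy beyond $P$ and $k$.

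Combining, $\Tr(\Pi_{\mathrm{real}}\rho_{\mathrm{ideal}})\le 2^{\kappa}\,N^{t}\,D^{-t}\,2^{O(t\log t)}$, i.e.\ roughly $2^{\kappa+tn-tm+O(t\log t)}$. Because the $P$-marginal is identical in both worlds, the saving per $Q$-copy is about $m$ bits against a cost of the $P$-entropy that $Q$ can reuse. Making this precise means comparing against the correct ideal $P$-marginal, so that the $N^t$ factors cancel and leave $2^{\kappa}\cdot(\text{small})\cdot D^{-t'}$. Choosing $t=\Theta(\kappa+1+n/m)$ copies then drives the bound below $1/2$, giving constant advantage with $2t=O(\kappa+1+n/m)$ queries. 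The regime \eqref{eq:one-query-size-regime} guarantees $t=o(D)$, so the approximations used in the eigenvalue estimate are valid. When $m=\Omega(n)$ the term $n/m$ is $O(1)$, which yields the stated $O(\kappa+1)$ query bound.
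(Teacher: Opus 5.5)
Your reduction to $\operatorname{rank}(\rho_{\mathrm{real}})\cdot\lambda_{\max}(\rho_{\mathrm{ideal}})$ cannot give anything below $1$. The reason is that $\rho_{\mathrm{ideal}}$ is extremely far from flat on its support.

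First, your eigenvalue estimate is essentially tight from below. With $\ket{\Phi_L}=L^{-1}\sum_{x,y\in[L]}\ket{x,y}$ one has $\langle\Phi_L|F\rangle=L^{-1/2}$ for every permutation $F$. The vector $\ket{\Phi_N}^{\otimes t}\otimes\ket{\Phi_D}^{\otimes t}$ therefore shows $\lambda_{\max}(\rho_{\mathrm{ideal}})\ge(ND)^{-t}$.

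Second, the real rank is nowhere near $2^{\kappa}N^{2t}$. Take $D=N$ and $Q_k^P(x)=P(x\oplus k)$, which is a legitimate one-query cipher in the lemma's regime.
\begin{itemize}
\item Here $\ket{Q_k^P}$ is just $\ket{P}$ with its input register shifted by $k$. So for each key, the real state $\rho_k$ is a fixed unitary image of $\mathbb{E}_P\proj{P}^{\otimes 2t}$.
\item Hence $\operatorname{rank}(\rho_{\mathrm{real}})\ge\operatorname{rank}(\rho_k)\ge 1/\Tr(\rho_k^2)=N^{4t}/\mathbb{E}_{\sigma\sim S_N}[\operatorname{fix}(\sigma)^{4t}]=N^{4t}/B_{4t}$, where $B_{4t}$ is the Bell number and $4t\le N$.
\item Your product is then at least $N^{2t}/B_{4t}\ge\bigl(N/(16t^2)\bigr)^{2t}\gg 1$ for every $t\ll\sqrt N$, whatever $\kappa$ is.
\end{itemize}
So your ``refined'' rank bound $2^{\kappa}N^{2t}2^{O(t\log t)}$ is false. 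The crude bound $2^{\kappa}\binom{N^2+2t}{2t}$ fails by the same margin. Your own final exponent $\kappa+t(n-m)+O(t\log t)$ does not even decrease in $t$ when $n\ge m$.

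The proposed repair, ``comparing against the correct ideal $P$-marginal so that the $N^t$ factors cancel,'' has no counterpart in this framework. A rank-times-top-eigenvalue bound never sees that the two worlds share their $P$-marginal. Conditioning on $P$, where the real $Q$-part would indeed have rank at most $|K|$, would require a classical description of $P$, which $t$ copies of its Choi state do not give you.

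The missing idea is a test that, key by key, checks the $Q$ copies \emph{against} the $P$ copies without learning $P$. This is exactly where bijectivity of $Q_k^P$ is essential, and your argument never uses it. The paper first proves \Cref{lem:one-query-structure}: for each key, $b_k(\cdot,y)$ maps $X_{k,u}=a_k^{-1}(u)$ bijectively onto $Y_{k,u}\sqcup Z_{k,y}$, where these sets partition $[D]$ and $|Z_{k,y}|=d_k$. It then splits on $\alpha_k=Nd_k/D$.
\begin{itemize}
\item If $\alpha_k\ge 1/4$, a $P$-independent isometry decodes the $Z_k$-part of $\ket{Q_k^P}$ into $\sqrt{\alpha_k}\ket{P}\ket{+_{d_k}}$. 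Pairwise swap tests against the genuine $P$ copies then accept the real state with certainty. They accept the ideal state with probability at most $(15/16)^t+1/\lceil D/8\rceil!$, via a fixed-point count for $Q\sim S_D$.
\item If $\alpha_k<1/4$, a classical consistency test on measured $Q$ samples has perfect completeness. Its ideal acceptance is at most $e^{-t/32}+N/(D/2)^{\lfloor t/4\rfloor}$, and this is the source of the $n/m$ term.
\end{itemize}
The resulting per-key projectors $\Pi_k$ are combined by thresholding $\sum_k\Pi_k$ at $1/2$. This keeps real acceptance at least $1/2$ and costs only a factor $2|K|$ on the ideal side, which yields $t=O(\kappa+1+n/m)$.
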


We prove the above in \Cref{app:one-query}. Note that this distinguisher is not in general computationally efficient. However, this is inherent if computationally-secure qPRPs exist: the equal-size case includes $Q_k^P=P\circ\sigma_k$ where $\sigma_k$ is such a PRP~\cite{Zhandry25PRP}. Such a construction can be broken via offline key search and is thus not information-theoretically secure. The result still requires that the constructed $Q_k^P$ is actually a permutation rather than indistinguishable from one. We leave removing this technical limitation to future work. The domain-size assumption is also necessary in some form: for a one-bit cipher and a uniform key $k\in[N]$, the one-query construction $Q_k^P(x)=x\oplus\operatorname{lsb}(P(k))$ is exactly a uniform permutation of two points independent of $P$.

\subsubsection{Two-round Even-Mansour is a qPRP.}

Our main result is that the two-round Even-Mansour construction is a qPRP in the ideal permutation model. The construction is as follows. Sample uniformly random permutations $P_1$ and $P_2$ of $\{0,1\}^n$ and uniformly random keys $k_1, k_2, k_3 \in \{0,1\}^n$, and define
\begin{equation}\label{eq:intro:em-2}
    \pkac(x) = P_2(P_1(x\oplus k_1)\oplus k_2)\oplus k_3.
\end{equation}
When the choices of permutation and key have been fixed, we will write $E_k$ in place of $\pkac$ to simplify notation. As discussed above, the adversary gets oracles for $P_1$ and $P_2$, and either $E_k$ (in the real world), or an independently sampled permutation $R$ (in the ideal world). Access to all oracles is quantum and bidirectional. Our main result is then the following.

\begin{theorem}\label{thm:intro:main} Let $\mathcal{A}$ be a computationally unbounded quantum algorithm making at most $q \leq \sqrt{N/2}$ queries to its oracles. Then
\begin{align*}
   \abs{\Pr[\algo A^{P_1, P_2, P_3}()=1]-\Pr[\algo A^{P_1, P_2, \pkac}()=1] } \leq O\!\left(\sqrt{q^3/N}\right)+ \varepsilon_{\textsf{CPO}}\,,
\end{align*} 
where $\varepsilon_{CPO} \leq O(q/\sqrt{N})$ is the soundness of the compressed permutation oracle simulation. 
\end{theorem}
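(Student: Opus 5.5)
Our plan is to replace all three oracles in both experiments by compressed permutation oracles (CPOs), paying $\varepsilon_{\textsf{CPO}}$ for the simulation, and then to relate the ideal and real worlds by an explicit isometry on the database registers. In the ideal world we maintain three independent compressed permutation databases $D^1, D^2, D^3$ for $P_1, P_2, P_3 = R$. In the real world we maintain the same three databases plus a key register holding $\keys$ in uniform superposition. The third oracle $E_k$ is answered by composing forward queries to $D^1$ and $D^2$ with the key shifts. This is done coherently, so that a query to $E_k$ adds the pairs $(x\oplus k_1, u)$ to $D^1$ and $(u\oplus k_2, v)$ to $D^2$.

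The key step is to construct an isometry $V$ from the ideal state space (databases $\dbs$) into the real state space (databases $D^1D^2$ together with the keys). Intuitively, $V$ takes each entry $(x,y)$ of $D^3$, samples fresh middle values, and "fills" them into $D^1$ and $D^2$ as the chain $x\oplus k_1 \mapsto u \mapsto v=y\oplus k_3$, with $u\oplus k_2$ fresh. This is the filling operation $\vf$; its inverse $\vr$ strips out such chains. We then need to show two things. First, $V$ maps the initial ideal state to the initial real state. Second, for each oracle type and direction, $V\circ O^{\mathrm{ideal}} \approx O^{\mathrm{real}}\circ V$ up to $O(\sqrt{q/N})$ or better per query, on states with at most $q$ database entries. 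A hybrid argument over the $q$ queries then gives a total error of $O(q\cdot\sqrt{q/N}) $ in the amplitude sense. We would track this in squared norm via a progress-measure argument to land at $O(\sqrt{q^3/N})$ for the distinguishing advantage, since the bad events accumulate additively in probability, $q\cdot q^2/N$, and the advantage is the square root.

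The commutation is checked per case. For an $E_k$ query on a fresh input in the ideal world, the CPO samples a fresh $y$. In the real world, two fresh samples $u,v$ appear; with the key uniform and $k_3$ shifting $v$, the output $y$ is uniform, and the hidden $u$ is exactly what $V$ sampled. For $P_1$ or $P_2$ queries, the real-world query may hit a point already occupied by a hidden chain (e.g. $x$ equals some $x_i\oplus k_1$). This costs amplitude of order $q/N$ per query, because over uniform $k_1$ the probability of such a collision is at most $q/N$, and similarly for $u\oplus k_2$ colliding with $D^2$ entries. We would define bad projectors $\Pmiss{}{}{}$-type operators onto key/database configurations with such collisions, and bound their norm on the reachable states by $O(\sqrt{q/N})$. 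Inverse queries are handled symmetrically, using $k_3$ and $k_2$. The condition $q\le\sqrt{N/2}$ ensures that database sizes stay well below $\sqrt N$, so the permutation-sampling normalizations differ negligibly.

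The main obstacle will be the commutation for $P_1$/$P_2$ queries that touch points involved in hidden chains. This is the step where a quantum adversary can probe in superposition, and where the key register is entangled with the databases, so that "fresh key" arguments must be made on the CPO's Fourier/uncomputed side rather than classically. Decompression of a chain entry is the subtle part, i.e. when an inverse $E_k$ query removes a $D^3$ entry and the real world must remove two correlated entries. There we would first try to show that $\vr$ applied after the real CPO decompression agrees with ideal decompression, except on the collision subspace. We would then bound that subspace's weight using the uniformity of $k_2$ over the $O(q^2)$ pairs of middle values, yielding $O(q/N)$ per query in squared norm.
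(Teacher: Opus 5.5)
Your architecture is the paper's: compressed oracles for all three permutations, an isometry that turns each $D^3$ entry $(x,y)$ into a chain $x\oplus k_1\mapsto u\mapsto y\oplus k_3$ with $u$ uniform over admissible middle values, and a per-query hybrid costing $O(\sqrt{t/N})$ from key uniformity. The gap is that the isometry as you describe it does not exist, and repairing it is where the paper's main additional idea lies.

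\textbf{The isometry.} Your $V$ lands in $D^1D^2$ plus keys, i.e.\ it fills chains and discards $D^3$, and such a map is not isometric. Take the ideal state with genuine entries $D^1=\{a\mapsto u\}$, $D^2=\{w\mapsto v\}$, $D^3=\emptyset$, and the state with $D^1=D^2=\emptyset$, $D^3=\{x\mapsto y\}$. These are orthogonal, yet their images share the branch $k_1=x\oplus a$, $k_2=u\oplus w$, $k_3=y\oplus v$ with middle value $u$. A real $P_1$/$P_2$ pair that happens to be $k_2$-linked cannot be told apart from a hidden chain. This matters because \Cref{lem:approx-uhlman} needs an exact isometry.

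\textbf{How the paper fixes it.}
\begin{enumerate}
\item It keeps $D^3$ in the real world. It builds $V=\vr\fp_{D^2}\vf\fp_{D^2}$ from controlled exchange unitaries: $\vf$ fills only when both targets are $\bot$, and $\vr$ toggles a $D^3$ entry whenever it sees a $k_2$-linked pair. So $V$ is unitary, and $D^3$ may stay nonempty on bad branches.
\item The idea your sketch is missing: it interleaves the good-database projector $\Pi^g$ of \Cref{def:pi-g} after every ideal query. This projects onto the uniform superposition over $k_2$ with $\dom{D^2}\cap(\im{D^1}\oplus k_2)=\emptyset$, at cost $O(\sqrt{t/N})$ per step (\Cref{lem:pi-star-violation,lem:q-star-interleaved-vs-not}).
\item This is where $q^2\le N/2$ enters, through $|\kappa_{D^1,D^2}|\ge N-t^2$. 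It has nothing to do with permutation-sampling normalizations.
\item Goodness makes every factor of $\vf$ ``fill or identity'' (\Cref{rem:fill-or-identity}). Images then have nonnegative coefficients and preserve orthogonality, which every commutator estimate for $P_1$ and $P_2$ relies on.
\item Goodness is also what gives $[\dm,V]=0$ on good injective databases (\Cref{lem:v-dm-comm}). Your claim that inverse queries are ``handled symmetrically'' needs exactly this.
\end{enumerate}

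\textbf{Two smaller corrections.}
\begin{enumerate}
\item Removing a $D^3$ entry is not special to inverse $E_k$ queries: a forward compressed query on a recorded point uncomputes it just as well. The paper treats this inside the forward $P_3$ analysis, in the Hadamard-basis case $i=2$, using $\Gamma_2$ and the uniform $k_3$ to absorb the constraint $z\oplus k_3\notin\im{D^2}$. It then obtains all inverse queries by conjugating with $\dm$ (\Cref{thm:inv-to-fwd-hybrid}).
\item Your squared-norm ``progress measure'' is unjustified, since coherent per-query errors do not add in squared norm unless they are orthogonal. It is also unnecessary: the triangle inequality over the hybrids already gives $\sum_{t\le q}O(\sqrt{t/N})=O(\sqrt{q^3/N})$, which is exactly the paper's count.
\end{enumerate}
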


The best-known quantum-query attacks on two-round Even-Mansour have a query complexity of $\tilde O(2^{n/2})$. The simplest of these can be viewed as a Grover search for $k_3$; this leverages the fact that, given $k_3$ and inverse access to $P_2$, recovering the remainder of the key can be done using the Simon-based one-round attack~\cite{CGL22,Zhang24}.

We remark that our bound is the sum of two terms, one of which is the soundness error upper bound for Carolan's compressed permutation oracle~\cite{Carolan26,CM26}.

\paragraph{Technical challenges.}

We now briefly outline some technical challenges in establishing \Cref{thm:intro:main}. We first remark that translating proof techniques designed for the classical or Q1 setting does not seem promising. Indeed, the classical and Q1 security of two-round Even-Mansour (against poly-query adversaries) follows immediately from the respective security of the one-round construction---which is insecure in the Q2 setting. This transition from fundamentally insecure to fundamentally secure with the addition of a second round is thus a new phenomenon to contend with. Moreover, some variants of the two-round construction, such as setting $P_1 = P_2$ and $k_1=k_2=k_3$ are actually \emph{still vulnerable} to a Simon-style $O(n)$-query key-recovery attack~\cite{KLLNP16}. This also appears to be a uniquely Q2 phenomenon: whether keys and permutations are independent or correlated across rounds can affect the concrete classical security bound, but generally does not turn secure constructions into insecure ones.

As with most quantum-query idealized-model lower bounds, translating classical proofs to our setting necessitates overcoming the challenge of recording and inspecting queries. While the recent development of compressed permutation oracles~\cite{Carolan26} provides an important tool in this context, many challenges remain.

\paragraph{High-level proof ideas.}

To prove that the two-round Even-Mansour construction is indistinguishable from a random permutation, we consider an adversary with oracle access to one of the two, together with oracle access to the inner permutations $P_1$ and $P_2$. Since permutations are invertible, the adversary may query all of these in both directions. We instantiate every uniformly random permutation using the compressed permutation oracle (CPO) technique, which records the queries in a database.

We connect the two worlds by a hybrid argument in \Cref{thm:hybrid-diff}. The \emph{ideal world} is the one in which the adversary is given a random permutation $P_3$ for the whole experiment; the \emph{real world} is the one in which it is given the two-round Even-Mansour construction throughout. To interpolate between them, we define an isometry from ideal world states to real world states, and a sequence of hybrids in which the ideal-world oracle is replaced by the real one, one query at a time, after applying the isometry.

The isometry $V=\vr\fp_{D^2}\vf\fp_{D^2}$ defined in \Cref{def:isometry-k1-k2-k3} keeps the databases recording queries to the inner permutations and re-expresses the $P_3$ queries in terms of the databases of $P_1$ and $P_2$ together with the keys (by $\vf$). Concretely, each entry in the $P_3$ database induces the entries in the $P_1$ and $P_2$ databases that a two-round Even-Mansour evaluation of that query would have produced. This is done in a single step, using the flipped database (by $\fp_{D^2}$) of $P_2$. In a second step, the isometry uncomputes the corresponding $P_3$ entry (by $\vr$), since in the following hybrid that query is answered by the construction and therefore no longer appears in the ideal database. 

We bound the difference between consecutive hybrids for forward queries only, having first reduced inverse queries to forward ones in \Cref{thm:inv-to-fwd-hybrid}. The reduction uses an operator $\dm$, which swaps and flips the database registers so as to prepare the databases in the form on which a forward query performs the work of an
inverse one. For this reduction we must also bound the commutator of $\dm$ with the isometry; this commutator vanishes on \emph{good} databases as in \Cref{def:pi-g}, i.e., those for which no output of $P_1$ differs from an input of $P_2$ by the middle key $k_2$. The reason is that the isometry is defined symmetrically for forward and inverse queries: applying it to a database in standard orientation agrees with applying it to the $\dm$-image of that database.

To bound the hybrid difference for forward queries, we treat each of the three oracles separately in \Cref{thm:p1-bound,thm:p2-bound}, \ref{thm:p3-bound}. In addition to the database projectors interleaved with the queries, we apply further projectors --- also controlled on the query register $X$ --- immediately before the differing oracles, in order to force the two hybrids to return the same answer. For $P_1$ and $P_2$ queries these projectors ensure that the isometry leaves the database register holding the current query untouched, and does not link it to the other registers; consequently the behaviour of the isometry elsewhere does not depend on what that register holds. For $P_3$ queries the situation is reversed: the projectors ensure that the isometry \emph{carries} the value at the ideal world register to the associated real world registers, since this is precisely the query for which the ideal and real databases differ.

For $P_3$ queries, we split the isometry into two parts controlled on the query register as in \Cref{def:vp-vpc}, discarding the complement since it plays no role in propagating the queried database register output to the $P_1$ and $P_2$ databases. This splitting is only possible for $P_3$. The first part of the isometry is controlled on the queried database register, so the subsequent process depends on what happened to the database value after the query — notably, uncomputation or a trivial (non-defining) query are possibilities. We therefore perform a case analysis with respect to the change on this register, employing the Hadamard basis representation, in which computation and uncomputation are clearly distinguished.

Our choice of good databases is particularly useful in orthogonality-preserving arguments. Good databases enforce a directional structure on the isometry: in the first part (filling the $P_1$ and $P_2$ databases), they force the isometry to either fill an entry or act trivially, ruling out uncomputation; in the second part (removing the ideal world queries), they force it to remove an entry rather than define a new one. This directional structure is precisely what ensures orthogonality preservation: on good databases, the isometry generates only positive coefficient branches.

Finally, a remark on our use of the CPO given in \Cref{sec:cpo}. Throughout, we replace the
compression and decompression operators of the permutation oracle by their function-oracle counterparts, and maintain injectivity of the databases by means of database projectors instead. This is possible because the two pairs of operators differ only slightly, as shown in \Cref{lem:diff-full-and-partial-decomp}. The resulting loss does not worsen the bound obtained from the hybrid difference.

\section{Preliminaries}

We use $[N]=\set{0,\dots,N-1}\cong\set{0,1}^n$ where $N=2^n$. We employ the ordered product as
\begin{equation}
\prod_{x\in [N]} A\ppar{x} \;=\; A\ppar{N-1} \cdots A\ppar{1} A\ppar{0}.
\end{equation}
We use $S \oplus k$ to denote the subset $S$ of $n$-bit strings shifted by $k \in \bit^n$, i.e., $S \oplus k \coloneqq \{\, s \oplus k : s \in S \,\}$ for $S \subseteq \bit^n$ and $k \in \bit^n.$ Throughout, $\im{D}$ and $\dom{D}$ refer to the image and domain of a database viewed as a partial function. For $x\in[N]$, $\bot\oplus x$ is interpreted as $\bot$, hence it is neither in the domain nor the image of any database. We use $\mathsf{Tup}(J,S)$ to denote the set of tuples of size $|S|$, indexed by $S$, with elements in $J$. The state $\ket{+^n}$ is the uniform superposition over $[N]$.

\begin{lemma}{\cite[Lemma~2.2]{Carolan26}}
    Let $\ket{\psi} \in \algo H_{ A} \otimes \algo H_{ B}$ and $\ket{\phi} \in \algo H_{ A} \otimes \algo H_{ C}$ be normalized quantum states, and let $V : \algo H_{ C} \rightarrow \algo H_{ B}$ be an isometry. Let $\sigma_A=\Tr_B[\proj{\psi}_{AB}]$ and $\tau_{ A}=\Tr_{ C}[\proj{\phi}_{AC}]$. Then we have \begin{align}
        \frac{1}{2}\norm{\sigma_{ A} - \tau_{ A}}_1 &\leq \norm{\ket{\psi}_{{AB}} - V_{ C} \ket{\phi}_{{AC}}}.
    \end{align}
    \label{lem:approx-uhlman}
\end{lemma}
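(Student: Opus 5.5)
The plan is to reduce the claim to a pure-state comparison on the joint system $AB$, and then use the contractivity of the trace distance under partial trace.

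First I would set $\ket{\phi'}_{AB} \coloneqq (\Id_A\otimes V)\ket{\phi}_{AC}$. Since $V$ is an isometry, $\ket{\phi'}$ is a normalized state in $\algo H_A\otimes\algo H_B$. Its reduced state on $A$ is exactly $\tau_A$. To see this, take a Schmidt decomposition $\ket{\phi}=\sum_i \sqrt{\lambda_i}\ket{a_i}\ket{c_i}$. The vectors $V\ket{c_i}$ are again orthonormal, so $\Tr_B[\proj{\phi'}]=\sum_i\lambda_i\proj{a_i}=\tau_A$. Equivalently, $\Tr_B[(\Id\otimes V)X(\Id\otimes V^\dagger)]=\Tr_C[X]$ because $V^\dagger V=\Id_C$.

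Next I would invoke monotonicity of the trace norm under the partial trace over $B$. This gives
\begin{align*}
\frac12\norm{\sigma_A-\tau_A}_1 \le \frac12\norm{\proj{\psi}-\proj{\phi'}}_1 = \sqrt{1-\abs{\braket{\psi|\phi'}}^2},
\end{align*}
where the equality is the standard formula for the trace distance between two pure states.

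Finally I would compare this with the Euclidean distance. Write $c=\abs{\braket{\psi|\phi'}}\in[0,1]$. Then
\begin{align*}
\norm{\ket{\psi}-\ket{\phi'}}^2 = 2-2\,\mathrm{Re}\braket{\psi|\phi'} \ge 2-2c .
\end{align*}
On the other hand, $1-c^2=(1-c)(1+c)\le 2(1-c)$. Combining the two gives $\sqrt{1-c^2}\le\norm{\ket{\psi}-V_C\ket{\phi}}$, which is the claim.

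None of the steps is a real obstacle. The only point that needs a line of justification is that applying the isometry on $C$ leaves the marginal on $A$ unchanged.
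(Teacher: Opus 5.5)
Your proof is correct and complete. The paper gives no proof of this lemma and simply imports it from \cite{Carolan26}. Your route is the standard argument for it: $\Id_A\otimes V$ preserves the $A$-marginal, the trace norm contracts under $\Tr_B$, and the pure-state trace distance $\sqrt{1-\abs{\langle\psi|\phi'\rangle}^2}$ is at most the Euclidean distance $\norm{\ket{\psi}-\ket{\phi'}}$.
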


\begin{definition}[Database]
    We call $D$ a database if it is a partial function from $[N]$ to $[N]$, equivalently, a total function from $[N]$ to $[N] \cup \{\bot\}$. We write $\mathbf D$ for the set of all databases, $\dom{D}=\{x\in[N]: D(x)\neq \bot \}$, and $\im{D}=\{D(x)\in[N]: D(x)\neq \bot \}$. The size of the database is $\abs{D}\coloneqq\abs{\dom{D}}$.
\end{definition}
A single database projector written without a register subscript is applied to all
database registers, i.e., $\Pi^{a} \coloneqq \Pi^{a}_{D^1}\Pi^{a}_{D^2}\Pi^{a}_{D^3}$.
For a projector $\Pi^{a}$ we write $\mathbf{D}_{a}$ for its range, and for
several projectors we list their labels, so that $\mathbf{D}_{a,b} \coloneqq \mathbf{D}_{a} \cap \mathbf{D}_{b}$. 
\begin{definition}
    The flip operator and swap operator are defined as
\begin{align}
    \fp\ket{D} \coloneqq \begin{cases}
        \ket{D^{-1}} & \text{if $D$ is injective},\\
        \ket{D} & \text{otherwise},
    \end{cases}
    \qquad
    \sw\ket{a,b}_{AB} \coloneqq \ket{b,a}_{AB}.
\end{align}
\end{definition}

\subsection{Compressed Permutation Oracle}\label{sec:cpo}

We recall the compressed permutation oracle of
\cite{Carolan26}. Let $N=2^n$ and let $\varphi$ be
uniformly distributed over $S_N$. Its bidirectional quantum oracle is
\[
    \mathcal O_\varphi\ket{b,x,y}_{BXY}
    =\ket{b,x,y\oplus\varphi^{\,1-2b}(x)}_{BXY}.
\]
The compressed oracle maintains a private register $D$ whose basis
states are injective partial functions $D:[N]\hookrightarrow [N]$, where $D$ is thought of as a truth table with undefined entries represented by a $\bot$ symbol. The database is spanned by such truth tables, and is initialized to $\ket{\boldsymbol\bot}=\ket{\bot}^{\otimes N}$.

For an injective database $D$ with $x\notin\dom{D}$, write
$D[x\rightarrow y]$ for its extension by $(x,y)$ and define
\[
    \ket{+_{x,D}}
    \coloneqq
    \frac{1}{\sqrt{N-|D|}}
    \sum_{y\notin\im{D}}\ket{D[x\rightarrow y]}.
\]
On the span of $\ket{D}$ and its extensions at $x$, define
\begin{align}
    \pc_{x,D}
    &\coloneqq \operatorname{Exc}(\ket{D}, \ket{+_{x,D}}), & \operatorname{Exc}(\ket{A}, \ket{B}) &=
    \Id-\ket{A}\bra{A}
       -\ket{B}\bra{B}
       +\ket{A}\bra{B}
       +\ket{B}\bra{A}. \nonumber
\end{align}
Taking the direct sum over such databases gives $\pc_x$;
set $\pc_{XD}=\sum_x\ket{x}\bra{x}_X\otimes\pc_x$.
Define the evaluation and inversion operators by
\[
    \pu\ket{x,y}\ket{D}
    \coloneqq
    \begin{cases}
        \ket{x,y\oplus D(x)}\ket{D}, & x\in\dom{D},\\
        \ket{x,y}\ket{D},           & x\notin\dom{D}.
    \end{cases}
\]
The compressed permutation oracle is formally
\[
    \cp
    \coloneqq
    \sum_{b\in\{0,1\}}\ket{b}\bra{b}_B
    \otimes
    \fp_D^{\,b}\pc_{XD}\pu_{XYD}\pc_{XD}^{\dagger}
    (\fp_D^{\,b})^\dagger.
\]
\begin{theorem}[Soundness {\cite[Corollary~4.3]{CM26}}]\label{thm:soundness}
For any quantum algorithm $\mathcal A$ making at most $q$
bidirectional queries, let $\rho_A^{\mathrm{perm}}$ be its final
state averaged over $\varphi\leftarrow S_N$, and let
$\rho_A^{\mathrm{comp}}$ be its final state when querying $\cp$,
after tracing out $D$. Here $A$ includes all algorithm registers.
Then
\[
    \frac12
    \left\|\rho_A^{\mathrm{perm}}
              -\rho_A^{\mathrm{comp}}\right\|_1
    \leq O\!\left(\frac{q}{\sqrt{N}}\right).
\]
Therefore, distinguishing the compressed permutation oracle from a random permutation oracle requires $\Omega(\sqrt{N})$ queries.
\end{theorem}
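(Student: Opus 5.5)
The statement immediately above is the soundness theorem for the compressed permutation oracle (\Cref{thm:soundness}), cited from \cite{CM26}. I would prove it by a hybrid over the $q$ queries, comparing the purified random-permutation oracle with $\cp$ through an isometry, and then converting the resulting norm bound into a trace-distance bound via \Cref{lem:approx-uhlman}.

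\textbf{Purification and encoding isometry.} First I would purify the random permutation. Introduce a register holding $\frac{1}{\sqrt{N!}}\sum_{\varphi\in S_N}\ket{\varphi}$, and let each query act as the controlled oracle $\mathcal O_\varphi$. Tracing out the permutation register recovers $\rho_A^{\mathrm{perm}}$.

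Next I would define an encoding isometry $W$ from the database register $D$ to the permutation register. Take $W$ to be the ``full decompression'': fill every $\bot$ entry by applying $\pc_x^\dagger$-type operations in a fixed order, i.e. $W=\prod_{x\in[N]}\pc^{\dagger}_{x}$ restricted to injective $D$. This maps $\ket{\boldsymbol\bot}$ to the uniform superposition over $S_N$. On injective databases it sends $\ket{D}$ to a uniform superposition over permutations extending $D$, up to the sign conventions of $\operatorname{Exc}$.

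\textbf{Per-query comparison.} The core step is to bound, for each query $i$, the quantity
\[
\norm{\bigl(\mathcal O W - W\,\cp\bigr)\ket{\phi_i}},
\]
where $\ket{\phi_i}$ is the compressed-world joint state before query $i$.

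For forward queries ($b=0$), $\cp$ decompresses at $x$, evaluates, and recompresses. The mismatch comes from two sources. One is that $\pc_x$ only conjugates the $x$-th entry and does not commute with the decompression of the other entries, because the available image $[N]\setminus\im{D}$ depends on the whole database. The other is the recompression leaving residual amplitude on non-uniform extensions. I would decompose the state by database size $|D|\le i$ and show that the component of the state outside the ``commuting'' subspace has amplitude $O(\sqrt{i/N})$. It is tempting to sum these into a bound of order $q^{3/2}/\sqrt N$. The target rate, however, is $O(q/\sqrt N)$, so per-query errors must be $O(1/\sqrt N)$ rather than $O(\sqrt{i/N})$.

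This forces a more careful step. Relative to the function oracle, the additional error from excluding $\im{D}$ is only a $1/\sqrt{N-|D|}$ normalization correction in amplitude, since it is orthogonal to the function-oracle error already absorbed in the decompression picture. I would therefore exhibit the per-query error as an operator of norm $O(1/\sqrt N)$ acting on the $\pc$-rotated state, using an exact algebraic identity for $[\pc_x,\pc_{x'}]$ restricted to injective databases.

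\textbf{Inverse queries and summation.} For inverse queries ($b=1$), $\fp_D$ conjugation reduces to the forward case. The isometry $W$ must then be shown to commute with the flip, up to the same error. This holds because the full decompression of $D^{-1}$ is the inverse-relabeled full decompression of $D$, and the uniform distribution on $S_N$ is inversion invariant.

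Summing the $q$ errors by the triangle inequality (the hybrid argument) and applying \Cref{lem:approx-uhlman} with $V=W$ gives $\frac12\norm{\rho^{\mathrm{perm}}_A-\rho^{\mathrm{comp}}_A}_1\le q\cdot O(1/\sqrt N)$. The $\Omega(\sqrt N)$ query lower bound for distinguishing then follows immediately.

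\textbf{Main obstacle.} The hard step is obtaining the $O(1/\sqrt N)$ per-query error instead of the naive $O(\sqrt{q/N})$, i.e. showing that the non-commutation of local decompressions across different $x$ is a low-norm perturbation. If the direct commutator estimate fails, I would fall back on the representation-theoretic approach of \cite{CM26} via the symmetric group.
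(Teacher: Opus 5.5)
There is a genuine gap, and it starts with the encoding isometry. The fixed-order product $W=\prod_x\pc_x^\dagger$ is unitary on the database space. However, it does not send $\ket{D}$ to the uniform superposition over permutations extending $D$. An exchange $\pc_x$ applied at a point $x\in\dom{D}$ is not the identity: it moves amplitude back to $\bot$. No isometry could act as you describe in any case, because for $D\subsetneq D'$ the two ``extension'' superpositions overlap.

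Concretely, let $x_L$ be the point processed last. On $\ket{\{x_L\to c\}}$, all other points are filled with a bijection $E$ onto $[N]\setminus\{c\}$. Hence $\ket{+_{x_L,E}}=\ket{E[x_L\to c]}$, and the final exchange swaps this state with $\ket{E}$. So $W\ket{\{x_L\to c\}}$ is supported entirely on databases with $\bot$ at $x_L$.

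Now take a single forward query to $x_L$ with $Y$ in $\ket{\hat s}$, $s\neq 0$. This leaves the compressed database in $\frac{1}{\sqrt N}\sum_c(-1)^{s\cdot c}\ket{\{x_L\to c\}}$. Then $W\ket{\psi^{\mathrm{comp}}}$ is orthogonal to $\ket{\psi^{\mathrm{perm}}}$, and the hybrid distance is $\sqrt2$ after one query, even though the reduced states coincide. So the per-query quantity $\norm{(\mathcal O W-W\cp)\ket{\phi_i}}$ you propose to bound is $\Theta(1)$ for this $W$.

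The inverse-query step is also unsound. $\fp W\fp$ decompresses in image order rather than domain order, so it is a different unitary. Inversion invariance of the uniform distribution on $S_N$ only gives agreement on $\ket{\boldsymbol\bot}$.

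Even granting a correct encoding, the step that carries the statement is asserted rather than proved: an $O(1/\sqrt N)$ loss per query, rather than $O(\sqrt{i/N})$. Your heuristic for it is not right. By the computation in \Cref{lem:diff-full-and-partial-decomp}, $\norm{\ket{+^n}-\ket{[N]\setminus\im{D}}}^2=2\bigl(1-\sqrt{(N-|D|)/N}\bigr)\approx|D|/N$. So replacing function decompression by permutation decompression costs order $\sqrt{|D|/N}$, not a $1/\sqrt{N-|D|}$ normalization correction. This is why direct estimates of the kind in \Cref{cor:perm-function-comp} give $O(\sqrt{i/N})$ per query and only $O(q^{3/2}/\sqrt N)$ overall.

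Reaching $O(q/\sqrt N)$ needs a genuinely new argument that these errors do not accumulate. The ``exact algebraic identity for $[\pc_x,\pc_{x'}]$'' is never stated, and your fallback is the very result being cited. The paper gives no proof of this theorem; it imports the bound from \cite[Corollary~4.3]{CM26}. A proposal that ultimately defers to that same reference is therefore not an independent proof. Only the final $\Omega(\sqrt N)$ consequence is immediate once the bound is available.
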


A useful, general fact about compression (for both permutations and functions) is that one can condition on a low-probability event not occurring under compression without sacrificing soundness. In particular, we have the following lemma.

\begin{lemma}\label{lem:diff-full-and-partial-decomp}
    Let $\algo H(\{\bot\} \cup [N])$ denote a database register, and $\algo H(\mathbf D)$ for some set $\mathbf D$ denote an auxiliary register of arbitrary dimension. Let $\mathcal{S}_D \subset [N]$ denote a family of sets indexed by $D \in \mathbf D$, $\ket{\mathcal{S}_D}$ the uniform superposition over elements in $\mathcal{S}_D$, and $\ket{+^n}$ the uniform superposition over $[N]$. Define operators \begin{align}
        \fc &= \operatorname{Exc}(\ket{\bot}, \ket{+^n}) \otimes I, & \widetilde \fc &= \sum_{D \in \mathbf D} \operatorname{Exc}(\ket{\bot}, \ket{\mathcal{S}_D}) \otimes \proj{D},
    \end{align}
    where the former is the standard function compression operator.
    Then we have \begin{align}
        \norm{\fc - \widetilde \fc} &= O\left(\max_{D \in \mathbf D}\sqrt{\frac{N-|\mathcal{S}_D|}{N}}\right).
    \end{align}
    \label{lem:sanitized-comp}
\end{lemma}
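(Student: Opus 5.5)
Both operators act as the identity on the auxiliary register in the basis $\{\ket{D}\}_{D\in\mathbf D}$, so I would treat them blockwise. Writing $\fc=\sum_{D}\operatorname{Exc}(\ket{\bot},\ket{+^n})\otimes\proj{D}$, we get
\[
\fc-\widetilde\fc=\sum_{D\in\mathbf D}\bigl(\operatorname{Exc}(\ket{\bot},\ket{+^n})-\operatorname{Exc}(\ket{\bot},\ket{\mathcal S_D})\bigr)\otimes\proj{D}.
\]
This operator is block diagonal with respect to the orthogonal projectors $\proj{D}$. Its operator norm is therefore the maximum over $D$ of the norms of the individual blocks. The task reduces to bounding, for fixed unit vectors $\ket{A}=\ket{\bot}$, $\ket{B}=\ket{+^n}$ and $\ket{B'}=\ket{\mathcal S_D}$ (both $B$ and $B'$ orthogonal to $A$), the quantity $\norm{\operatorname{Exc}(\ket A,\ket B)-\operatorname{Exc}(\ket A,\ket{B'})}$.

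Expanding the definition of $\operatorname{Exc}$, the identity and $\proj{A}$ terms cancel, leaving
\[
\operatorname{Exc}(\ket A,\ket B)-\operatorname{Exc}(\ket A,\ket{B'}) = \bigl(\ket{B'}\bra{B'}-\ket{B}\bra{B}\bigr)+\ket{A}\bigl(\bra{B}-\bra{B'}\bigr)+\bigl(\ket{B}-\ket{B'}\bigr)\bra{A}.
\]
I would bound each term by the triangle inequality. For the first term, write $\ket{B}\bra{B}-\ket{B'}\bra{B'}=\ket{B}(\bra B-\bra{B'})+(\ket B-\ket{B'})\bra{B'}$, which gives norm at most $2\norm{\ket B-\ket{B'}}$. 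The other two terms are rank one with norm exactly $\norm{\ket B-\ket{B'}}$. Hence each block has norm at most $4\norm{\ket{+^n}-\ket{\mathcal S_D}}$.

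It remains to compute this distance. Since $\mathcal S_D\subseteq[N]$, we have $\braket{+^n|\mathcal S_D}=|\mathcal S_D|/\sqrt{N|\mathcal S_D|}=\sqrt{s}$ with $s=|\mathcal S_D|/N\in[0,1]$. Therefore
\[
\norm{\ket{+^n}-\ket{\mathcal S_D}}^2=2-2\sqrt s\le 2(1-s)=\frac{2(N-|\mathcal S_D|)}{N},
\]
using $\sqrt s\ge s$ on $[0,1]$. Combining the steps gives $\norm{\fc-\widetilde\fc}\le 4\sqrt2\max_D\sqrt{(N-|\mathcal S_D|)/N}$.

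The only point needing care is an empty $\mathcal S_D$, where $\ket{\mathcal S_D}$ is undefined. There I would adopt the convention that the block is the identity (or any unitary). The block difference then has norm at most $2$, which is within a constant of the right-hand side, since that side equals $1$ in this case. No step is a real obstacle; the argument is a short direct calculation.
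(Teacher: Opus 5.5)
Your proposal is correct and follows the paper's proof: decompose blockwise in $D$, expand the exchange operators and apply the triangle inequality to get $4\norm{\ket{+^n}-\ket{\mathcal S_D}}$ per block, then use $\norm{\ket{+^n}-\ket{\mathcal S_D}}^2=2(1-\sqrt{|\mathcal S_D|/N})\le 2(N-|\mathcal S_D|)/N$. Your separate treatment of empty $\mathcal S_D$ covers a degenerate case that the paper's proof leaves implicit.
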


\begin{proof}
Both operators are block diagonal in $D$. For each block, expanding
the exchange operators and applying the triangle inequality bounds
their difference by $4\norm{\ket{+^n}-\ket{\mathcal S_D}}$.
Since $\norm{\ket{+^n}-\ket{\mathcal S_D}}^2
=2(1-\sqrt{|\mathcal S_D|/N})\leq 2(N-|\mathcal S_D|)/N$,
taking the maximum over $D$ proves the claim.
\end{proof}

\begin{corollary}\label{cor:part-decomp}
    For $0\leq t<N$, permutation and function compression satisfy
    \begin{align}
        \norm{(\pc_{XD}-\fc_{XD})\Pi^{inj}_D\Pi^t_D}
        \leq O\left(\sqrt{\frac{t}{N}}\right).
    \end{align}
    \label{cor:perm-function-comp}
\end{corollary}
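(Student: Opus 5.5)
We apply \Cref{lem:sanitized-comp} blockwise in $x$ and in the database register. Both $\pc_{XD}$ and $\fc_{XD}$ are controlled on $X$, so $\pc_{XD}=\sum_x \proj{x}_X\otimes\pc_x$ and $\fc_{XD}=\sum_x\proj{x}_X\otimes \fc_x$. Here $\fc_x$ acts as $\operatorname{Exc}(\ket{\bot},\ket{+^n})$ on the $x$-th entry $D(x)$ of the truth table and trivially on the other entries. The norm of the difference restricted to the range of $\Pi^{inj}_D\Pi^t_D$ is therefore the maximum over $x$ of the corresponding blockwise quantity. It thus suffices to fix $x$.

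For fixed $x$, we decompose the database register as $\algo H(\{\bot\}\cup[N])_{D(x)}\otimes \algo H(\mathbf D')$, where $\mathbf D'$ is the set of partial functions on $[N]\setminus\{x\}$, i.e.\ the remaining entries $D_{-x}$. For injective $D_{-x}$ we set $\mathcal S_{D_{-x}}\coloneqq[N]\setminus \im{D_{-x}}$. Then $\ket{\mathcal S_{D_{-x}}}\otimes\ket{D_{-x}}$ is exactly $\ket{+_{x,D_{-x}}}$, since $\ket{+_{x,D}}$ is the uniform superposition over extensions $D[x\to y]$ with $y\notin\im{D}$. Consequently, on the subspace spanned by $\ket{\bot}_{D(x)}\ket{D_{-x}}$ and $\ket{y}_{D(x)}\ket{D_{-x}}$ with $D_{-x}$ injective, $\pc_x$ coincides with $\widetilde\fc$ from \Cref{lem:sanitized-comp}. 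For non-injective $D_{-x}$, we are free to choose $\mathcal S_{D_{-x}}$ arbitrarily (e.g.\ $[N]$), because those blocks will be killed by the projector.

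The next step is to handle the projectors. The point requiring care is that $\Pi^{inj}\Pi^t$ restricts $D$ as a whole, and $\pc_x$ is only defined on the span of injective databases, so the two operators need not agree outside that span. Take an input basis state $\ket{D}$ with $D$ injective and $|D|\le t$. Then $D_{-x}$ is injective and $|D_{-x}|\le t$. Both $\pc_x$ and $\widetilde\fc$ preserve the $D_{-x}$ register, and both act within the block of that fixed $D_{-x}$. Let $\mathbf D'_{\le t}$ be the set of injective $D_{-x}$ with $|D_{-x}|\le t$. Then
\[
(\pc_x-\fc_x)\Pi^{inj}\Pi^t=(\widetilde\fc-\fc)\,P\,\Pi^{inj}\Pi^t ,
\]
where $P$ projects onto $D_{-x}\in\mathbf D'_{\le t}$. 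This holds because $\pc_x$ and $\widetilde\fc$ agree on every block with $D_{-x}\in\mathbf D'_{\le t}$ (the $D(x)$ entry being arbitrary there, including $y\in\im{D_{-x}}$, which the direct sum definition of $\pc_x$ treats as identity, and which $\operatorname{Exc}(\ket\bot,\ket{\mathcal S})$ also fixes since such $y$ are orthogonal to both $\ket\bot$ and $\ket{\mathcal S}$). The main obstacle is checking this agreement consistently with the direct-sum definition of $\pc_x$.

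Finally, we apply \Cref{lem:sanitized-comp} with auxiliary set $\mathbf D'_{\le t}$. Since $N-|\mathcal S_{D_{-x}}|=|\im{D_{-x}}|\le t$, this gives $\norm{\fc-\widetilde\fc}=O(\sqrt{t/N})$ on these blocks. Taking the maximum over $x$ yields the claim.
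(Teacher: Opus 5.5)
Your proposal is correct and takes essentially the same route as the paper. Like the paper, you fix $x$ and the complementary entries $D_{-x}$, identify $\pc$ on injective, size-$\le t$ blocks with the sanitized operator $\widetilde{\fc}$ for $\mathcal S_{D_{-x}}=[N]\setminus\im{D_{-x}}$, apply \Cref{lem:sanitized-comp} using $N-|\mathcal S_{D_{-x}}|\le t$, and then maximize over $x$. Your extra discussion of entries $D(x)\in\im{D_{-x}}$ is harmless but unnecessary, because such non-injective states lie outside the range of $\Pi^{inj}_D$ and so never occur as inputs.
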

\begin{proof}
    See \Cref{app:partial-dec}.
\end{proof}

\section{Quantum Security of Two-round Even-Mansour}

\paragraph{Real World.} Functions $P_1, P_2$ are sampled uniformly and independently at random from the set of all permutations on $[N]$ where $N=2^n$, i.e., $S_N$. Keys $k_1, k_2,k_3$ are sampled uniformly and independently at random from the set of $n$-bit strings. An algorithm $\algo A$, making at most $q$ queries, gets both forward and inverse oracle access to $P_1, P_2$ and \begin{align*}
    E^{P_1,P_2}_{k_1,k_2,k_3}(x) \coloneqq P_2(P_1(x \oplus k_1) \oplus k_2)\oplus k_3. 
\end{align*}

\paragraph{Ideal World.} It is identical to the real world, except that oracle access to $\pkac$ is replaced by oracle access to $P_{3}$, where $P_{3}$ is sampled uniformly at random (and independently of $P_{1}$ and $P_{2}$) from the set of permutations on $[N]$. 
\paragraph{Main Theorem.}
\begin{theorem} \label{thm:main-perm} Let $\mathcal{A}$ be a quantum algorithm making at most $q$ queries to its oracles in both forward and inverse directions. Then
\begin{align}
   \abs{\Pr[\algo A^{P_1, P_2, P_3}()=1]-\Pr[\algo A^{P_1, P_2, \pkac }()=1] } \leq O\left( \sqrt{q^3/N}  \right) + \varepsilon_{\textsf{CPO}}
\end{align} The result holds in the regime $q^2\leq N/2$.
\end{theorem}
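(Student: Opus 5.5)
We will prove \Cref{thm:main-perm} by a hybrid argument between two purely compressed experiments, following the outline in the introduction. First, by \Cref{thm:soundness} we replace every uniformly random permutation ($P_1,P_2$ in both worlds, and $P_3$ in the ideal world) by a compressed permutation oracle $\cp$ with database registers $D^1,D^2,D^3$. This costs $\varepsilon_{\textsf{CPO}}$. In the real world the construction oracle is implemented by three calls to the compressed $P_1,P_2$ oracles with key shifts (controlled on uniform key registers $K_1K_2K_3$) and uncomputation of the middle value. Next, using \Cref{cor:perm-function-comp} together with \Cref{lem:diff-full-and-partial-decomp}, we replace $\pc$ by the function compression $\fc$ and insert injectivity and size projectors $\Pi^{inj}\Pi^{t}$ after each query. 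Since databases have size at most $O(q)$, this costs $O(q\sqrt{q/N})=O(\sqrt{q^3/N})$ over $q$ queries.

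We then define the isometry $V=\vr\fp_{D^2}\vf\fp_{D^2}$ from ideal states (on $D^1D^2D^3$) to real states (on $D^1D^2K_1K_2K_3$), with the keys prepared in $\ket{+^n}^{\otimes 3}$. Here $\vf$ writes, for every entry $(x,z)\in D^3$, the pair $(x\oplus k_1,w)$ into $D^1$ and $(w\oplus k_2, z\oplus k_3)$ into the flipped $D^2$. The intermediate value $w$ is chosen in uniform superposition, excluding collisions, and $\vr$ then uncomputes the $D^3$ entry. Let $H_i$ be the hybrid that runs the ideal experiment for the first $i$ queries, applies $V$, and runs the real experiment afterwards. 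Then $H_0$ is the real world and $H_q$ is the ideal world after tracing out the databases, so by \Cref{lem:approx-uhlman} the distinguishing advantage is at most $\sum_i\norm{H_{i+1}-H_i}$. Each term equals $\norm{(\mathcal O^{\mathrm{real}}V-V\mathcal O^{\mathrm{ideal}})\ket{\psi_i}}$ on the reachable state. The goal is to bound this by $O(\sqrt{q/N})$ per query, which sums to $O(\sqrt{q^3/N})$.

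Inverse queries are reduced to forward queries via the operator $\dm$, which swaps $D^1\leftrightarrow D^2$, flips all databases, and relabels keys $k_1\leftrightarrow k_3$. By the symmetry of $V$, the commutator $[\dm,V]$ vanishes on the good projector $\Pi^g$, namely databases with $\im{D^1}\oplus k_2$ disjoint from $\dom{D^2}$. The probability weight of bad databases is $O(q^2/N)$ over uniform $k_2$, and we need its amplitude increment per query to be $O(\sqrt{q/N})$. For forward queries there are three cases. For $P_1$ or $P_2$ queries, we project onto $x$ (respectively its key-shifted counterpart) not lying in the region touched by $V$, i.e., $x\notin\dom{D^3}\oplus k_1$. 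Then $V$ acts trivially on the queried register and commutes with the oracle, and the projected-out part has amplitude $O(\sqrt{q/N})$ because the key is uniform. For $P_3$ queries, we split $V$ into $\vp$ (the part controlled on the queried register $x$) and $\vpc$ (the complement, which commutes with the query). We then analyze $\vp$ in the Hadamard basis of $D^3_x$: the branches where the query defines a new entry, leaves the register alone, or uncomputes it are each matched with the real oracle's action on $D^1_{x\oplus k_1}$ and $D^2$.

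The main obstacle is this $P_3$ case. In it we must show that on good databases $V$ maps orthogonal ideal branches to orthogonal real branches with only nonnegative coefficients. This requires that $\vf$ only fills entries and $\vr$ only removes them. The argument should yield an error of $O(\sqrt{q/N})$ from the normalization mismatch between the $N-|D|$ and $N$ sized superpositions, which is controlled again by \Cref{lem:diff-full-and-partial-decomp}. Summing all contributions under $q^2\le N/2$, which guarantees that the excluded sets are small, yields the stated bound.
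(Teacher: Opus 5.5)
Your overall architecture matches the paper's: compressed oracles with $\fc$ plus size and injectivity projectors, the isometry $V=\vr\fp_{D^2}\vf\fp_{D^2}$, the hybrid sum via \Cref{lem:approx-uhlman}, the reduction of inverse queries through $\dm$ on good injective databases, and the $\vp/\vpc$ split with a Hadamard-basis case analysis for construction queries. The gap is in your $P_1$/$P_2$ step. After projecting onto $x\notin\dom{D^3}\oplus k_1$, you claim that $V$ acts trivially on the queried register and hence commutes with the oracle. $V$ does preserve $D^1_x$ in the computational basis, but its action on the rest of the state depends on the value held there, and the decompression $\fc$ changes that value.

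There are two concrete effects. First, every other active factor $\vf\ppar{x'}$ draws its intermediate value from $J=[N]\setminus(\im{D^1}\cup(\dom{D^2}\oplus k_2))$ to preserve injectivity, and $J$ loses the element $y$ once $\fc$ writes $y$ into $D^1_x$. So $V\fc$ and $\fc V$ produce different fill superpositions, and $\fc V$ even has non-injective branches. Second, $\vr\ppar{x}$ is controlled on $D^1_x$, so after decompression any $y$ with $y\oplus k_2\in\dom{D^2}$ makes $\vr\ppar{x}$ write a fresh entry into $D^3_{x\oplus k_1}$.

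For $P_2$ the situation is worse. $V$'s own fill lands on $D^2_x$ in the branch with intermediate value $x\oplus k_2$, and a value $D^2(x)\in\im{D^3}\oplus k_3$ blocks a fill elsewhere. This is why the paper's $\Gamma_1$ for $P_2$ requires $x\notin\im{D^1}\oplus k_2$ and $D^2(x)\notin\im{D^3}\oplus k_3$, and why it must still bound the branch $\ket{\not\bot}_{D^2_x}\ket{\psi^{\not\bot}_{D^1,D}}$ separately. None of these effects can be removed by a projector on the input side: the commutator is genuinely nonzero and only $O(\sqrt{t/N})$ small.

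Bounding it requires exactly the ingredient you reserve for the $P_3$ case. On good databases every factor of $\vf$ either performs the fill $\ket{\bot}\ket{\bot}\mapsto\ket{\psi_{k_2}}$ or acts as the identity, so images of distinct database basis states are orthogonal with nonnegative coefficients. This lets the per-branch errors $\ket{\psi^y_D}-\ket{\psi^{y'}_D}$ add in quadrature over $y$ and $D$ into an operator-norm bound, as in \Cref{lem:p1-psi-y-d} and \Cref{lem:p2-psi-y-d}. Each such error has norm $O(\sqrt{t/N})$, coming from one element removed from $J$ plus the change of normalization. A bound on each basis state separately would not control the operator norm.

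The same reasoning, combined with partial decompression operators, is what makes $[\vpc,\cco^R]$ small on the relevant states in \Cref{lem:comm-real-vpc}. You also claim $\vpc$ commutes with the construction query, but it does not commute exactly either. The construction-query case additionally needs $\Gamma_2$, which uses the uniformity of $k_3$, to free the $D^2$ target of $\vf\ppar{x\oplus k_1}$.

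A smaller point: $V$ cannot map into $D^1D^2K_1K_2K_3$ alone. When the $D^1$ or $D^2$ target of a $D^3$ entry is already occupied, $\vr$ cannot remove that entry, so the paper keeps $D^3$ in the real-world state.
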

\begin{proof}
We reduce the distinguishing advantage to the difference of hybrids as
    \begin{align}
        &\abs{\Pr[\algo A^{P_1, P_2, P_3}()=1]-\Pr[\algo A^{P_1, P_2, \pkac}()=1] } \nonumber\\
        &\quad \leq  \sum_{t=1}^q \Bigl\lVert  \Bigl( \cco^{R}V- V \Pi^\star   \cco^{I} \Bigr) \Pi^\star  \Pi^{t-1}\Pi^{k_1,k_3} \Bigr\rVert+ O(\sqrt{q^3/N}) + \varepsilon_{\textsf{CPO}}\\
        &\quad \leq  \sum_{t=1}^q O(\sqrt{t/N})+ O(\sqrt{q^3/N})  + \varepsilon_{\textsf{CPO}}\\
        &\quad \leq O(\sqrt{q^3/N}) + \varepsilon_{\textsf{CPO}}\,.
    \end{align}
    In the above, the first line follows by \Cref{thm:hybrid-diff}. The second follows by \Cref{thm:p1-bound}, \Cref{thm:p2-bound}, and \Cref{thm:p3-bound}, each of which handles queries to one of the three oracles.
\end{proof}

\subsection{Query operators.}
We introduce the query operators both for real and ideal worlds. Forward queries in both worlds are handled directly, while inverse queries are implemented with two helper operators, the flip and the swap.

\begin{definition}[Database inversion]
    The database inversion operator is \begin{align}
\dm\coloneqq\sw_{K^1K^3}\cdot\sw_{D^1D^2}\cdot\fp^{\otimes 3}_{D^1D^2D^3}. \label{eq:SSF} 
\end{align} We conjugate the forward query operators by this to answer the inverse queries.
\end{definition}
We implement all inverse queries by conjugation with $\dm$.
Since $\dm$ swaps $D^1$ and $D^2$, inverse queries to $P_1$
and $P_2$ use the forward operators for $P_2$ and $P_1$,
respectively. This uniform convention reduces inverse-query
bounds to forward-query bounds and a commutator bound for
$\dm$ and $V$.

We define the query operators for both worlds.

\begin{definition}[Ideal world queries]\label{def:ideal-queries}
Let the register $B$ hold the direction of the query, with $b=0$ for forward and $b=1$ for inverse, and $I$ hold the index of the queried permutation. The ideal world compressed permutation oracle call is
\begin{align*}
   \cco^{I}\ket{b}_B\ket{i}_I \coloneqq \ket{b}\ket{i}\otimes \begin{cases}
        \fc_{XD^i}\cdot \pu_{XYD^i}\cdot \fc_{XD^i}^\dagger &\text{ if } b=0,i\in[3]\\
        \dm\cdot  \fc_{XD^2}\cdot \pu_{XYD^2}\cdot  \fc_{XD^2}^\dagger \cdot \dm^\dagger &\text{ if } b=1,i=1\\
     \dm\cdot  \fc_{XD^1}\cdot \pu_{XYD^1}\cdot  \fc_{XD^1}^\dagger \cdot \dm^\dagger &\text{ if } b=1,i=2\\
    \dm\cdot\fc_{XD^i}\cdot \pu_{XYD^i}\cdot  \fc_{XD^i}^\dagger\cdot \dm^\dagger &\text{ if } b=1,i=3.
    \end{cases}
\end{align*}
\end{definition}

\begin{definition}[Real world queries]\label{def:real-queries}
Let the register $B$ hold the direction of the query, with $b=0$ for forward and $b=1$ for inverse, and $I$ hold the index of the queried permutation, with $i=3$ denoting the construction $\pkac$. The real world compressed permutation oracle call is 
    \begin{align*}
       \cco^{R}\ket{b}_B\ket{i}_I \coloneqq \ket{b}\ket{i} \otimes \begin{cases}
  \fc_{XD^i} \cdot \pu_{XYD^i} \cdot \fc_{XD^i}^\dagger
    & \text{if } b=0,\ i\in [3] \\
\cco\ppar{1}\cdot \cco\ppar{2} \cdot \cco\ppar{1}
    & \text{if } b=0,\ i=3 \\
  \dm\cdot  \fc_{XD^2}\cdot \pu_{XYD^2}\cdot  \fc_{XD^2}^\dagger \cdot \dm^\dagger &\text{if } b=1,i=1\\
     \dm\cdot  \fc_{XD^1}\cdot \pu_{XYD^1}\cdot  \fc_{XD^1}^\dagger \cdot \dm^\dagger &\text{if } b=1,i=2\\
  \dm \cdot  \cco\ppar{1}\cdot \cco\ppar{2} \cdot (\cco\ppar{1})^\dagger \cdot \dm^\dagger
    & \text{if } b=1,\ i=3
\end{cases}
    \end{align*}
 where $\cco\ppar{1}\coloneqq \fc\ppar{1} \cdot \pu\ppar{1} \cdot \fc\ppar{1}$ and $\cco\ppar{2}\coloneqq \fc\ppar{2} \cdot \pu\ppar{em} \cdot \fc\ppar{2},$  and \begin{align*}
& \fc\ppar{1}\ket{x,k_1}\coloneqq\ket{x,k_1} \otimes\fc_{D^1_{x\oplus k_1}}, \quad \pu\ppar{1}\ket{x,k_1,y_1,D^1}\coloneqq \ket{x,k_1,y_1\oplus D^1(x\oplus k_1),D^1} \\
& \fc\ppar{2}\ket{y_1,k_2}\coloneqq\ket{y_1,k_2} \otimes\fc_{D^2_{y_1\oplus k_2}}, \quad \pu\ppar{em}\ket{y_1,k_2,k_3,y,D^2}\coloneqq \ket{y_1,k_2,k_3,y \oplus D^2(y_1\oplus k_2)\oplus k_3,D^2} 
\end{align*} for computational basis input states. The value $y_1$ is held in an ancilla register $Y_1$ initialized to $\ket{0}$. Each construction query uncomputes it exactly, so we omit $Y_1$ from the oracle's workspace. When $Y_1$ holds $\bot$, we do not call $\cco\ppar{2}$; it acts as the identity instead.\footnote{This occurs when the decompression $\fc\ppar{1}$ returns $\bot$, which signals a failure of the compressed oracle simulation. Such failures are accounted for by the soundness bound of the compressed permutation technique in the final distinguishing advantage. Several conventions are possible for continuing the simulation after a failure; ours is one of them. Also, composing several compressed oracle calls adds up the soundness bounds, which is irrelevant in big-$O$ for a constant number of calls.}
\end{definition}
We omit the index and direction registers whenever clear from context. Also, since all $\dm,\fc,\fc\ppar{1},\fc\ppar{2},\pu,\pu\ppar{em}$ are self-adjoint, we drop the adjoint symbol in the rest. The factors defining $\dm$ commute pairwise.
\begin{remark}\label{rem:construction-query}
    Since $\fc\ppar{1}$ commutes with $\cco\ppar{2}$, the two inner copies cancel, i.e.,
    \begin{align*}
       \cco\ppar{1}\cdot \cco\ppar{2} \cdot \cco\ppar{1} = \fc\ppar{1}\cdot \pu\ppar{1}\cdot \cco\ppar{2} \cdot \pu\ppar{1} \cdot \fc\ppar{1}.
    \end{align*}
    If $D^1_{x\oplus k_1}$ is not $\bot$ after the decompression $\fc\ppar{1}$, the two copies of $\pu\ppar{1}$ can also be removed, by defining $\cco\ppar{2}$ controlled on $D^1_{x\oplus k_1}$ instead of on the ancilla $Y_1$, giving $\fc\ppar{1}\cdot \cco\ppar{2} \cdot \fc\ppar{1}$.
\end{remark}

\subsection{Setup}
Consider a $q$-query adversary with workspace $A$, query register $X$, and output register $Y$. Parameterize the adversary by unitaries $U^0, U^1, \dots, U^q$.

The oracle's state consists of registers $D^1, D^2, D^3, K_1, K_2, K_3$: each $D^i$ stores the queries to $P_i$, and $D^1, D^2$ together store the queries to $\pkac$, controlled on $K_1, K_2, K_3$. Let the initial state be\begin{align}
    \ket{\psi^0}\coloneqq \ket{0}_A \ket{0}_X \ket{0}_Y \ket{\bot}^{\otimes N}_{D^1}\ket{\bot}^{\otimes N}_{D^2}\ket{\bot}^{\otimes N}_{D^3}\ket{+^n}_{K_1} \ket{+^n}_{K_2}\ket{+^n}_{K_3}.
\end{align} 

In the real world, the adversary interacts with oracles for permutations $P_1,P_2,$ and $\pkac$. The oracle state includes the database $D^{3}$ for the permutation $P_3$ in the real world, but it is never accessed by the adversary and is present because our isometry from ideal to real world does not empty $D^3$ entirely.
Define \begin{align}
    R_{a:b}\coloneqq \prod_{j=b}^a (U^j\cco^R) \text{ and } R_{a:1}\coloneqq \prod_{j=1}^a (U^j\cco^R) U^0 \text{ for } a,b\in\set{2,\dots, q}.
\end{align}
Then the state after the real world experiment can be written as \begin{align} \label{real-exp-st}
     \ket{\psi^q_R} \coloneqq R_{q:1}\ket{\psi^0}.
\end{align} 

In the ideal world, the adversary interacts with oracles for random permutations $P_1,P_2,$ and $P_3$. In the ideal world the oracle's state also contains the key registers $K_1, K_2, K_3$. No ideal-world oracle acts on them, so the adversary's view is unaffected. They are included because we project out some keys before applying the isometry, so the keys must already be present in the ideal-world state. Given a projector $\Pi$, define\begin{align}
   I_{a:1}\coloneqq \prod_{j=1}^a (U^j\cco^I) U^0 \text{ and } I^\Pi_{a:1}\coloneqq \prod_{j=1}^a (\Pi U^j\cco^I) \Pi U^0\text{ for } a\in\set{1,\dots, q}\label{eq:def-I-pi}\,.
\end{align}
The state after the ideal experiment can then be written as 
\begin{align} \label{ideal-exp-st}
     \ket{\psi^q_I} \coloneqq I_{q:1}\ket{\psi^0}.
\end{align}

\begin{definition}[Size $t$ or less database]
\begin{align*} 
    \Pi^t \coloneqq \sum_{S\subseteq [N]: |S|\leq t} \Pi^S_D \text{ where } \Pi^S_D = \Bigl(\bigotimes_{i\in S^c}\proj{\bot}_{D_i}\Bigr)\Bigl(\bigotimes_{i\in S}(\Id -\proj{\bot})_{D_i}\Bigr)
\end{align*}    
\end{definition}
 
\begin{definition}[Injective database]
\begin{align*}
  \Pi^{inj} \coloneqq
    \sum_{D \in \mathbf{D}: \abs{\dom{D}}=\abs{\im{D}}}
    \proj{D}_{D}
\end{align*}
\end{definition}
We define a \textit{good} database notion such that $D^1$ and $D^2$ databases do not contain related entries by $k_2$, which will be useful later when reducing inverse queries to forward ones by ensuring $\vr$ does not introduce collisions, i.e., while bounding the commutator $[\dm,V]$. Moreover, $\Pi^g$ ensures $\vf$ either acts as the identity or changes the target registers from $\bot$ to non$-\bot$.
\begin{definition}[Good database]\label{def:pi-g} 
 \begin{align*}
    \kappa_{D^1,D^2} &= \{k_2\in[N] \, : \, \dom{D^2} \cap (\im{D^1} \oplus k_2) = \emptyset\} \\
    \ket{G_{D^1,D^2}} &= \frac{1}{\sqrt{|\kappa_{D^1,D^2}|}} \sum_{k_2 \in \kappa_{D^1,D^2}} \ket{k_2}_{K_2} \otimes \ket{D^1}_{D_1} \otimes \ket{D^2}_{D_2} \\
    \Pi^g &\coloneqq \sum_{D^1, D^2 \in \mathbf D} \proj{G_{D^1,D^2}}.
\end{align*}
The same notion applies to a restriction $D^1|_S$ with $S \subseteq [N]$: we say that $D^1|_S$ and $D^2$ with $k_2$ are good if \Cref{def:pi-g} holds with $D^1$ replaced by $D^1|_S$, and analogously for $D^2|_S$.

For $D^1, D^2 \in \mathbf{D}_t$, we have $|\kappa_{D^1,D^2}| \geq N - t^2$, since each pair of elements eliminates at most one value of $k_2$ from $[N]$.
\end{definition}
\begin{definition}[Uniform superposition keys]
    \begin{align*}
        \Pi^{k_1}=\proj{+^n}_{K_1}, \quad \Pi^{k_3}=\proj{+^n}_{K_3}
    \end{align*}We use $\Pi^{k_1,k_3}$ for their product when they are used together.
\end{definition}
We mostly work on both good and injective databases. Note that the two relevant projectors commute. We define
\begin{align}
  \Pi^\star \coloneqq \Pi^g \Pi^{inj}.  
\end{align}  

\begin{lemma}\label{lem:pi-t-violation}
    \begin{gather}
        \norm{(\Pi^{t+1})^\perp \fc \Pi^t}=0
    \end{gather}
    Consequently, $\lVert(\Pi^{t+1})^\perp \pu\fc \Pi^t\rVert=\lVert(\Pi^{t+1})^\perp \cco^I \Pi^t\rVert=0$.
\end{lemma}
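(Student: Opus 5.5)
The plan is to show that $\fc$, acting on the database register $D^i_x$ selected by the query register $X$ (i.e.\ $\fc_{XD^i}=\sum_x\proj{x}_X\otimes\fc_{D^i_x}$), changes the support $\dom{D}$ of any computational-basis database by at most one element. The lemma is then just bookkeeping over the subset decomposition $\Pi^t=\sum_{|S|\le t}\Pi^S_D$.

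Fix a basis state $\ket{x}_X\ket{D}$ with $D\in\mathbf D_t$. The operator $\fc_{D_x}=\operatorname{Exc}(\ket{\bot},\ket{+^n})$ acts only on the single register $D_x$ and as the identity on the registers $D_{x'}$ with $x'\ne x$. It therefore maps $\ket{D}$ into the span of states $\ket{D'}$ with $D'_{x'}=D_{x'}$ for all $x'\neq x$. For any such $D'$,
\[
\dom{D'}\subseteq \dom{D}\cup\{x\},\qquad\text{so}\qquad |\dom{D'}|\le t+1 .
\]
Every such $\ket{D'}$ lies in the range of $\Pi^{S'}_D$ for some $|S'|\le t+1$, hence in the range of $\Pi^{t+1}$. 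The family $\{\Pi^S_D\}_S$ consists of mutually orthogonal projectors summing to the identity, so $\Pi^{t+1}$ is the projector onto exactly these basis states, and $(\Pi^{t+1})^\perp$ annihilates them. By linearity over $x$ and over the basis of the range of $\Pi^t$, we get $(\Pi^{t+1})^\perp\fc\,\Pi^t=0$.

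For the consequences, I will use two further facts. First, $\pu_{XYD}$ is controlled on the database: it only XORs $D(x)$ into $Y$, so it is diagonal in the $D$-basis and commutes with $\Pi^{t+1}$. Second, and for the same reason, $\fc$ maps the range of $\Pi^{t+1}$ into that of $\Pi^{t+2}$; but I rather need $\fc\pu\fc\Pi^t$ to stay in the range of $\Pi^{t+1}$. Here the key observation is that the only register touched by either compression is $D_x$ for the same $x$ (the register $X$ is unchanged by $\pu$). The composite $\fc_{XD}\pu_{XYD}\fc_{XD}$ therefore again acts only on $D_x$ given $\ket{x}_X$, and the support argument above applies verbatim, so the output lies in the range of $\Pi^{t+1}$. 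The same holds for $\pu\fc$ directly.

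It remains to check $\cco^I$ case by case, using the fact that the projector is applied to all three registers, $\Pi^t=\Pi^t_{D^1}\Pi^t_{D^2}\Pi^t_{D^3}$. Forward queries touch only $D^i_x$ for one $i$, so the argument above applies. Inverse queries are conjugated by $\dm$. The swaps permute $D^1,D^2$ and $K_1,K_3$, which preserves the product projector. The flip preserves database size on injective databases and acts as the identity otherwise, so $\dm$ commutes with each $\Pi^t$ and with $(\Pi^{t+1})^\perp$, and the bound transfers.

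The step I expect to need the most care is this last one, namely confirming that $\fp$ maps size-$s$ basis databases to size-$s$ basis databases. For injective $D$, $|D^{-1}|=|\im{D}|=|\dom{D}|$; non-injective $D$ are fixed by $\fp$. Hence $\fp$ is a permutation of basis states within each level $\Pi^S$-block union, and it commutes with $\Pi^t$ for every $t$.
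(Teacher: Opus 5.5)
Your proof is correct and follows the same approach as the paper: both argue that $\fc$ (and hence $\pu\fc$ and $\fc\pu\fc$) alters only the single register $D^i_x$, so the domain of any basis database grows by at most one. Your support-based phrasing is slightly cleaner than the paper's case split, which wrongly says $\fc$ acts as the identity when $D_x\neq\bot$ (the conclusion is unaffected). You also spell out the ``consequently'' part, which the paper leaves implicit: why $\fc\pu\fc$ stays within $\Pi^{t+1}$, and why $\dm$ commutes with $\Pi^t$ for inverse queries.
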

\begin{proof}
    See \Cref{app:proj-bounds}.
\end{proof}

\begin{lemma}\label{lem:pi-star-violation}
	Let $t\geq 0$ be an integer with $t^2\leq N/2$. In
	\Cref{def:pi-g}, omit any summand for which
	$\kappa_{D^1,D^2}=\emptyset$.
	For each $Q\in\{\Pi^{inj},\Pi^g,\Pi^\star\}$ and each
	$i\in\{1,2,3\}$, we have
	\begin{align*}
		\norm{Q^\perp\fc_{XD^i}Q\Pi^t}
		&\leq O\bigl(\sqrt{t/N}\bigr),\\
		\norm{Q^\perp\pu_{XYD^i}\fc_{XD^i}Q\Pi^t}
		&\leq O\bigl(\sqrt{t/N}\bigr),\\
		\norm{Q^\perp\cco^I Q\Pi^t}
		&\leq O\bigl(\sqrt{t/N}\bigr).
	\end{align*}
	The last bound holds for forward and inverse ideal-world queries,
	including superpositions of oracle indices and directions.
	All norms are operator norms, and the implicit constants are absolute.

\end{lemma}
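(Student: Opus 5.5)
The plan is to prove the three families of bounds in order: first the bound for $\fc_{XD^i}$ alone, then extend it to $\pu\fc$, and finally to the full ideal-world query $\cco^I$. Throughout, fix the control value $x$ on $X$ and work one block at a time. Since $\fc_{XD^i}=\sum_x\proj{x}_X\otimes\operatorname{Exc}(\ket{\bot},\ket{+^n})_{D^i_x}$ is block diagonal in $x$, it suffices to bound each block uniformly in $x$. Both $Q$ and $\Pi^t$ are diagonal in the computational basis of the databases and keys, with the single exception of $\Pi^g$, which involves the normalized $K_2$-superposition $\ket{G_{D^1,D^2}}$. The approach is therefore to write $\fc$ on the relevant register as $\widetilde\fc$ plus an error, using \Cref{lem:sanitized-comp}. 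The family of sets $\mathcal S_D$ will be chosen so that $\widetilde\fc$ preserves $Q$ exactly.

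For $Q=\Pi^{inj}$, take $\mathcal S_D=[N]\setminus\im{D^i}$. Then $\widetilde\fc$ maps an injective database, under $\Pi^t$, only to injective databases, so $(\Pi^{inj})^\perp\widetilde\fc\,\Pi^{inj}\Pi^t=0$. The remaining error is at most $\norm{\fc-\widetilde\fc}\le O(\sqrt{t/N})$ by \Cref{lem:sanitized-comp}, since $|\im{D^i}|\le t$; this is essentially \Cref{cor:perm-function-comp}. The decompressed entry is not counted in $|D|$ at the time of sampling, but \Cref{lem:pi-t-violation} keeps sizes at most $t+1$, which only shifts constants.

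For $Q=\Pi^g$ and $i\in\{1,2\}$, take $\mathcal S_D$ to be the set of values that preserve the good-key relation. For $i=1$, a new image value $y$ must satisfy $y\oplus k_2\notin\dom{D^2}$; for $i=2$, a new domain point $x$ must be such that $x\oplus k_2\notin\im{D^1}$. Here the point being decompressed is a fixed domain point $x$, so the only issue is whether $x\in\dom{D^2}$ hits $\im{D^1}\oplus k_2$. This depends on $k_2$, not on the sampled value. It is therefore the main obstacle: $\Pi^g$ is not diagonal in $K_2$, and adding a domain point changes $\kappa_{D^1,D^2}$, so the normalization of $\ket{G}$ changes.

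My plan for this obstacle is to compute the overlap directly. Let $\kappa'\subseteq\kappa$ denote the surviving key set after the new point is added. The component of $\ket{G_{D^1,D^2}}$ that remains good has amplitude $\sqrt{|\kappa'|/|\kappa|}$ relative to $\ket{G_{D^1,D^2[x\to\cdot]}}$, and $|\kappa\setminus\kappa'|\le t$. Since $|\kappa|\ge N-t^2\ge N/2$ when $t^2\le N/2$, the part leaving $\Pi^g$ has norm at most $O(\sqrt{t/N})$. The same reasoning covers $i=1$ after sanitizing with \Cref{lem:sanitized-comp}. For $i=3$, $\Pi^g$ commutes with $\fc_{XD^3}$, so the bound is $0$. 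The summands with $\kappa=\emptyset$ are omitted, as the statement requires; they never occur under $\Pi^t$ with $t^2\le N/2$ anyway.

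For $\Pi^\star=\Pi^g\Pi^{inj}$, write
\[
(\Pi^\star)^\perp\Pi^\star=\bigl((\Pi^{g})^\perp+\Pi^g(\Pi^{inj})^\perp\bigr)\Pi^\star,
\]
use that the two projectors commute, and apply the triangle inequality.

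For $\pu\fc$, note that $\pu_{XYD^i}$ is diagonal in, and does not modify, the database and key registers, so it commutes with $Q$ and the bound is inherited. For $\cco^I$, in the forward case we have $\cco^I=\fc\pu\fc$. Here I would insert $Q+Q^\perp$ between the factors:
\[
Q^\perp\fc\pu\fc Q\Pi^t=Q^\perp\fc Q\,\pu\fc Q\Pi^t+Q^\perp\fc\pu\,Q^\perp\fc Q\Pi^t.
\]
Both terms are $O(\sqrt{t/N})$, using \Cref{lem:pi-t-violation} to keep the size at most $t+1$. For inverse queries, the query is conjugated by $\dm$. We check that $\dm$ preserves $\Pi^{inj}$, $\Pi^t$ and $\Pi^g$: flipping both databases and swapping $D^1$ with $D^2$ exchanges $\im{D^1}$ and $\dom{D^2}$ symmetrically, while the swap of $K_1,K_3$ leaves $K_2$ untouched. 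Hence $\dm$ commutes with each $Q$, and the inverse case reduces to the forward one.

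Superpositions over the index and direction registers are handled by block diagonality in the control registers $B$ and $I$. The bound on the whole operator is then the maximum of the bounds over the blocks.
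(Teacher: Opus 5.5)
Your forward-query arguments are fine in outline: the $\Pi^{inj}$ sanitization, the splitting of $(\Pi^\star)^\perp$, and the insertion of $Q+Q^\perp$ inside $\fc\pu\fc$ all work. The reduction of inverse queries, however, fails for $Q=\Pi^g$.

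You assert that $\dm$ commutes with $\Pi^g$ because flipping and swapping exchanges $\im{D^1}$ and $\dom{D^2}$. But $\fp$ acts as the identity on non-injective databases, so that exchange only happens when both databases are injective. For example, take $D^1=\{0\mapsto 0\}$ and $D^2=\{0\mapsto 5,\,1\mapsto 5\}$. Then $\dm$ puts the unflipped $D^2$ into register $D^1$ and $(D^1)^{-1}$ into register $D^2$. The excluded keys change from $\{a\oplus b: a\in\im{D^1},\, b\in\dom{D^2}\}=\{0,1\}$ to $\{a\oplus b: a\in\im{D^1},\, b\in\im{D^2}\}=\{5\}$, so $\dm\Pi^g\dm\neq\Pi^g$.

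Only $\Pi^t$, $\Pi^{inj}$ and $\Pi^\star=\Pi^g\Pi^{inj}$ commute with $\dm$; the last does because $\kappa_{D^1,D^2}=\kappa_{(D^2)^{-1},(D^1)^{-1}}$ on injective databases. So your argument covers inverse queries for $Q\in\{\Pi^{inj},\Pi^\star\}$, but not for $Q=\Pi^g$, which the statement explicitly includes.

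Here is what is missing. After the remaining flips and swaps cancel, an inverse query to $D^i$ is $F_iU_iF_i$, with $F_i=\fp_{D^i}$ and $U_i$ the forward query. You therefore have to bound $\norm{(Q_i')^\perp U_iQ_i'\Pi^t}$ for the genuinely different projector $Q_i'=F_i\Pi^gF_i$.
\begin{itemize}
\item On blocks where $E=D^i[x\mapsto\bot]$ is non-injective, every completion is non-injective. So $F_i$ is the identity there and your forward bound applies.
\item On blocks where $E$ is injective, $\fc$ mixes $\ket{\bot}$ with values in $\im{E}$, where $F_i$ does not flip, as well as with values outside it, where it does. The admissible key sets are then not nested. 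You must first replace $\fc$ by $\operatorname{Exc}(\ket{\bot},\ket{[N]\setminus\im{E}})$ at cost $O(\sqrt{t/N})$ via \Cref{lem:diff-full-and-partial-decomp}. After that, $Q_i'$ restricted to the injective subspace is a key projector whose admissible sets (with domain and image of the queried database exchanged) are nested inside the $\bot$ one, and the nested-set estimate goes through.
\end{itemize}

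A smaller point concerns the forward bound for $\Pi^g$. Your overlap computation handles only the input $\ket{\bot}$ at $D^i_x$. An operator-norm bound must also cover inputs holding a defined value $z$, and coherent superpositions over $z$. Indeed, $\fc$ sends $\ket{z}$ partly to $\ket{\bot}$ and to every $\ket{y}$ while the key state stays uniform on $\kappa_z$ rather than on $\kappa_\bot$ or $\kappa_y$.

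The clean fix is to work blockwise. Write $\Pi^g=\sum_z\proj{z}_{D^i_x}\otimes\proj{g_z}_{K_2}$, with $g_z$ the uniform state on $\kappa_z$. Since $\kappa_z\subseteq\kappa_\bot$ and $|\kappa_\bot\setminus\kappa_z|\le t$, this projector is within $\sqrt{t/(N-t^2)}$ in norm of $\Id\otimes\proj{g_\bot}$. That reference projector commutes with $\fc$, so $\norm{[\fc,\Pi^g]}=O(\sqrt{t/N})$, and $Q^\perp\fc Q=Q^\perp[\fc,Q]Q$ finishes the bound. No sanitization is needed for $i=1$.
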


\subsubsection{Isometry from ideal to real world.}

We define an isometry mapping ideal-world states to real-world states, so as to reduce the difference between successive hybrid terms. The database $D^3$ may remain nonempty after the isometry; this is acceptable as long as it does not significantly affect that difference.

We first give an informal description of the isometry. In the first step, it applies the flip operator to the database $D^2$; when the databases under consideration are injective, the flip acts nontrivially, mapping $D^2$ to $(D^2)^{-1}$. Next, for a construction query recorded in the ideal world, i.e., a non-$\bot$ entry of $D^3$, say $D^3(x) = y$, the isometry fills the related entries of $D^1$ and $D^2$: the related $D^1$ entry is at input $x \oplus k_1$, and the related $D^2$ entry is at output $y \oplus k_3$. These are filled with a uniform superposition over all admissible intermediate values $u$ and $u\oplus k_2$, linking $D^1$ to $D^2$ via
$u \mapsto u \oplus k_2$ and chosen to preserve injectivity. As a
result, the ideal world construction query can be recovered from $D^1$
and $D^2$ exactly as in the real world. The step is concluded by flipping $D^2$ back. Finally, the isometry uncomputes the entries of $D^3$ whenever possible.

\begin{definition}[Isometry ideal-to-real] \label{def:isometry-k1-k2-k3} The isometry $V: \mathcal{H}_{D^1D^2D^3K_1K_2K_3} \rightarrow \mathcal{H}_{D^1D^2D^3K_1K_2K_3}$ is defined as a product of isometries 
\begin{equation}
    V\coloneqq\vr \fp_{D^2}\vf \fp_{D^2}
\end{equation}
by its action on computational basis states $\ket{D^1,D^2,D^3,k_1,k_2,k_3}$ as
\begin{align*}
    \vf &\coloneqq \prod_{x\in [N]} \vf\ppar{x}, \quad \vr \coloneqq  \prod_{x \in [N]} \vr\ppar{x}\text{ where }\\
    \vf\ppar{x}&\coloneqq \proj{\bot}_{D^3_{x\oplus k_1}}\otimes \Id + \sum_{y\in [N]} \proj{y}_{D^3_{x\oplus k_1}} \otimes \sum_{(D^1_{x})^c\in\mathbf{D}}\proj{(D^1_{x})^c}_{(D^1_{x})^c} \\
    &\quad \otimes \sum_{(D^2_{y\oplus k_3})^c\in\mathbf{D}}\proj{(D^2_{y\oplus k_3})^c}_{(D^2_{y\oplus k_3})^c}  \otimes T\ppar{\psi_{k_2}}_{D^1_{x}D^2_{y\oplus k_3}}, \text{ and }\\
     \vr\ppar{x} &= \proj{\bot}_{D^1_{x}}\otimes \Id+ \sum_{y' \in [N]} \proj{y'}_{D^1_x} \otimes \proj{\bot}_{D^2_{y'\oplus k_2}}\otimes \Id \\
    &\quad+\sum_{y' \in [N]} \proj{y'}_{D^1_x} \otimes \sum_{y \in [N]} \proj{y}_{D^2_{y' \oplus k_2}} \otimes T\ppar{y\oplus k_3}_{D^3_{x \oplus k_1}}.
\end{align*}
\begin{gather*}
   \text{Here, } T\ppar{\psi_{k_2}} \coloneqq \ket{\bot}\ket{\bot}\bra{\psi_{k_2}} + \ket{\psi_{k_2}}\bra{\bot}\bra{\bot} + (\Id - \ket{\bot}\ket{\bot}\bra{\bot}\bra{\bot} - \proj{\psi_{k_2}})\\
    \text{ with }\ket{\psi_{k_2}}\coloneqq \frac{1}{\sqrt{\abs{J}}} \sum_{u\in J} \ket{u}_{D^1_{x}}\ket{u\oplus k_2}_{D^2_{y\oplus k_3}} \text{ and }\\
    J \coloneqq \set{ u \in [N] :
  u \notin \im{(D^1_x)^c}, u \oplus k_2\not\in \im{(D^2_{y\oplus k_3})^c}},\\
    T\ppar{y} \coloneqq \ket{\bot}\bra{y} + \ket{y}\bra{\bot} + (\Id - \proj{\bot} - \proj{y}) \text{ for } y\in[N].
\end{gather*} 
We say $\vf\ppar{x}$ is active if $D^3_{x\oplus k_1}\neq\bot$. Clearly, $\vf= \prod_{x\in [N]:x \text{ is active}}\vf\ppar{x}$.

\end{definition}

Note that the action of $\vf$ is controlled on the entire database $D^1$ and $D^2$ other than the target registers so that it does not introduce any collisions to the databases $D^1$ and $D^2$.

For databases in standard (forward) orientation, we introduce two sets:
in \Cref{def:set-S}, the set of superscripts $x$ such that $\vf\ppar{x}$ acts nontrivially, changing $D^1_x$ from $\bot$ to non-$\bot$; and in \Cref{def:J}, the set of values that such an action may introduce into the image of $D^1$.
\begin{definition} \label{def:set-S}
For states with registers $K_1, K_3, D^1, D^2, D^3$ fixed to
$k_1, k_3, D^1, D^2, D^3$ in the computational basis, we define the set of superscripts at factors of $\vf$ may perform the fill $\ket{\bot}\ket{\bot} \mapsto \ket{\psi_{k_2}}$:
\begin{align*}
    S(K_1, K_3, D^1, D^2, D^3) \coloneqq \set{x\in[N]: D^3({x\oplus k_1})\neq\bot, D^1(x)=\bot, D^3(x\oplus k_1)\oplus k_3\not\in\im{D^2}}.
\end{align*}
If the values $D^3(x\oplus k_1)$, $x \in S$, are pairwise distinct, then $S$ is the exact set that $\vf$ performs the fill $\ket{\bot}\ket{\bot} \mapsto \ket{\psi_{k_2}}$. Otherwise, for $x\in S$, $\vf\ppar{x}$ might behave as identity depending on the order of the factors.
\end{definition} 

\begin{remark}[Fill or identity] \label{rem:fill-or-identity}
Consider a database state in the computational basis with standard state and register names. In each of the following cases, every factor of $\vf$ either performs the fill $\ket{\bot}\ket{\bot} \mapsto \ket{\psi_{k_2}}$ or acts as the identity. Consequently, the terms created are pairwise orthogonal.
\begin{enumerate}[label=(\arabic*)]
  \item $D^1, D^2$ with $k_2$ forms a good database.
  \item Restricted $D^1|_{A}, D^2$ with $k_2$ forms a good database, where $A \coloneqq \set{x \in [N] : \vf\ppar{x} \text{ is active}}$.\label{item:rem:fill-id-restricted-D1}
\end{enumerate}

\end{remark}

\begin{definition} \label{def:J}
For states with registers $K_2$, $D^1,D^2$ are fixed to $D^1,D^2,k_2$ and satisfy one of the cases in \Cref{rem:fill-or-identity}, the set of values that $\vf$ may introduce into the image of $D^1$ is
\begin{align*}
  J(K_2, D^1, D^2) \coloneqq
    \set{ u \in [N] \;:\; u \notin \im{D^1} \cup (\dom{D^2} \oplus k_2) } .
\end{align*}
\end{definition}

\begin{remark}[Commutativity of factors of $\vf$]\label{rem:comm-vf-factors}
    Note that each factor of $\vf\ppar{x}, x\in[N],$ has distinct targets on $D^1$, if, also, the images $D^3(x'\oplus k_1)$ s.t. $x' \in S$ are pairwise distinct, then $\vf\ppar{x}$'s targets on $D^2$ are also distinct. Moreover, if one of the conditions of \Cref{rem:fill-or-identity} holds, all factors of $\vf$ share the same set $J$ and therefore commute. This holds trivially when only one factor of $\vf$, say $\vf\ppar{x}$, is active.
\end{remark}

\begin{remark}\label{rem:S-and-J-wrt-invdb}
 Consider injective databases $\dbswithc$. Note that for databases in inverse orientation (i.e. $\dm$ applied on the standard ones), the sets in these definitions carry the same information relative to $\dbswithc,k_1,k_2,k_3$ (as states) but the definitions do not give the same sets relative to the registers. We can express the relations between them. Let \begin{align}
\ket{\psi} &= \ket{D^1}_{D^1} \ket{D^2}_{D^2} \ket{D^3}_{D^3}
             \ket{k_1}_{K_1} \ket{k_2}_{K_2} \ket{k_3}_{K_3},\\
\ket{\psi'} &= \dm\ket{\psi}
            = \ket{(D^2)^{-1}}_{D^1} \ket{(D^1)^{-1}}_{D^2} \ket{(D^3)^{-1}}_{D^3}
              \ket{k_3}_{K_1} \ket{k_2}_{K_2} \ket{k_1}_{K_3}.
 \end{align} If $S$ is the set defined in \Cref{def:set-S} where the registers are fixed as in $\ket{\psi}$ and $S'$ is the set for $\ket{\psi'}$, then \begin{align}
     x\in S \iff y_x\oplus k_3 \in S' \text{ where } y_x=D^3(x\oplus k_1).
 \end{align}Similarly, if $J$ is the set defined in \Cref{def:J} where the registers are fixed as in $\ket{\psi}$ and $J'$ is the set for $\ket{\psi'}$, then \begin{align}
     u\in J \iff u\oplus k_2 \in J'.
 \end{align} Note that due to the database swap, the elements introduced to the image of $D^1$ for both $\ket{\psi}$ and $\ket{\psi'}$ remains to be the same. 

\end{remark}

The set $J(K_2, D^1,D^2)$ is defined independently of $D^3$. When $ S(K_1, D^1, D^2,D^3)\neq\emptyset$ (hence dependence on $D^3$), they can alternatively be defined as the set $J$ corresponding to the first non-trivially acting $\vf\ppar{x}$, by changing $\bot$ to a non-$\bot$ state in ${D^1_x}$, in \Cref{def:isometry-k1-k2-k3} — that is the smallest element of $S$. In this nontrivial action, since the target registers are in $\bot$ and the control values are determined directly by the initial databases $D^1, D^2$ and the key $k_2$, we can define $J$ with respect to the whole database rather than the complement of the target.

\subsection{Hybrids difference}

We write the hybrid difference where ideal queries are interleaved by $\Pi^g$ and $ \Pi^{inj}$ to avoid loosening the bound, noting again $\Pi^\star=\Pi^g\Pi^{inj}$. We write $I^{\star }_{q:1}$ instead of $I^{\Pi^\star }_{q:1}$ which is defined by \eqref{eq:def-I-pi} as \begin{align}
     I^{\star }_{q:1}=\prod_{j=1}^a (\Pi^\star U^j\cco^I) \Pi^\star U^0\text{ for } a\in\set{1,\dots, q}.
\end{align}
\begin{lemma}\label{lem:q-star-interleaved-vs-not}
For every $q$ with $q^2 \leq N/2$,
\begin{align}
    \norm{(\Pi^\star)^\perp I_{q:1}\ket{\psi^0}} \leq O\big(\sqrt{q^3/N}\big)
    \quad\text{and}\quad
    \norm{(I^{\star}_{q:1} - \Pi^\star I_{q:1})\ket{\psi^0}} \leq O\big(\sqrt{q^3/N}\big).
\end{align}
\end{lemma}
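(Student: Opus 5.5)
We prove the second bound by a telescoping (gentle-measurement-style) hybrid argument that inserts the projector $\Pi^\star$ one query at a time, and then read the first bound off the last term of the same telescoping sum. For $0\le j\le q$, let $H_j$ denote the ideal experiment in which $\Pi^\star$ is interleaved only after the first $j$ steps and the remaining steps are unprojected:
\begin{align*}
H_j \coloneqq \Pi^\star \Bigl(\prod_{i=j+1}^{q} U^i\cco^I\Bigr) I^{\star}_{j:1}\ket{\psi^0},
\end{align*}
so that $H_q = I^\star_{q:1}\ket{\psi^0}$ (since $\Pi^\star$ is idempotent) and $H_0=\Pi^\star I_{q:1}\ket{\psi^0}$, using that $\Pi^\star\ket{\psi^0}=\ket{\psi^0}$. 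The latter holds because the empty databases are injective, and $\kappa_{\bot,\bot}=[N]$ gives $\ket{G_{\bot,\bot}}=\ket{+^n}_{K_2}\ket{\bot}\ket{\bot}$. The adversary unitaries $U^i$ act only on $AXY$, so they commute with $\Pi^\star$, which acts on $D^1D^2K_2$. Consequently $H_j-H_{j-1}$ equals a contraction (the remaining queries followed by $\Pi^\star$) applied to $(\Pi^\star)^\perp\cco^I\,\Pi^\star\ket{\phi_{j-1}}$, where $\ket{\phi_{j-1}} = U^{j-1}\cdots$ is the projected state after $j-1$ queries. This gives
\begin{align*}
\norm{H_j-H_{j-1}} \le \norm{(\Pi^\star)^\perp \cco^I \Pi^\star \Pi^{j-1}}.
\end{align*}

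To justify the insertion of $\Pi^{j-1}$, we note that the state before the $j$-th query lies in the range of $\Pi^{j-1}$ on each database. This follows by induction from \Cref{lem:pi-t-violation}: each $\cco^I$ call grows at most one database by at most one entry. We must also check that $\Pi^\star$ preserves the range of $\Pi^{j-1}$. It does, because $\Pi^g$ and $\Pi^{inj}$ act diagonally in the database basis: $\Pi^g$ only reweights the $K_2$ register for fixed $D^1,D^2$. Hence they commute with $\Pi^t$. For inverse queries, $\dm$ preserves database sizes. Applying \Cref{lem:pi-star-violation} with $Q=\Pi^\star$ and $t=j-1$, which is valid since $t^2\le q^2\le N/2$ and covers superpositions of oracle indices and directions, bounds each term by $O(\sqrt{j/N})$. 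Summing over $j$ gives
\begin{align*}
\norm{(I^\star_{q:1}-\Pi^\star I_{q:1})\ket{\psi^0}} \le \sum_{j=1}^q O\bigl(\sqrt{j/N}\bigr)=O\bigl(\sqrt{q^3/N}\bigr).
\end{align*}

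For the first bound, write
\begin{align*}
(\Pi^\star)^\perp I_{q:1}\ket{\psi^0} = (\Pi^\star)^\perp \bigl(I_{q:1}-I^\star_{q:1}\bigr)\ket{\psi^0},
\end{align*}
which holds since $(\Pi^\star)^\perp I^\star_{q:1}=0$, the product $I^\star_{q:1}$ ending with $\Pi^\star$. We then telescope $I_{q:1}-I^\star_{q:1}$ by removing the projectors one at a time. Each difference term has the form (unitary)$\,\cdot\,(\Pi^\star)^\perp$ applied to a state whose last projection is $\Pi^\star$, followed by one $\cco^I$ query. This is the same quantity as before, so it is again bounded by $O(\sqrt{j/N})$. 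The sum is $O(\sqrt{q^3/N})$.

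The main obstacle is the bookkeeping rather than any estimate. Specifically, we must verify that $\Pi^\star$ commutes with $\Pi^t$ and with the adversary unitaries, and that $\ket{\psi^0}$ lies in the range of $\Pi^\star$. The latter requires the particular normalization of $\Pi^g$ on $K_2$. We must also confirm that the per-step operator-norm bound of \Cref{lem:pi-star-violation} applies to the controlled, multi-index query $\cco^I$, including inverse queries conjugated by $\dm$. All of the quantitative content is delegated to that lemma.
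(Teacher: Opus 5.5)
Your proposal is correct and takes essentially the same route as the paper. Both telescope the insertion of $\Pi^\star$ one query at a time, bound each step by $\norm{(\Pi^\star)^\perp\cco^I\Pi^\star\Pi^{t}}=O(\sqrt{t/N})$ via \Cref{lem:pi-t-violation} and \Cref{lem:pi-star-violation}, sum to $O(\sqrt{q^3/N})$, and obtain the first bound by the same telescoping; your bookkeeping is sound, apart from the harmless slip that $\Pi^{inj}$ also acts on $D^3$.
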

\begin{proof}
See \Cref{app:hybdrids-diff}.
    
\end{proof}
\begin{theorem}\label{thm:hybrid-diff}
Let $\algo A$ be a quantum algorithm making at most $q$ queries, with $q^2 \leq N/2$. Then
\begin{align}
    &\abs{\Pr[\algo A^{P_1, P_2, P_3}()=1]-\Pr[\algo A^{P_1, P_2, \pkac}()=1]} \nonumber\\
    &\quad \leq \sum_{t=1}^q \Bigl\lVert \Bigl( \cco^{R}V - V \Pi^\star \cco^{I} \Bigr) \Pi^\star \Pi^{t-1}\Pi^{k_1,k_3} \Bigr\rVert + O\big(\sqrt{q^3/N}\big) + \varepsilon_{\textsf{CPO}}. \label{eq:lem:hybrids-with-keys}
\end{align}
We call the $t$-th summand the \emph{hybrid difference at the $t$-th query}.
\end{theorem}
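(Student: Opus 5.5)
The plan is the standard compressed-oracle hybrid argument: replace the real permutations by compressed oracles, relate the two final states through the isometry $V$ via \Cref{lem:approx-uhlman}, and telescope query by query.

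\textbf{Step 1: compressed oracles.} By \Cref{thm:soundness}, applied to each of $P_1,P_2,P_3$ and to the constant number of internal calls in a construction query, both experiments are replaced by their compressed versions $\ket{\psi^q_I}=I_{q:1}\ket{\psi^0}$ and $\ket{\psi^q_R}=R_{q:1}\ket{\psi^0}$. This costs $\varepsilon_{\textsf{CPO}}$. Next I pass from permutation compression $\pc$ to the function compression $\fc$ used in \Cref{def:ideal-queries,def:real-queries}. On injective databases of size at most $t$, \Cref{cor:perm-function-comp} charges $O(\sqrt{t/N})$ per query. Summing gives $O(\sqrt{q^3/N})$, or one absorbs this into $\varepsilon_{\textsf{CPO}}$.

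\textbf{Step 2: Uhlmann and projection.} By \Cref{lem:approx-uhlman}, the distinguishing advantage is at most $\norm{R_{q:1}\ket{\psi^0}-V I_{q:1}\ket{\psi^0}}$, since $V$ acts only on oracle registers. By \Cref{lem:q-star-interleaved-vs-not}, $V I_{q:1}\ket{\psi^0}$ may be replaced by $V I^\star_{q:1}\ket{\psi^0}$ at cost $O(\sqrt{q^3/N})$: first insert $\Pi^\star$ at the end, then interleave it after every query. We also need $V\ket{\psi^0}=\ket{\psi^0}$. This holds because $D^3$ is empty, so $\vf$ and $\vr$ act trivially, and the flips fix the empty database. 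Hence $R_{0}$ and $V I_0$ agree on the initial state.

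\textbf{Step 3: telescoping.} Define hybrids $H_t = R_{q:t+1}\,V\, I^\star_{t:1}\ket{\psi^0}$, so that $H_0$ is the real state and $H_q = V I^\star_{q:1}\ket{\psi^0}$. Then
\[
\norm{H_{t-1}-H_t}\le \norm{\bigl(\cco^R V - V\Pi^\star\cco^I\bigr)\Pi^\star I^\star_{t-1:1}\ket{\psi^0}},
\]
using that $R$ is unitary and that $U^t$ acts only on adversary registers and so commutes with $V$ and with the oracle projectors. The state $I^\star_{t-1:1}\ket{\psi^0}$ lies in the range of $\Pi^\star$. By \Cref{lem:pi-t-violation} it also lies in the range of $\Pi^{t-1}$. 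It lies in the range of $\Pi^{k_1,k_3}$ because no ideal-world operator touches $K_1$ or $K_3$, and $\Pi^g$ acts only on $K_2$, $D^1$ and $D^2$. Inserting these projectors and taking operator norms yields the $t$-th summand.

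\textbf{Main obstacle.} The delicate point is keeping the projectors consistent with both $V$ and the telescoping. $\Pi^\star$ acts on $K_2$, so we must check that the $\Pi^\star$ inserted inside the ideal steps does not need to commute with $V$; the order $V\Pi^\star\cco^I$ is exactly what the hybrid produces. We must also make sure the error from Step 2 is charged once and not per hybrid. Applying \Cref{lem:q-star-interleaved-vs-not} globally before telescoping handles this.
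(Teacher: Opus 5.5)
Your proposal is correct and follows essentially the paper's proof. It uses \Cref{lem:approx-uhlman} with $V\ket{\psi^0}=\ket{\psi^0}$, then \Cref{lem:q-star-interleaved-vs-not} to interleave $\Pi^\star$, then a telescoping sum in which $U^t$ commutes past $V$ and $\Pi^\star$, so that each summand acts on a state in the range of $\Pi^\star\Pi^{t-1}\Pi^{k_1,k_3}$. The differences are cosmetic: you spell out the CPO-soundness and $\pc\to\fc$ replacement (via \Cref{cor:perm-function-comp}) that the paper folds into $\varepsilon_{\textsf{CPO}}$ without comment, and you apply \Cref{lem:q-star-interleaved-vs-not} once before telescoping rather than carrying the interleaving error as a separate term.
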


\begin{proof}
    See \Cref{app:hybdrids-diff}.
\end{proof}

\subsection{Reducing inverse queries to forward ones}

As a general proof strategy for bounding the real and ideal world hybrids, we reduce inverse queries to forward ones, accounting for additional terms arising during the reduction, namely the commutator $[\dm, V]$. 

When bounding this commutator, the isometry $V$ (defined in \Cref{def:isometry-k1-k2-k3}) behaves well in the sense that it acts symmetrically for forward and inverse queries. However, the critical issue is that $\fp$ in $\dm$ behaves differently depending on whether the database is injective or not. The factor $\fp_{D^2}\vf\fp_{D^2}$ of $V$ preserves injectivity by definition but $\vr$ may not. For this reason, we use the good database projector defined in \Cref{def:pi-g} so that $D^1$ and $D^2$ databases do not contain related entries by $k_2$. On good databases, $\vr$ does not introduce any new element on $D^3$, hence preserves injectivity, and helps to bound the commutator. 

We look into the commutation relation of $\dm$ and the database projectors.
\begin{lemma}\label{lem:comm-dm-and-proj}
    \begin{align}
    [\dm, \Pi^{t}]=[\dm, \Pi^{inj}]=[\dm, \Pi^{g}]\Pi^{inj}=[\dm, \Pi^{k_1,k_3}]=0.
    \end{align}
    These projectors commute pairwise. Moreover, $M$ commutes with their products on injective databases.
\end{lemma}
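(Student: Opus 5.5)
The plan is to check each commutation on computational basis states $\ket{D^1,D^2,D^3,k_1,k_2,k_3}$, using that $\dm=\sw_{K^1K^3}\sw_{D^1D^2}\fp^{\otimes 3}$ permutes basis states. Hence $\dm$ commutes with a projector diagonal in this basis exactly when the projector's defining predicate is invariant under the map $(D^1,D^2,D^3,k_1,k_2,k_3)\mapsto(\tilde D^2,\tilde D^1,\tilde D^3,k_3,k_2,k_1)$. Here $\tilde D=D^{-1}$ if $D$ is injective, and $\tilde D=D$ otherwise.

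\textbf{Size projector.} For $\Pi^t=\Pi^t_{D^1}\Pi^t_{D^2}\Pi^t_{D^3}$, an injective $D$ satisfies $|D^{-1}|=|\dom{D^{-1}}|=|\im{D}|=|D|$. A non-injective $D$ is left unchanged by $\fp$. So each flip preserves size. The swap of $D^1,D^2$ only permutes factors of the symmetric product, so $[\dm,\Pi^t]=0$.

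\textbf{Injectivity projector.} $\fp$ maps injective databases to injective ones and fixes non-injective ones. The predicate is therefore invariant, and $[\dm,\Pi^{inj}]=0$.

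\textbf{Key projector.} $\Pi^{k_1,k_3}=\proj{+^n}_{K_1}\otimes\proj{+^n}_{K_3}$ is symmetric under $\sw_{K^1K^3}$ and acts trivially on the database registers, so it commutes with $\dm$.

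\textbf{Good projector.} $\Pi^g$ is not diagonal: it is a projector onto the normalized uniform superposition over $\kappa_{D^1,D^2}$ in $K_2$, tensored with fixed $D^1,D^2$. Restrict to injective databases, which is legitimate since $\Pi^{inj}$ commutes with both $\dm$ and $\Pi^g$, the latter because $\Pi^g$ is block diagonal in $(D^1,D^2)$. There $\dm$ sends $\ket{G_{D^1,D^2}}\otimes\ket{D^3,k_1,k_3}$ to $\ket{G'}\otimes\ket{(D^3)^{-1},k_3,k_1}$, where $\ket{G'}$ is the uniform superposition over the same $k_2\in\kappa_{D^1,D^2}$ with $D^1$ register $(D^2)^{-1}$ and $D^2$ register $(D^1)^{-1}$. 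It then suffices to show $\kappa_{(D^2)^{-1},(D^1)^{-1}}=\kappa_{D^1,D^2}$. This holds because
\begin{align*}
\dom{(D^1)^{-1}}\cap(\im{(D^2)^{-1}}\oplus k_2)=\im{D^1}\cap(\dom{D^2}\oplus k_2),
\end{align*}
and the right-hand side is empty iff $\dom{D^2}\cap(\im{D^1}\oplus k_2)=\emptyset$, since $\oplus k_2$ is an involution. Thus $\dm$ maps the range of $\Pi^g\Pi^{inj}$ onto itself. Because $\dm$ is unitary, an involution, and preserves the injective block, it also maps the orthogonal complement within that block onto itself. This yields $[\dm,\Pi^g]\Pi^{inj}=0$.

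\textbf{Pairwise commutation and products.} The projectors $\Pi^t,\Pi^{inj},\Pi^{k_1,k_3}$ act diagonally or on disjoint registers, so they commute. $\Pi^g$ is block diagonal in $(D^1,D^2)$ and acts trivially on $D^3,K_1,K_3$, so it commutes with the others. Finally, $[\dm,\prod Q]\Pi^{inj}=0$ follows by telescoping $[\dm,AB]=[\dm,A]B+A[\dm,B]$ and moving $\Pi^{inj}$ through the factors.

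\textbf{Main obstacle.} The only delicate step is $\Pi^g$. Off injective databases, $\fp$ fixes one database but not another, and the symmetry of $\kappa$ fails. This is why the statement is only made on injective databases.
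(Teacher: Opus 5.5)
Your proof is correct and follows the paper's route: for $\Pi^t$ and $\Pi^{inj}$ you use that $\dm$ permutes computational basis states, for $\Pi^{k_1,k_3}$ you use swap-invariance, and for $\Pi^g$ you use the identity $\kappa_{(D^2)^{-1},(D^1)^{-1}}=\kappa_{D^1,D^2}$ on injective databases. Your version is somewhat more complete than the paper's. You state the key-set identity with the domain and image roles correct, where the paper's displayed chain has a typo. You also justify the pairwise-commutation and product claims, via the telescoping $[\dm,AB]$ argument, which the paper leaves implicit.
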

\begin{proof}
See \Cref{app:reduce-inv-to-fwd}.
\end{proof}

We look into the commutation relations of $\dm$ and $V$.

\begin{lemma} \label{lem:v-dm-comm}
$\norm{[\dm, V] \Pi^g \Pi^{inj} \Pi^t}=0$.
\end{lemma}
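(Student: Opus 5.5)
We want to show $\dm V\ket{\psi}=V\dm\ket{\psi}$ for every computational basis state $\ket{\psi}=\ket{D^1,D^2,D^3,k_1,k_2,k_3}$ in the range of $\Pi^{inj}\Pi^t$ whose $(D^1,D^2,k_2)$ is good. By \Cref{lem:comm-dm-and-proj}, $\dm$ preserves $\Pi^{inj}\Pi^t$ and, on injective databases, also $\Pi^g$. Hence $\dm$ maps this range to itself, and we may argue basis state by basis state.

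The range of $\Pi^g$ consists of uniform superpositions over $k_2\in\kappa_{D^1,D^2}$. Both $V$ and $\dm$ act diagonally in $K_2$: $\dm$ leaves $K_2$ fixed, and $V$ is controlled on it. So it suffices to prove the identity for each fixed $k_2\in\kappa_{D^1,D^2}$. Write $\ket{\psi'}=\dm\ket{\psi}$ as in \Cref{rem:S-and-J-wrt-invdb}.

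The plan is to compute both sides explicitly as superpositions of databases.

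\textbf{Computing $V\ket{\psi}$.} On a good injective basis state, \Cref{rem:fill-or-identity} says every factor of $\vf$ either fills or acts as the identity. \Cref{rem:comm-vf-factors} says the factors commute and share the same set $J$. Therefore $\fp_{D^2}\vf\fp_{D^2}\ket{\psi}$ is a uniform superposition over injective fillings. Each filling assigns distinct values $u_x\in J$ and sets
\[
D^1(x)=u_x,\qquad D^2(u_x\oplus k_2)=D^3(x\oplus k_1)\oplus k_3,
\qquad x\in S.
\]
Then $\vr$ removes $D^3(x\oplus k_1)$ exactly for those $x$ with $D^1(x)\neq\bot$ and $D^2(D^1(x)\oplus k_2)=D^3(x\oplus k_1)\oplus k_3$.

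By goodness of the original pair, no original $D^1$ entry is linked via $k_2$ to any $D^2$ entry. The new entries are linked only to each other, because $J$ avoids both $\im{D^1}$ and $\dom{D^2}\oplus k_2$. So the removed $D^3$ entries are precisely the entries indexed by $S$.

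The resulting branch is therefore a triple of injective databases: $D^3$ loses only entries and stays injective, and the filled $D^1,D^2$ are injective. Consequently $\fp^{\otimes 3}$ inverts each branch, and $\dm$ acts on $V\ket{\psi}$ by the explicit relabelling of \Cref{rem:S-and-J-wrt-invdb}.

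\textbf{Computing $V\dm\ket{\psi}$.} The same computation applies to $\ket{\psi'}$. \Cref{rem:S-and-J-wrt-invdb} gives the bijections $x\in S\iff D^3(x\oplus k_1)\oplus k_3\in S'$ and $u\in J\iff u\oplus k_2\in J'$. Under these bijections, a filling $u_x$ of $\ket{\psi}$ corresponds to the filling $u_x\oplus k_2$ of $\ket{\psi'}$:
\begin{itemize}
\item the new $D^1$ entry of $\ket{\psi'}$, at $D^3(x\oplus k_1)\oplus k_3$, is $u_x\oplus k_2$, which is the inverse of the new $D^2$ entry of $\ket{\psi}$;
\item the new $D^2$ entry of $\ket{\psi'}$ is $u_x\mapsto x\oplus k_1\oplus k_3$ after the key swap, matching $(D^1)^{-1}$ with the $K_1,K_3$ swap.
\end{itemize}
The $\vr$ step removes the corresponding $(D^3)^{-1}$ entries. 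Hence the two uniform superpositions coincide term by term, with equal normalisation $|J|=|J'|$ and an equal number of branches.

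\textbf{Main obstacle.} The delicate step is the case where the values $D^3(x\oplus k_1)$ for $x\in S$ are not pairwise distinct, so that the ordering of factors of $\vf$ matters (\Cref{def:set-S}). This cannot happen here: $D^3$ is injective under $\Pi^{inj}$, so these values are distinct. With distinct values, every factor acts as a genuine fill in the remaining cases. We also need that the ordered product in $\vf$ does not break the symmetry. Since the factors commute (\Cref{rem:comm-vf-factors}), the order is irrelevant and the set of branches is symmetric.

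A second point is to confirm that $\vr$ never acts on a $\bot$ entry of $D^3$ to create a new entry. That would break injectivity, and then $\fp$ would act as the identity on one side but not the other. Goodness rules this out, because every linked pair $(D^1(x),D^2(D^1(x)\oplus k_2))$ arises from a fill and hence comes with a defined matching $D^3$ entry.
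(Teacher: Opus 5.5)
Your argument is correct and is essentially the paper's proof: fix a good injective basis state and a single $k_2$ branch, write both sides as uniform superpositions over injective fillings matched via the $S\leftrightarrow S'$ and $J\leftrightarrow J'$ bijections of \Cref{rem:S-and-J-wrt-invdb}, and use goodness to see that $\vr$ only deletes the $S$-indexed entries of $D^3$, so every flip in $\dm$ succeeds. The only difference is organizational: the paper writes $[\dm,V]=\vr[\dm,\fp_{D^2}\vf\fp_{D^2}]+[\dm,\vr]\fp_{D^2}\vf\fp_{D^2}$ and treats the two pieces separately, whereas you compute all of $V$ at once. One small correction: the new $D^2$ entry on the $\dm$ side is simply $u_x\mapsto x$, since there $D^3$ returns $x\oplus k_1$ and $K_3$ holds $k_1$.
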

\begin{proof}
    See \Cref{app:reduce-inv-to-fwd}.
\end{proof}

We are ready to reduce the hybrid difference for inverse queries to that for forward ones. For now, fix the index register $I$ to $i=3$. Let $\cco^{R,fwd}$ and  $\cco^{I,fwd}$ be $\cco^{R}$ and $\cco^{I}$ restricted to $b=0, i=3$, respectively. Similarly, let $\cco^{R,inv}$ and  $\cco^{I,inv}$ be $\cco^{R}$ and $\cco^{I}$ restricted to $b=1, i=3$, respectively. Then one sees that \begin{align}
    \cco^{R,inv}= \dm  \cdot\cco^{R,fwd} \cdot\dm^\dagger \text{ and } \cco^{I,inv}= \dm\cdot \cco^{I,fwd} \cdot\dm^\dagger.\label{eq:inv-to-fwd}
\end{align} 

\begin{theorem}\label{thm:inv-to-fwd-hybrid} Using the notation of \eqref{eq:inv-to-fwd} for the inverse query operator, for every integer $t \geq 0$, the hybrid difference for an inverse query ($b=1$) with $i=3$ equals that for the forward one:
    \begin{align}
      & \lVert (\dm  \cco^{R,fwd} \dm V - V \Pi^{g}  \Pi^{inj}  \dm \cco^{I,fwd}  \dm) \Pi^{g}  \Pi^{inj}  \Pi^t  \Pi^{k_1,k_3}\rVert\\
  &\quad= \lVert  ( \cco^{R,fwd} V   -  V \Pi^{g}  \Pi^{inj}  \cco^{I,fwd}  )  \Pi^{g}  \Pi^{inj}  \Pi^t \Pi^{k_1,k_3}\rVert
    \end{align}The same holds for an inverse query with $i \in \{1,2\}$ except that an inverse query with $i=1$ is reduced to forward one with $i=2$, and vice versa by definition.
\end{theorem}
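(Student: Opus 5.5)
The plan is to conjugate the whole inverse-query expression by $\dm$ and use unitary invariance of the operator norm. Two facts about $\dm$ drive this. First, $\dm$ is a unitary involution: it is a product of commuting self-adjoint factors $\sw_{K^1K^3}$, $\sw_{D^1D^2}$ and $\fp^{\otimes 3}$, each squaring to the identity, so $\dm=\dm^\dagger=\dm^{-1}$. Second, \Cref{lem:comm-dm-and-proj} and \Cref{lem:v-dm-comm} let us move $\dm$ past the projectors and past $V$ on the relevant subspace.

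Write $P\coloneqq \Pi^{g}\Pi^{inj}\Pi^t\Pi^{k_1,k_3}$. By \Cref{lem:comm-dm-and-proj}, $\dm$ commutes with $\Pi^t$, $\Pi^{inj}$ and $\Pi^{k_1,k_3}$. It also commutes with $\Pi^g$ once restricted to injective databases, because $[\dm,\Pi^g]\Pi^{inj}=0$ and $\Pi^{inj}$ commutes with both. Hence $\dm P = P\dm$ and $\dm\Pi^g\Pi^{inj}=\Pi^g\Pi^{inj}\dm$.

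The steps are then as follows.

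\textbf{Real-world term.} Since $\dm V P = V\dm P$ by \Cref{lem:v-dm-comm}, we get $\dm\cco^{R,fwd}\dm V P=\dm\cco^{R,fwd}V\dm P=\dm\cco^{R,fwd}VP\dm$.

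\textbf{Ideal-world term.} We have
\[
V\Pi^g\Pi^{inj}\dm\cco^{I,fwd}\dm P = V\dm\Pi^g\Pi^{inj}\cco^{I,fwd}P\dm .
\]
We now want to replace $V\dm$ by $\dm V$. For this, apply \Cref{lem:v-dm-comm} to the vector $\Pi^g\Pi^{inj}\cco^{I,fwd}P\ket{\phi}$. It lies in the range of $\Pi^g\Pi^{inj}$. By \Cref{lem:pi-t-violation} it also lies in the range of $\Pi^{t+1}$, since $\cco^{I,fwd}$ increases database size by at most one and $\Pi^{g}\Pi^{inj}$ preserves $\Pi^{t+1}$ (these projectors commute). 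Since $\Pi^g\Pi^{inj}\Pi^{t+1}$ fixes this vector, \Cref{lem:v-dm-comm} with parameter $t+1$ gives $V\dm=\dm V$ on it.

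\textbf{Conclusion.} The whole operator becomes $\dm(\cco^{R,fwd}V-V\Pi^g\Pi^{inj}\cco^{I,fwd})P\dm$. Unitary invariance of the operator norm on both sides then yields the claimed equality.

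For $i\in\{1,2\}$ the same computation goes through. The definitions give $\cco^{R}$ and $\cco^I$ with $b=1,i=1$ equal to $\dm$ conjugating the forward $D^2$ query, and vice versa, so the reduction lands on the forward query of the other index.

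The main obstacle is the middle step. The commutation $[\dm,V]=0$ is only available on good, injective, size-bounded databases, so we must check that the intermediate state after the forward ideal query lands back in that subspace. The explicit $\Pi^g\Pi^{inj}$ inside the hybrid difference gives the first two properties. For the size bound, I would verify that $\Pi^t$ commutes with $\Pi^g$ and $\Pi^{inj}$; this holds because all three are diagonal in the database basis, with $\Pi^g$ acting on $K_2$ only through a coefficient determined by the databases. One should also check that $\Pi^g$ acts on the key register so that $\dm$'s swap of $K_1,K_3$ does not interfere. This is consistent because $\Pi^g$ involves only $K_2$, which $\dm$ leaves fixed.
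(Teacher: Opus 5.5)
Your proposal is correct and follows the paper's proof essentially step for step. You commute $\dm$ past $\Pi^g\Pi^{inj}$ and the remaining projectors via \Cref{lem:comm-dm-and-proj}, and you interchange $\dm$ and $V$ via \Cref{lem:v-dm-comm}, applied at size $t+1$ because $\cco^{I,fwd}$ maps $\Pi^t$ into $\Pi^{t+1}$ by \Cref{lem:pi-t-violation}; the paper uses this implicitly. You then factor $\dm$ out on both sides and conclude by unitary invariance of the norm. The only difference is the order of bookkeeping: you push the trailing $\dm$ through the projector block before swapping it with $V$, whereas the paper does this in its final step.
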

\begin{proof}
  See \Cref{app:reduce-inv-to-fwd}.  
\end{proof}

 \subsection{Bounding the forward query hybrids}

We define two projectors controlled on the query register, both ensuring that the real and ideal world oracles return the same value during the $(t+1)$-st query, so that the hybrid difference is small.

For $P_1$ and $P_2$ queries the two oracles are the same, and $\Gamma_1$ ensures that the isometry leaves the database register holding the current query untouched, and does not link it to the other registers; consequently the behaviour of the isometry elsewhere does not closely depend on what that register holds. Keeping the queries database register the same in both worlds is not sufficient though since isometry might alter the other registers differently in both terms.

For $P_3/\pkac$-queries the oracles differ: the ideal world answers from $D^3$, the real world from $D^1$ and $D^2$. Here the isometry must act nontrivially For query on $x$, this requires both target registers of its factor $\vf\ppar{x\oplus k_1}$ to be in $\bot$; $\Gamma_1$ and $\Gamma_2$ enforce this for the
relevant $D^1$ and $D^2$ registers, respectively. Once $\vf\ppar{x\oplus k_1}$ is active, then $\vr\ppar{x\oplus k_1}$ automatically does the rest of the work of uncomputing $D^3_x$.

\begin{definition}[Good query]\label{def:gamma-p}
For a point $z \in [N]$, a key register $K$ and a database register $D$, let
\[
  \Pmiss{z}{K}{D} \coloneqq
  \sum_{k\in[N]} \proj{k}_{K} \otimes
  \sum_{D:\, z\notin\im{D}\oplus k} \proj{D}_{D}
\]
be the projector onto states in which no entry of $D$, shifted by the key in $K$, equals $z$. Define
\begin{align*}
  \Gamma_1 \coloneqq \sum_{x\in[N]} \proj{x}_X \otimes
  \begin{cases}
    \sum_{k_1\in[N]} \proj{k_1}_{K_1} \otimes \proj{\bot}_{D^3_{x\oplus k_1}}
      & \text{for a $P_1$-query,} 
      \\
    \Pmiss{x}{K_2}{D^1} \otimes \Bigl(
      \proj{\bot}_{D^2_x} \otimes \Id_{K_3 D^3}
      + \sum_{y\in[N]} \proj{y}_{D^2_x} \otimes \Pmiss{y}{K_3}{D^3}
    \Bigr)
      & \text{for a $P_2$-query,} 
      \\
    \sum_{k_1\in[N]} \proj{k_1}_{K_1} \otimes \proj{\bot}_{D^1_{x\oplus k_1}}
      & \text{for a $\pkac/P_3$-query},
  \end{cases}
\end{align*}
    \begin{align*}
       \Gamma_2 &\coloneqq \sum_{x\in[N]}\proj{x}_X\otimes \proj{\bot}_{D^3_x}\otimes \Id+ \sum_{x,y,k_3\in [N]} \proj{x}_X\otimes \proj{y}_{D^3_x} \otimes\proj{k_3}_{K_3}\otimes (\fp_{D^2}\proj{\bot}_{D^2_{y\oplus k_3}}\fp_{D^2})
    \end{align*}for a $\pkac/P_3$-query,
\end{definition}

In words, on a $P_1$-query $\Gamma_1$ ensures that $\vf\ppar{x}$ is not active on the given state, whereas on a $P_3$-query it ensures that the database $D^1$ does not prevent $\vf\ppar{x \oplus k_1}$ from being active.

On $P_2$-query, $\Gamma_1$ behaviour is more involved because the database $D^2$ may be pointed to by both $D^1$ and $D^3$ during the application of $\fp_{D^2}\vf\fp_{D^2}$, hence $\Gamma_1$ ensures two things. First, it ensures $x \notin \im{D^1} \oplus k_2$, hence no register of $D^1$ points to $x$. This is useful by enabling us to argue that a database is good independent of what $D^2_x$ holds. Second, if the queried database register $D^2_x$ is not $\bot$, then it ensures that the value does not lie in $\im{D^3} \oplus k_3$. The second condition is needed later, when we compare the images under $\fp_{D^2}\vf\fp_{D^2}$ of two terms that agree everywhere except at $D^2_x$; if $D^2(x)=D^3(i\oplus k_1)\oplus k_3 \in \im{D^3} \oplus k_3$ and $D^1_i=\bot$ initially, the image of $D^1(i)$ under $\fp_{D^2}\vf\fp_{D^2}$ is $\bot$ in every term. However, if $D^2(x)\not\in \im{D^3} \oplus k_3$ and $D^1_i=\bot$, then the image of $D^1(i)$ under $\fp_{D^2}\vf\fp_{D^2}$ is not necessarily in $\bot$.

\begin{lemma}\label{lem:gamma1-perp-bound}
For every query with $b=0$ and $i \in [3]$, and every integer $t \geq 1$ with $t^2 \leq N/2$,
\begin{align}
    \lVert \Gamma_1^\perp \Pi^\star \Pi^{t} \Pi^{k_1,k_3} \rVert \leq O\big(\sqrt{t/N}\big).
\end{align}
\end{lemma}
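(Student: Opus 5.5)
The plan is to exploit the structure of the three projectors. Each $\Gamma_1$ is block diagonal in $X$ and, within each block, diagonal in the computational basis of the key and database registers, except for the uniform superposition in $K_2$ coming from $\Pi^g$. Hence $\Gamma_1^\perp$ projects onto a set of ``bad'' events, each a collision between a key and a polynomial-size set determined by the databases. The goal is to show that $\Pi^\star\Pi^t\Pi^{k_1,k_3}$ places weight at most $O(t/N)$ on these events. Since every operator in sight commutes with $\proj{x}_X$, it suffices to bound $\norm{\Gamma_1^{\perp}(x)\,\Pi^\star\Pi^t\Pi^{k_1,k_3}}$ uniformly for fixed $x$. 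We then split into the three query types.

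\textbf{$P_1$ and $P_3$ queries.} Here $\Gamma_1^\perp$ is $\sum_{k_1}\proj{k_1}_{K_1}\otimes(\Id-\proj{\bot})_{D^3_{x\oplus k_1}}$, respectively the same expression with $D^1_{x\oplus k_1}$. Consider the image of $\Pi^\star\Pi^t\Pi^{k_1,k_3}$ and fix all database contents $D^1,D^2,D^3$ in the computational basis.
\begin{itemize}
\item $\Pi^{k_1}$ makes $K_1$ uniform.
\item $\Pi^g$ and $\Pi^{inj}$ touch only $K_2,D^1,D^2$ and do not disturb $K_1$.
\item $\Pi^g$ commutes with $\Pi^{k_1}$, since it acts on disjoint registers.
\end{itemize}
For each fixed database, $\Pi^{k_1}$ therefore outputs $\ket{+^n}_{K_1}$. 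The bad event is $k_1\in x\oplus\dom{D^3}$, a set of size at most $t$. Its weight is at most $t/N$, so the norm is at most $\sqrt{t/N}$. Formally, we write $\Gamma_1^\perp\Pi^{k_1}=\sum_{D}\proj{D}\otimes\Pi_{B_D}\proj{+^n}$ with $|B_D|\le t$ and use $\norm{\Pi_B\ket{+^n}}^2=|B|/N$. Because the $\Pi^t$ and $\Pi^\star$ parts are diagonal in (or commute with) the database basis, the operator is block diagonal, and the maximum over blocks gives the bound.

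\textbf{$P_2$ queries.} This is the main obstacle, because $\Gamma_1$ is the product of two projectors, $\Pmiss{x}{K_2}{D^1}$ and the $D^2_x$-controlled term involving $K_3,D^3$. Using $\Gamma_1^\perp\le(\Pmiss{x}{K_2}{D^1})^\perp+(\text{second factor})^\perp$, we bound each term separately.
\begin{itemize}
\item \emph{Second term.} The bad event is $D^2(x)=y\neq\bot$ with $y\in\im{D^3}\oplus k_3$, i.e.\ $k_3\in y\oplus\im{D^3}$. This set has size at most $t$. Since $K_3$ is uniform under $\Pi^{k_3}$ and no other projector acts on $K_3$, the same argument as above gives $O(\sqrt{t/N})$.
\item \emph{First term.} The bad event is $k_2\in x\oplus\im{D^1}$. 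Here $K_2$ is uniform not over $[N]$ but over $\kappa_{D^1,D^2}$, because of $\Pi^g$. We use $|\kappa_{D^1,D^2}|\ge N-t^2\ge N/2$ (using $t^2\le N/2$ and the remark after \Cref{def:pi-g}). The bad set has size at most $t$, so the weight is at most $t/(N/2)=2t/N$. The subtlety is that $\Pi^g$ is not diagonal in $K_2$, but it is block diagonal in $(D^1,D^2)$ with rank one per block, namely $\ket{G_{D^1,D^2}}$. Also $\Pi^{inj}$, $\Pi^t$ and $\Pi^{k_1,k_3}$ commute with it and act on other registers or diagonally on databases. So the range of the full product decomposes into blocks of the form $\ket{G_{D^1,D^2}}\otimes(\text{stuff on }D^3,K_1,K_3)$. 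On each block, $\norm{(\Pmiss{x}{K_2}{D^1})^\perp\ket{G_{D^1,D^2}}}^2\le t/|\kappa|\le 2t/N$.
\end{itemize}

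We combine the cases by the triangle inequality. Blocks with $\kappa=\emptyset$ are excluded by convention, as in \Cref{lem:pi-star-violation}. The care needed is in checking that all projectors commute, so that the block decomposition is legitimate; this follows from \Cref{lem:comm-dm-and-proj}-style observations, since each projector is diagonal in the database basis apart from the key superpositions on disjoint key registers.
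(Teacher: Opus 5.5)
Your proposal is correct and follows essentially the same route as the paper. The paper also splits by query type, uses uniformity of $K_1$ (resp.\ $K_3$) against a bad key set of size at most $t$, and for $P_2$ splits $\Gamma_1$ into the $\Pmiss{x}{K_2}{D^1}$ factor, handled through uniformity of $K_2$ over $\kappa_{D^1,D^2}$ with $|\kappa_{D^1,D^2}|\ge N-t^2\ge N/2$, and the $D^2_x$-controlled $K_3$ factor. Your block-diagonal bookkeeping and your use of the operator inequality $\Gamma_1^\perp\le A^\perp+B^\perp$ for commuting projectors are, if anything, slightly cleaner than the paper's written identity $(AB)^\perp=A^\perp+B^\perp$.
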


\begin{proof}
    See \Cref{app:proj-bounds}.
\end{proof}

\begin{lemma}\label{lem:gamma2-perp-bound}
States holding $\ket{+^n}$ at register $K_3$ and lying in the image of
$\Pi^t$ have only a small component violating $\Gamma_2$, more precisely,
\begin{align}
    \norm{\Gamma_2^\perp\,\Pi^t\,\Pi^{k_3}} \leq O(\sqrt{t/N}).
\end{align}
\end{lemma}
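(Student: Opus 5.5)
The plan is to show that $\Gamma_2^\perp$, $\Pi^t$ and $\Pi^{k_3}$ are simultaneously block diagonal with respect to the computational basis of $X, D^2, D^3$ (and of all remaining registers). The operator norm then reduces to a maximum over blocks. In each block, what remains is a projector onto a small set of key values $k_3$, applied to $\ket{+^n}_{K_3}$.

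\emph{Step 1: rewrite $\Gamma_2^\perp$ as a diagonal projector.} Since $\Gamma_2$ is a projector, we have
\begin{align*}
\Gamma_2^\perp=\sum_{x,y,k_3\in[N]}\proj{x}_X\otimes\proj{y}_{D^3_x}\otimes\proj{k_3}_{K_3}\otimes\fp_{D^2}\bigl(\Id-\proj{\bot}_{D^2_{y\oplus k_3}}\bigr)\fp_{D^2}.
\end{align*}
The operator $\fp_{D^2}$ permutes computational basis states of $D^2$, and it is an involution. Hence $\fp_{D^2}(\Id-\proj{\bot}_{D^2_{z}})\fp_{D^2}$ is the diagonal projector onto those $D^2$ for which $(\fp D^2)(z)\neq\bot$. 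Concretely, this set is:
\begin{itemize}
\item $\{D^2 : z\in\im{D^2}\}$ when $D^2$ is injective;
\item $\{D^2 : z\in\dom{D^2}\}$ otherwise.
\end{itemize}
In both cases, for fixed $D^2$, the admissible $z$ lie in a set $A(D^2)$ with $\abs{A(D^2)}\le\abs{D^2}$. So $\Gamma_2^\perp$ is diagonal in $X,D^2,D^3,K_3$, and on the block indexed by $(x,D^2,D^3)$ with $y=D^3(x)\neq\bot$ it acts on $K_3$ as the projector onto $\{k_3 : y\oplus k_3\in A(D^2)\}$. If $D^3(x)=\bot$, it acts as $0$.

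\emph{Step 2: block structure of the other two projectors.} The projector $\Pi^t$ is diagonal in the database registers and acts as the identity on $X$ and $K_3$. The projector $\Pi^{k_3}$ acts only on $K_3$. Therefore all three operators commute with the projectors $\proj{x}_X\otimes\proj{D^2}\otimes\proj{D^3}$. Consequently $\Gamma_2^\perp\Pi^t\Pi^{k_3}$ is a direct sum of its blocks, and its norm is the maximum of the block norms.

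\emph{Step 3: bound a single block.} Within a block where $\Pi^t$ is nonzero, we have $\abs{D^2}\le t$. The block operator is then $P_B\,\proj{+^n}_{K_3}$, tensored with an identity or projector on the untouched registers. Here $P_B$ projects onto a set $B=\{k_3: y\oplus k_3\in A(D^2)\}$ with $\abs{B}\le t$. Its norm is
\begin{align*}
\norm{P_B\ket{+^n}}=\sqrt{\abs{B}/N}\le\sqrt{t/N}.
\end{align*}
Taking the maximum over blocks gives $\norm{\Gamma_2^\perp\Pi^t\Pi^{k_3}}\le\sqrt{t/N}$, which is the claimed $O(\sqrt{t/N})$.

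The only step needing care is Step 1. There one must check that conjugation by $\fp_{D^2}$ keeps the projector diagonal, and that the injective and non-injective cases both give an excluded set of size at most $\abs{D^2}$. Everything else is bookkeeping of commuting diagonal projectors.
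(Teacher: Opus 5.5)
Your proof is correct and is essentially the paper's argument. The paper writes a general state in the range of $\Pi^{k_3}\Pi^t$, notes that $\Gamma_2^\perp$ keeps only the $K_3$ branches with $y\oplus k_3\in\im{D^2}$ (at most $t$ of the $N$ uniformly weighted keys), and sums the orthogonal contributions; this is your block-diagonal computation phrased for states rather than operators. Your explicit check that the flip acts trivially on non-injective $D^2$, so that the excluded set becomes $\dom{D^2}$ and still has size at most $t$, covers a case the paper leaves implicit.
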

\begin{proof}
    See \Cref{app:proj-bounds}.
\end{proof}

\subsubsection{$P_1$-query}\label{sec:P_1-query}
Using \Cref{thm:inv-to-fwd-hybrid}, it is enough to consider the difference between the hybrids for forward queries, i.e., consider the real and ideal world oracle queries for $b=0$ and $i=1$.

\begin{lemma}\label{lem:p1-psi-y-d}
    Let $\ket{\psi}$ be a database state in the computational basis
\begin{align}
  \ket{\psi} \coloneqq \ket{y}_{D^1_x}\ket{D}_{D}\ket{\bot}_{D^3_{x\oplus k_1}}
\ket{(D^3_{x\oplus k_1})^c}_{(D^3_{x\oplus k_1})^c}\ket{k_1}_{K_1}\ket{k_2}_{K_2}\ket{k_3}_{K_3} \label{eq:lem-p1-psi}
\end{align}
where $D = (D^1_x)^c D^2, y \in [N]\cup\set{\bot}$, $D$ with $k_2$ is a good database, and each database $D^i$ is of size at most $t$ and is injective.  Since $\fp_{D^2}\vf\fp_{D^2}$ acts as the identity
on the remaining registers, we may define $\ket{\psi^y_D}$ by
\begin{align}
  \fp_{D^2}\vf\fp_{D^2}\ket{\psi} = \ket{y}_{D^1_x}\ket{\psi^y_D}_{D}
   \ket{\bot}_{D^3_{x\oplus k_1}}
\ket{(D^3_{x\oplus k_1})^c}_{(D^3_{x\oplus k_1})^c}
\ket{k_1}_{K_1}\ket{k_2}_{K_2}\ket{k_3}_{K_3}. \label{eq:p1-psi-y-d}
\end{align}Then \begin{enumerate}[label=(\arabic*)]
        \item For databases such that $\ket{D'}\perp\ket{D}$, we have $\ket{\psi^y_D}\perp \ket{\psi^y_{D'}}$, as a consequence, we get $( \ket{\psi^y_D}- \ket{\psi^{y'}_D})\perp (\ket{\psi^y_{D'}}- \ket{\psi^{y'}_{D'}}),$ 
        \item We get $\norm{\ket{\psi^y_D} - \ket{\psi^{y'}_D}} \leq  O(\sqrt{t/N})$
    \end{enumerate} for all $y,y'\in[N]\cup\set{\bot}$ and $D,D'\in\mathbf{D}_t$ for which the relative terms are defined. For the bounds, we use the assumption $3q < N/2$ which is trivially satisfied by the regime for $q$.
\end{lemma}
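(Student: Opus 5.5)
The plan is to expand $\ket{\psi^y_D}$ explicitly as a superposition of computational basis databases, using \Cref{rem:fill-or-identity} and \Cref{rem:comm-vf-factors}. Since $\Gamma$-type condition $D^3_{x\oplus k_1}=\bot$ holds in \eqref{eq:lem-p1-psi}, the factor $\vf\ppar{x}$ is inactive, so $D^1_x$ is never a target and stays $\ket{y}$. It does, however, still appear as a control through the sets $J$. Because $D$ with $k_2$ is good, and the active factors only target registers outside $D^1_x$, every active factor either fills $\ket{\bot}\ket{\bot}\mapsto\ket{\psi_{k_2}}$ or acts as the identity. Hence $\ket{\psi^y_D}$ is a sum with nonnegative amplitudes over databases of the form ``$D$ with the entries at $S$ filled by values $u_s\in J$ (distinct), linked via $u\mapsto u\oplus k_2$ in $D^2$.'' I first do this after conjugating by $\fp_{D^2}$, which is harmless on injective databases. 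The only dependence on $y$ is through the admissible set: the fills must avoid $y$ in $D^1$ (and $y\oplus k_2$ in the flipped $D^2$), i.e. the sets are $J_y=J(k_2,D^1[x\to y],D^2)$, which differ from $J_\bot$ by at most one element.

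For (1), I argue that the map $D\mapsto$ support of $\ket{\psi^y_D}$ is injective: from any basis database in the support one recovers $D$. Fill entries are exactly those registers at positions $s\in S$ (in $D^1$) and $D^3(s\oplus k_1)\oplus k_3$ (in flipped $D^2$) that are consistent with the isometry's linkage. Since $S$ and the targets are determined by $D^3,k_1,k_3$ and the unfilled part of $D^1,D^2$, which are shared, we erase these to get back $D$. Here I need that an original entry of $D$ cannot be confused with a filled one. This is where goodness is used: an original $D^1$ entry at $s$ would make $s\notin S$. For $D^2$ entries at the fill targets, the definition of $S$ requires $D^3(s\oplus k_1)\oplus k_3\notin\im{D^2}$. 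So distinct $D,D'$ give disjoint supports, hence orthogonality. The difference statement then follows because both $\ket{\psi^y_D},\ket{\psi^{y'}_D}$ are supported on the $D$-block, and blocks for $D\neq D'$ are disjoint.

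For (2), I write $\ket{\psi^y_D}=\bigotimes$-like product over the (commuting, by \Cref{rem:comm-vf-factors}) active fills. Concretely, it is the uniform superposition over injective assignments $S\to J_y$, possibly with sequential shrinking of $J$ as factors apply. I compare it with the $y'$ version: each factor's amplitude vector is a uniform superposition over a set of size $\geq N-3t$ (by sizes at most $t$ and $3q<N/2$), and changing $y$ alters the set by at most one point. The cleanest route is a hybrid over the $|S|\le t$ factors. Each step changes a uniform superposition on a set $A$ to one on $A'$ with $|A\triangle A'|\le 2$, costing $O(1/\sqrt{N})$ as in the proof of \Cref{lem:sanitized-comp}. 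That naive sum gives $O(t/\sqrt N)$, though. To get $O(\sqrt{t/N})$, I instead compute the overlap directly: the two states are uniform over assignment sets $\mathcal{U}_y,\mathcal{U}_{y'}$ (injective tuples avoiding $y$ resp. $y'$). Their inner product is $|\mathcal U_y\cap\mathcal U_{y'}|/\sqrt{|\mathcal U_y||\mathcal U_{y'}|}\geq 1-O(t/N)$, since the fraction of tuples using the forbidden value is at most $t/(N-3t)$. Then $\norm{\cdot}^2=2(1-\text{overlap})=O(t/N)$. The main obstacle is verifying that the sequential definition of $\vf$ (each factor controlled on the current complement, so later $J$'s exclude earlier fills) really yields exactly a uniform superposition over injective tuples with equal weights. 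I would establish this from \Cref{rem:comm-vf-factors} and counting: every partial tuple has the same number of extensions, so the weights are uniform. The case $y=\bot$ vs. $y\in[N]$ is covered the same way.
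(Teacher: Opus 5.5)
Your proof is correct. Part (2) is essentially the paper's argument, while part (1) takes a different route.

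\emph{Part (2).} The paper also writes $\ket{\psi^\bot_D}$ as the uniform superposition over injective tuples $S\to J$. It notes that $S$ is independent of $y$, since $x$ is inactive, and that $J^y\in\{J,J\setminus\{y\}\}$. It then goes through $\ket{\psi^\bot_D}$ with the triangle inequality and splits the difference into the tuples using $y$ plus a normalization term. Your direct overlap computation handles arbitrary $y,y'$ in one step, giving $\norm{\ket{\psi^y_D}-\ket{\psi^{y'}_D}}^2=2|S|/(|J|-1)$ for distinct $y,y'\in J$, and is slightly cleaner.

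\emph{Part (1).} The paper argues that on good databases every factor of $\vf$ either fills or acts as the identity (\Cref{rem:fill-or-identity}), so all amplitudes are nonnegative. Orthogonal inputs to an isometry then have disjoint output supports, and $\mathrm{supp}\,\ket{\psi^y_D}\subseteq\mathrm{supp}\,\ket{\psi^\bot_D}$ gives the difference statement. That argument is short and is reused verbatim in the $P_2$ and $P_3$ analyses. Your explicit decoding makes the block structure visible and does not need the positivity/isometry step.

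Two points in your decoding need tightening:
\begin{itemize}
\item It is slightly circular as written. $S$ depends on $D$, so it is not ``shared'' with $D'$. The decoder should use only the output $\tilde D$ and the shared data $D^3,k_1,k_2,k_3$. For each active $s$, declare a fill iff $\tilde D^1(s)=v\neq\bot$ and $\tilde D^2(v\oplus k_2)=D^3(s\oplus k_1)\oplus k_3$. The active $s$ are determined by $D^3,k_1$ and never include $x$, so the blocks are $y$-independent.
\item The facts you cite (fills only where $D^1(s)=\bot$, and $D^3(s\oplus k_1)\oplus k_3\notin\im{D^2}$) are just the definition of $S$; they do not use goodness. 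Goodness is what makes the test above reject an original active entry $D^1(s)=v$: it gives $v\oplus k_2\notin\dom{D^2}$, and fills never target $v\oplus k_2$ because $J$ avoids $\im{D^1}$.
\end{itemize}
Without goodness, a database containing such a linked pair coincides with a fill branch of the database obtained by removing that pair, so the supports would genuinely overlap.
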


\begin{proof}
    See \Cref{app:p1}.
\end{proof}

\begin{theorem}[$P_1$-query]\label{thm:p1-bound}
Let the registers $B$ and $I$ hold $b=0$ and $i=1$, i.e., a forward query to $P_1$ in both worlds. Then for every integer $t \geq 1$ with $t^2 \leq N/2$,
\begin{align}
    \lVert (\cco^R V - V \Pi^\star \cco^I) \Pi^\star \Pi^t \Pi^{k_1,k_3} \rVert \leq O\big(\sqrt{t/N}\big).
\end{align}
\end{theorem}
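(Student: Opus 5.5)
For a forward $P_1$-query both worlds apply the same operator $\cco^R=\cco^I=\fc_{XD^1}\pu_{XYD^1}\fc_{XD^1}$ (abbreviated $\cco$). The difference to bound is therefore $\cco V - V\Pi^\star\cco$, restricted to $\Pi^\star\Pi^t\Pi^{k_1,k_3}$. The plan has four steps.

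\textbf{Step 1: remove the interleaved projector.} We replace $V\Pi^\star\cco$ by $V\cco$. On the relevant input, $(\Pi^\star)^\perp\cco\Pi^\star\Pi^t$ has norm $O(\sqrt{t/N})$ by \Cref{lem:pi-star-violation}, and $V$ is an isometry, so this costs only $O(\sqrt{t/N})$.

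\textbf{Step 2: insert $\Gamma_1$.} By \Cref{lem:gamma1-perp-bound} we may multiply the input by $\Gamma_1$, again at cost $O(\sqrt{t/N})$. Since $\Gamma_1$ is controlled on $X$, and $\Pi^\star$, $\Pi^t$, $\Pi^{k_1,k_3}$ are diagonal in the database basis, we may write the input state as a superposition of terms $\ket{x}_X\ket{\cdot}_{AY}\ket{\psi}$, with $\ket{\psi}$ of the form \eqref{eq:lem-p1-psi}. In such a term $D^3_{x\oplus k_1}=\bot$, so $\vf\ppar{x}$ is inactive.

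\textbf{Step 3: pass the query through $V$.} The query $\cco$ acts only on $X,Y,D^1_x$, whereas $V$ reads $D^1_x$ only as a control: in the $\vf$ part through the $J$-sets of other factors, and in $\vr\ppar{x}$. We handle the two parts as follows.
\begin{enumerate}
\item For $\fp_{D^2}\vf\fp_{D^2}$, we use the decomposition \eqref{eq:p1-psi-y-d}. We expand the oracle action on $D^1_x$ as $\cco\ket{x,y_0}\ket{y}_{D^1_x}=\sum_{y'}c_{y,y'}\ket{x,y_0'}\ket{y'}_{D^1_x}$. Applying $V$ before the query produces $\ket{\psi^{y}_D}$ on the rest, while applying it after produces $\ket{\psi^{y'}_D}$.
\item The difference is therefore $\sum c_{y,y'}(\ket{\psi^{y}_D}-\ket{\psi^{y'}_D})$. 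By \Cref{lem:p1-psi-y-d}(2) each summand is $O(\sqrt{t/N})$.
\item By \Cref{lem:p1-psi-y-d}(1), differences for orthogonal $D$ remain orthogonal, so summing over $D$ is a Pythagorean sum. Since $\cco$ is unitary, the coefficients $c_{y,y'}$ have bounded total norm, and the whole contribution stays $O(\sqrt{t/N})$.
\item For $\vr$, the factor $\vr\ppar{x}$ targets $D^3_{x\oplus k_1}$, which is $\bot$ under $\Gamma_1$. Moreover, a fresh $D^1_x=y'$ cannot make $\vr\ppar{x}$ act unless $D^2_{y'\oplus k_2}$ is filled with the matching value. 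Goodness of the database, together with a further $O(\sqrt{t/N})$ projection via \Cref{lem:pi-star-violation} on the post-query state, ensures $\vr\ppar{x}$ acts as the identity in both orders.
\end{enumerate}

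\textbf{Main obstacle.} The hardest step is bounding the coherent sum over branches in Step 3. We need the per-branch bound of \Cref{lem:p1-psi-y-d}(2) to add up in norm rather than linearly. I would do this by writing the difference as $(\Id\otimes W_y - \Id\otimes W_{y'})$-type blocks that are block-diagonal in $D$, and using the orthogonality from \Cref{lem:p1-psi-y-d}(1) to conclude that the operator norm equals the maximum over blocks. Collecting the costs from Steps 1 through 3 gives $O(\sqrt{t/N})$.
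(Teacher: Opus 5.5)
Your route differs from the paper's and can be made rigorous, but the step you flag as the main obstacle is not closed by what you wrote. Write $W_y$ for the branch of $\fp_{D^2}\vf\fp_{D^2}$ given $D^1_x=y$, and suppress $Y$. The per-pair bound $\norm{\ket{\psi^y_D}-\ket{\psi^{y'}_D}}=O(\sqrt{t/N})$ plus unitarity of $\cco$ does not by itself bound $\sum_{y,y'}c_{y,y'}\ket{y'}\bra{y}\otimes(W_y-W_{y'})$ on inputs that are superpositions over $y$. A triangle inequality over $y$ using only $|c_{y,y'}|\le 1$ loses up to $\sqrt N$. The block-diagonality in $D$ from \Cref{lem:p1-psi-y-d}(1) says nothing about the $y$-sum inside one block.

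The missing observation is this. With $X=x$ fixed, $\cco$ acts only on $YD^1_x$, so it commutes with $\Id_{YD^1_x}\otimes W_\bot$. Hence $[\cco,W]=[\cco,W-\Id\otimes W_\bot]$, and $\norm{[\cco,W]P}\le 2\max_y\norm{(W_y-W_\bot)P_{\mathrm{rest}}}$ for any $\cco$-invariant $P=\Id_{YD^1_x}\otimes P_{\mathrm{rest}}$. Such a $P$ containing the input range exists because \Cref{lem:p1-psi-y-d} only requires $(D^1_x)^c,D^2$ to be good with $k_2$, a condition not involving $D^1_x$. Parts (1) and (2) of that lemma then give $\norm{(W_y-W_\bot)P_{\mathrm{rest}}}=\max_D\norm{\ket{\psi^y_D}-\ket{\psi^\bot_D}}=O(\sqrt{t/N})$.

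This trick only works after $\vr\ppar{x}$ has been removed. On $P$, which is not fully good, $\vr\ppar{x}$ writes into $D^3_{x\oplus k_1}$ for those $y$ with $y\oplus k_2\in\dom{D^2}$, so the maximum over $y$ is not small there. Your Step 3(4) removes it correctly in substance: goodness of the input handles the order $\cco V$, and a $\Pi^\star$ projection of the post-query state handles $V\cco$. Three corrections to that step:
\begin{itemize}
\item $\vr\ppar{x}$ acts whenever $D^2_{y\oplus k_2}\neq\bot$; there is no matching condition.
\item You also need that $\vf$ never fills $D^2_{r\oplus k_2}$ for $r\in\im{D^1}$. This holds because $J$ excludes $\im{D^1}$.
\item The remaining factors $\vr\ppar{x'}$, $x'\neq x$, commute with $\cco$, and you should say so.
\end{itemize}

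The paper proceeds differently. It splits $\cco=\fc\pu\fc$ and gets $[V,\pu]=0$ exactly, since $V$ preserves $D^1_x$ under $\Gamma_1$. It then reduces to $[V,\fc]$ on $\Pi^t$ and $\Pi^{t+1}$. Finally it bounds $[\fp_{D^2}\vf\fp_{D^2},\fc]$ and $[\vr,\fc]\fp_{D^2}\vf\fp_{D^2}$ by explicit expansion in the cases $D^1_x=\bot$ and $D^1_x\neq\bot$. The $y$-sum is closed with the concrete entries of $\fc$ ($-1/N$ off the diagonal, $1/\sqrt N$ to and from $\bot$), a triangle inequality over $y$, and Cauchy--Schwarz. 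The $\vr$ term is handled by counting the at most $2t$ values $y$ with $y\oplus k_2\in\dom{D^2}$. Once you add the commutator identity, your argument is shorter, treats $\cco$ as an arbitrary unitary on $YD^1_x$, and avoids the case split. The paper's computation is longer but relies only on the specific matrix of $\fc$.
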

\begin{proof}
Note that oracle query to $P_1$ is handled identically in the real and ideal worlds, i.e., $\cco^{R} = \cco^{I}$. Accordingly, we simply write $\cco$ for this operator. We use the uniformity of $k_1$ in \begin{align}
     &\lVert  ( \cco^R V   -  V \Pi^\star  \cco^I  )  \Pi^\star  \Pi^t \Pi^{k_1,k_3} \rVert   \\
     &\quad \leq  \lVert  ( \cco^R V   -  V \Pi^\star  \cco^I  ) \Gamma_1 \Pi^\star  \Pi^t \Pi^{k_1,k_3} \rVert + O(\sqrt{t/N}) \quad \text{(By \Cref{lem:gamma1-perp-bound})}\\
     &\quad \leq  \lVert  ( \cco V   -  V \Pi^\star  \cco  )  \Gamma_1 \Pi^\star  \Pi^t \rVert + O(\sqrt{t/N})\quad \text{(By submultiplicativity)}\\
     &\quad \leq  \lVert  ( \cco V   -  V  \cco  )  \Gamma_1 \Pi^\star  \Pi^t \rVert  + O(\sqrt{t/N}) \label{eq:some-term-p1}
\end{align} where in the last line we use the identity $\Pi^\star=\Id - (\Pi^\star)^\perp$ with triangle inequality and \Cref{lem:pi-star-violation}. For the remainder of the proof, it suffices to bound the first term of \Cref{eq:some-term-p1} on states in the image of $\Gamma_1 \Pi^\star  \Pi^t$. 

We continue by decomposing $\cco$ as \begin{align}
    &\lVert  ( \cco V   -  V  \cco  )  \Gamma_1 \Pi^\star  \Pi^t \rVert \\
    &\quad \leq \norm{[V,\fc]\Gamma_1 \Pi^\star  \Pi^t} + \norm{[V,\pu] \fc\Gamma_1  \Pi^\star  \Pi^t} + \norm{[V,\fc] \pu \fc \Gamma_1  \Pi^\star  \Pi^t}\\
    &\quad \leq \norm{[V,\fc]\Gamma_1 \Pi^\star  \Pi^t}  + 0 + \norm{[V,\fc] \pu \fc \Gamma_1  \Pi^\star  \Pi^t} \label{eq:p1-2sum-v-fc}
\end{align} because, for $x, k_1$ in the registers $X, K_1$, respectively, $\pu$ is
controlled on $D^1_x$, and $D^1_x$ is preserved by $V$. The latter happens due to $\Gamma_1$ since it ensures that $\vf\ppar{x}$ is not active as $D^3_{x\oplus k_1}=\bot$, and $\vr\ppar{x}$ is controlled on $D^1_x$ anyway.

Observe that \begin{align}
          \norm{[V,\fc] \pu \fc \Gamma_1  \Pi^\star \Pi^t}
          &= \norm{[V,\fc]\Gamma_1 \pu  \fc   \Pi^\star\Pi^t}\\
          &\leq  \norm{[V,\fc]\Gamma_1 \Pi^\star  \Pi^{t+1}\pu  \fc   \Pi^\star\Pi^t}  + O(\sqrt{t/N}) \\
          &\leq  \norm{[V,\fc]\Gamma_1 \Pi^\star  \Pi^{t+1}} + O(\sqrt{t/N}) 
      \end{align} where the second line follows by \Cref{lem:pi-t-violation} and \Cref{lem:pi-star-violation}, and last line by the submultiplicativity of the norm. Hence, we upper bound \Cref{eq:p1-2sum-v-fc} as 
      \begin{align}
          &\norm{[V,\fc]\Gamma_1 \Pi^\star  \Pi^t}  + \norm{[V,\fc] \pu \fc \Gamma_1  \Pi^\star  \Pi^t}  \\
          &\leq \norm{[V,\fc]\Gamma_1 \Pi^\star  \Pi^{t}} + \norm{[V,\fc]\Gamma_1 \Pi^\star  \Pi^{t+1}} + O(\sqrt{t/N}) 
      \end{align} by \Cref{lem:pi-t-violation},\ref{lem:pi-star-violation}, and since the two terms are indifferent to big-$O$ in the final bound, we bound only the first. We do this by decomposing $V$ as\begin{align}
          \norm{[V,\fc]\Gamma_1 \Pi^\star  \Pi^t}\leq \norm{[\fp_{D^2}\vf\fp_{D^2},\fc]\Gamma_1 \Pi^\star  \Pi^t} + \norm{[\vr,\fc]\fp_{D^2}\vf\fp_{D^2}\Gamma_1 \Pi^\star  \Pi^t}.  \label{eq:comm-v-fc}
      \end{align}
      In the rest, we fix the registers $X,K_1$ in the computational basis $x,k_1$, respectively, such that they are fixed by $\Gamma_1$ , i.e., $D^3_{x\oplus k_1}=\bot$.
      \begin{enumerate}[leftmargin=0pt, labelindent=0pt, itemindent=*, label=(\arabic*)]
          \item Consider $D^1_x=\bot$.  \begin{enumerate}
        \item First, we bound $\norm{[\fp_{D^2}\vf\fp_{D^2},\fc]\Gamma_1 \Pi^\star  \Pi^t}$. Observe that $\fc$ acts only on $D^1_x$, and $\fp_{D^2}\vf\fp_{D^2}$ preserves $D^1_x$ in the computational basis and $\bot$, while its action on $(D^1_x)^c$ and $D^2$ depends on what $D^1_x$ register holds in order to preserve injectivity of the databases. Since orthogonality of all the registers other than $(D^1_x)^c$ and $D^2$ is preserved by this commutator, we fix all other registers in the computational basis and write a general state\begin{align}
\ket{\psi}\coloneqq \sum_{D\in\mathbf{D}_\mathrm{adm}}\gamma_D\ket{\bot}_{D^1_x}\ket{D}_{D}\ket{\bot}_{D^3_x}
\ket{(D^3_x)^c}_{(D^3_x)^c}\ket{k_1}_{K_1}\ket{k_2}_{K_2}\ket{k_3}_{K_3} 
              \end{align} where $(D^3_x)^c\in\mathbf{D}_{t}$ and\begin{align}
\mathbf{D}_\mathrm{adm}\coloneqq\set{((D^1_x)^c, D^2)\in \mathbf{D}_t \times \mathbf{D}_t: (D^1_x)^c, D^2 \text{ with } k_2 \text{ is good}}.
              \end{align} 
Then we write the terms of the commutator, while omitting the registers that are untouched, \begin{align}
    \fc \fp_{D^2}\vf\fp_{D^2} \ket{\psi} &= \fc \sum_{D\in\mathbf{D}_\mathrm{adm}}\gamma_D\ket{\bot}_{D^1_x}\ket{\psi^\bot_D}_{D}\\
    &=  \frac{1}{\sqrt{N}}\sum_{y\in[N],D\in\mathbf{D}_\mathrm{adm}}\gamma_D\ket{y}_{D^1_x}\ket{\psi^\bot_D}_{D} 
\end{align} and \begin{align}
 \fp_{D^2}\vf\fp_{D^2} \fc\ket{\psi} &=  \frac{1}{\sqrt{N}}\sum_{y\in[N],D\in\mathbf{D}_\mathrm{adm}}\gamma_D\ket{y}_{D^1_x} \ket{D}_D\\
 &=\frac{1}{\sqrt{N}}\sum_{y\in[N],D\in\mathbf{D}_\mathrm{adm}}\gamma_D\ket{y}_{D^1_x} \ket{\psi^y_D}_{D} 
\end{align} where $\ket{\psi^\bot_D}_{D}$ and $\ket{\psi^y_D}_{D}$ are as defined in \Cref{lem:p1-psi-y-d}. So, the difference is \begin{align}
&\norm{[\fc,\fp_{D^2}\vf\fp_{D^2}]\ket{\psi}}\\
&=\Big\lVert\frac{1}{\sqrt{N}}\sum_{y\in[N],D\in\mathbf{D}_\mathrm{adm}}\gamma_D\ket{y}_{D^1_x}\otimes\Big( \ket{\psi^\bot_D}-\ket{\psi^y_D}\Big)_D\Big\rVert\\
    &=\sqrt{ \frac{1}{N} \sum_{y\in[N],D\in\mathbf{D}_\mathrm{adm}}\abs{\gamma_D}^2 \lVert \ket{y}_{D^1_x}\otimes( \ket{\psi^\bot_D}-\ket{\psi^y_D})_D \rVert^2 }\\
    &\leq \sqrt{ \frac{1}{N}\sum_{y\in[N],D\in\mathbf{D}_\mathrm{adm}} \abs{\gamma_D}^2 O\Big(\frac{t}{N}\Big) }\\
    & \leq O(\sqrt{t/N})
\end{align} where in the second line we used the orthogonality of $\ket{y}$'s and $\ket{\psi^\bot_D}-\ket{\psi^y_D}$'s as given in \Cref{lem:p1-psi-y-d}. In the third line, we used the bound in \Cref{lem:p1-psi-y-d}.
              \item Next, we bound $\norm{[\vr,\fc]\fp_{D^2}\vf\fp_{D^2}\Gamma_1 \Pi^\star  \Pi^t}$. Observe that $\fp_{D^2}\vf\fp_{D^2}$ preserves $D^1_x=\bot$ since $\vf\ppar{x}$ is not active because of $\Gamma_1$, and at most doubles the size of $D^2$. Moreover, $\vr\ppar{x}$ is the only component of $\vr$ that acts on $D^1_x$ non-trivially, and the action is in a controlled way. The action of $\vr\ppar{x}$ on the state becomes trivial when the value $D^1(x)\oplus k_2 \not \in \dom{D^2}$. So, when we project out the images of $x$ such that $D^1(x)\oplus k_2 \in \dom{D^2}$, the operators commute. 
              
              The commutator of the operators preserves orthogonality of $(D^1_x)^c$, $D^2$ and $(D^3_{x\oplus k_1})^c$. So, we fix these in the computational basis to databases of size at most $t$, and $D^3_{x\oplus k_1}=\bot$ (due to $\Gamma_1$), and $D^2$ is injective, as well as, fix the remaining key registers, i.e., write a general state \begin{align*}
                  \ket{\psi}\coloneqq \ket{\bot}_{D^1_x} \ket{(D^1_x)^c}_{(D^1_x)^c} \ket{D^2} \ket{\bot}_{D^3_{x\oplus k_1}}\ket{(D^3_{x\oplus k_1})^c}_{(D^3_{x\oplus k_1})^c}\ket{k_1}_{K_1}\ket{k_2}_{K_2}\ket{k_3}_{K_3}.
              \end{align*} Then write the image of $\fp_{D^2}\vf\fp_{D^2}$ on $\ket{\psi}$ as \begin{align}
\fp_{D^2}\vf\fp_{D^2}\ket{\psi}=\sum_{\substack{D\in \mathbf{D}_{2t, inj}}}\gamma_D \ket{\bot}_{D^1_x}\ket{D}_{D^2}\ket{\psi_D}
              \end{align} by omitting the irrelevant registers for the commutator, then
              \begin{align}
&\norm{[\vr,\fc]\fp_{D^2}\vf\fp_{D^2}\Gamma_1 \Pi^\star  \Pi^t} \\
&=\norm{[\vr,\fc]\fp_{D^2}\vf\fp_{D^2}\ket{\psi}}\\
            &\leq \norm{\sum_{\substack{D\in \mathbf{D}_{2t, inj}}}\gamma_D \frac{1}{\sqrt{N}}\sum_{y\in[N]:y\oplus k_2\in \dom{D}}\ket{y}_{D^1_x}\ket{D}_{D^2}} \\
                 &\leq \sqrt{\sum_{D\in \mathbf{D}_{2t,inj}}\abs{\gamma_D}^2\sum_{\substack{y\in[N]:y\oplus k_2\in \dom{D}}}\frac{1}{N}}\\
                 &\leq O(\sqrt{t/N}).
              \end{align}
          \end{enumerate}
          \item Consider $D^1_x\neq\bot$. 
          \begin{enumerate}
              \item \label{it:d1x-non-bot-vf-fc}
              We bound $\norm{[\fp_{D^2}\vf\fp_{D^2},\fc]\Gamma_1 \Pi^\star  \Pi^t}$. Similar to the previous case, observe that $\fc$ acts only on $D^1_x$, and $\fp_{D^2}\vf\fp_{D^2}$ preserves $D^1_x$ in the computational basis, while its action on $(D^1_x)^c$ and $D^2$ depends on what $D^1_x$ register holds in order to preserve injectivity of the databases. Since orthogonality of all the registers other than $(D^1_x)^c$ and $D^2$ is preserved by this commutator, we fix all other registers in the computational basis and write a general state\begin{align*}
\ket{\psi}\coloneqq \sum_{y\in[N],D\in\mathbf{D}_\mathrm{adm}}\gamma_{y,D}\ket{y}_{D^1_x}\ket{D}_{D}\ket{\bot}_{D^3_x}
\ket{(D^3_x)^c}_{(D^3_x)^c}\ket{k_1}_{K_1}\ket{k_2}_{K_2}\ket{k_3}_{K_3} 
              \end{align*} where $(D^3_x)^c\in\mathbf{D}_{t}$ and\begin{align}
    \mathbf{D}_\mathrm{adm}\coloneqq\set{((D^1_x)^c, D^2)\in \mathbf{D}_{t-1} \times \mathbf{D}_t: (D^1_x)^c, D^2 \text{ with } k_2 \text{ is good}}.
              \end{align} 
Then we write the terms of the commutator, while omitting the registers that are untouched, \begin{align}
 &   \fc \fp_{D^2}\vf\fp_{D^2} \ket{\psi} = \fc\Biggl( \sum_{y\in [N], D\in \mathbf{D}_\mathrm{adm}} \gamma_{y,D}\ket{y}_{D^1_x} \ket{\psi^y_D}_D\Biggr)\\
&= \sum_{y\in [N], D\in \mathbf{D}_\mathrm{adm} } \gamma_{y,D}\Biggl(\ket{y}-\frac{\ket{+^n}}{\sqrt{N}}+\frac{\ket{\bot}}{\sqrt{N}}\Biggr)_{D^1_x} \ket{\psi^y_D}_D
\end{align} and \begin{align}
 &\fp_{D^2}\vf\fp_{D^2}\fc\ket{\psi} = \fp_{D^2}\vf\fp_{D^2} \sum_{y\in [N], D\in \mathbf{D}_\mathrm{adm}} \gamma_{y,D}\Biggl(\ket{y}-\frac{\ket{+^n}}{\sqrt{N}}+\frac{\ket{\bot}}{\sqrt{N}}\Biggr)_{D^1_x}\ket{D}_D\\
 &= \sum_{y\in [N], D\in \mathbf{D}_\mathrm{adm}} \gamma_{y,D}\Biggl(\ket{y}_{D^1_x}\ket{\psi^y_D}_D -\sum_{y'\in[N]}\frac{\ket{y'}_{D^1_x}}{N}\ket{\psi^{y'}_D}_D+ \frac{\ket{\bot}_{D^1_x}}{\sqrt{N}}\ket{\psi^{\bot}_D}_D\Biggr)  
\end{align} where $\ket{\psi^\bot_D}_{D}$ and $\ket{\psi^y_D}_{D}$ are as defined in \Cref{lem:p1-psi-y-d}. So, the difference is \begin{align}
&\norm{[\fc,\fp_{D^2}\vf\fp_{D^2}]\ket{\psi}}\\
&\leq\norm{\sum_{y\in[N],D\in\mathbf{D}_\mathrm{adm}}\gamma_{y,D}\frac{1}{N}\sum_{y'\in[N]}\ket{y'}_{D^1_x}\otimes\Big( \ket{\psi^y_D}-\ket{\psi^{y'}_D}\Big)_D}\nonumber\\
&\quad + \norm{\sum_{y\in[N],D\in\mathbf{D}_\mathrm{adm}}\gamma_{y,D}\frac{1}{\sqrt{N}}\ket{\bot}_{D^1_x}\otimes\Big( \ket{\psi^y_D}-\ket{\psi^\bot_D}\Big)_D}\\
&\quad \text{(By triangle inequality)}\nonumber\\
&\leq\sum_{y\in[N]}\norm{\sum_{D\in\mathbf{D}_\mathrm{adm}}\gamma_{y,D}\frac{1}{N}\sum_{y'\in[N]}\ket{y'}_{D^1_x}\otimes\Big( \ket{\psi^y_D}-\ket{\psi^{y'}_D}\Big)_D}\nonumber\\
&\quad + \sum_{y\in[N]} \norm{\sum_{D\in\mathbf{D}_\mathrm{adm}}\gamma_{y,D}\frac{1}{\sqrt{N}}\ket{\bot}_{D^1_x}\otimes\Big( \ket{\psi^y_D}-\ket{\psi^\bot_D}\Big)_D}\\
&\quad \text{(By triangle inequality over $y$)}\nonumber\\
&=\sum_{y\in[N]}\sqrt{\sum_{D\in\mathbf{D}_\mathrm{adm}}\abs{\gamma_{y,D}}^2\frac{1}{N^2}\sum_{y'\in[N]}\lVert\ket{y'}_{D^1_x}\otimes\Big( \ket{\psi^y_D}-\ket{\psi^{y'}_D}\Big)_D\rVert^2}\nonumber\\
&\quad + \sum_{y\in[N]}\sqrt{\sum_{D\in\mathbf{D}_\mathrm{adm}}\abs{\gamma_{y,D}}^2\frac{1}{N}\lVert\ket{\bot}_{D^1_x}\otimes\Big( \ket{\psi^y_D}-\ket{\psi^\bot_D}\Big)_D\rVert^2}\\
&\leq\sum_{y\in[N]}\sqrt{\sum_{D\in\mathbf{D}_\mathrm{adm}}\abs{\gamma_{y,D}}^2\frac{1}{N^2}\sum_{y'\in[N]}O\Big(\frac{t}{N}\Big)} + \sum_{y\in[N]}\sqrt{\sum_{D\in\mathbf{D}_\mathrm{adm}}\abs{\gamma_{y,D}}^2\frac{1}{N}O\Big(\frac{t}{N}\Big)} \label{eq:something}\\
&=O(\sqrt{t}/N)\sum_{y\in[N]}\sqrt{\sum_{D\in\mathbf{D}_\mathrm{adm}}\abs{\gamma_{y,D}}^2} \\
&\leq O(\sqrt{t}/N)\sqrt{\sum_{y\in[N]}\sum_{D\in\mathbf{D}_\mathrm{adm}}\abs{\gamma_{y,D}}^2}\sqrt{\sum_{y\in[N]}1} \quad \text{(By Cauchy-Schwarz)}\\
&\leq O(\sqrt{t/N})
\end{align} where in \eqref{eq:something} we used the orthogonality of $\ket{y'}$'s and $\ket{\psi^y_D}-\ket{\psi^{y'}_D}$'s as given in \Cref{lem:p1-psi-y-d}. 
  
 \item We bound $\norm{[\vr,\fc]\fp_{D^2}\vf\fp_{D^2}\Gamma_1 \Pi^\star  \Pi^t}$. Observe that $\Gamma_1$ commutes with $\fp_{D^2}\vf\fp_{D^2}$, and $\fp_{D^2}\vf\fp_{D^2}$ at most doubles the database $D^1$ and $D^2$ while preserving the injectivity, then \begin{align}
\norm{[\vr,\fc]\fp_{D^2}\vf\fp_{D^2}\Gamma_1 \Pi^\star  \Pi^t} &\leq \norm{[\vr,\fc]\Gamma_1\Pi^{inj}\Pi^{2t}\fp_{D^2}\vf\fp_{D^2}\Pi^\star  \Pi^t}\\
&\leq \norm{[\vr,\fc]\Gamma_1\Pi^{inj}\Pi^{2t}}
 \end{align} by submultiplicativity of the norm. Due to $\Gamma_1$, we can fix $\D^3_{x\oplus k_1}=\bot$, and since orthogonality of all registers other than $D^1_x$ are preserved by the commutator, we fix them in the computational basis and write a general state \begin{align*}
     \ket{\psi}\coloneqq \sum_{y\in[N]}\alpha_y\ket{y}_{D^1_x} \ket{(D^1_x)^c}_{(D^1_x)^c} \ket{D^2}_{D^2} \ket{\bot}_{D^3_{x\oplus k_1}}\ket{(D^3_{x\oplus k_1})^c}_{(D^3_{x\oplus k_1})^c}\ket{k_1}_{K_1}\ket{k_2}_{K_2}\ket{k_3}_{K_3}
 \end{align*} where $D^2\in\mathbf{D}_{2t,inj}$. Next observe that $\vr\ppar{x}$ is the only component of $\vr$ that acts on $D^1_x$ non-trivially, and the action is in a controlled way. The action of $\vr\ppar{x}$ on the state becomes trivial when the value $D^1(x)\oplus k_2 \not \in \dom{D^2}$. So, when we project out the images of $x$ in $\fc\ket{\psi}$ such that $D^1(x)\oplus k_2 \in \dom{D^2}$, the operators commute. We write \begin{align}
   & \fc\vr\ket{\psi}=\fc\Bigg( \sum_{y\in[N]:y\oplus k_2\in\dom{D^2}}\alpha_y\ket{y}_{D^1_x}\ket{D^2}_{D^2}\ket{k_2}_{K_2}\otimes T_{D^3_{x\oplus k_1}}^{(D^2(y\oplus k_2))} \nonumber\\
    &\quad+\sum_{y\in[N]:y\oplus k_2\not\in\dom{D^2}}\alpha_y\ket{y}_{D^1_x}\ket{D^2}_{D^2}\ket{k_2}_{K_2}\otimes \Id_{D^3_{x\oplus k_1}}\Biggl)\\
    &=\sum_{y\in[N]:y\oplus k_2\in\dom{D^2}}\alpha_y\Big(\ket{y}-\frac{\ket{+^n}}{\sqrt{N}} + \frac{\ket{\bot}}{\sqrt{N}}\Big)_{D^1_x}\ket{D^2}_{D^2}\ket{k_2}_{K_2}\otimes T_{D^3_{x\oplus k_1}}^{(D^2(y\oplus k_2))}\nonumber\\
    &\quad+\sum_{y\in[N]:y\oplus k_2\not\in\dom{D^2}}\alpha_y\Big(\ket{y}-\frac{\ket{+^n}}{\sqrt{N}} + \frac{\ket{\bot}}{\sqrt{N}}\Big)_{D^1_x}\ket{D^2}_{D^2}\ket{k_2}_{K_2}\otimes \Id_{D^3_{x\oplus k_1}}
\end{align} and 
\begin{align}
    &\vr\fc\ket{\psi}\\
    &=\vr  \Bigg( \sum_{y\in[N]}\alpha_y\Big(\ket{y}-\frac{\ket{+^n}}{\sqrt{N}} + \frac{\ket{\bot}}{\sqrt{N}}\Big)_{D^1_x}\ket{D^2}_{D^2}\ket{k_2}_{K_2}\Bigg)\\
    &=\sum_{y\in[N]:y\oplus k_2\in\dom{D^2}}\alpha_y\ket{y}_{D^1_x}\ket{D^2}_{D^2}\ket{k_2}_{K_2}\otimes T_{D^3_{x\oplus k_1}}^{(D^2(y\oplus k_2))}\nonumber\\
    &\quad+\sum_{y\in[N]:y\oplus k_2\not\in\dom{D^2}}\alpha_y\ket{y}_{D^1_x}\ket{D^2}_{D^2}\ket{k_2}_{K_2}\otimes \Id_{D^3_{x\oplus k_1}}\nonumber\\
    &\quad - \sum_{y\in[N]:y\oplus k_2\in\dom{D^2}}\alpha_y\frac{1}{N}\sum_{y'\in[N]:y'\oplus k_2 \in\dom{D^2}}\ket{y'}_{D^1_x}\ket{D^2}_{D^2}\ket{k_2}_{K_2}\otimes T_{D^3_{x\oplus k_1}}^{(D^2(y'\oplus k_2))}\nonumber\\
    &\quad - \sum_{y\in[N]:y\oplus k_2\in\dom{D^2}}\alpha_y\frac{1}{N}\sum_{y'\in[N]:y'\oplus k_2 \not \in\dom{D^2}}\ket{y'}_{D^1_x}\ket{D^2}_{D^2}\ket{k_2}_{K_2}\otimes \Id_{D^3_{x\oplus k_1}}\nonumber\\
    &\quad - \sum_{y\in[N]:y\oplus k_2\not\in\dom{D^2}}\alpha_y\frac{1}{N}\sum_{y'\in[N]:y'\oplus k_2 \in\dom{D^2}}\ket{y'}_{D^1_x}\ket{D^2}_{D^2}\ket{k_2}_{K_2}\otimes T_{D^3_{x\oplus k_1}}^{(D^2(y'\oplus k_2))}\nonumber\\
    &\quad - \sum_{y\in[N]:y\oplus k_2\not\in\dom{D^2}}\alpha_y\frac{1}{N}\sum_{y'\in[N]:y'\oplus k_2 \notin\dom{D^2}}\ket{y'}_{D^1_x}\ket{D^2}_{D^2}\ket{k_2}_{K_2}\otimes \Id_{D^3_{x\oplus k_1}}\nonumber\\
    &\quad + \sum_{y\in[N]:y\oplus k_2\in\dom{D^2}}\alpha_y\frac{1}{\sqrt{N}}\ket{\bot}_{D^1_x}\ket{D^2}_{D^2}\ket{k_2}_{K_2}\otimes \Id_{D^3_{x\oplus k_1}}\nonumber\\
    &\quad + \sum_{y\in[N]:y\oplus k_2\not\in\dom{D^2}}\alpha_y\frac{1}{\sqrt{N}}\ket{\bot}_{D^1_x}\ket{D^2}_{D^2}\ket{k_2}_{K_2}\otimes \Id_{D^3_{x\oplus k_1}}
\end{align}
then we bound the commutator by looking at the difference when there is a $T$ map with different superscripts and when one has $T$ map while other has $\Id$, namely the third, fourth, fifth and seventh terms on the last equation, as 
\begin{align}
&\norm{[\vr,\fc]\Gamma_1 \Pi^{inj} \Pi^{2t}}\\
&=\norm{[\vr,\fc]\ket{\psi}}\\
&\leq \norm{\sum_{y\in[N]:y\oplus k_2\in\dom{D^2}}\frac{\alpha_y}{N}\sum_{y'\in[N]:y'\oplus k_2 \in\dom{D^2}}\ket{y'}_{D^1_x}\ket{D^2}_{D^2}\ket{k_2}_{K_2}\otimes (T_{D^3_{x\oplus k_1}}^{(D^2(y\oplus k_2))}-T_{D^3_{x\oplus k_1}}^{(D^2(y'\oplus k_2))})}\nonumber\\
& \quad+ \norm{\sum_{y\in[N]:y\oplus k_2\in\dom{D^2}}\frac{\alpha_y}{N}\sum_{y'\in[N]:y'\oplus k_2 \not \in\dom{D^2}}\ket{y'}_{D^1_x}\ket{D^2}_{D^2}\ket{k_2}_{K_2}\otimes (T_{D^3_{x\oplus k_1}}^{(D^2(y\oplus k_2))}- \Id_{D^3_{x\oplus k_1}})}\nonumber\\
&\quad+ \norm{\sum_{y\in[N]:y\oplus k_2\not\in\dom{D^2}}\frac{\alpha_y}{N}\sum_{y'\in[N]:y'\oplus k_2 \in\dom{D^2}}\ket{y'}_{D^1_x}\ket{D^2}_{D^2}\ket{k_2}_{K_2}\otimes (\Id_{D^3_{x\oplus k_1}}- T_{D^3_{x\oplus k_1}}^{(D^2(y'\oplus k_2))})}\nonumber\\
&\quad +\norm{\sum_{y\in[N]:y\oplus k_2\in\dom{D^2}}\frac{\alpha_y}{\sqrt{N}}\ket{\bot}_{D^1_x}\ket{D^2}_{D^2}\ket{k_2}_{K_2}\otimes (T_{D^3_{x\oplus k_1}}^{(D^2(y\oplus k_2))}-\Id_{D^3_{x\oplus k_1}})} \\
&\leq \norm{\sum_{y\in[N]:y\oplus k_2\in\dom{D^2}}\frac{\alpha_y}{N}\sum_{y'\in[N]:y'\oplus k_2 \in\dom{D^2}}\ket{y'}_{D^1_x}\ket{D^2}_{D^2}\ket{k_2}_{K_2}\otimes T_{D^3_{x\oplus k_1}}^{(D^2(y\oplus k_2))}}\nonumber\\
&\quad +\norm{\sum_{y\in[N]:y\oplus k_2\in\dom{D^2}}\frac{\alpha_y}{N}\sum_{y'\in[N]:y'\oplus k_2 \in\dom{D^2}}\ket{y'}_{D^1_x}\ket{D^2}_{D^2}\ket{k_2}_{K_2}\otimes T_{D^3_{x\oplus k_1}}^{(D^2(y'\oplus k_2))}} \nonumber\\
&\quad + \norm{\sum_{y\in[N]:y\oplus k_2\in\dom{D^2}}\frac{\alpha_y}{N}\sum_{y'\in[N]:y'\oplus k_2 \not \in\dom{D^2}}\ket{y'}_{D^1_x}\ket{D^2}_{D^2}\ket{k_2}_{K_2}\otimes T_{D^3_{x\oplus k_1}}^{(D^2(y\oplus k_2))}} \nonumber\\
&\quad + \norm{\sum_{y\in[N]:y\oplus k_2\in\dom{D^2}}\frac{\alpha_y}{N}\sum_{y'\in[N]:y'\oplus k_2 \not \in\dom{D^2}}\ket{y'}_{D^1_x}\ket{D^2}_{D^2}\ket{k_2}_{K_2}\otimes \Id_{D^3_{x\oplus k_1}}} \nonumber\\
&\quad + \norm{\sum_{y\in[N]:y\oplus k_2\not\in\dom{D^2}}\frac{\alpha_y}{N}\sum_{y'\in[N]:y'\oplus k_2 \in\dom{D^2}}\ket{y'}_{D^1_x}\ket{D^2}_{D^2}\ket{k_2}_{K_2}\otimes \Id_{D^3_{x\oplus k_1}}} \nonumber\\
&\quad + \norm{\sum_{y\in[N]:y\oplus k_2\not\in\dom{D^2}}\frac{\alpha_y}{N}\sum_{y'\in[N]:y'\oplus k_2 \in\dom{D^2}}\ket{y'}_{D^1_x}\ket{D^2}_{D^2}\ket{k_2}_{K_2}\otimes T_{D^3_{x\oplus k_1}}^{(D^2(y'\oplus k_2))}} \nonumber\\
&\quad +\norm{\sum_{y\in[N]:y\oplus k_2\in\dom{D^2}}\frac{\alpha_y}{\sqrt{N}}\ket{\bot}_{D^1_x}\ket{D^2}_{D^2}\ket{k_2}_{K_2}\otimes \Id_{D^3_{x\oplus k_1}}} \nonumber\\
&\quad+ \norm{\sum_{y\in[N]:y\oplus k_2\in\dom{D^2}}\frac{\alpha_y}{\sqrt{N}}\ket{\bot}_{D^1_x}\ket{D^2}_{D^2}\ket{k_2}_{K_2}\otimes T_{D^3_{x\oplus k_1}}^{(D^2(y\oplus k_2))}} \\
&\leq \sqrt{\sum_{y\in[N]:y\oplus k_2\in\dom{D^2}}\frac{\abs{\alpha_y}^2}{N^2}\sum_{y'\in[N]:y'\oplus k_2 \in\dom{D^2}}\lVert\ket{y'}_{D^1_x}\ket{D^2}_{D^2}\ket{k_2}_{K_2}\otimes \ket{D^2(y\oplus k_2)}_{D^3_{x\oplus k_1}}\rVert^2 }\nonumber\\
&\quad+ \sqrt{\frac{|\sum_{y\in[N]:y\oplus k_2\in\dom{D^2}}\alpha_y|^2}{N^2}\sum_{y'\in[N]:y'\oplus k_2 \in\dom{D^2}}\lVert\ket{y'}_{D^1_x}\ket{D^2}_{D^2}\ket{k_2}_{K_2}\otimes \ket{D^2(y'\oplus k_2)}_{D^3_{x\oplus k_1}}\rVert^2}\nonumber\\
&\quad + \sqrt{\sum_{y\in[N]:y\oplus k_2\in\dom{D^2}}\frac{\abs{\alpha_y}^2}{N^2}\sum_{y'\in[N]:y'\oplus k_2 \not \in\dom{D^2}}\lVert\ket{y'}_{D^1_x}\ket{D^2}_{D^2}\ket{k_2}_{K_2}\otimes \ket{D^2(y\oplus k_2)}_{D^3_{x\oplus k_1}}\rVert^2}\nonumber\\
&\quad+ \sqrt{\frac{|\sum_{y\in[N]:y\oplus k_2\in\dom{D^2}}\alpha_y|^2}{N^2}\sum_{y'\in[N]:y'\oplus k_2 \not \in\dom{D^2}}\lVert\ket{y'}_{D^1_x}\ket{D^2}_{D^2}\ket{k_2}_{K_2}\otimes \ket{\bot}_{D^3_{x\oplus k_1}}\rVert^2}\nonumber\\
&\quad +\sqrt{\frac{|\sum_{y\in[N]:y\oplus k_2\not\in\dom{D^2}}\alpha_y|^2}{N^2}\sum_{y'\in[N]:y'\oplus k_2 \in\dom{D^2}}\lVert\ket{y'}_{D^1_x}\ket{D^2}_{D^2}\ket{k_2}_{K_2}\otimes \ket{\bot}_{D^3_{x\oplus k_1}}\rVert^2}\nonumber\\
&\quad + \sqrt{\frac{|\sum_{y\in[N]:y\oplus k_2\not\in\dom{D^2}}\alpha_y|^2}{N^2}\sum_{y'\in[N]:y'\oplus k_2 \in\dom{D^2}}\lVert\ket{y'}_{D^1_x}\ket{D^2}_{D^2}\ket{k_2}_{K_2}\otimes \ket{D^2(y'\oplus k_2)}_{D^3_{x\oplus k_1}}\rVert^2 }\nonumber\\
&\quad + \sqrt{\frac{|\sum_{y\in[N]:y\oplus k_2\in\dom{D^2}}\alpha_y|^2}{N}\lVert\ket{\bot}_{D^1_x}\ket{D^2}_{D^2}\ket{k_2}_{K_2}\otimes \ket{\bot}_{D^3_{x\oplus k_1}}\rVert^2}\nonumber\\
&\quad + \sqrt{\sum_{y\in[N]:y\oplus k_2\in\dom{D^2}}\frac{\abs{\alpha_y}^2}{N}\lVert\ket{\bot}_{D^1_x}\ket{D^2}_{D^2}\ket{k_2}_{K_2}\otimes \ket{D^2(y\oplus k_2)}_{D^3_{x\oplus k_1}}\rVert^2}\\
&\leq O(\sqrt{t/N})
          \end{align} by using the injectivity of $D^2$ sometimes to keep use orthogonality of $y$, and the standard Cauchy-Schwarz argument on square of sum, namely \begin{align}
              \Big|\sum_{y\in S}\alpha_y\Big|^2\leq \Bigl(\sum_{y\in S}\abs{\alpha_y}^2\Bigr) \Bigl(\sum_{y\in S} 1\Bigr) \leq |S|,\quad S\subseteq [N].
          \end{align} 
          \end{enumerate}   
      \end{enumerate}
\end{proof}

\subsubsection{$P_2$-query}
Essentially, the $P_2$-query proofs are analogous to the $P_1$-query ones, except with additional complications arising from the fact that $P_2$ sits in the middle and can be pointed to by both $D^1$ and $D^3$. For the proofs, see \Cref{app:p2-query}.
\begin{theorem}[$P_2$-query]\label{thm:p2-bound}
Let the registers $B$ and $I$ hold $b=0$ and $i=2$, i.e., a forward query to $P_2$ in both worlds. Then for every integer $t \geq 1$ with $t^2 \leq N/2$,
\begin{align}
    \lVert (\cco^R V - V \Pi^\star \cco^I) \Pi^\star \Pi^t \Pi^{k_1,k_3} \rVert \leq O\big(\sqrt{t/N}\big).
\end{align}
\end{theorem}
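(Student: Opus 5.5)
We follow the structure of the proof of \Cref{thm:p1-bound}, with the role of $D^1_x$ now played by the queried register $D^2_x$. Since the oracle for $P_2$ is the same in both worlds, write $\cco=\fc_{XD^2}\pu_{XYD^2}\fc_{XD^2}$. The first step inserts $\Gamma_1$ (the $P_2$-query version) at cost $O(\sqrt{t/N})$ by \Cref{lem:gamma1-perp-bound}. Next, drop $\Pi^{k_1,k_3}$ by submultiplicativity, and remove the inner $\Pi^\star$ at cost $O(\sqrt{t/N})$ via \Cref{lem:pi-star-violation}. This reduces the claim to bounding $\norm{(\cco V - V\cco)\Gamma_1\Pi^\star\Pi^t}$.

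We expand this as
\begin{align*}
\norm{[V,\fc]\Gamma_1\Pi^\star\Pi^t}+\norm{[V,\pu]\fc\Gamma_1\Pi^\star\Pi^t}+\norm{[V,\fc]\pu\fc\Gamma_1\Pi^\star\Pi^t}.
\end{align*}
For the middle term, we need $V$ to preserve $D^2_x$ in the computational basis. Under $\Gamma_1$ no $D^1$ entry shifted by $k_2$ equals $x$, so $\vr$ only reads $D^2_x$. The fill step $\fp_{D^2}\vf\fp_{D^2}$ writes into $D^2$ only at positions that are $\bot$ in the flipped database. We would check that a fill at $D^2_x$ is impossible when $D^2_x\neq\bot$, and harmless when $D^2_x=\bot$ because $\pu$ is then trivial. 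If this fails, we would project onto $D^2_x\notin$ targets of $\vf$ at an extra $O(\sqrt{t/N})$ cost using uniform $k_3$. The third term is reduced to the first with $t+1$ in place of $t$, exactly as in the $P_1$ proof, via \Cref{lem:pi-t-violation} and \Cref{lem:pi-star-violation}. We then split $[V,\fc]$ into $[\fp_{D^2}\vf\fp_{D^2},\fc]$ and $[\vr,\fc]\fp_{D^2}\vf\fp_{D^2}$.

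For the first piece we need an analogue of \Cref{lem:p1-psi-y-d}. Fix everything except $D^2_x$ and write $\fp_{D^2}\vf\fp_{D^2}\ket{y}_{D^2_x}\ket{D}=\ket{y}\ket{\psi^y_D}$. The required properties are orthogonality preservation in $D$ and $\norm{\ket{\psi^y_D}-\ket{\psi^{y'}_D}}\le O(\sqrt{t/N})$. Both parts of $\Gamma_1$ are used here:
\begin{itemize}
\item $x\notin\im{D^1}\oplus k_2$ makes goodness of $D^1,D^2$ with $k_2$ independent of the content of $D^2_x$, so by \Cref{rem:fill-or-identity} every factor of $\vf$ fills or acts as the identity.
\item $y\notin\im{D^3}\oplus k_3$ ensures that the value at $D^2_x$ (an entry $y\mapsto x$ of the flipped $D^2$) never blocks an active $\vf$ factor.
\end{itemize}
The value $y$ then influences each fill only through the set $J$, which shrinks by at most one element, namely $x\oplus k_2$ or $y$. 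Since each fill is a uniform superposition over $|J|\ge N-O(t)$ values, each such change costs $O(1/\sqrt N)$ per active factor. Summing over the $O(t)$ factors, using orthogonality of the created branches, should give $O(\sqrt{t/N})$. With this lemma, the cases $D^2_x=\bot$ and $D^2_x\neq\bot$ follow verbatim from the computations in the $P_1$ proof, including the Cauchy--Schwarz step over $y$.

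For the second piece, $[\vr,\fc]$, the register $D^2_x$ is read by $\vr\ppar{i}$ only when $D^1(i)\oplus k_2=x$. After $\vf$ acts, such an $i$ can arise only from a freshly filled pair, which lives at positions that were $\bot$, or is excluded by $\Gamma_1$. The new entries of $D^1$ are uniform over $J$, so the amplitude on $u\oplus k_2=x$ is $O(1/\sqrt N)$ per fill. We would then bound the contributions in which $\vr$ applies $T^{(D^2(x)\oplus k_3)}$ versus the identity, term by term as in the $P_1$ case, by $O(\sqrt{t/N})$, using injectivity of $D^2$ and of the doubled databases in $\mathbf{D}_{2t,inj}$. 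The main obstacle is proving the $\ket{\psi^y_D}$ lemma, specifically controlling how $y$ simultaneously affects $J$ for all active fills while keeping orthogonality. If the direct argument stalls, we would first apply \Cref{rem:comm-vf-factors} to commute the factors and treat the single factor whose $J$ contains $x\oplus k_2$ separately.
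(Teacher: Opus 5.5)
Your skeleton is the paper's: insert $\Gamma_1$, strip the projectors, split $\cco=\fc\pu\fc$ into three commutators, split $V$ into $\fp_{D^2}\vf\fp_{D^2}$ and $\vr$, and compare $\ket{\psi^y}$ states. However, the key lemma you formulate is false in exactly the case that separates $P_2$ from $P_1$.

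When $D^2_x=\bot$ we have $x\notin\dom{D^2}$, and $\Gamma_1$ gives $x\oplus k_2\notin\im{D^1}$, so $x\oplus k_2\in J$. Any active factor $\vf\ppar{i}$ that fills with $u=x\oplus k_2$ therefore writes $D^2(x)=D^3(i\oplus k_1)\oplus k_3$. So $\fp_{D^2}\vf\fp_{D^2}$ does not preserve $\ket{\bot}_{D^2_x}$; its image has the form $\ket{\bot}_{D^2_x}\ket{\psi^\bot_{D^1,D}}+\ket{\not\bot}_{D^2_x}\ket{\psi^{\not\bot}_{D^1,D}}$. The missing estimate is $\norm{\ket{\not\bot}_{D^2_x}\ket{\psi^{\not\bot}_{D^1,D}}}\leq O(\sqrt{t/N})$: there are at most $t$ active fills, each picks $x\oplus k_2$ with weight $1/\abs{J}$, and these branches are orthogonal. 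This is item (3) of \Cref{lem:p2-psi-y-d}, and you need it in three places.
\begin{itemize}
\item \emph{The $\pu$ commutator.} On $D^2_x=\bot$, $[\fp_{D^2}\vf\fp_{D^2},\pu]$ is not harmless. In $\pu V$ the fill happens first and $\pu$ then XORs the freshly written value into $Y$, so the commutator is $(\pu-\Id)$ applied to the $\not\bot$ component. Your fallback charges this to uniform $k_3$, but $k_3$ only shifts the stored value, not the position $u\oplus k_2$ being written; you had also already discarded $\Pi^{k_1,k_3}$. The relevant randomness is the uniform fill value $u\in J$, which you do use in your $\vr$ paragraph.
\item \emph{The $\fc$ commutator.} It adds an extra term to $[\fp_{D^2}\vf\fp_{D^2},\fc]$ in both cases; for $D^2_x\neq\bot$ it comes from the $\ket{\bot}/\sqrt N$ part of $\fc$. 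So the $P_1$ computation cannot be copied verbatim.
\item \emph{The $\vr$ piece.} It makes $[\vr,\fc]$ easy. Insert $\Gamma_1$ after $\fp_{D^2}\vf\fp_{D^2}$. Then $[\vr,\fc]\Gamma_1=0$, since $\vr$ never reads $D^2_x$ when $x\notin\im{D^1}\oplus k_2$. Also $\Gamma_1^\perp\fp_{D^2}\vf\fp_{D^2}\Gamma_1$ sees only the $\not\bot$ component, so no $P_1$-style $T$-versus-$\Id$ bookkeeping is needed.
\end{itemize}

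There is a second omission. Your $\ket{\psi^y}$ lemma needs $y\notin\im{(D^2_x)^c}$, so the flip acts, and $y\notin\im{D^3}\oplus k_3$, so no fill is blocked and $S$ is unchanged. These are guaranteed by $\Pi^\star\Gamma_1$ only \emph{before} $\fc$. After $\fc$ acts on $D^2_x$ (in $\fp_{D^2}\vf\fp_{D^2}\fc$, and in your second and third terms), the new values are spread almost uniformly over $[N]$, including the at most $2t$ bad ones. You must project these out explicitly, at cost $O(\sqrt{t/N})$. Likewise $\Gamma_1$ must be re-inserted, since here it does not commute with $\fc$; the paper keeps $\Pi^{k_1,k_3}$ on those two terms and reapplies \Cref{lem:gamma1-perp-bound}. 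None of this arose for $P_1$, because there $\fc$ acts on $D^1$, which neither the flip on $D^2$ nor the $P_1$ version of $\Gamma_1$ sees. With the $\not\bot$ estimate and these projections added, your argument coincides with the paper's proof.
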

\begin{proof}
    See \Cref{app:p2-query}.
\end{proof}

\subsubsection{$\pkac/P_3$-query} \label{sec:\pkac/P_3-query}

In this query the oracles in the real and ideal worlds differ. To return the same answer to the adversary's output register, the isometry must perform the fill and the removal on the queried register of the database, namely $D^3_x$. This requires the factor $\vf\ppar{x \oplus k_1}$ to be active, that is, to perform the fill $\ket{\bot}\ket{\bot} \mapsto \ket{\psi_{k_2}}$, which in turn requires that both of its target registers be $\bot$; $\Gamma_1$ and $\Gamma_2$ enforce this for the relevant indices of $D^1$ and $D^2$ registers, respectively. Once $\vf\ppar{x \oplus k_1}$ is active, $\vr\ppar{x \oplus k_1}$ automatically does the rest of the work of uncomputing $D^3_x$.

The remaining factors of the isometry do not play a role here. We therefore split the isometry into a product controlled on the query register as in \Cref{def:vp-vpc-main}. After bounding the difference between $\vpc\vp$ and $V$ in \Cref{lem:p-vpvpc-V}, we discard the irrelevant parts, namely $\vpc$, by bounding its commutator with $\cco^R$ and using the unitarity of the norm in \Cref{lem:comm-real-vpc}. Hence, we have removed the order dependence of the factors in $V$. 

 \begin{definition} \label{def:vp-vpc-main}
    \begin{align*}
        \vp&\coloneqq \sum_{x,k_1\in[N]} \proj{x}_X \otimes \proj{k_1}_{K_1} \otimes \vr\ppar{x\oplus k_1}\fp_{D_2} \vf\ppar{x\oplus k_1} \fp_{D_2}\\
        \vpc&\coloneqq \sum_{x,k_1\in[N]} \proj{x}_X \otimes \proj{k_1}_{K_1} \otimes \prod_{\substack{x'\in[N]:\\x'\neq x\oplus k_1}}\vr\ppar{x'}\fp_{D_2}  \prod_{\substack{x'\in[N]:\\x'\neq x\oplus k_1}}\vf\ppar{x'} \fp_{D_2}
    \end{align*}
    For fixed $x, k_1\in[N]$ at registers $X,K_1$, respectively, we use the notation
    \begin{align*}
      \vfpc\coloneqq  \prod_{x'\in[N]:x'\neq x\oplus k_1}\vf\ppar{x'} \text{ and } \vrpc\coloneqq  \prod_{x'\in[N]:x'\neq x\oplus k_1}\vr\ppar{x'}. 
    \end{align*}
\end{definition}

\begin{theorem}[$P_3$-query]\label{thm:p3-bound}
Let the registers $B$ and $I$ hold $b=0$ and $i=3$, i.e., a forward query to the construction in the real world and to $P_3$ in the ideal world. Then for every integer $t \geq 1$ with $t^2 \leq N/2$,
\begin{align}
    \lVert (\cco^R V - V \Pi^\star \cco^I) \Pi^\star \Pi^t \Pi^{k_1} \Pi^{k_3} \rVert \leq O\big(\sqrt{t/N}\big).
\end{align}
\end{theorem}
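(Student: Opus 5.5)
We follow the template of the $P_1$-query proof, with the projectors $\Gamma_1,\Gamma_2$ (for a $\pkac/P_3$-query) inserted first. By \Cref{lem:gamma1-perp-bound} and \Cref{lem:gamma2-perp-bound}, together with the commutation of $\Gamma_1,\Gamma_2$ with $\Pi^\star\Pi^t\Pi^{k_1,k_3}$ up to the $\Pi^t$ growth, we pay $O(\sqrt{t/N})$ to insert $\Gamma_2\Gamma_1$ on the right. We then pay $O(\sqrt{t/N})$ more, via \Cref{lem:pi-star-violation}, to drop the inner $\Pi^\star$ next to $\cco^I$. This leaves the task of bounding $\lVert(\cco^R V - V\cco^I)\Gamma_2\Gamma_1\Pi^\star\Pi^t\rVert$.

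Next we split $V$ into $\vpc\vp$ as in \Cref{def:vp-vpc-main}. First we show $\lVert (V-\vpc\vp)\Gamma_2\Gamma_1\Pi^\star\Pi^t\rVert\le O(\sqrt{t/N})$. On good injective databases the factors of $\vf$ (and then of $\vr$) commute by \Cref{rem:comm-vf-factors}, except on the small-weight set where $D^3(x'\oplus k_1)$ values collide with the $x$-factor's targets. Second, we show that $\vpc$ commutes with $\cco^R$ and with $\cco^I$ up to $O(\sqrt{t/N})$. $\Gamma_1,\Gamma_2$ ensure that the registers $D^1_{x\oplus k_1}$, $D^2_{D^3(x)\oplus k_3}$ and $D^3_x$ touched by the query are not targets or controls of $\vpc$, except through the sets $J$, and a changed $J$ costs $O(\sqrt{t/N})$ as in \Cref{lem:p1-psi-y-d}. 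Since $\vpc$ is an isometry, it drops out by unitary invariance of the norm. We are reduced to $\lVert(\cco^R\vp-\vp\cco^I)\Gamma_2\Gamma_1\Pi^\star\Pi^t\rVert$, which involves a single active pair $\vf\ppar{x\oplus k_1}$, $\vr\ppar{x\oplus k_1}$.

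The core computation works in a fixed computational basis for all registers except $D^3_x$ and the two real-world target registers, and writes $D^3_x$ in the Fourier/Hadamard-type decomposition $\ket{\bot}$, $\ket{+^n}$ and the orthogonal complement. By \Cref{rem:construction-query}, $\cco^R$ equals $\fc\ppar{1}\cco\ppar{2}\fc\ppar{1}$. We split into cases:
\begin{enumerate}[label=(\roman*)]
\item $D^3_x$ in the complement of $\ket{\bot},\ket{+^n}$, a defined value that stays defined. Here $\vp$ maps $\ket{y}_{D^3_x}$ to $\ket{\bot}_{D^3_x}\ket{\psi_{k_2}}$, with $D^2$ evaluated at $y\oplus k_3$. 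The real query reads $D^1_{x\oplus k_1}=u$, then $D^2_{u\oplus k_2}$, and outputs $y$, matching the ideal output exactly on good databases.
\item $D^3_x=\bot$ before the query, so the ideal $\fc$ creates $\ket{+^n}$. We must show that $\vp\ket{+^n}_{D^3_x}$ agrees up to $O(\sqrt{t/N})$ with $\fc\ppar{1}\fc\ppar{2}$ acting on $\ket{\bot\bot}$. Both give a near-uniform superposition over $(u,u\oplus k_2\mapsto y)$ restricted to $J$, and $|J|\ge N-O(t)$ by goodness, so \Cref{lem:sanitized-comp} gives the bound.
\item Uncomputation back to $\bot$. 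This is the adjoint of case (ii).
\end{enumerate}
Good databases guarantee that only positive-coefficient fill branches arise, so the cross terms are orthogonal and the errors add in quadrature, as in the $P_1$ proof.

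The main obstacle is case (ii)/(iii), the matching of compressions. Here $\vp$ must exactly turn the ideal decompression of $D^3_x$ into the real two-level decompression of $D^1$ and $D^2$, including the normalizations $1/\sqrt{|J|}$ versus $1/\sqrt N$ and the identification of $D^2$ through $\fp_{D^2}$. I would first handle this by replacing every compression with its sanitized version $\widetilde\fc$ over the admissible sets, using \Cref{lem:sanitized-comp}, so that the identity becomes exact. The second obstacle is the commutation of $\vpc$ with the query. This is where the uniformity of $K_1$ and $K_3$ ($\Pi^{k_1}\Pi^{k_3}$) is used, to exclude with probability $O(t/N)$ that other active factors target $D^1_{x\oplus k_1}$ or $D^2_{D^3(x)\oplus k_3}$.
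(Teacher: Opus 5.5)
Your outline is the paper's outline: insert $\Gamma_1,\Gamma_2$, split $V=\vpc\vp$, commute $\vpc$ past $\cco^R$ and drop it by isometry, then do a Hadamard-basis case analysis, with the defining and uncomputing cases related by one more application of $\cco^I$. The gap is in the core computation.

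\textbf{Case (i).} You claim the real query ``reads $u$, then $D^2_{u\oplus k_2}$, and outputs $y$, matching exactly on good databases''. This overlooks that $\cco^R=\fc\ppar{1}\cco\ppar{2}\fc\ppar{1}$, with $\cco\ppar{2}=\fc\ppar{2}\pu\ppar{em}\fc\ppar{2}$, first applies decompressions to exactly the registers $D^1_{x\oplus k_1}$ and $D^2_{u\oplus k_2}$ that $\vp$ has just filled. Whether they act trivially there is not a consequence of goodness. The agreement holds only up to $O(\sqrt{t/N})$, and proving that is the content of \Cref{lem:p3-decomp-divergence}, which your plan lacks.
\begin{itemize}
\item For $\fc\ppar{1}$: after the fill, $u\oplus k_2\in\dom{D^2}$. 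So the partial decompression whose uniform state avoids $\dom{D^2}\oplus k_2$ is exactly the identity on $\ket{u}$, and then \Cref{lem:diff-full-and-partial-decomp} applies.
\item For $\fc\ppar{2}$: with $D^3_x$ in $\ket{\hat{y'}}$, $y'\neq 0$, the filled register holds $\sum_z(-1)^{y'\cdot z}\ket{z\oplus k_3}\propto(-1)^{y'\cdot k_3}\ket{\hat{y'}}$. This is a nonzero Hadamard state fixed by $\fc\ppar{2}$, up to the terms with $z\oplus k_3\in\im{D^2}$ removed by $\Gamma_2$, which have norm $O(\sqrt{t/N})$.
\end{itemize}
Your case (ii) needs the same kind of estimate at its end. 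There the ideal's second $\fc$ fixes $\ket{\hat y}$, but the real $\fc\ppar{2},\fc\ppar{1}$ act on filled registers.

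\textbf{Missing case.} Your case split omits $D^3_x=\ket{+^n}$, the non-defining query (the paper's case 5). There $\cco^I$ is the identity, but $\vp$ writes an essentially uniform state into $D^2_{u\oplus k_2}$, so $\fc\ppar{2}$ is far from the identity. You must show the real query is still trivial: $\fc\ppar{2}$ compresses that state to $\bot$, on which $\pu\ppar{em}$ does nothing.

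\textbf{The $\vpc$ step.} Your diagnosis here is off. Other factors never target $D^1_{x\oplus k_1}$ or the flipped $D^2$ position $D^3(x)\oplus k_3$, because $x'\neq x\oplus k_1$ and $D^3$ is injective. This is why \Cref{lem:p-vpvpc-V} is exact, and no averaging over $K_1,K_3$ is needed for it. The real conflict arises when $D^3_x=\bot$:
\begin{itemize}
\item $\vp$ is trivial, and $\cco^R$ decompresses $D^1_{x\oplus k_1}$ to every $u\in[N]$, then writes arbitrary values into $D^2_{u\oplus k_2}$.
\item These values can collide with $\im{D^2}$ or with the values $D^3(x')\oplus k_3$ written by $\vfpc$. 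That destroys injectivity of $D^2$, and hence the flips inside $\vpc$, besides shifting the fill sets $J$.
\end{itemize}
The paper controls this in \Cref{lem:comm-real-vpc} using the partial decompressions $\widetilde{\fc\ppar{1}}$ and $\widetilde{\w}$ together with the projector $\Pi^{inj}_{D^2}$.

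\textbf{Placement of $\Gamma_2$.} Putting $\Gamma_2$ only at the far right leaves the post-query value of $D^3_x$ unconstrained in the defining case, where $\vf\ppar{x\oplus k_1}$ then need not fire. Place $\Gamma_2$ between $V$ and $\cco^I$, as the paper does. This still costs only $O(\sqrt{t/N})$ by \Cref{lem:gamma2-perp-bound}, since $K_3$ remains uniform.
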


\begin{proof}
    See \Cref{app:p3-query-lemma}.
\end{proof}

\bibliography{ref,cryptobib/crypto,cryptobib/abbrev3}
\pagebreak
\appendix
\section{Partial Decompression Deferred Proof}\label{app:partial-dec}

\subsection{Proof of \Cref{cor:part-decomp}}.
\begin{proof}
    Fix the query register $X$ to $\ket{x}$ and the complementary
    database register $(D_x)^c$ to a computational basis value $D'$.
    On the support of $\Pi^{inj}_D\Pi^t_D$, the partial database $D'$
    is injective and has size at most $t$. Define
    \[
        \mathcal{S}_{D'}\coloneqq [N]\setminus\im{D'}.
    \]
    Conditioned on $D'$, the operators $\fc$ and $\pc$ act on $D_x$
    as $\operatorname{Exc}(\ket{\bot},\ket{+^n})$ and
    $\operatorname{Exc}(\ket{\bot},\ket{\mathcal{S}_{D'}})$, respectively,
    with the latter restricted to the injective subspace.
    Since
    \[
        N-|\mathcal{S}_{D'}|=|\im{D'}|=|D'|\leq t,
    \]
    \Cref{lem:sanitized-comp}, applied with $(D_x)^c$ as the
    auxiliary register, bounds their difference by
    $O(\sqrt{t/N})$ on these blocks. Restricting the input further
    by $\Pi^{inj}_D\Pi^t_D$ cannot increase the norm.
    Finally, both operators preserve $X$ in the computational
    basis, so taking the maximum over $x$ proves the claim.
\end{proof}
\section{One Query Impossibility Deferred Proof}\label{app:one-query}

We give a proof of the information theoretic bound using repeated swap
tests, an additional consistency test for query preimage sets, and
an elementary argument for combining the tests over all keys.

Write $\kappa=\lceil\log_2|K|\rceil$, $N=2^n$, and $D=2^m$, and assume
\eqref{eq:one-query-size-regime}. For a permutation $F$ of $[L]$,
we will write the Choi state as
\[
    \ket F
    :=\frac{1}{\sqrt L}\sum_{x\in[L]}\ket{x,F(x)}.
\]
Thus the distinguisher prepares
\[
    \ket{\psi_{P,Q}}
    =\ket P^{\otimes t}\otimes\ket Q^{\otimes t}
\]
using $t$ queries to each oracle, or $2t$ queries in total.
We can recall the ideal and real views of this distinguisher as
\begin{align}
    \rho_I
    &=\mathbb E_{P\sim S_N,\,Q\sim S_D}
      \left[\proj{\psi_{P,Q}}\right],
    \\
    \rho_R
    &=\mathbb E_{P\sim S_N,\,k\sim K}
      \left[\proj{\psi_{P,Q_k^P}}\right],
\end{align}
where the choices in each expectation are independent and
$Q_k^P(x)=b_k(x,P(a_k(x)))$ is a permutation of $[D]$ for every $P$ and $k$.
We first prove a helpful lemma identifying the structure imposed by this bijectivity requirement.

\begin{lemma}
\label{lem:one-query-structure}
Fix a key $k$, and define $X_{k,u}:=a_k^{-1}(u)$ for every $u\in[N]$.
There are pairwise disjoint sets
\[
    \{Y_{k,u}\}_{u\in[N]},\qquad
    \{Z_{k,y}\}_{y\in[N]},
\]
whose union is $[D]$, independent of $P$, such that for every $u,y\in[N]$,
\[
    b_k(\,\cdot\,,y):X_{k,u}\longrightarrow Y_{k,u}\sqcup Z_{k,y}
\]
is a bijection. Moreover, there is an integer $d_k\ge0$ such that
\[
    |Z_{k,y}|=d_k,\qquad
    |X_{k,u}|=|Y_{k,u}|+d_k,
    \qquad
    D=\sum_u|Y_{k,u}|+Nd_k.
\]
In particular, $d_k=0$ whenever $a_k$ is not surjective, and hence
whenever $D<N$.
\end{lemma}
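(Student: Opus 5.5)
The plan is to fix $k$ and extract the sets $Y_{k,u}$ and $Z_{k,y}$ directly from the images $W_{u,y}\coloneqq b_k(X_{k,u},y)\subseteq[D]$, using only the requirement that $Q_k^P$ is a bijection for \emph{every} $P\in S_N$. Drop the subscript $k$ throughout.

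\emph{First facts.} For any $u,y$ there is a permutation $P$ with $P(u)=y$. Since $Q^P$ is injective on $X_u$, the map $b(\cdot,y)$ is injective on $X_u$, so $|W_{u,y}|=|X_u|$. Next, for $u\neq u'$ and $y\neq y'$ there is a $P$ with $P(u)=y$ and $P(u')=y'$. Injectivity of $Q^P$ then gives $W_{u,y}\cap W_{u',y'}=\emptyset$; call this fact (D). Finally, for every $P$, surjectivity of $Q^P$ together with (D) gives $[D]=\bigsqcup_u W_{u,P(u)}$. The case $N=1$ is trivial: set $Y_0=W_{0,0}$ and $Z=\emptyset$. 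So assume $N\ge2$.

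\emph{Definitions.} Set $Y_u\coloneqq\bigcap_{y}W_{u,y}$ and $Z_y\coloneqq\bigcap_{u}W_{u,y}$. Both are manifestly independent of $P$. The key claim is $W_{u,y}\setminus Y_u\subseteq Z_y$, and I expect this to be the main step.

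\emph{Key claim.} Take $z\in W_{u,y}$ with $z\notin W_{u,y''}$ for some $y''\neq y$. Pick any $u'\neq u$ and a permutation $P$ with $P(u)=y''$ and $P(u')=y$. By the partition for this $P$, $z\in W_{w,P(w)}$ for some $w$. We have $w\neq u$, because $z\notin W_{u,y''}$. Hence by (D), applied against $z\in W_{u,y}$, we must have $P(w)=y$, i.e.\ $w=u'$. Since $u'\neq u$ was arbitrary, $z\in W_{u',y}$ for all $u'$, so $z\in Z_y$. This gives $W_{u,y}=Y_u\cup Z_y$.

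\emph{Disjointness.} Both $Y_u$ and $Z_y$ lie in $W_{u,y}$. For $Y_u\cap Z_y$, use $Y_u\subseteq W_{u,y'}$ and $Z_y\subseteq W_{u',y}$ with $u'\neq u$, $y'\neq y$, and apply (D). Pairwise disjointness of $Y_u,Y_{u'}$ for $u\neq u'$, and of $Z_y,Z_{y'}$ for $y\neq y'$, follows from (D) in the same way by choosing suitable indices. This yields the bijection $b(\cdot,y):X_u\to Y_u\sqcup Z_y$.

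\emph{Counting.} The bijection gives $|X_u|=|Y_u|+|Z_y|$ for all $u,y$. Hence $|Z_y|\eqqcolon d$ is independent of $y$.

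If $a$ is not surjective, some $X_u=\emptyset$. Then $W_{u,y}=\emptyset$ for all $y$, so every $Z_y\subseteq W_{u,y}$ is empty and $d=0$. In that case the partition for any $P$ reads $[D]=\bigsqcup_u Y_u$.

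If $a$ is surjective, then for any $P$, as $u$ ranges over $[N]$ so does $P(u)$. The partition becomes $[D]=\bigsqcup_uY_u\sqcup\bigsqcup_yZ_y$, which shows that all these sets together cover $[D]$. It also gives $D=\sum_u|Y_u|+Nd$. Since $D<N$ forces $a$ to be non-surjective, $d=0$ in that case as well.
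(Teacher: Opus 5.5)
Your proof is correct, but it takes a different route from the paper's. Both arguments rest on the same disjointness fact, your (D): image sets in different rows and different columns are disjoint, because some permutation realizes both prescribed values.

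The paper then argues globally by double counting:
\begin{itemize}
\item It views the sets $S_u(y)=b_k(X_{k,u},y)$ as an $N\times N$ grid.
\item By (D), all occurrences of a given output symbol lie in a single row or a single column, so each symbol occurs in at most $N$ cells.
\item The total number of occurrences is $\sum_{u,y}|S_u(y)|=N\sum_u|X_{k,u}|=ND$. So every symbol fills exactly one complete row or one complete column, which gives the partition of $[D]$ and $S_u(y)=Y_u\sqcup Z_y$ in one stroke.
\end{itemize}

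You never count. You define $Y_u$ and $Z_y$ explicitly as row and column intersections. You prove $W_{u,y}\setminus Y_u\subseteq Z_y$ elementwise, using surjectivity of $Q^P$ for a well-chosen $P$. Covering of $[D]$ and the formula for $D$ then come from the partition $[D]=\bigsqcup_u W_{u,P(u)}$.

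What each buys: the paper's counting is shorter and yields the structural picture (each symbol fills a full row or column) directly. Yours is more explicit about what the sets are. It also handles $N=1$ properly: there your intersection definitions would collide, and the paper's row/column assignment is ambiguous because a single cell is both a row and a column.

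One small simplification: your case split on whether $a_k$ is surjective is unnecessary for the covering. The map $u\mapsto P(u)$ is a bijection of $[N]$ for every $P$, so $[D]=\bigsqcup_u Y_u\sqcup\bigsqcup_y Z_y$ and $D=\sum_u|Y_u|+Nd$ hold in both cases. The empty-$X_u$ observation is needed only to conclude $d=0$.
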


\begin{proof}
Fix $k$ and suppress its subscript. For every $u,y\in[N]$, the map
$b(\,\cdot\,,y)$ is injective on $X_u$, since it is the restriction of
a cipher with $P(u)=y$. Define
\[
    S_u(y):=\{b(x,y):x\in X_u\}.
\]
For $u\ne u'$ and $y\ne y'$, there is a permutation $P$ satisfying
$P(u)=y$ and $P(u')=y'$. Bijectivity of $Q^P$ therefore implies
\[
    S_u(y)\cap S_{u'}(y')=\varnothing.
\]
View $S_u(y)$ as an $N\times N$ matrix of sets, with rows indexed by
$u$ and columns by $y$. Any two occurrences of the same output symbol
share a row or a column. Consequently, all occurrences of that symbol
lie in a single row or a single column: if two occurrences have
different columns, they share a row, and every other occurrence must
lie in that row. The case of different rows is symmetric.

Each output symbol therefore occurs in at most $N$ cells. On the other
hand, the total number of occurrences is
\[
    \sum_{u,y}|S_u(y)|=N\sum_u|X_u|=ND.
\]
Since there are $D$ symbols, each occurs in exactly $N$ cells and fills
a complete row or a complete column. Let $Y_u$ contain the symbols
filling row $u$, and let $Z_y$ contain the symbols filling column $y$.
These sets partition $[D]$, and $S_u(y)=Y_u\sqcup Z_y$.
The cardinality identity
\[
    |X_u|=|Y_u|+|Z_y|
\]
shows that all $Z_y$ have the same size $d$. Summing over $u$ gives
the claimed expression for $D$. If some $X_u$ is empty, the same
identity forces $d=0$.
\end{proof}

\begin{lemma}
For some $t=O(\kappa+1+n/m)$,
\[
    \norm{\rho_I-\rho_R}_1=\Omega(1).
\]
\label{lem:views-indist}
\end{lemma}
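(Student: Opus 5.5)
The plan is to build a single two-outcome measurement: projection onto the support of $\rho_R$. It accepts $\rho_R$ with probability $1$, and we will show it accepts $\rho_I$ with probability at most $1/2$. Then $\tfrac12\norm{\rho_I-\rho_R}_1\ge \Tr[\Pi_R(\rho_R-\rho_I)]\ge 1/2$. Here $\Pi_R$ is the projector onto $\mathrm{supp}(\rho_R)$, i.e.\ the span of the supports of the per-key states $\rho_R^{(k)}:=\mathbb E_P[\proj{\psi_{P,Q_k^P}}]$. Hence $\operatorname{rank}\Pi_R\le\sum_k\operatorname{rank}\rho_R^{(k)}$, and
\[
\Tr[\Pi_R\rho_I]\le \operatorname{rank}(\Pi_R)\,\norm{\rho_I}_\infty
\le |K|\cdot\max_k\operatorname{rank}\rho_R^{(k)}\cdot\norm{\rho_P^{(t)}}_\infty\norm{\rho_Q^{(t)}}_\infty,
\]
where $\rho_P^{(t)}=\mathbb E_P\proj{P}^{\otimes t}$ and $\rho_Q^{(t)}=\mathbb E_Q\proj{Q}^{\otimes t}$, since $\rho_I=\rho_P^{(t)}\otimes\rho_Q^{(t)}$. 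The factor $|K|\le 2^\kappa$ is what produces the $\kappa$ in $t$. It is also how the tests for different keys are combined without a quantum union bound.

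The first key step uses \Cref{lem:one-query-structure} to show that, for fixed $k$, the Choi state of $Q_k^P$ is a \emph{fixed linear image} of the Choi state of $P$. Define $L_k\ket{u,y}:=\sum_{x\in X_{k,u}}\ket{x,b_k(x,y)}$; then $L_k\ket P=\sqrt{N/D}\,\ket{Q_k^P}$. So $\ket{\psi_{P,Q_k^P}}=(N/D)^{t/2}(\Id^{\otimes t}\otimes L_k^{\otimes t})\ket P^{\otimes 2t}$. Consequently $\operatorname{rank}\rho_R^{(k)}\le\operatorname{rank}\rho_P^{(2t)}$, uniformly in $k$, and the key- and $Q$-dependence disappears from the rank. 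The bijectivity structure (sets $Y_{k,u}$, $Z_{k,y}$) guarantees that $L_k$ is well defined independently of $P$; this is exactly where the permutation requirement on $Q_k^P$ is used.

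The second step bounds operator norms of averaged Choi powers. For random $F\in S_L$, the Gram entries are $\braket{F|F'}^t=(\mathrm{fix}(F^{-1}F')/L)^t$. Bounding the top eigenvalue of $\rho_F^{(t)}$ by the maximal row sum of the normalized Gram matrix gives
\[
\norm{\rho_F^{(t)}}_\infty\le \mathbb E_\sigma[(\mathrm{fix}(\sigma)/L)^t]\le B_t/L^t,
\]
with $B_t$ the Bell number (moments of fixed points are at most Poisson moments). Applying this to $Q$ gives $\norm{\rho_Q^{(t)}}_\infty\le B_t/D^t$. For $P$ we need the product $\operatorname{rank}\rho_P^{(2t)}\cdot\norm{\rho_P^{(t)}}_\infty$ to be at most $N^{O(1)}\cdot t^{O(t)}$, independent of the large $D$. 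Together these give $\Tr[\Pi_R\rho_I]\le 2^\kappa N^{O(1)}t^{O(t)}/D^t$. Under \eqref{eq:one-query-size-regime}, this is $\le1/2$ once $t\,m\ge C(\kappa+1+n)$ plus lower-order $t\log t$ terms, i.e.\ for $t=O(\kappa+1+n/m)$; the assumption $\kappa+1+n/m=o(D)$ keeps $t\ll D$, so the Bell-number and $t\log t$ corrections are absorbed.

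The main obstacle is the $P$-side bound $\operatorname{rank}\rho_P^{(2t)}\cdot\norm{\rho_P^{(t)}}_\infty\le N^{O(1)}t^{O(t)}$. Naive bounds on the span of $\{\ket P^{\otimes 2t}\}$ are of size $N^{\Theta(t)}$, which would destroy the argument when $N\gg D$. My first attempt is to use the permutation symmetry $\ket P\mapsto(\pi\otimes\tau)\ket P$ to decompose $\rho_P^{(2t)}$ into Schur--Weyl isotypic blocks. I would then compare, block by block, the dimension of the support of $\rho_P^{(2t)}$ with the eigenvalues of $\rho_P^{(t)}$, which should cancel the $N^{\Theta(t)}$ factors. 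If that fails, the fallback is to spend an extra $O(n/m)$ copies of $Q$ on swap tests. These certify that the $Q$-register lies in the image of $L_k^{\otimes t}$ applied to a copy of the $P$-register; the real state passes with certainty, while the ideal state passes with probability at most $(N^{O(1)}/D)^{\Omega(t)}$. The result is then combined over keys via the same $|K|$ rank/union argument.
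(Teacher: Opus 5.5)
\textbf{There is a genuine gap.} Your argument breaks at the step you flagged, and the break is fatal to the rank-times-operator-norm strategy itself. The target $\operatorname{rank}\rho_P^{(2t)}\cdot\norm{\rho_P^{(t)}}_\infty\le N^{O(1)}t^{O(t)}$ is false.

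The Gram matrix of $\{\ket P^{\otimes s}\}_{P\in S_N}$ is convolution on $S_N$ by $(\mathrm{fix}(\sigma)/N)^s$, which is a multiple of the character of $(\mathbb C^N)^{\otimes s}$. So $\rho_P^{(s)}$ has eigenvalue $m_\lambda(s)/(d_\lambda N^s)$ with multiplicity $d_\lambda^2$ for every $\lambda\vdash N$ with $N-\lambda_1\le s$. (For $s=1$ this is the familiar dimension $1+(N-1)^2$ of the span of permutation matrices.) Hence, for $N\ge4t$,
\[
\operatorname{rank}\rho_P^{(2t)}\ge d_{(N-2t,2t)}^2=\bigl(\tbinom{N}{2t}-\tbinom{N}{2t-1}\bigr)^2,
\]
whereas $\norm{\rho_P^{(t)}}_\infty=\mathbb E_\sigma[(\mathrm{fix}(\sigma)/N)^t]\ge N^{-t}$ by Jensen. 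The product is therefore of order $N^{3t}/((2t)!)^2$. Your final estimate is then at least of order $|K|N^{3t}/(((2t)!)^2D^t)$, which is vacuous unless $D\gtrsim N^3/t^4$.

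No sharper $P$-side estimate rescues the first inequality either. Take the admissible equal-size construction $Q_k^P=P\circ\sigma_k$, which the lemma covers with $n/m=1$. Here $L_k$ just permutes basis states, so $\operatorname{rank}\Pi_R\ge\operatorname{rank}\rho_R^{(k)}=\operatorname{rank}\rho_P^{(2t)}$, while $\norm{\rho_I}_\infty\ge N^{-2t}$. Thus $\operatorname{rank}(\Pi_R)\norm{\rho_I}_\infty$ is of order at least $N^{2t}/((2t)!)^2\gg1$.

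The underlying reason is that $\rho_P^{(t)}$ is far from flat. Its top eigenvalue $B_t/N^t$ sits on a one-dimensional subspace, while almost all of its mass lies on eigenvalues of order $t!/N^{2t}$. Multiplying a rank by $\norm{\cdot}_\infty$ therefore overcounts by $N^{\Theta(t)}$. Your Schur--Weyl remark would have to replace this inequality by an actual estimate of how $\operatorname{supp}\rho_R$ meets the eigenspaces of $\rho_I$, and no such estimate is given.

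What is missing is a per-key test whose soundness comes from the structure in \Cref{lem:one-query-structure} rather than from dimension counting. Your fallback does not supply one, for two reasons.
\begin{enumerate}
\item Its claimed ideal acceptance $(N^{O(1)}/D)^{\Omega(t)}$ is unsupported, and it cannot come from the swap tests themselves: each swap test accepts any pair of states with probability at least $1/2$.
\item The Choi state of $Q_k^P$ is an isometric image of a copy of $\ket P$ only on its $Z_k$-part. When $d_k$ is small (e.g.\ $d_k=0$, where $P(u)$ at most selects a bijection $X_{k,u}\to Y_{k,u}$), there is nothing to swap-test against.
\end{enumerate}

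The paper instead splits on $\alpha_k=Nd_k/D$.
\begin{itemize}
\item If $\alpha_k\ge1/4$, an isometry $V_k$ decodes the $Z_k$-part of $\ket{Q_k^P}$ into $\sqrt{\alpha_k}\ket P\ket{+_{d_k}}$, which is swap-tested against the $P$ copies. The ideal per-pair acceptance is at most $15/16$ unless $Q$ agrees with $Q_k^P$ on at least $D/8$ inputs, which happens with probability at most $1/\lceil D/8\rceil!$.
\item If $\alpha_k<1/4$, it measures the $Q$ copies and checks output-set membership plus a common answer $y$ on each preimage set $X_{k,u}$. The ideal acceptance is at most $e^{-t/32}+N/(D/2)^{\lfloor t/4\rfloor}$. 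The union bound over $y\in[N]$ here is where the $n/m$ term in $t$ comes from.
\end{itemize}
The key tests are combined by thresholding $\sum_k\Pi_k$ at $1/2$, at a cost of a factor $2|K|$, which produces the $\kappa$ term.

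Two minor points. First, $L_k\ket P=\sqrt{D/N}\,\ket{Q_k^P}$, not $\sqrt{N/D}$. Second, $L_k$ is defined for arbitrary $a_k,b_k$, so contrary to your remark, your argument never uses that $Q_k^P$ is a permutation.
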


\begin{proof}
We may assume $D\ge16$ and $t\le D/16$; these conditions hold for all
sufficiently large parameters with our choice of $t$. For each candidate
key $k\in K$, we define a projector $\Pi_k$ with perfect completeness
for that key and a small acceptance probability in the ideal experiment.
Use the sets from \Cref{lem:one-query-structure}, and write
\[
    Z_k:=\bigcup_{y\in[N]}Z_{k,y},\qquad
    \alpha_k:=\frac{|Z_k|}{D}=\frac{Nd_k}{D}.
\]

\paragraph{Many outputs depending on the query answer: swap tests.}
Suppose first that $\alpha_k\ge1/4$. Enumerate
$Z_{k,y}=\{z_{k,y,1},\ldots,z_{k,y,d_k}\}$. For every $u,y,i$, there
is a unique $x_{k,u,y,i}\in X_{k,u}$ satisfying
\[
    b_k(x_{k,u,y,i},y)=z_{k,y,i}.
\]
Let $\mathcal H_k$ be the span of the corresponding basis vectors
$\ket{x_{k,u,y,i},z_{k,y,i}}$. The map
\[
    V_k:\ket{x_{k,u,y,i},z_{k,y,i}}\longmapsto\ket{u,y}\ket i
\]
is an isometry from $\mathcal H_k$ onto
$\mathbb C^N\otimes\mathbb C^N\otimes\mathbb C^{d_k}$, depending
only on the publicly known keyed functions. Let $T_k$ project onto
all basis vectors whose second register lies in $Z_k$, and let $H_k$
project onto $\mathcal H_k$. In particular, $H_k\preceq T_k$.
Changing variables in the definition of the Choi state gives
\begin{equation}
    V_kT_k\ket{Q_k^P}
    =\sqrt{\alpha_k}\ket P\ket{+_{d_k}},
    \qquad
    \ket{+_{d_k}}:=\frac1{\sqrt{d_k}}\sum_{i=1}^{d_k}\ket i.
    \label{eq:one-query-partial-choi}
\end{equation}
Here $T_k\ket{Q_k^P}$ belongs to $\mathcal H_k$.

For each pair consisting of one $P$ copy and one $Q$ copy, accept the
component in $I\otimes(I-T_k)$, reject the component in
$I\otimes(T_k-H_k)$, and on $I\otimes H_k$ apply $V_k$ and swap-test
the decoded pair of registers against the $P$ copy, leaving the $i$
register untouched. The accepting operator is a projector: on the
last subspace it is the conjugate of the symmetric-subspace projector,
and the three subspaces are orthogonal. Let $\Pi_k$ be the tensor
product of these accepting projectors on all $t$ pairs, expressed on
the original registers. Equation~\eqref{eq:one-query-partial-choi}
shows that in the real experiment,
\[
    \Pi_k\ket{\psi_{P,Q_k^P}}=\ket{\psi_{P,Q_k^P}}.
\]

In the ideal experiment, fix $P$ and write
\[
    \ket\phi:=(V_kH_k)\ket Q,\qquad
    \beta:=\norm{\ket\phi}^2,\qquad
    \gamma:=\norm{(\bra P\otimes I)\ket\phi}^2.
\]
Because $Q$ is a permutation, its Choi state has weight $\alpha_k$
on outputs in $Z_k$, so $\beta\le\alpha_k$. The one-pair acceptance
probability is
\begin{equation}
    p_k(P,Q)=1-\alpha_k+\frac{\beta+\gamma}{2}
    \le1-\frac{\alpha_k}{2}+\frac\gamma2.
    \label{eq:repeated-swap-test}
\end{equation}
For $i\in[d_k]$, let $J_i$ count the $u\in[N]$ such that
\[
    Q(x_{k,u,P(u),i})=z_{k,P(u),i}.
\]
The inputs in these conditions are distinct over all pairs $(u,i)$.
Thus, writing $J:=|\{x\in[D]:Q(x)=Q_k^P(x)\}|$, we have
\[
    \gamma=\frac1{ND}\sum_iJ_i^2
    \le\frac1D\sum_iJ_i\le\frac JD.
\]
For independent uniform $Q\sim S_D$, $J$ is distributed as the number
of fixed points of a uniform permutation. For $1\le r\le D$,
\[
    \Pr[J\ge r]\le\mathbb E\binom Jr=\frac1{r!}.
\]
Taking $r=\lceil D/8\rceil$, equation~\eqref{eq:repeated-swap-test}
gives $p_k(P,Q)\le15/16$ whenever $J<r$. Consequently,
\begin{equation}
    \operatorname{Tr}(\Pi_k\rho_I)
    \le\left(\frac{15}{16}\right)^t
       +\frac1{\lceil D/8\rceil!}.
    \label{eq:ideal-swap-bound}
\end{equation}

\paragraph{Few outputs depending on the query answer: consistency tests.}
Suppose now that $\alpha_k<1/4$. Measure the $t$ copies of $\ket Q$
in the computational basis, obtaining $(x_j,z_j)$ for $j\in[t]$.
Accept if
\[
    z_j\in Y_{k,a_k(x_j)}\cup Z_k\qquad\text{for every }j,
\]
and, for every $u\in[N]$, there exists a common $y\in[N]$ such that
\[
    b_k(x_j,y)=z_j\qquad\text{whenever }a_k(x_j)=u.
\]
Let $\Pi_k$ be the diagonal projector for the event that this test
passes, acting as the identity on the $P$ registers. In the real
experiment, all conditions hold for the true key: the common answer
for preimage set $X_{k,u}$ is $y=P(u)$. Thus
\[
    \Pi_k\ket{\psi_{P,Q_k^P}}=\ket{\psi_{P,Q_k^P}}.
\]

We claim that in the ideal experiment,
\begin{equation}
    \operatorname{Tr}(\Pi_k\rho_I)
    \le e^{-t/32}+\frac{N}{(D/2)^{\lfloor t/4\rfloor}}.
    \label{eq:ideal-consistency-bound}
\end{equation}
Write $L:=\max_u|X_{k,u}|$. If $L\le D/2$, expose the measured
pairs sequentially. Conditional on any full previous transcript, the
unexposed part of $Q$ remains a uniform bijection. Let $h\le t\le D/16$
be the number of distinct inputs already exposed. Each prescribed
output set has size at most
\[
    |Y_{k,a_k(x)}|+|Z_k|\le L+\alpha_kD\le3D/4.
\]
The probability that the next pair satisfies its output-set condition
is therefore at most
\[
    \frac1{16}+\frac{3D/4}{15D/16}
    =\frac{69}{80}<\frac78.
\]
Multiplying these conditional bounds shows that the whole test passes
with probability at most $(7/8)^t\le e^{-t/32}$.

If $L>D/2$, fix a largest preimage set $X_{k,u}$. The measured inputs
are independent uniform elements of $[D]$, independent of $Q$. Let
$G$ be the number of distinct measured inputs in $X_{k,u}$. At each
step, conditional on all preceding inputs, the probability of adding
a new such input is at least
\[
    \frac{L-t}{D}\ge\frac7{16}.
\]
Thus $G$ stochastically dominates a binomial random variable with
parameters $t$ and $7/16$. The multiplicative Chernoff bound gives
\[
    \Pr[G<t/4]\le e^{-9t/224}\le e^{-t/32}.
\]
Conditional on any input sequence with $G\ge r:=\lfloor t/4\rfloor$,
choose its first $r$ distinct inputs in $X_{k,u}$. For any fixed $y$,
the common-answer condition prescribes distinct outputs at those
inputs, since $b_k(\,\cdot\,,y)$ is injective on $X_{k,u}$. A uniform
$Q\sim S_D$ satisfies these conditions with probability $1/(D)_r$,
where $(D)_r=D(D-1)\cdots(D-r+1)$. Taking a union bound over $y\in[N]$
and using $t\le D/16$ gives
\[
    \operatorname{Tr}(\Pi_k\rho_I)
    \le e^{-t/32}+\frac{N}{(D)_r}
    \le e^{-t/32}+\frac{N}{(D/2)^r}.
\]
This proves \eqref{eq:ideal-consistency-bound} in either case.

\paragraph{Combining the key tests.}
For every candidate key, equations~\eqref{eq:ideal-swap-bound}
and~\eqref{eq:ideal-consistency-bound} give
\begin{equation}
    \operatorname{Tr}(\Pi_k\rho_I)
    \le e^{-t/32}
       +\frac{N}{(D/2)^{\lfloor t/4\rfloor}}
       +\frac1{\lceil D/8\rceil!}
    =:\varepsilon_t.
    \label{eq:ideal-key-test-bound}
\end{equation}
We now combine the key tests. Rather than measuring them
sequentially, define
\[
    A:=\sum_{k\in K}\Pi_k,
\]
and let $B$ project onto the eigenvectors of $A$ with eigenvalues
strictly below $1/2$. The distinguisher measures $\{B,I-B\}$ and
outputs ``real'' on outcome $I-B$.

For a real state with true key $k$, we have $A\succeq\Pi_k$, so
\[
    B\Pi_k B
    \preceq BAB
    \preceq\frac12 B.
\]
Since $\Pi_k\ket{\psi_{P,Q}}=\ket{\psi_{P,Q}}$, it follows that
\[
    \norm{B\ket{\psi_{P,Q}}}^2
    =\norm{B\Pi_k\ket{\psi_{P,Q}}}^2
    \le\norm{B\Pi_k B}_\infty
    \le\frac12.
\]
Thus the real experiment is accepted with probability at least $1/2$.
For the ideal experiment, the definition of $B$ gives $I-B\preceq2A$.
Hence
\[
    \operatorname{Tr}((I-B)\rho_I)
    \le2\operatorname{Tr}(A\rho_I)
    \le2|K|\varepsilon_t.
\]
Choosing
\[
    t=\max\left\{
        \left\lceil32\ln(32|K|)\right\rceil,
        \;4\left\lceil\frac{\ln(32|K|N)}{\ln(D/2)}\right\rceil
    \right\}
\]
makes this at most
\[
    \frac18+\frac{2|K|}{\lceil D/8\rceil!}
    =\frac18+o(1).
\]
Indeed, $t=O(\kappa+1+n/m)$, so \eqref{eq:one-query-size-regime}
ensures $t\le D/16$ for all sufficiently large parameters. It also
implies $\kappa=o(D)$ and $D\to\infty$, which give
$|K|/\lceil D/8\rceil!=o(1)$. The acceptance probabilities therefore
differ by at least $3/8-o(1)$, proving the lemma.

All operations after preparing the Choi states depend only on the
publicly known keyed functions. They require no further oracle queries,
although they need not be computationally efficient. Inverse queries
are unnecessary.
\end{proof}

\section{Projector Violation Bound Deferred Proofs}\label{app:proj-bounds}
We collect the proofs of the single-query violation bounds for the database projectors.

\subsection{Proof of \Cref{lem:pi-t-violation}}
\begin{proof}[Proof of \Cref{lem:pi-t-violation}]
It suffices to show that $\fc$ maps $\Pi^t\mathcal{H}$ into $\Pi^{t+1}\mathcal{H}$, i.e., that $(\Pi^{t+1})^\perp\fc\Pi^t = 0$. Any basis state in $\Pi^t\mathcal{H}$ is of the form
\begin{align}
    \ket{\psi} = \ket{y}_{D_x}\bigotimes_{i\neq x}\ket{z_i}_{D_i},
\end{align}
where $|\{i: z_i\neq\bot\}|\leq t$. We consider two cases.\begin{enumerate}[leftmargin=0pt, labelindent=0pt, itemindent=*, label=(\arabic*)]

\item Let $y\neq\bot$, then $x\in\dom{D}$ already, and $\fc$ acts as the identity on $D_x$, so $|\dom{\fc(D)}|=|\dom{D}|\leq t\leq t+1$. Hence $\fc\ket{\phi}\in\Pi^{t+1}\mathcal{H}$.

\item Let $y=\bot$, then $x\notin\dom{D}$, and $\fc$ replaces $\ket{\bot}_{D_x}$ with $\ket{+_N}_{D_x} = \frac{1}{\sqrt{N}}\sum_{y'\in[N]}\ket{y'}_{D_x}$, each term of which has $y'\neq\bot$, so $|\dom{\fc(D)}|=|\dom{D}|+1\leq t+1$. Hence $\fc\ket{\phi}\in\Pi^{t+1}\mathcal{H}$.

In both cases $\fc\ket{\phi}\in\Pi^{t+1}\mathcal{H}$, so $(\Pi^{t+1})^\perp\fc\Pi^t = 0$, and therefore
\begin{align}
    \norm{(\Pi^{t+1})^\perp\fc\Pi^t} = 0. 
\end{align}
\end{enumerate}
\end{proof}

\subsection{Proof of \Cref{lem:pi-star-violation}}
\begin{proof}
	Write $C=\fc_{XD^i}$ and $P=\pu_{XYD^i}$ when the oracle index
	$i$ is fixed. The evaluation operator $P$ preserves the database in
	the computational basis and does not act on the key registers.
	Consequently it commutes with each of
	$\Pi^{inj},\Pi^g,\Pi^\star$.
	We first bound the commutators of these projectors with $C$.
	
	The operators under consideration are block diagonal in the query
	input $x$ and in all database entries outside the queried register
	$R=D^i_x$. We may therefore fix these registers in the computational
	basis, while leaving $R$, the key registers, the response register,
	and any auxiliary registers arbitrary. Let $E$ denote the database
	$D^i[x\mapsto\bot]$. It suffices to consider blocks
	with $|E|\leq t$ and each other database of size at most $t$.
	Within each such block we allow all values of $R$, even when this
	makes $|D^i|=t+1$. These enlarged blocks are invariant under $C$,
	$P$, and all three projectors, and contain the support of $\Pi^t$.
	In particular, we will not need to increase $t$ between the two
	compression operators of a query.
	
	\medskip
	\noindent\emph{Injectivity.}
	If one of the fixed databases, including $E$, is noninjective,
	then $\Pi^{inj}$ is zero throughout this block and its commutator
	with $C$ vanishes. Otherwise put $S=\im{E}$ and
	\[
	B_S=\sum_{y\in S}\proj{y}_R,
	\qquad J=\Id_R-B_S.
	\]
	On this block $\Pi^{inj}=J$, with identities on the remaining
	registers. Since
	\[
	C=\Id_R-\ket{v}\bra{v},
	\qquad \ket{v}=\ket{\bot}-\ket{+^n},
	\]
	we obtain
	\[
	\norm{B_S C J}
	=\norm{B_S\ket{v}}\,\norm{J\ket{v}}
	=\sqrt{\frac{|S|}{N}\left(2-\frac{|S|}{N}\right)}
	\leq\sqrt{\frac{2t}{N}}.
	\]
	For a self-adjoint operator $C$ and an orthogonal projector $J$,
	$\norm{[C,J]}=\norm{J^\perp C J}$: relative to $J\oplus J^\perp$,
	the two off-diagonal blocks of the commutator are adjoints up to
	sign. Thus, on every block under consideration,
	\begin{equation}\label{eq:pi-star-proof-inj-comm}
		\norm{[C,\Pi^{inj}]}\leq\sqrt{2t/N}.
	\end{equation}
	
	\medskip
	\noindent\emph{The good-key projector.}
	There is nothing to prove for $i=3$, since $\Pi^g$ does not act on
	$D^3$. For $i\in\{1,2\}$, let $\kappa_z$ be the admissible-key set
	when $R$ holds $z\in[N]\cup\{\bot\}$, with all the fixed database
	entries as above. Set
	\[
	G_z=\proj{g_z},\qquad
	\ket{g_z}=\frac{1}{\sqrt{|\kappa_z|}}
	\sum_{k\in\kappa_z}\ket{k}_{K_2}
	\]
	when $\kappa_z\neq\emptyset$, and set $G_z=0$ otherwise.
	In this block,
	\[
	\Pi^g=\sum_{z\in[N]\cup\{\bot\}}\proj{z}_R\otimes G_z.
	\]
	Adding one entry to $D^1$ adds at most one value to its image,
	whereas adding one entry to $D^2$ adds at most one point to its
	domain. In either case this excludes at most $t$ additional keys.
	Consequently,
	\[
	\kappa_z\subseteq\kappa_\bot,
	\qquad |\kappa_\bot\setminus\kappa_z|\leq t,
	\qquad |\kappa_\bot|\geq N-t^2>0.
	\]
	The overlap of the normalized uniform superpositions on two
	nested sets gives
	\[
	\norm{G_z-G_\bot}
	=\sqrt{1-\frac{|\kappa_z|}{|\kappa_\bot|}}
	\leq\delta_t,
	\qquad
	\delta_t\coloneqq\sqrt{\frac{t}{N-t^2}}.
	\]
	The same identity holds when $\kappa_z$ is empty, under the zero
	projector convention. The reference projector
	$G_0=\Id_R\otimes G_\bot$ commutes with $C$, and hence
	\begin{equation}\label{eq:pi-star-proof-good-comm}
		\norm{[C,\Pi^g]}
		=\norm{[C,\Pi^g-G_0]}
		\leq 2\norm{\Pi^g-G_0}
		\leq 2\delta_t.
	\end{equation}
	Notice that $K_2$ was not fixed: the estimate accounts for the
	coherent dependence of its state on the database.
	
	Using $\Pi^\star=\Pi^g\Pi^{inj}$ and the product rule for
	commutators now yields
	\[
	\norm{[C,\Pi^\star]}
	\leq 2\delta_t+\sqrt{2t/N}.
	\]
	Since $\delta_t\leq\sqrt{2t/N}$, all three commutator bounds are
	$O(\sqrt{t/N})$. Moreover,
	\[
	Q^\perp C Q=Q^\perp[C,Q]Q,
	\qquad
	Q^\perp P C Q=P Q^\perp C Q.
	\]
	Taking the direct sum of the block estimates proves the first
	two bounds of the lemma, including arbitrary auxiliary systems.
	
	\medskip
	\noindent\emph{Forward queries.}
	Let $U_i=CPC$ be the forward ideal-world query to $D^i$.
	Within the same invariant blocks, $[P,Q]=0$ implies
	\[
	[U_i,Q]=CP[C,Q]+[C,Q]PC,
	\qquad
	\norm{[U_i,Q]}\leq 2\norm{[C,Q]}.
	\]
	It follows that
	\begin{equation}\label{eq:pi-star-proof-forward}
		\norm{Q^\perp U_i Q\Pi^t}=O(\sqrt{t/N})
		\qquad
		(Q\in\{\Pi^{inj},\Pi^g,\Pi^\star\}).
	\end{equation}
	
	\medskip
	\noindent\emph{Inverse queries: injectivity and the joint projector.}
	The operator $\dm$ preserves database sizes and injectivity, so
	it commutes with $\Pi^t$ and $\Pi^{inj}$. On injective databases,
	\[
	\kappa_{D^1,D^2}
	=\kappa_{(D^2)^{-1},(D^1)^{-1}},
	\]
	because both complements consist of the keys
	$a\oplus b$ with $a\in\im{D^1}$ and $b\in\dom{D^2}$.
	Thus $\dm$ maps the normalized good-key superposition to the
	corresponding normalized good-key superposition after inversion
	and swapping. Together with preservation of the injective
	subspace, this proves $[\dm,\Pi^\star]=0$ on the whole space.
	This argument does not require the generally false unrestricted
	identity $[\dm,\Pi^g]=0$.
	
	An inverse query is $\dm U_j\dm$ for the appropriate rerouted
	index $j$. For $Q\in\{\Pi^{inj},\Pi^\star\}$, unitary invariance
	therefore gives
	\[
	\norm{Q^\perp\dm U_j\dm Q\Pi^t}
	=\norm{Q^\perp U_j Q\Pi^t}
	=O(\sqrt{t/N}).
	\]
	
	\medskip
	\noindent\emph{Inverse queries: goodness without an injectivity assumption.}
	We give a separate argument for $Q=\Pi^g$.
	The inverse ideal-world query to $D^i$ can equivalently be written
	$F_i U_i F_i$, where $F_i=\fp_{D^i}$: all the other flips and the
	swaps in the definition using $\dm$ cancel. The case $i=3$
	commutes with $\Pi^g$ exactly, so assume $i\in\{1,2\}$.
	Put
	\[
	Q_i'=F_i\Pi^g F_i.
	\]
	This projector is diagonal in the database basis, with a key
	projector in each database block. In particular, it commutes
	with $P$ and preserves the blocks obtained by fixing $x$, $E$,
	and the other databases as before.
	
	If $E$ is noninjective, every completion of $E$ is noninjective,
	so $F_i$ acts as the identity throughout the block.
	Thus $Q_i'=\Pi^g$ there, and
	\eqref{eq:pi-star-proof-good-comm} applies.
	
	Suppose instead that $E$ is injective. Set $S=\im{E}$,
	$B_S=\sum_{z\in S}\proj{z}_R$, and $J=\Id_R-B_S$.
	Replace $C$ within this block by
	\[
	\widehat C
	=\operatorname{Exc}\left(
	\ket{\bot},
	\frac{1}{\sqrt{N-|S|}}\sum_{z\notin S}\ket{z}
	\right).
	\]
	By \Cref{lem:diff-full-and-partial-decomp},
	\begin{equation}\label{eq:pi-star-proof-inverse-partial}
		\norm{C-\widehat C}=O(\sqrt{t/N}).
	\end{equation}
	The operator $\widehat C$ preserves $J$ and acts as the identity
	on $B_S$. Since $Q_i'$ also preserves these two subspaces, its
	commutator with $\widehat C$ is zero on $B_S$.
	
	On $J$, every database completion is injective. Write $E_z$ for
	the completion with value $z$ at $x$, where $E_\bot=E$.
	The admissible-key set defining the key projector of $Q_i'$ is
	\[
	\lambda_z=
	\begin{cases}
		[N]\setminus\{a\oplus b:
		a\in\dom{E_z},\ b\in\dom{D^2}\}, & i=1,\\[2pt]
		[N]\setminus\{a\oplus b:
		a\in\im{D^1},\ b\in\im{E_z}\}, & i=2.
	\end{cases}
	\]
	Indeed, inversion exchanges the domain and image of the
	injective queried database. For each $z$ allowed by $J$, we have
	\[
	\lambda_z\subseteq\lambda_\bot,
	\qquad |\lambda_\bot\setminus\lambda_z|\leq t,
	\qquad |\lambda_\bot|\geq N-t^2.
	\]
	The nested-set projector estimate used above therefore shows
	that, on $J$, $Q_i'$ differs by at most $\delta_t$ from the
	projector whose key component is the uniform state on
	$\lambda_\bot$, independently of $R$. This reference projector
	commutes with $\widehat C$. Hence
	\[
	\norm{[\widehat C,Q_i']}\leq 2\delta_t
	\]
	on the whole block, since the commutator vanishes on $B_S$.
	Combining this with \eqref{eq:pi-star-proof-inverse-partial} gives
	\[
	\norm{[C,Q_i']}
	\leq 2\norm{C-\widehat C}+2\delta_t
	=O(\sqrt{t/N}).
	\]
	The same estimate holds in the blocks with noninjective $E$.
	As $[P,Q_i']=0$, it follows on all these invariant blocks that
	\[
	\norm{[U_i,Q_i']}\leq 2\norm{[C,Q_i']}
	=O(\sqrt{t/N}).
	\]
	Finally, $F_i$ is a unitary involution and preserves $\Pi^t$, so
	\[
	\norm{(\Pi^g)^\perp F_i U_i F_i\Pi^g\Pi^t}
	=\norm{(Q_i')^\perp U_i Q_i'\Pi^t}
	=O(\sqrt{t/N}).
	\]
	This proves the remaining inverse-query bound. The oracle index
	and direction are preserved control registers, so taking their
	direct sum also proves the bound for coherent superpositions
	of query types.
\end{proof}

\subsection{Proof of \Cref{lem:gamma1-perp-bound}}
\begin{proof}
Observe that for each $P_i$-query, $\Gamma_1^\perp$ selects, for each $x$ and $D^i$, at most $t$ of the (almost) $N$ values of the relevant key; considering that each key brings $1/\sqrt{N}$ coefficient and the normalization, this gives $\sqrt{t/N}$. We compute explicitly in the rest. 
\begin{enumerate}[leftmargin=0pt, labelindent=0pt, itemindent=*, label=(\arabic*)]
\item For $P_1$-query, we take a general state in the range $\Pi^{k_1}\Pi^t$ with subnormalized $\ket{\psi_{x,D^3}}$'s \begin{align*}
    \ket{\psi_1}\coloneqq \sum_{x\in[N],D^3\in\mathbf{D}_t} \gamma_{x,D^3}\ket{x}_X\ket{D^3}_{D^3}\ket{+^n}_{K_1}\ket{\psi_{x,D^3}}.
\end{align*} Then, we write \begin{align}
  \lVert  \Gamma_1^\perp \ket{\psi_1} \rVert&= \norm{\sum_{x\in[N],D^3\in\mathbf{D}_t} \gamma_{x,D^3}\ket{x}_X\ket{D^3}_{D^3}\frac{1}{\sqrt{N}}\sum_{k_1:x\oplus k_1\in\im{D^3}}\ket{k_1}_{K_1}\ket{\psi_{x,D^3}}} \\
  &= \sqrt{\sum_{x\in[N],D^3\in\mathbf{D}_t} \abs{\gamma_{x,D^3}}^2\frac{1}{N}\sum_{k_1:x\oplus k_1\in\im{D^3}}\lVert\ket{\psi_{x,D^3}}\rVert^2}\\
  &\leq O(\sqrt{t/N}).
\end{align}
\item For a $P_2$-query, we treat $\Gamma_1$ as a product of the two projectors such that $\Gamma_1 = A B$ where
\begin{align}
   A &\coloneqq \sum_{x\in[N]}\proj{x}_X \otimes \Pmiss{x}{K_2}{D^1},\\
  \qquad
  B &\coloneqq \sum_{x\in[N]}\proj{x}_X \otimes \Bigl(
    \proj{\bot}_{D^2_x} \otimes \Id_{K_3 D^3}
    + \sum_{y\in[N]} \proj{y}_{D^2_x} \otimes \Pmiss{y}{K_3}{D^3}
  \Bigr) 
\end{align} and use $(AB)^\perp=A^\perp + B^\perp$
 For $A$, we take a general state in the range $\Pi^\star\Pi^t$ with subnormalized $\ket{\psi_{x,D^1, D^2}}$'s \begin{align*}
    \ket{\psi_2}\coloneqq \sum_{x\in[N],D^1,D^2\in\mathbf{D}_t, k_2\in \kappa_{D^1,D^2}} \frac{\gamma_{x,D^1,D^2}}{\sqrt{S_{D_1,D_2}}}\ket{x}_X\ket{D^1}_{D^1}\ket{D^2}_{D^2}\ket{k_2}_{K_2}\ket{\psi_{x,D^1,D^2}}.
\end{align*} we write \begin{align}
  \lVert  A^\perp \ket{\psi_2} \rVert&= \norm{\sum_{D^1,D^2\in\mathbf{D}_t, k_2\in \kappa_{D^1,D^2}} \frac{1}{\sqrt{S_{D_1,D_2}}}\ket{D^1}_{D^1}\ket{D^2}_{D^2}\ket{k_2}_{K_2}\sum_{x\in[N]:x\in\im{D^1}\oplus k_2}\gamma_{x,D^1,D^2}\ket{x}_X\ket{\psi_{x,D^1,D^2}}} \\
  &= \sqrt{\sum_{D^1,D^2\in\mathbf{D}_t, k_2\in \kappa_{D^1,D^2}} \frac{1}{\sqrt{S_{D_1,D_2}}}\sum_{x\in[N]:x\in\im{D^1}\oplus k_2}\abs{\gamma_{x,D^1,D^2}}^2\lVert\ket{\psi_{x,D^1,D^2}}\rVert^2}\\
  &\leq O(\sqrt{t/N-t^2})
\end{align} using the fact that $\abs{ \kappa_{D^1,D^2}}\leq N-t^2$ for $D^1,D^2\mathbf{D}_t$.

For $B$, we take a general state in the range $\Pi^{k_3}\Pi^t$ with subnormalized $\ket{\psi_{x,y, D^3}}$'s  \begin{align*}
    \ket{\psi_3}\coloneqq \sum_{x,y\in[N],D^3\in\mathbf{D}_t} \gamma_{x,y,D^3}\ket{x}_X\ket{y}_{D^2_x}\ket{D^3}_{D^3}\ket{+^n}_{K_3}\ket{\psi_{x,y,D^3}}.
\end{align*} Then, we write \begin{align}
  \lVert  B^\perp \ket{\psi_3} \rVert&= \norm{\sum_{x,y\in[N],D^3\in\mathbf{D}_t} \gamma_{x,y,D^3}\ket{x}_X\ket{y}_{D^2_x}\ket{D^3}_{D^3}\frac{1}{\sqrt{N}}\sum_{k_3:x\oplus k_3\in\im{D^3}}\ket{k_3}_{K_3}\ket{\psi_{x,y,D^3}}} \\
  &= \sqrt{\sum_{x,y\in[N],D^3\in\mathbf{D}_t} \abs{\gamma_{x,y,D^3}}^2\frac{1}{N}\sum_{k_3:x\oplus k_3\in\im{D^3}}\lVert\ket{\psi_{x,y,D^3}}\rVert^2}\\
  &\leq O(\sqrt{t/N}).
\end{align}

\item For $P_3$-query, it is exactly the same as $P_1$ except that $\dom{D^3}$ is replaced by $\dom{D^1}$.
  \end{enumerate} 
\end{proof}
\subsection{Proof of \Cref{lem:gamma2-perp-bound}}
\begin{proof}
 We take a general state in the range $\Pi^{k_3}\Pi^t$ with subnormalized $\ket{\psi_{x,y, D^2}}$'s \begin{align}
    \ket{\psi}\coloneqq \sum_{x\in[N],y\in[N]\cup\set{\bot}, D^2\in\mathbf{D}_t}\gamma_{x,y,D^2} \ket{x}_X\ket{y}_{D^3_x}\ket{D^2}_{D^2}\ket{+^n}_{K_3}\ket{\psi_{x,y,D^2}}.
\end{align} Then, noting that the $y=\bot$ component
satisfies $\Gamma_2$, we write\begin{align}
     \norm{\Gamma_2^\perp\ket{\psi}}&=\norm{ \Gamma_2^\perp\sum_{x\in[N],y\in[N]\cup\set{\bot}, D^2\in\mathbf{D}_t}\gamma_{x,y,D^2} \ket{x}_X\ket{D^2}_{D^2}\ket{+^n}_{K_3}\ket{\psi_{x,y,D^2}}}\\
    & =\norm{ \sum_{x,y\in[N], D^2\in\mathbf{D}_t}\gamma_{x,y,D^2} \ket{x}_X\ket{y}_{D^3_x}\ket{D^2}_{D^2}\ket{\psi_{x,y,D^2}}\frac{1}{\sqrt{N}}\sum_{k_3\in[N]:y\oplus k_3\in\im{D^2}}\ket{k_3}_{K_3} }\\
    &=\sqrt{ \sum_{x,y\in[N], D^2\in\mathbf{D}_t}\abs{\gamma_{x,y,D^2}}^2\lVert \ket{\psi_{x,y,D^2}}\rVert^2\frac{1}{N}\sum_{k_3\in[N]:y\oplus k_3\in\im{D^2}}1 }\\
    &\leq O(\sqrt{t/N}).
\end{align}

\end{proof}
\section{Hybrid Differences Deferred Proofs}\label{app:hybdrids-diff}
 We collect the proof used in reducing the distinguishing advantage to the sum of hybrid differences.
 \subsection{Proof of \Cref{lem:q-star-interleaved-vs-not}}
 \begin{proof}
     \begin{align}
    \norm{(I^{\star }_{q:1}- \Pi^\star I_{q:1})\ket{\psi^0}}&\leq \sum_{t=1}^{q-2} \norm{(\Pi^\star)^\perp\cco^I\Pi^\star\Pi^t} \\
    &\leq \sum_{t=1}^{q-2}O(\sqrt{t/N})\\
    &\leq O\Big(\frac{q\sqrt{q}}{\sqrt{N}}\Big)
\end{align} by \Cref{lem:pi-t-violation}, \ref{lem:pi-star-violation} and submultiplicativity of the norm.

Similarly, \begin{align}
    \norm{(\Pi^\star)^\perp I_{q:1}\ket{\psi^0} }\leq \sum_{t=1}^{q-1} \norm{(\Pi^\star)^\perp\cco^I\Pi^\star\Pi^t} \leq O\Big(\frac{q\sqrt{q}}{\sqrt{N}}\Big).
\end{align}
 \end{proof}

 \subsection{Proof of \Cref{thm:hybrid-diff}}

 \begin{proof}
    \begin{align}
    &\abs{\Pr[\algo A^{P_1, P_2, P_3}()=1]-\Pr[\algo A^{P_1, P_2, \pkac}()=1] } \\
   &\leq \frac{1}{2}\norm{ \Tr_{\dbs \keys}[\proj{\psi^q_R}] - \Tr_{\dbs \keys}[\proj{\psi^q_I}] }_1 \\
   &\leq\norm{ \ket{\psi^q_R } - V \ket{\psi^q_I} } \quad \text{(By \Cref{lem:approx-uhlman})}\\
   &= \Bigl\lVert \Bigl( R_{q:1}V  
   - V I_{q:1} \Bigr) \ket{\psi^0} \Bigr\rVert\\
   &\leq \Bigl\lVert \Bigl(R_{q:1}V 
   - R_{q:1}V \Pi^\star  \Bigr) \ket{\psi^0} \Bigr\rVert  + \Bigl\lVert \Bigl(R_{q:1}V \Pi^\star  - V \Pi^\star  I_{q:1}\Bigr) \ket{\psi^0} \Bigr\rVert + \Bigl\lVert \Bigl(V \Pi^\star  I_{q:1}  - V  I_{q:1}\Bigr) \ket{\psi^0} \Bigr\rVert\\
   &= \norm{\Bigl(R_{q:1}V \Pi^\star - V \Pi^\star  I_{q:1}\Bigr) \ket{\psi^0}} +\norm{ V(\Pi^\star - \Id)I_{q:1} \ket{\psi^0}} \text{ (Since $\Pi^\star \ket{\psi^0}=\ket{\psi^0}$)}\\ 
   &=\norm{\Bigl(R_{q:1}V \Pi^\star - V \Pi^\star  I_{q:1}\Bigr) \ket{\psi^0}} +\norm{ (\Pi^\star )^\perp I_{q:1} \ket{\psi^0} }\\
   &\leq \norm{\Bigl(R_{q:1}V \Pi^\star - V \Pi^\star  I_{q:1}\Bigr) \ket{\psi^0}} + O(\sqrt{q^3/N}) \quad \text{(By \Cref{lem:q-star-interleaved-vs-not})}
   \label{eq:term-moving-projectors} 
    \end{align} where in the one before the last line we used unitarity of norm to remove $V$ and the identity $\Pi^\perp = \Id- \Pi$ to replace projectors.
    
    We bound the first term in \eqref{eq:term-moving-projectors} by inserting \begin{align}
        0= \pm V I^{\star }_{q:1} \sum_{t=1,\dots, q-1} \pm R_{q:(t+1)} V I^{\star }_{t:1} ,\label{eq:zero-term}
    \end{align} while noting that $U^t$ commutes with $V,\Pi^{g},  \Pi^{inj} $ for $t\in \set{0,\dots, q}$, i.e.,
    \begin{align}
        &\norm{0+\Bigl(R_{q:1}V \Pi^\star - V \Pi^\star  I_{q:1}\Bigr) \ket{\psi^0}}\\
        &= \norm{ R_{q:1}V\Pi^\star  - R_{q:2} V I^{\star }_{1:1} + \sum_{t=2}^{q-1} (R_{q:t}V\Pi^\star  - R_{q:(t+1)} V I^{\star }_{t:1} )} + \norm{(V I^{\star }_{q:1}- V \Pi^\star I_{q:1})\ket{\psi^0}} \\
        &= \Bigl\lVert \Bigl( \sum_{t=1}^q U^q \cco^{R} U^{q-1} \cdots \cco^{R} U^t \Bigl( \cco^{R}V- V \Pi^{g}  \Pi^{inj}  \cco^{I}  \Bigr)\Pi^\star    U^{t-1} \cco^{I} \Pi^\star   \cdots U^0  \Pi^\star    \ket{\psi^0} \Bigr\rVert \nonumber\\
        &\quad + \norm{(V I^{\star }_{q:1}- V \Pi^\star I_{q:1})\ket{\psi^0}} \\
        & \leq \sum_{t=1}^q \Bigl\lVert U^q \cco^{R} U^{q-1} \cdots \cco^{R} U^t \Bigl( \cco^{R}V- V \Pi^\star   \cco^{I}  \Bigr)\Pi^\star    U^{t-1} \cco^{I} \Pi^\star   \cdots U^0  \Pi^\star   \ket{\psi^0} \rVert \nonumber\\
        &\quad + \norm{(V I^{\star }_{q:1}- V \Pi^\star I_{q:1})\ket{\psi^0}} \text{ (By triangle inequality)}\\
        & = \sum_{t=1}^q \Bigl\lVert  \Bigl( \cco^{R}V- V \Pi^\star   \cco^{I} \Bigr)  \Pi^\star  \Pi^\star   U^{t-1} \cco^{I} \Pi^\star   \cdots U^0  \Pi^\star    \ket{\psi^0} \Bigr\rVert+ \norm{(I^{\star }_{q:1}- \Pi^\star I_{q:1})\ket{\psi^0}}\\
        &\qquad\text{(By unitarity of the norm)} \nonumber\\
        & \leq \sum_{t=1}^q \Bigl\lVert  \Bigl( \cco^{R}V- V \Pi^\star   \cco^{I} \Bigr) \Pi^\star  \Pi^{t-1}\Pi^{k_1,k_3} \Bigr\rVert+ \norm{(I^{\star }_{q:1}- \Pi^\star I_{q:1})\ket{\psi^0}}\\
        & \leq \sum_{t=1}^q \Bigl\lVert  \Bigl( \cco^{R}V- V \Pi^\star   \cco^{I} \Bigr) \Pi^\star  \Pi^{t-1}\Pi^{k_1,k_3} \Bigr\rVert+ O(\sqrt{q^3/N})\label{eq:hybrids-with-keys} \quad \text{(By \Cref{lem:q-star-interleaved-vs-not})}
    \end{align} where in the one before the last line we used that the state
$\Pi^\star U^{t-1}\cco^{I}\Pi^\star\cdots U^{0}\Pi^\star\ket{\psi^0}$ lies in
the range of $\Pi^\star\Pi^{t-1}\Pi^{k_1,k_3}$: by
\Cref{lem:pi-t-violation} the first $t-1$ queries increase the size of each
database by at most $t-1$, and neither the ideal-world queries nor the
projectors act on the registers $K_1$ and $K_3$, so these retain their
uniform superpositions.
 \end{proof} 
\section{Reducing Inverse Queries to Forward Ones Deferred Proofs}\label{app:reduce-inv-to-fwd}

We collect the proofs for reducing inverse queries to forward ones, in particular comutation relations of $\dm$ with database projectors and the isometry $V$.

\subsection{Proof of \Cref{lem:comm-dm-and-proj}}
\begin{proof}
Recall the action of $\dm$ (omitting the registers it leaves unchanged except for $K_2$): \begin{gather}
    \ket{\psi}\coloneqq \frac{1}{\sqrt{\kappa_{D^1,D^2}}}\sum_{k_2\in \kappa_{D^1,D^2}}\ket{D^1}_{D^1}\ket{D^2}_{D^2}\ket{D^3}_{D^3} \ket{k_1}_{K_1} \ket{k_2}_{K_2} \ket{k_3}_{K_3},\\
    \dm\ket{\psi} =\frac{1}{\sqrt{\kappa_{D^1,D^2}}}\sum_{k_2\in \kappa_{D^1,D^2}}\ket{(D^2)^{-1}}_{D^1}\ket{(D^1)^{-1}}_{D^2}\ket{(D^3)^{-1}}_{D^3} \ket{k_3}_{K_1} \ket{k_2}_{K_2} \ket{k_1}_{K_3}.
\end{gather}
 The operator $\dm$ does not change the size of any database or introduces a collision by swapping and flipping, and only permutes the computational basis states spanning the subspaces $\Pi^{t}$ and $\Pi^{inj}$, hence $[\dm,\Pi^{t}]=[\dm, \Pi^{inj}]=0$. 
Now observe that considering only injective databases, otherwise if only one of $D^1$ or $D^2$ flips, then the admissible key set changes:  \begin{align}
  \ket{\psi}\in \Pi^g &\iff \im{D^1}\cap \im{D^2}\oplus k_2=\emptyset \text{ for all } k_2\in \kappa_{D^1,D^2}\\
  & \iff \im{(D^2)^{-1}} \cap \dom{(D^1)^{-1}} \oplus k_2=\emptyset \text{ for all } k_2\in \kappa_{D^1,D^2}\\
  & \iff \dm \ket{\psi}\in \Pi^g,
 \end{align} i.e., $\dm$ preserves both the good database subspace and its orthogonal complement, hence $[\dm, \Pi^{g}]\Pi^{inj}=0$.

 Moreover, the projector onto the uniform state of $K_1$ and $K_3$ is invariant under swapping the two registers, and hence commutes with $\dm$, i.e., $[\dm, \Pi^{k_1,k_3}]=0$.

\end{proof}
\subsection{Proof of \Cref{lem:v-dm-comm}}
\begin{proof} 
\begin{align}
&\norm{[\dm, V] \Pi^g \Pi^{inj} \Pi^t}  \\
    &= \norm{(\dm\vr\fp_{D^2}\vf\fp_{D^2} \pm \dm \fp_{D^2}\vf\fp_{D^2} -\vr\vr\fp_{D^2}\vf\fp_{D^2}\dm)\Pi^g \Pi^{inj} \Pi^t}\\
    &\leq \norm{[\dm,\fp_{D^2}\vf\fp_{D^2} ]\Pi^g\Pi^{inj} \Pi^t} + \norm{[\vr,\dm]\fp_{D^2}\vf\fp_{D^2}\Pi^g\Pi^{inj} \Pi^t} \label{eq:v-ssf-split}
\end{align} where triangle inequality and unitarity of the norm is used.

Intuitively, since $V$ is defined symmetrically for forward and inverse databases, $\dm$ commutes with $V$ as long as it acts identically before and after $V$. The only factor of $\dm$ sensitive to injectivity is the flip operator; since $\vf$ preserves injectivity, it commutes with $\dm$, and since $\vr$ introduces no new $D^3$ entry on good databases, it commutes with $\dm$ as well.

We bound each term separately. 
\begin{enumerate}[leftmargin=0pt, labelindent=0pt, itemindent=*, label=(\arabic*)]
    \item We bound the first term in \eqref{eq:v-ssf-split}. Consider the computational basis state \begin{align}
\ket{\psi}\coloneqq \frac{1}{\sqrt{|\kappa_{D^1,D^2}|}}\sum_{k_2\in\kappa_{D^1,D^2}}\ket{D^1}_{D^1}\ket{D^2}_{D^2}\ket{D^3}_{D^3} \ket{k_1}_{K_1} \ket{k_2}_{K_2} \ket{k_3}_{K_3} \text{ s.t. }\Pi^g\Pi^{inj}\Pi^t\ket{\psi}=\ket{\psi}.\label{eq:some-psi}
    \end{align} Since $k_1,k_3,D^1,D^2,D^3$ are fixed, we can fix the set $S(K_1,K_3, D^1, D^2,D^3)$ in \Cref{def:set-S}, and call it $S$ shortly, and say $\abs{S}=s$. Furthermore, since $K_2$ is untouched and stays in the product with the rest of the registers, we write the difference for a fixed $k_2$ branch in the rest and the result follows by orthogonality of $k_2$'s, while also noting that $\kappa_{D^1,D^2}$ is preserved by $\dm$ on injective databases. For fixed $k_2$ together with $D^1,D^2$, we can fix the set $J(K_2, D^1, D^2)$ in \Cref{def:J}, and call it $J_{k_2}$ shortly. For the state $\dm\ket{\psi}$, the same definitions are not meaningful but they are related as given in \Cref{rem:S-and-J-wrt-invdb}. We will use this relation and write the states by omitting the untouched registers.
    
First we write,
    \begin{align}
        &\dm \fp_{D^2}\vf\fp_{D^2}\ket{\psi}\\
        &= \dm \sum_{\substack{u_x\in J_{k_2}\text{ for } x\in S:\\\forall a\neq b\text{ }u_a\neq u_b }} \frac{1}{\sqrt{\abs{J_{k_2}}\dots (\abs{J_{k_2}}-s+1)}} \bigotimes_{x\in S} \ket{u_x}_{D^1_x}\ket{y_x \oplus k_3}_{D^2_{u_x\oplus k_2}}\ket{y_x}_{D^3_{x\oplus k_1}}\\
        &= \sum_{\substack{u_x\in J_{k_2}\text{ for } x\in S:\\\forall a\neq b\text{ }u_a\neq u_b }} \frac{1}{\sqrt{\abs{J_{k_2}}\dots (\abs{J_{k_2}}-s+1)}} \bigotimes_{x\in S}\ket{u_x\oplus k_2}_{D^1_{y_x\oplus k_3}} \ket{x}_{D^2_{u_x}}\ket{x\oplus k_1}_{D^3_{y_x}},
    \end{align} and next write \begin{align}
& \fp_{D^2}\vf\fp_{D^2}\dm\ket{\psi}\\
&=\fp_{D^2}\vf\fp_{D^2}\ket{(D^2)^{-1}}_{D^1}\ket{(D^1)^{-1}}_{D^2}\ket{(D^3)^{-1}}_{D^3} \ket{k_3}_{K_1} \ket{k_2}_{K_2} \ket{k_1}_{K_3}\\
&=\sum_{\substack{u_x \oplus k_2\in J_{k_2}'\text{ for } y_x \oplus k_3 \in S':\\\forall a\neq b\text{ }u_a\neq u_b }} \frac{1}{\sqrt{\abs{J_{k_2}}\dots (\abs{J_{k_2}}-s+1)}} \bigotimes_{y_x \oplus k_3 \in S'} \ket{u_x \oplus k_2}_{D^1_{y_x \oplus k_3}}\\&~~~~~~~~~~~\otimes\ket{(u_x\oplus k_2) \oplus k_2}_{D^2_{(x\oplus k_1) \oplus k_1}}\ket{x\oplus k_1}_{D^3_{(y_x\oplus k_3) \oplus k_3}}\\
&=\sum_{\substack{u_x \oplus k_2\in J_{k_2}'\text{ for } y_x \oplus k_3 \in S':\\\forall a\neq b\text{ }u_a\neq u_b }} \frac{1}{\sqrt{\abs{J_{k_2}}\dots (\abs{J_{k_2}}-s+1)}} \bigotimes_{y_x \oplus k_3 \in S'} \ket{u_x \oplus k_2}_{D^1_{y_x \oplus k_3}}\ket{u_x\oplus}_{D^2_{x}}\ket{x\oplus k_1}_{D^3_{y_x}}.
    \end{align} We have shown that $[\dm,\fp_{D^2}\vf\fp_{D^2}]\ket{\psi}$ vanishes for any basis state $\ket{\psi}$ in $\Pi^g \Pi^j \Pi^t$. Hence, the commutator is 0 on this subspace.
    \item We bound the second term in \eqref{eq:v-ssf-split}. Again consider $\ket{\psi}$ in \eqref{eq:some-psi}, since it is a good database state, we know that there is no $u\in\im{D^1}$ s.t. $u\oplus k_2 \in \dom{D^2}$. This means any $\vr\ppar{\cdot}$ acting nontrivially on the state $\fp_{D^2}\vf\fp_{D^2}\ket{\psi}$ does uncomputation on $D^3$, and does not introduce anything new to $D^3$. Hence, afterwards $\dm$ acts as intended (i.e. flip does not fail). Moreover, since $\dm$ preserves the difference between the elements of $\im{D^1}$ and $\dom{D^2}$ (while flipping and swapping the both), the same elements on $D^3$ are uncomputed on the state $\dm\fp_{D^2}\vf\fp_{D^2}\ket{\psi}$ as well. Hence, the commutator becomes 0.
\end{enumerate}
\end{proof}

\subsection{Proof of \Cref{thm:inv-to-fwd-hybrid}}
\begin{proof}

\begin{align}
  &(\dm  \cco^{R,fwd} \dm V - V \Pi^{g}  \Pi^{inj}  \dm \cco^{I,fwd}  \dm) \Pi^{g}  \Pi^{inj}  \Pi^t \Pi^{k_1,k_3}\\
  &=  (\dm  \cco^{R,fwd} \dm V - V  \dm \Pi^{g}  \Pi^{inj}  \cco^{I,fwd}  \dm) \Pi^{g}  \Pi^{inj}  \Pi^t \Pi^{k_1,k_3}  \quad \text{(By \Cref{lem:comm-dm-and-proj})}\\
  & =  (\dm  \cco^{R,fwd} V \dm  - \dm V \Pi^{g}  \Pi^{inj}  \cco^{I,fwd}  \dm) \Pi^{g}  \Pi^{inj}  \Pi^t \Pi^{k_1,k_3} \quad \text{(By \Cref{lem:v-dm-comm})}\label{eq:t+1-hidden}\\
   &=  \dm ( \cco^{R,fwd} V   -  V \Pi^{g}  \Pi^{inj}  \cco^{I,fwd}  )   \Pi^{g}  \Pi^{inj}  \Pi^t \Pi^{k_1,k_3}\dm \\
  &=  \dm ( \cco^{R,fwd} V   -  V \Pi^{g}  \Pi^{inj}  \cco^{I,fwd}  )  \Pi^{g}  \Pi^{inj}  \Pi^t \Pi^{k_1,k_3}\dm
  \quad \text{(By \Cref{lem:comm-dm-and-proj})} \label{eq:no-norm}
  \end{align} where in \eqref{eq:t+1-hidden} we implicitly used the fact that single query increases the database size by at most 1, i.e., $\cco^{I,fwd} \Pi^{g}  \Pi^{inj}  \Pi^t=\Pi^{t+1}\cco^{I,fwd} \Pi^{g}  \Pi^{inj}  \Pi^t$ by \Cref{lem:pi-t-violation}.
Hence, using the unitary invariance and submultiplicativity, the norm of the term in \eqref{eq:no-norm} is equal to \begin{align}
     \lVert( \cco^{R,fwd} V   -  V \Pi^{g}  \Pi^{inj}  \cco^{I,fwd}  )  \Pi^{g}  \Pi^{inj}  \Pi^t \Pi^{k_1,k_3} \rVert.
\end{align}
 \end{proof}
\section{$P_1$-query: Deferred Proofs}\label{app:p1}

 \subsection{Proof of \Cref{lem:p1-psi-y-d}}
 \begin{proof} 
\begin{enumerate}[leftmargin=0pt, labelindent=0pt, itemindent=*, label=(\arabic*)]
    \item Observe that $D^3_{x\oplus k_1}=\bot$ means $\vf\ppar{x}$ is not active, and since $D$ with $k_2$ is a good database by fulfilling \Cref{item:rem:fill-id-restricted-D1} of \Cref{rem:fill-or-identity}, we conclude that every factor of $\vf$ performs the fill
$\ket{\bot}\ket{\bot} \mapsto \ket{\psi_{k_2}}$ or acts as the identity. As a result, the image of the computational basis under $\fp_{D^2}\vf\fp_{D^2}$ has only positive coefficient terms. Combining this with the fact that an isometry preserves orthogonality of the inputs and that $\fp_{D^2}\vf\fp_{D^2}$ preserves $D^1_x$, we see that $\ket{\psi^y_D}\perp \ket{\psi^y_{D'}}$ for $\ket{D}\perp\ket{D'}$, and $\ket{\psi^y_D}$ and $\ket{\psi^y_{D'}}$ have pairwise orthogonal terms.

In addition, observing that $\ket{\psi^y_D}$ for all $y\in[N]\cup\{\bot\}$ consists of the terms in $\ket{\psi^\bot_D}$ and $\ket{\psi^\bot_D}\perp \ket{\psi^\bot_{D'}} $, we can conclude $$( \ket{\psi^y_D}- \ket{\psi^{y'}_D})\perp (\ket{\psi^y_{D'}}- \ket{\psi^{y'}_{D'}})$$ for $\ket{D}\perp\ket{D'}$ and $y,y'\in[N]\cup\set{\bot}$.

\item Note that by triangle inequality and the symmetry of the ensemble $(\ket{\psi^y_D})_{y\in[N]}$, we can write\begin{align}
    \lVert \ket{\psi^y_D} - \ket{\psi^{y'}_D} \rVert \leq 2\lVert \ket{\psi^y_D} - \ket{\psi^{\bot}_D} \rVert.
\end{align} For the state in \eqref{eq:lem-p1-psi}, we fix the set in \Cref{def:set-S}, and call it $S$ while noting that $S$ is independent of what $D^1_x$ holds since already $x\not\in S$. For the same state with $y=\bot$, we fix the set in \Cref{def:J} and call it $J$. Then, we can write $\ket{\psi^\bot_D}$, omitting the irrelevant and untouched registers, as \begin{align}
\ket{\psi^\bot_D}=
    \sum_{\substack{j\in \mathsf{Tup}(J,S):\\\forall a\neq b\text{ }j_a\neq j_b }} \frac{1}{\sqrt{\abs{J}\dots (\abs{J}-\abs{S}+1)}} \bigotimes_{i\in S}\ket{j_i}_{D^1_i}\ket{y_i\oplus k_3}_{D^2_{j_i\oplus k_2}}\ket{y_i}_{D^3_{i\oplus k_1}}.
\end{align} Similarly, for the state in \eqref{eq:lem-p1-psi} with $y\in[N]$, we fix the set in \Cref{def:J} and call it $J^y$, then do case analysis. If $y\not\in J$ (which happens when $y\oplus k_2\in\dom{D^2}$), then $J^y=J$ and $\ket{\psi^\bot_D}=\ket{\psi^y_D}$. Consequently,\begin{align}
    \lVert \ket{\psi^y_D} - \ket{\psi^{\bot}_D} \rVert=0.
\end{align} If $y\in J$, then $J^y=J\setminus \{ y\}$ and \begin{align}
 \ket{\psi^y_D}=
    \sum_{\substack{j\in \mathsf{Tup}(J^y,S):\\\forall a\neq b\text{ }j_a\neq j_b }} \frac{1}{\sqrt{\abs{J^y}\dots (\abs{J^y}-\abs{S}+1)}} \bigotimes_{i\in S}\ket{j_i}_{D^1_i}\ket{y_i\oplus k_3}_{D^2_{j_i\oplus k_2}}\ket{y_i}_{D^3_{i\oplus k_1}}.   
\end{align} The states $\ket{\psi^\bot_D}$ and $\ket{\psi^y_D}$ differ in two ways. First,
$\ket{\psi^\bot_D}$ contains terms with $y \in \im{D^1}$, which are absent
from $\ket{\psi^y_D}$, since filling $D^1_x$ with $y$ removes $y$ from the available set. Second, the terms common to both appear with different coefficients, as the normalization changes from $\abs{J}$ to $\abs{J}-1$. Hence, we write \begin{align}
    &\lVert \ket{\psi^y_D} - \ket{\psi^{\bot}_D} \rVert \\
    &\leq \norm{\sum_{\substack{j\in \mathsf{Tup}(J,S):\\\forall a\neq b\text{ }j_a\neq j_b, \exists i\, j_i=y }} \frac{1}{\sqrt{\abs{J}\dots (\abs{J}-\abs{S}+1)}} \bigotimes_{i\in S}\ket{j_i}_{D^1_i}\ket{y_i\oplus k_3}_{D^2_{j_i\oplus k_2}}\ket{y_i}_{D^3_{i\oplus k_1}}}\nonumber\\
    & \quad + \norm{ \sum_{\substack{j\in \mathsf{Tup}(J^y,S):\\\forall a\neq b\text{ }j_a\neq j_b }} \Bigg(\frac{1}{\sqrt{\abs{J^y}\dots (\abs{J^y}-\abs{S}+1)}} -\frac{1}{\sqrt{\abs{J}\dots (\abs{J}-\abs{S}+1)}}\Biggl)\bigotimes_{i\in S}\ket{j_i}_{D^1_i}\ket{y_i\oplus k_3}_{D^2_{j_i\oplus k_2}}\ket{y_i}_{D^3_{i\oplus k_1}} }\\
    &= \sqrt{\frac{(\abs{J}-1)\dots (\abs{J}-\abs{S}+1)\abs{S}}{\abs{J}\dots (\abs{J}-\abs{S}+1)}} \nonumber\\
    &\quad+ \sqrt{\Bigg(\frac{1}{\sqrt{\abs{J^y}\dots (\abs{J^y}-\abs{S}+1)}} -\frac{1}{\sqrt{\abs{J}\dots (\abs{J}-\abs{S}+1)}}\Biggl)^2 \abs{J^y}\dots (\abs{J^y}-\abs{S}+1)}\\
    &\leq O(\sqrt{t/(N-2t)}) + O(\sqrt{t/(N-3t)})\label{eq:boring-counting}
\end{align} using $\abs{S}\leq t$ and $\abs{J}\geq N-2t$.

\end{enumerate}
 
\end{proof}
\section{$P_2$-query Bound Deferred Proofs}\label{app:p2-query}

Using \Cref{thm:inv-to-fwd-hybrid}, it is enough to consider the difference between the hybrids for forward queries, i.e., consider the real and ideal world oracle queries for $b=0$ and $i=2$. 
\begin{lemma}\label{lem:p2-psi-y-d}
    Let $\ket{\psi}$ be a database state in the computational basis
\begin{align}
  \ket{\psi} \coloneqq \ket{y}_{D^2_x}\ket{D^1}_{D^1}\ket{D}_{(D^2_x)^c}\ket{D^3}_{D^3}
\ket{k_1}_{K_1}\ket{k_2}_{K_2}\ket{k_3}_{K_3},\label{eq:p2-psi-with-y}
\end{align} 
where $y \in [N]\setminus \im{D}\cup(\im{D^3}\oplus k_3), x\not\in\im{D^1}\oplus k_2$, and $D^1,D$ with $k_2$ is a good database, and each database $D^1,D,D^3$ is of size at most $t$ and is injective. Since $\fp_{D^2}\vf\fp_{D^2}$ acts as the identity
on the remaining registers, we may define $\ket{\psi^y_{D^1,D}}$ at register $D^1(D^2_x)^c$ by
\begin{align}
  \fp_{D^2}\vf\fp_{D^2}\ket{\psi} = \ket{y}_{D^2_x}\ket{\psi^y_{D^1,D}}\ket{D^3}_{D^3}\ket{k_1}_{K_1}\ket{k_2}_{K_2}\ket{k_3}_{K_3}. 
\end{align}Similarly, when the register $D^2_x$ is set to $\bot$, i.e.,
\begin{align}
  \ket{\psi} \coloneqq \ket{\bot}_{D^2_x}\ket{D^1}_{D^1}\ket{D}_{(D^2_x)^c}\ket{D^3}_{D^3}
\ket{k_1}_{K_1}\ket{k_2}_{K_2}\ket{k_3}_{K_3}, \label{eq:p2-psi-with-bot}
\end{align}we define $\ket{\psi^\bot_{D^1,D}}$ and $\ket{\not\bot}_{D^2_x}\ket{\psi^{\not\bot}_{D^1,D}}$ by \begin{align}
  \fp_{D^2}\vf\fp_{D^2}\ket{\psi} = (\ket{\bot}_{D^2_x}\ket{\psi^\bot_{D^1,D}}+ \ket{\not\bot}_{D^2_x}\ket{\psi^{\not\bot}_{D^1,D}}
)\otimes\ket{D^3}_{D^3}\ket{k_1}_{K_1}\ket{k_2}_{K_2}\ket{k_3}_{K_3} \label{eq:p2-psi-bot-d}
\end{align} where $\ket{\not\bot}_{D^2_x}\ket{\psi^{\not\bot}_D}_{D}$ is all the remaining vectors where $D^2_x$ does not hold $\bot$. Then \begin{enumerate}[label=(\arabic*)]
        \item \label{item:p2-y-y'-orthogonal}For databases such that $\ket{D^1}_{D^1}\ket{D}_{(D^2_x)^c}\perp\ket{B^1}_{D^1}\ket{B'}_{(D^2_x)^c}$, we have \begin{align}
        \ket{\psi^y_{D^1,D}}\perp \ket{\psi^y_{B^1,B'}} \text{ and }      \ket{\not\bot}_{D^2_x}\ket{\psi^{\not\bot}_{D^1,D}}\perp \ket{\not\bot}_{D^2_x}\ket{\psi^{\not\bot}_{B^1,B}},
        \end{align}
        as a consequence of the former one, we get 
        \begin{align}
            ( \ket{\psi^y_{D^1,D}}- \ket{\psi^{y'}_{D^1,D}})\perp (\ket{\psi^y_{B^1,B}}- \ket{\psi^{y'}_{B^1,B}}),
        \end{align}  
        \item We get $\lVert \ket{\psi^y_{D^1,D}} - \ket{\psi^{y'}_{D^1,D}} \rVert \leq O(\sqrt{t/N})$,\label{item:p2-y-y'-bound}
        \item We get $\lVert\ket{\not\bot}_{D^2_x}\ket{\psi^{\not\bot}_{D^1,D}}\rVert\leq O(\sqrt{t/N})$\label{item:p2-non-bots}
    \end{enumerate}for all $y, y' \in [N] \cup \set{\bot}$ and $D^1,D,B^1, B\in\mathbf{D}_t$ for which the relative terms are defined. For the bounds, we use the assumption $3q < N/2$ which is trivially satisfied by the regime for $q$.
\end{lemma}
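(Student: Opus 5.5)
The plan mirrors the proof of \Cref{lem:p1-psi-y-d}, with the extra work caused by $D^2$ being pointed to from two sides: by $D^1$ through $k_2$, and by $D^3$ through $k_3$. The first step is to fix the key registers and the databases in the computational basis. Under $\fp_{D^2}$ the register $D^2_x$ becomes an entry of the flipped database, at position $y$ with value $x$. The hypothesis $x\notin\im{D^1}\oplus k_2$ says no $D^1$ entry points to $x$. The hypothesis $y\notin\im{D^3}\oplus k_3$ says no active factor $\vf\ppar{i}$ has its $D^2$ target at output $y$. Hence, for $y\in[N]$, the entry $D^2_x=y$ is never a target of $\vf$ and only affects the action through the admissible-value set. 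By \Cref{rem:fill-or-identity}, goodness of $D^1,D$ with $k_2$ forces every factor of $\vf$ either to perform the fill $\ket{\bot}\ket{\bot}\mapsto\ket{\psi_{k_2}}$ or to act as the identity. So the image of a basis state is a uniform, positive-coefficient superposition over injective tuples $j\in\mathsf{Tup}(J,S)$, exactly as in the expression for $\ket{\psi^\bot_D}$ in the proof of \Cref{lem:p1-psi-y-d}. Here $S$ is as in \Cref{def:set-S}, and it does not depend on $y$ by the $k_3$-condition.

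For item~\ref{item:p2-y-y'-orthogonal}, I would argue as in the $P_1$ case. The operator $\fp_{D^2}\vf\fp_{D^2}$ is an isometry that leaves the original entries of $D^1$ and $(D^2_x)^c$ untouched and only fills $\bot$ entries. Two orthogonal basis inputs therefore produce branches that disagree on some originally-defined entry, or on an entry that is $\bot$ in one input and filled in the other. For the latter case I would use that $S$ and $J$ are determined by the input, so the set of filled registers identifies the input. For the $\not\bot$ part the same reasoning applies once one notes that $D^2_x$ can only be filled by the factor whose $D^2$ target lands at $x$. The orthogonality of the differences then follows because $\ket{\psi^y_{D^1,D}}$ and $\ket{\psi^{y'}_{D^1,D}}$ are both supported on terms that are compatible with the same input $(D^1,D)$.

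For item~\ref{item:p2-y-y'-bound}, I would first reduce to $y'=\bot$ via the triangle inequality. Comparing the $y$- and $\bot$-inputs, the only change is to the admissible set: with $D^2_x=y$ defined, the intermediate value $u=x\oplus k_2$ is excluded from $J$, since $x\in\dom{D^2}$. So $J^y=J\setminus\{x\oplus k_2\}$, or $J^y=J$ if $x\oplus k_2$ was already excluded. The counting of \eqref{eq:boring-counting} then applies verbatim, using $|S|\le t$ and $|J|\ge N-2t$, and gives $O(\sqrt{t/N})$.

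One point needs care in this comparison. In the $\bot$-input the value $x\oplus k_2$ is available, and choosing it places $x$ in the $D^2$ domain. These are exactly the terms collected in $\ket{\not\bot}\ket{\psi^{\not\bot}}$, while $\ket{\psi^\bot}$ keeps the remaining terms with renormalized weight. I expect item~\ref{item:p2-non-bots}, and reconciling this term with item~\ref{item:p2-y-y'-bound}, to be the main obstacle. The weight of tuples using $u=x\oplus k_2$ is at most $|S|/|J|\le t/(N-2t)$, which gives $\norm{\ket{\not\bot}\ket{\psi^{\not\bot}}}\le O(\sqrt{t/N})$. The difference $\ket{\psi^\bot}-\ket{\psi^y}$ is then controlled by the same missing-terms term and the same normalization term as before.
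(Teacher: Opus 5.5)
Your proposal is correct and follows essentially the same route as the paper: goodness together with $x\notin\im{D^1}\oplus k_2$ gives fill-or-identity (hence nonnegative coefficients and orthogonality), $y\notin\im{D^3}\oplus k_3$ keeps $S$ fixed, $J^y=J\setminus\{x\oplus k_2\}$, and the $\not\bot$ part consists exactly of the tuples using $x\oplus k_2$, with weight at most $|S|/|J|$.

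Two small refinements. First, $\ket{\psi^\bot_{D^1,D}}$ is not renormalized: it keeps the $|J|$-normalization on the same support as $\ket{\psi^y_{D^1,D}}$, so only the coefficient term of \eqref{eq:boring-counting} survives, and your extra missing-terms term is harmless. Second, for item (1) the paper uses only that a unitary with nonnegative-coefficient outputs maps orthogonal basis inputs to disjointly supported vectors; this avoids your input-reconstruction argument, which would additionally need goodness to ensure that the linked pairs in an output branch are exactly the filled ones.
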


\begin{proof}
\begin{enumerate}[leftmargin=0pt, labelindent=0pt, itemindent=*, label=(\arabic*)]
\item Observe that $D^1,D$ with $k_2$ is a good database and $x\not\in\im{D^1}\oplus k_2$ implies that if we extend the database $D$ at $(D^2_x)^c$ to include the register $D^2_x$, then the full databases at registers $D^1$ and $D^2$ with $k_2$ still form a good database. Hence, independent from what $D^2_x$ holds, by \Cref{rem:fill-or-identity}, we conclude that every factor of $\vf$ performs the fill
$\ket{\bot}\ket{\bot} \mapsto \ket{\psi_{k_2}}$ or acts as the identity. As a result, on such computational basis states, the image of $\fp_{D^2}\vf\fp_{D^2}$ has only positive coefficient terms. Combining this with the fact that $\fp_{D^2}\vf\fp_{D^2}$ is an isometry, we see that terms in its images are pairwise orthogonal on orthogonal inputs. Since $\ket{\psi^y_{D^1,D}}$ and $\ket{\psi^y_{B^1,B}}$ for $\ket{D^1}_{D^1}\ket{D}_D\perp\ket{B^1}_{D^1}\ket{B}_{(D^2_x)^c}$ are derived from images of $\fp_{D^2}\vf\fp_{D^2}$ where $D^2_x$ register is preserved, it is immediately implied that \begin{align}
    \ket{\psi^y_{D^1,D}}\perp \ket{\psi^y_{B^1,B'}},
\end{align} moreover, for the terms in the image where $D^2_x$ state is not preserved, we can write \begin{align}
\ket{\not\bot}_{D^2_x}\ket{\psi^{\not\bot}_{D^1,D}}\perp \ket{\not\bot}_{D^2_x}\ket{\psi^{\not\bot}_{B^1,B}}. 
\end{align}

In addition, observing that $\ket{\psi^y_{D^1,D}}$ for all $y\in[N]\cup\{\bot\}$ consists of the terms in $\ket{\psi^\bot_{D^1,D}}$, we can conclude $$( \ket{\psi^y_{D^1,D}}- \ket{\psi^{y'}_{D^1,D}})\perp (\ket{\psi^y_{B^1,B}}- \ket{\psi^{y'}_{B^1,B}})$$ for all $y, y' \in [N] \cup \set{\bot}$ and and $D^1,D,B^1, B\in\mathbf{D}_t$ for which the relative terms are defined.

\item Note that by triangle inequality, we can write 
\begin{align}
    \lVert \ket{\psi^y_{D^1,D}}- \ket{\psi^{y'}_{D^1,D}} \rVert \leq 2\max\set{\lVert \ket{\psi^y_{D^1,D}}- \ket{\psi^{\bot}_{D^1,D}} \rVert, \lVert \ket{\psi^{y'}_{D^1,D}}- \ket{\psi^{\bot}_{D^1,D}} \rVert},
\end{align}without loss of generality, we compute the norm for $y$, independent from $y$. For the state in \eqref{eq:p2-psi-with-bot}, we fix the sets in \Cref{def:set-S} and in \Cref{def:J}, and call it $S$ and $J$, respectively. Note that the set $J^\bot$ contains $x\oplus k_2$ since in the input state to $\fp_{D^2}\vf\fp_{D^2}$ both $D^2_x=\bot$ and $x\not\in\im{D^1}\oplus k_2$. However, the terms with $x\oplus k_2 \in\im{D^1}$ under $\fp_{D^2}\vf\fp_{D^2}$ does not appear in $\ket{\psi^\bot_{D^1,D}}$ since it consists of the terms whose state at $D^2_x$ is preserved as $\bot$. With this observation, we can write $\ket{\psi^\bot_{D^1,D}}$, omitting the irrelevant and untouched registers, as \begin{align}
\ket{\psi^\bot_{D^1,D}}=
    \sum_{\substack{j\in \mathsf{Tup}( J\setminus \set{x\oplus k_2} ,S):\\\forall a\neq b\text{ }j_a\neq j_b }} \frac{1}{\sqrt{\abs{J}\dots (\abs{J}-\abs{S}+1)}} \bigotimes_{i\in S}\ket{j_i}_{D^1_i}\ket{y_i\oplus k_3}_{D^2_{j_i\oplus k_2}}\ket{y_i}_{D^3_{i\oplus k_1}}.
\end{align} Similarly, for the state in \eqref{eq:p2-psi-with-y} with $y\in[N]$, we fix the set in \Cref{def:set-S} and in \Cref{def:J}, and call it $S^y$ and $J^y$, respectively. We note that the conditions on \eqref{eq:p2-psi-with-y} prevents $y$ being in $\im{D^3}\oplus k_3$. For this reason, the set $S^y=S$, otherwise $S^y$ would depend on what other related register holds. Furthermore, the set $J^y$ does not contain $x\oplus k_2$ since $D^2_x\neq\bot$. Hence, $J^y=J\setminus \set{x\oplus k_2}$. With these observations we can write $\ket{\psi^y_{D^1,D}}$, omitting the irrelevant and untouched registers, as \begin{align}
\ket{\psi^y_{D^1,D}}=
    \sum_{\substack{j\in \mathsf{Tup}(J^y,S):\\\forall a\neq b\text{ }j_a\neq j_b }} \frac{1}{\sqrt{\abs{J^y}\dots (\abs{J^y}-\abs{S}+1)}} \bigotimes_{i\in S}\ket{j_i}_{D^1_i}\ket{y_i\oplus k_3}_{D^2_{j_i\oplus k_2}}\ket{y_i}_{D^3_{i\oplus k_1}}.
\end{align} Hence, we see that the terms in $\ket{\psi^y_{D^1,D}}$ and $\ket{\psi^\bot_{D^1,D}}$ only differ in coefficients, we can bound the difference as \begin{align}
&\norm{\ket{\psi^y_{D^1,D}}-\ket{\psi^\bot_{D^1,D}}}\\
   &=\norm{ \sum_{\substack{j\in \mathsf{Tup}(J^y,S):\\\forall a\neq b\text{ }j_a\neq j_b }} \Bigg(\frac{1}{\sqrt{\abs{J^y}\dots (\abs{J^y}-\abs{S}+1)}} -\frac{1}{\sqrt{\abs{J}\dots (\abs{J}-\abs{S}+1)}}\Biggl)\bigotimes_{i\in S}\ket{j_i}_{D^1_i}\ket{y_i\oplus k_3}_{D^2_{j_i\oplus k_2}}\ket{y_i}_{D^3_{i\oplus k_1}} } \\
&\leq \sqrt{\Bigg(\frac{1}{\sqrt{\abs{J^y}\dots (\abs{J^y}-\abs{S}+1)}} -\frac{1}{\sqrt{\abs{J}\dots (\abs{J}-\abs{S}+1)}}\Biggl)^2 (\abs{J}-1)\dots (\abs{J}-\abs{S})}\\
    &\leq O(\sqrt{t/(N-3t)})
\end{align} using $\abs{S}\leq t$ and $\abs{J}\geq N-2t$.

\item By the analysis in the previous case, the terms whose $D^2_x$ is not preserved as $\bot$ under $\fp_{D^2}\vf\fp_{D^2}$ are exactly those for which $\vf$ fills $D^1_i$ with $x \oplus k_2$ for some $i \in S$, that is,  \begin{align}
    \ket{\not\bot}_{D^2_x}\ket{\psi^{\not\bot}_{D^1,D}}=\sum_{\substack{j\in \mathsf{Tup}(J,S) :\\\forall a\neq b\text{ }j_a\neq j_b, \exists i\, j_i=x\oplus k_2 }} \frac{1}{\sqrt{\abs{J}\dots (\abs{J}-\abs{S}+1)}} \bigotimes_{i\in S}\ket{j_i}_{D^1_i}\ket{y_i\oplus k_3}_{D^2_{j_i\oplus k_2}}\ket{y_i}_{D^3_{i\oplus k_1}},
\end{align}
since this is a sum of orthogonal terms, where $x \oplus k_2$ is fixed at each $D^1_i$ for $i \in S$ and the remaining indices in $S \setminus \{i\}$ range over all pairwise distinct values in $J$, its norm is 
\begin{align}
    \sqrt{\frac{(\abs{J}-1)\dots \abs{(\abs{J}-\abs{S}+1)}\abs{S}}{\abs{J}\dots (\abs{J}-\abs{S}+1)}} \leq O(\sqrt{t/(N-2t)})
\end{align}using $\abs{S}\leq t$ and $\abs{J}\geq N-2t$.

\end{enumerate}
 
\end{proof}

\subsection{Proof of \Cref{thm:p2-bound}}
\begin{proof}
Note that oracle query to $P_2$ is handled identically in the real and ideal worlds, i.e., $\cco^{R} = \cco^{I}$. Accordingly, we simply write $\cco$ for this operator. We use that the uniformity of $k_3$ and almost uniformity of $k_2$ in $\ket{\psi^{t}_{I,\Pi^\star }}$,\begin{align}
     & \lVert  ( \cco V   -  V \Pi^\star  \cco  )  \Pi^\star  \Pi^t \Pi^{k_1,k_3}\rVert \\
     &\quad \leq  \lVert  ( \cco V   -  V \Pi^\star  \cco  )  \Gamma_1 \Pi^\star  \Pi^t \Pi^{k_1,k_3}\rVert + O(\sqrt{t/N}) \quad \text{(By \Cref{lem:gamma1-perp-bound})}\\
     &\quad \leq  \lVert  ( \cco V   -  V  \cco  )  \Gamma_1 \Pi^\star  \Pi^t \Pi^{k_1,k_3} \rVert  + O(\sqrt{t/N}) \label{eq:some-term-p2}
\end{align} where in the last line we use the identity $\Pi^\star=\Id - (\Pi^\star)^\perp$ with triangle inequality and \Cref{lem:pi-star-violation}. For the remainder of the proof, it suffices to bound the first term of \Cref{eq:some-term-p2} on states in the image of $\Gamma_1 \Pi^\star  \Pi^t$. 
We continue to simplify by decomposing $\cco$ as \begin{align}
    &\lVert  ( \cco V   -  V  \cco  )  \Gamma_1 \Pi^\star  \Pi^t \Pi^{k_1,k_3}\rVert \\
    & \leq \norm{[V,\fc]\Gamma_1 \Pi^\star  \Pi^t} + \norm{[V,\pu] \fc\Gamma_1  \Pi^\star  \Pi^t \Pi^{k_1,k_3}} + \norm{[V,\fc] \pu \fc \Gamma_1  \Pi^\star  \Pi^t\Pi^{k_1,k_3}}
\end{align} by the submultiplicativity of the norm on the first and second terms, then we continue by simplifying the second and third term as 
 \begin{align}
   &\norm{[V,\pu] \fc \Gamma_1  \Pi^\star  \Pi^t\Pi^{k_1,k_3}} \\
   &= \norm{[V,\pu] \Gamma_1 \Pi^\star \Pi^{t+1} \fc \Gamma_1 \Pi^\star  \Pi^t\Pi^{k_1,k_3}} + O(\sqrt{t/N})\quad\text{(By \Cref{lem:pi-t-violation}, \ref{lem:pi-star-violation}, \ref{lem:gamma1-perp-bound})}\\
   &\leq\norm{[V,\pu] \Gamma_1 \Pi^\star \Pi^{t+1}  } + O(\sqrt{t/N})  \quad\text{(By submultiplicativity of the norm)}
\end{align}and similarly, additionally using the commutativity of all the projectors with $\pu$, we get 
\begin{align}
    \norm{[V,\fc] \pu \fc \Gamma_1  \Pi^\star  \Pi^{t}} \leq \norm{[V,\fc] \Gamma_1  \Pi^\star \Pi^{t+1}}+ O(\sqrt{t/N})  .
\end{align} Since big-$O$ is indifferent to constants, we treat $t+1$ and $t$ the same, and we bound the remaining terms by decomposing $V$ as \begin{align}
   \norm{[V,\pu]\Gamma_1 \Pi^\star  \Pi^t}&\leq \norm{[\fp_{D^2}\vf\fp_{D^2},\pu]\Gamma_1 \Pi^\star  \Pi^t} + \norm{[\vr,\pu]\fp_{D^2}\vf\fp_{D^2}\Gamma_1 \Pi^\star  \Pi^t} \\
   &=\norm{[\fp_{D^2}\vf\fp_{D^2},\pu]\Gamma_1 \Pi^\star  \Pi^t} \label{eq:p2-comm-pu}
\end{align}
since both $\vr$ and $\pu$ are controlled on $D^2_x$ and otherwise act nontrivially on distinct registers, and
\begin{align}
          \norm{[V,\fc]\Gamma_1 \Pi^\star  \Pi^t}\leq \norm{[\fp_{D^2}\vf\fp_{D^2},\fc]\Gamma_1 \Pi^\star  \Pi^t} + \norm{[\vr,\fc]\fp_{D^2}\vf\fp_{D^2}\Gamma_1 \Pi^\star  \Pi^t} \label{eq:p2-comm-vf-fc}
      \end{align}

In the rest, we fix the registers $X,K_2$ in the computational basis $x,k_2$, respectively, such that they are allowed by $\Gamma_1$, i.e., $x\not\in \im{D^1}\oplus k_2$.

\begin{enumerate}[leftmargin=0pt, labelindent=0pt, itemindent=*, label=(\arabic*)]
    \item Consider $D^2_x=\bot$.  

    \begin{enumerate}
        \item We first bound $\norm{[\fp_{D^2}\vf\fp_{D^2},\pu]\Gamma_1 \Pi^\star  \Pi^t}$. Observe that on good databases $\fp_{D^2}\vf\fp_{D^2}$ produces a superposition of database states whose terms are pairwise orthogonal across all databases (see \Cref{rem:fill-or-identity}), and $\pu$ preserves computational basis states. So, independent from the order of the two operators, the images of orthogonal databases stay orthogonal, hence we can fix a general database state in the computational basis as 
            \begin{align}
                  \ket{\psi}\coloneqq  \ket{D^1}_{D^1}\ket{\bot}_{D^2_x} \ket{(D^2_x)^c}_{(D^2_x)^c} \ket{D^3}_{D^3} \ket{k_1}_{K_1} \ket{k_2}_{K_2}\ket{k_3}_{K_3}
              \end{align} where $D^1,(D^2_x)^c$ with $k_2$ is good, and each database is of size at most $t$ and injective, and $x\not \in \im{D^1} \oplus k_2$ (due to $\Gamma_1$).

    Since $\pu$ acts as identity when $D^2_x=\bot$, the source of difference is exactly the terms where $D^2_x$ is not in $\bot$ after the operator $\fp_{D^2}\vf\fp_{D^2}$ applied. This is exactly the term bounded by $O(\sqrt{t/N})$ in \Cref{item:p2-non-bots} of \Cref{lem:p2-psi-y-d}.
  
        \item Next, we bound $\norm{[\fp_{D^2}\vf\fp_{D^2},\fc]\Gamma_1 \Pi^\star  \Pi^t}$. Observe that $\fc$ acts only on $D^2_x$ and $\fp_{D^2}\vf\fp_{D^2}$ preserves $D^2_x$ in the computational basis and mostly preserves in $\bot$, in either way its action on $D^1$ and $(D^2_x)^c$ depends on what $D^2_x$ holds in order to preserve injectivity of the databases. Consequently, orthogonality of all the registers other $D^1$ and $(D^2_x)^c$ is preserved by the commutator, and we fix all other registers in the computational basis and write a general state \begin{align}
            \ket{\psi} \coloneqq \sum_{(D^1, D) \in \mathbf{D}_{\mathrm{adm}}}
\gamma_{D^1,D}\ket{\bot}_{D^2_x}\ket{D^1}_{D^1}\ket{D}_{(D^2_x)^c}\ket{D^3}_{D^3}\ket{k_1}_{K_1}\ket{k_2}_{K_2}\ket{k_3}_{K_3}
        \end{align} where $D^3\in\mathbf{D}_{inj,t}$ and
        \begin{align}
            \mathbf{D}_{\mathrm{adm}} \coloneqq \set{ (D^1, D) \in
  \mathbf{D}_{t,inj} \times \mathbf{D}_{t,inj} :
  D^1, D \text{ with } k_2 \text{ is good}, \; x \notin \im{D^1} \oplus k_2 }.
        \end{align} The reason for splitting the databases in $D^1$ and $(D^2_x)^c$ registers in comparison to $P_1$-query case is that flip operator on $D^2$ fails if action of $\fc$ on $D^2_x$ violates the injectivity of the databases. We project out the vectors in both terms of the commutator where injectivity of $D^2$ is violated, and bound the difference of the remaining.
       
Then we write the terms of the commutator, while omitting the registers that are untouched,\begin{align}
    &\fc \fp_{D^2}\vf\fp_{D^2} \ket{\psi} \\
    &= \fc \Bigg(\sum_{(D^1, D) \in \mathbf{D}_{\mathrm{adm}}}\gamma_{D^1,D}(\ket{\bot}_{D^2_x}\ket{\psi^\bot_{D^1,D}}+  \ket{\not\bot}_{D^2_x}\ket{\psi^{\not\bot}_{D^1,D}})\Biggl)\\
    &= \frac{1}{\sqrt{N}}\sum_{y\in[N],(D^1, D) \in \mathbf{D}_{\mathrm{adm}}}\gamma_{D^1,D}\ket{y}_{D^2_x}\ket{\psi^\bot_{D^1,D}}+ \fc \Bigg(\sum_{(D^1, D) \in \mathbf{D}_{\mathrm{adm}}}\gamma_{D^1,D}\ket{\not\bot}_{D^2_x}\ket{\psi^{\not\bot}_D}\Biggl)
\end{align} 
and 
\begin{align}
 \fp_{D^2}\vf\fp_{D^2} \fc\ket{\psi} &=  \frac{1}{\sqrt{N}}\sum_{y\in[N],(D^1, D) \in \mathbf{D}_{\mathrm{adm}}}\gamma_{D^1,D}\ket{y}_{D^2_x} \ket{D^1}_{D^1}\ket{D}_{(D^2_x)^c} \\
 &=\frac{1}{\sqrt{N}}\sum_{\substack{y\in[N],(D^1, D) \in \mathbf{D}_{\mathrm{adm}}:\\y\not\in\im{D}}}\gamma_{D^1,D}\ket{y}_{D^2_x} \ket{\psi^y_{D^1,D}}\\
 &\quad+\fp_{D^2}\vf\fp_{D^2}\Bigg(\frac{1}{\sqrt{N}}\sum_{\substack{y\in[N],(D^1, D) \in \mathbf{D}_{\mathrm{adm}}:\\y\in\im{D}\cup \im{D^3}\oplus k_3}}\gamma_{D^1,D}\ket{y}_{D^2_x} \ket{D^1}_{D^1}\ket{D}_{(D^2_x)^c}\Bigg)
\end{align} 
where $\ket{\psi^\bot_{D^1,D}},\ket{\psi^y_{D^1,D}},\ket{\not\bot}_{D^2_x}\ket{\psi^{\not\bot}_{D^1,D}}$ are as defined in \Cref{lem:p2-psi-y-d}. Hence, by projecting out the terms that value at $D^2_x$ and $D$ is not injective, and $\vf$ does not preserve the state at $D^2_x$, we write the difference as \begin{align}
&\norm{[\fc,\fp_{D^2}\vf\fp_{D^2}]\ket{\psi}}\\
&\leq\norm{\frac{1}{\sqrt{N}}\sum_{\substack{y\in[N],(D^1, D) \in \mathbf{D}_{\mathrm{adm}}:\\y\not\in\im{D}}}\gamma_{D^1,D}\ket{y}_{D^2_x}\otimes\Big( \ket{\psi^\bot_{D^1,D}}-\ket{\psi^y_{D^1,D}}\Big)}\nonumber\\
&\quad+ \norm{\fc \Bigg(\sum_{(D^1, D) \in \mathbf{D}_{\mathrm{adm}}}\gamma_{D^1,D}\ket{\not\bot}_{D^2_x}\ket{\psi^{\not\bot}_{D^1,D}}\Biggl)} \nonumber\\
&\quad+ \norm{\fp_{D^2}\vf\fp_{D^2}\Bigg(\frac{1}{\sqrt{N}}\sum_{\substack{y\in[N],(D^1, D) \in \mathbf{D}_{\mathrm{adm}}:\\y\in\im{D}\cup (\im{D^3}\oplus k_3)}}\gamma_{D^1,D}\ket{y}_{D^2_x} \ket{D^1}_{D^1}\ket{D}_{(D^2_x)^c}\Bigg)}\nonumber \\
&\quad+ \norm{\frac{1}{\sqrt{N}}\sum_{\substack{y\in[N],(D^1, D) \in \mathbf{D}_{\mathrm{adm}}:\\y\in\im{D}\cup (\im{D^3}\oplus k_3)}}\gamma_{D^1,D}\ket{y}_{D^2_x} \ket{\psi^\bot_{D^1,D}}} \\
&\leq \sqrt{\frac{1}{N}\sum_{y\in[N]}\sum_{(D^1, D) \in \mathbf{D}_{\mathrm{adm}}:y\not\in\im{D}}\abs{\gamma_{D^1,D}}^2\lVert \ket{\psi^\bot_{D^1,D}}-\ket{\psi^y_{D^1,D}} \rVert^2} \nonumber\\
&\qquad \text{(By the orthogonality of $\ket{y}$'s and $\ket{\psi^\bot_{D^1,D}}-\ket{\psi^y_{D^1,D}}$'s as given in \Cref{lem:p2-psi-y-d})}\nonumber\\
&\quad +\sqrt{\sum_{(D^1, D) \in \mathbf{D}_{\mathrm{adm}}}\abs{\gamma_{D^1,D}}^2 \lVert\ket{\not\bot}_{D^2_x}\ket{\psi^{\not\bot}_{D^1,D}} \rVert^2} \nonumber\\
&\qquad \text{(By the orthogonality of $\ket{\not\bot}_{D^2_x}\ket{\psi^{\not\bot}_{D^1,D}}$'s as given in \Cref{lem:p2-psi-y-d})}\nonumber\\
& \quad+ \sqrt{\frac{1}{N}\sum_{y\in[N],(D^1, D)\in\mathbf{D}_{\mathrm{adm}}:y\in\im{D}\cup (\im{D^3}\oplus k_3) }\abs{\gamma_{D^1,D}}^2 }\nonumber\\
 &\qquad \text{(By the orthogonality of computational basis states)}\nonumber\\
&\quad + \sqrt{\frac{1}{N}
  \sum_{y\in[N],(D^1,D)\in\mathbf{D}_{\mathrm{adm}}:y\in\im{D}\cup (\im{D^3}\oplus k_3)}
  \abs{\gamma_{D^1,D}}^2 \lVert\ket{\psi^\bot_{D^1,D}}\rVert^2}\\
  &\qquad \text{(By the orthogonality of $\ket{\not\bot}_{D^2_x}\ket{\psi^{\not\bot}_{D^1,D}}$'s as given in \Cref{lem:p2-psi-y-d})}\nonumber\\
 &\leq \sqrt{\frac{1}{N}\sum_{y\in[N]}\sum_{(D^1, D) \in \mathbf{D}_{\mathrm{adm}}:y\not\in\im{D}}\abs{\gamma_{D^1,D}}^2\frac{t}{N}}  \quad\text{(By \Cref{item:p2-y-y'-bound} of \Cref{lem:p2-psi-y-d})}\nonumber\\
&\quad +\sqrt{\sum_{(D^1, D) \in \mathbf{D}_{\mathrm{adm}}}\abs{\gamma_{D^1,D}}^2 \frac{t}{N} }\quad\text{(By \Cref{item:p2-non-bots} of \Cref{lem:p2-psi-y-d})}\nonumber\\
& \quad+ \sqrt{\frac{1}{N}\sum_{(D^1, D) \in \mathbf{D}_{\mathrm{adm}}}\abs{\gamma_{D^1,D}}^2 t}\quad\text{(By $\abs{\im{D}\cup (\im{D^3}\oplus k_3)}\leq t $)}\nonumber\\
&\quad + \sqrt{\frac{1}{N}
  \sum_{(D^1,D)\in\mathbf{D}_{\mathrm{adm}}}
  \abs{\gamma_{D^1,D}}^2 t} \quad \text{(By $\abs{\im{D}\cup (\im{D^3}\oplus k_3)}\leq t $, $\lVert\ket{\psi^\bot_{D^1,D}}\rVert\leq1$)}\\
  &\leq O(\sqrt{t/N}).
\end{align}

 \item  Lastly, bound $\norm{[\vr,\fc]\fp_{D^2}\vf\fp_{D^2}\Gamma_1 \Pi^\star  \Pi^t}$. We do this by observing that images of $\fp_{D^2}\vf\fp_{D^2}$ mostly preserved by $\Gamma_1$, and write \begin{align}
   &\norm{[\vr,\fc]\fp_{D^2}\vf\fp_{D^2}\Gamma_1 \Pi^\star  \Pi^t}\\
   &\leq   \norm{[\vr,\fc]\Gamma_1\fp_{D^2}\vf\fp_{D^2}\Gamma_1 \Pi^\star  \Pi^t} + \norm{[\vr,\fc](\Gamma_1)^\perp\fp_{D^2}\vf\fp_{D^2}\Gamma_1 \Pi^\star  \Pi^t}\\
   &\leq   \norm{[\vr,\fc]\Gamma_1} + 2\norm{(\Gamma_1)^\perp\fp_{D^2}\vf\fp_{D^2}\Gamma_1 \Pi^\star  \Pi^t} \label{eq:p2-1c}
 \end{align} by the submultiplicativity of the norm and the triangle inequality, then bound the leftovers separately. The first term in \eqref{eq:p2-1c} is trivially zero because by $\Gamma_1$, we have $x\not\in\im{D^1}\oplus k_2$ for any computational basis state and hence $\vr$ does not act on $D^2_x$ at all. 
 
 For the second term in \eqref{eq:p2-1c}, $\fp_{D^2}\vf\fp_{D^2}$ again
produces only nonnegative coefficients on good basis states, so terms from
distinct inputs are distinct, and it suffices to bound a fixed basis state
$\ket{\psi}$ with $\Gamma_1 \Pi^\star \Pi^t \ket{\psi} = \ket{\psi}$, where
$(D^2_x)^c$ holds $D$ and all other registers take their standard values, then
 \begin{align}
\norm{\Gamma_1^\perp\fp_{D^2}\vf\fp_{D^2}\Gamma_1 \Pi^\star  \Pi^t}&\leq \norm{\Gamma_1^\perp\fp_{D^2}\vf\fp_{D^2}\ket{\psi}}\\
&=\norm{ \Gamma_1^\perp (\ket{\bot}_{D^2_x}\ket{\psi^\bot_{D^1,D}}+\ket{\not\bot}_{D^2_x}\ket{\psi^{\not\bot}_{D^1,D}})}\\
&=\norm{\ket{\not\bot}_{D^2_x}\ket{\psi^{\not\bot}_{D^1,D}}} \quad \text{(By definitions in \Cref{lem:p2-psi-y-d})}\\
&\leq O(\sqrt{t/N}) \quad \text{(By \Cref{item:p2-non-bots} of \Cref{lem:p2-psi-y-d})}
 \end{align} where in the last equality we used the fact that when $x$ is initially not in $\im{D^1}\oplus k_2$ and $D^2_x$ is preserved by  $\fp_{D^2}\vf\fp_{D^2}$ then $x$ is still not in $\im{D^1}\oplus k_2$.

    \end{enumerate}

    \item Consider $D^2_x\neq \bot$. 
    \begin{enumerate}
        \item We first bound $=\norm{[\fp_{D^2}\vf\fp_{D^2},\pu]\Gamma_1 \Pi^\star  \Pi^t}$. Observe that when $D^2_x$ is non-$\bot$ and $x\not\in \im{D^1}\oplus k_2$ due to $\Pi^g$, hence the operator $\fp_{D^2}\vf\fp_{D^2}$ preserves the computational basis states at $D^2_x$. The other operator $\pu$ is only controlled on $D^2_x$. Hence, the commutator is 0.

        \item Next, we bound $\norm{[\fp_{D^2}\vf\fp_{D^2},\fc] \Gamma_1 \Pi^\star\Pi^t}$. This is essentially similar to \Cref{it:d1x-non-bot-vf-fc} of \Cref{thm:p1-bound}, except that we additionally project out the terms in $D^2_x$ that violates the injectivity of the database $D^2$ after (de)compression operator $\fc$. Observe that $\fc$ acts only on $D^2_x$, and $\fp_{D^2}\vf\fp_{D^2}$ preserves $D^2_x$ in the computational basis, while its action on $(D^2_x)^c$ and $D^2$ depends on what $D^2_x$ register holds in order to preserve injectivity of the databases. Since orthogonality of all the registers other than $D^1$ and $(D^2_x)^c$ is preserved by this commutator, we fix all other registers in the computational basis and write a general state\begin{align}
\ket{\psi}\coloneqq \sum_{y\in[N],(D^1,D)\in\mathbf{D}_\mathrm{y,adm}}\gamma_{y,D^1,D}\ket{y}_{D^2_x}\ket{D^1}_{D^1}\ket{D}_{(D^2_x)^c}
\ket{D^3}_{D^3}\ket{k_1}_{K_1}\ket{k_2}_{K_2}\ket{k_3}_{K_3} 
    \end{align} where $D^3\in\mathbf{D}_{inj,t}$ and\begin{align}
\mathbf{D}_\mathrm{y,adm}\coloneqq\set{(D^1,D)\in \mathbf{D}_{inj,t} \times \mathbf{D}_{inj,t-1}: (D^1_x)^c, D^2 \text{ with } k_2 \text{ is good},\, y\not\in \im{D}, \im{D^3}\oplus k_3}.
    \end{align} 
Then we write the terms of the commutator, while omitting the registers that are untouched, \begin{align}
    \fc \fp_{D^2}\vf\fp_{D^2} \ket{\psi} &= \fc\Biggl( \sum_{y\in [N], (D^1,D)\in \mathbf{D}_\mathrm{y,adm}} \gamma_{y,D^1,D}\ket{y}_{D^2_x} \ket{\psi^y_{D^1,D}}\Biggr)\\
&= \sum_{y\in [N], D\in \mathbf{D}_\mathrm{y,adm} } \gamma_{y,D^1,D}\Biggl(\ket{y}-\frac{\ket{+^n}}{\sqrt{N}}+\frac{\ket{\bot}}{\sqrt{N}}\Biggr)_{D^2_x} \ket{\psi^y_{D^1,D}}
\end{align} and \begin{align}
 &\fp_{D^2}\vf\fp_{D^2}\fc\ket{\psi} \\
 &= \fp_{D^2}\vf\fp_{D^2} \Biggl(\sum_{y\in [N], (D^1,D)\in \mathbf{D}_\mathrm{y,adm}} \gamma_{y,D^1,D}\Biggl(\ket{y}-\frac{\ket{+^n}}{\sqrt{N}}+\frac{\ket{\bot}}{\sqrt{N}}\Biggr)_{D^2_x}\ket{D^1}_{D^1}\ket{D}_{(D^2_x)^c}\Biggr)\\
 &= \sum_{y\in [N], (D^1,D)\in \mathbf{D}_\mathrm{y,adm}} \gamma_{y,D^1,D}\Biggl(\ket{y}_{D^2_x}\ket{\psi^y_{D^1,D}} -\sum_{y'\in[N]:y'\not\in\im{D}\cap(\im{D^3}\oplus k_3)}\frac{\ket{y'}_{D^2_x}}{N}\ket{\psi^{y'}_{D^1,D}}\nonumber\\
 &\qquad+ \frac{\ket{\bot}_{D^2_x}}{\sqrt{N}}\ket{\psi^{\bot}_{D^1,D}} + \frac{1}{\sqrt{N}}\ket{\not\bot}_{D^2_x}\ket{\psi^{\not\bot}_{D^1,D}}\Biggr)\nonumber\\
 &\quad + \fp_{D^2}\vf\fp_{D^2}\Biggl( \sum_{y\in [N], (D^1,D)\in \mathbf{D}_\mathrm{y,adm}} \gamma_{y,D^1,D}\sum_{y'\in[N]:y'\in\im{D}\cup(\im{D^3}\oplus k_3)}-\frac{\ket{y'}_{D^2_x}}{N}\ket{D^1}_{D^1}\ket{D}_{(D^2_x)^c}\Biggr)
\end{align} where $\ket{\psi^\bot_{D^1,D}}, \ket{\psi^y_{D^1,D}}$ and $\ket{\not\bot}_{D^2_x}\ket{\psi^{\not\bot}_{D^1,D}}$ are as defined in \Cref{lem:p2-psi-y-d}. We project out the terms such that the state at $D^2_x$ and $D$ is not injective, the state at $D^2_x$ is in $\im{D^3}\oplus k_3$, and $\vf$ does not preserve the state at $D^2_x$, then we write the difference as
 \begin{align}
    &\norm{(\fc \fp_{D^2}\vf\fp_{D^2} - \fp_{D^2}\vf\fp_{D^2}\fc)\ket{\psi}}\\
    &\leq \norm{\frac{1}{N}\sum_{y\in [N], (D^1,D)\in \mathbf{D}_\mathrm{y,adm}} \gamma_{y,D^1,D}\sum_{y'\in[N]:y'\not\in\im{D}\cap(\im{D^3}\oplus k_3)}\ket{y'}_{D^2_x}\otimes(\ket{\psi^{y}_{D^1,D}}-\ket{\psi^{y'}_{D^1,D}})}\nonumber\\
    &\quad + \norm{\frac{1}{N}\sum_{y\in [N], (D^1,D)\in \mathbf{D}_\mathrm{y,adm}} \gamma_{y,D^1,D}\sum_{y'\in[N]:y'\in\im{D}\cup(\im{D^3}\oplus k_3)}\ket{y'}_{D^2_x}\ket{\psi^{y}_{D^1,D}} }\nonumber\\
    &\quad + \norm{\fp_{D^2}\vf\fp_{D^2}\Biggl( \frac{1}{N}\sum_{y\in [N], (D^1,D)\in \mathbf{D}_\mathrm{y,adm}} \gamma_{y,D^1,D}\sum_{y'\in[N]:y'\in\im{D}\cup(\im{D^3}\oplus k_3)}\ket{y'}_{D^2_x}\ket{D^1}_{D^1}\ket{D}_{(D^2_x)^c}\Biggr)}\nonumber\\
    &\quad +\norm{\frac{1}{\sqrt{N}}\sum_{y\in [N], (D^1,D)\in \mathbf{D}_\mathrm{y,adm}} \gamma_{y,D^1,D} \ket{\bot}_{D^2_x}\otimes(\ket{\psi^{y}_{D^1,D}}-\ket{\psi^{\bot}_{D^1,D}})}\nonumber\\
    &\quad + \norm{\frac{1}{\sqrt{N}}\sum_{y\in [N], (D^1,D)\in \mathbf{D}_\mathrm{y,adm}} \gamma_{y,D^1,D} \ket{\not\bot}_{D^2_x}\ket{\psi^{\not\bot}_{D^1,D}}}.\label{eq:p2-split-5-terms}\end{align} Next, we bound each term separately by applying the triangle inequality to the sum over $y$, then using orthogonality arguments in \Cref{lem:p2-psi-y-d}, and then applying Cauchy-Schwarz. The orthogonality arguments rest on the observation that, on good databases, $\fp_{D^2}\vf\fp_{D^2}$ produces only nonnegative coefficients; since it also preserves orthogonality of the inputs, the resulting terms are distinct across distinct inputs. Since the terms are bounded by essentially the same argument, we give the first in detail and treat the rest briefly.

    First term in \eqref{eq:p2-split-5-terms} is bounded as \begin{align}
        &\leq \frac{1}{N}\sum_{y\in [N]}\norm{\sum_{ (D^1,D)\in \mathbf{D}_\mathrm{y,adm}} \gamma_{y,D^1,D}\sum_{y'\in[N]:y'\not\in\im{D}\cap(\im{D^3}\oplus k_3)}\ket{y'}_{D^2_x}\otimes(\ket{\psi^{y}_{D^1,D}}-\ket{\psi^{y'}_{D^1,D}})}\\
        &= \frac{1}{N}\sum_{y\in [N]} \sqrt{\sum_{ (D^1,D)\in \mathbf{D}_\mathrm{y,adm}} \abs{\gamma_{y,D^1,D}}^2 \sum_{y'\in[N]:y'\not\in\im{D}\cap(\im{D^3}\oplus k_3)} \lVert \ket{\psi^{y}_{D^1,D}}-\ket{\psi^{y'}_{D^1,D}}\rVert^2}\\
        &\qquad \text{(By the orthogonality of $\ket{y'}$'s and $\ket{\psi^y_{D^1,D}}-\ket{\psi^{y'}_{D^1,D}}$'s as given in \Cref{lem:p2-psi-y-d})}\nonumber\\
        &\leq \frac{1}{N}\sum_{y\in [N]} \sqrt{\sum_{ (D^1,D)\in \mathbf{D}_\mathrm{y,adm}} \abs{\gamma_{y,D^1,D}}^2 \sum_{y'\in[N]:y'\not\in\im{D}\cap(\im{D^3}\oplus k_3)} \frac{t}{N}} \quad \text{(By \Cref{lem:p2-psi-y-d})}\\
        &\leq O\Big(\frac{\sqrt{t}}{N}\Big)\sum_{y\in [N]} \sqrt{\sum_{ (D^1,D)\in \mathbf{D}_\mathrm{y,adm}} \abs{\gamma_{y,D^1,D}}^2 }\\
        &\leq O\Big(\frac{\sqrt{t}}{N}\Big) \sqrt{\sum_{y\in [N]} 1^2} \sqrt{\sum_{y\in [N]}\sum_{ (D^1,D)\in \mathbf{D}_\mathrm{y,adm}}\abs{\gamma_{y,D^1,D}}^2 } \quad \text{(By Cauchy-Schwarz)}\label{eq:emsal-CS}\\
        & \leq O(\sqrt{t/N}).
    \end{align}
     Second term in \eqref{eq:p2-split-5-terms} is bounded as \begin{align}
        &\leq \frac{1}{N}\sum_{y\in [N]}\norm{\sum_{(D^1,D)\in \mathbf{D}_\mathrm{y,adm}} \gamma_{y,D^1,D}\sum_{y'\in[N]:y'\in\im{D}\cup(\im{D^3}\oplus k_3)}\ket{y'}_{D^2_x}\ket{\psi^{y}_{D^1,D}} }\\
        &\leq \frac{1}{N}\sum_{y\in [N]} \sqrt{\sum_{ (D^1,D)\in \mathbf{D}_\mathrm{y,adm}} \abs{\gamma_{y,D^1,D}}^2 \sum_{y'\in[N]:y'\in\im{D}\cup(\im{D^3}\oplus k_3)} \lVert \ket{y'}_{D^2_x}\ket{\psi^{y}_{D^1,D}}\rVert^2} \\
        &\leq \frac{1}{N}\sum_{y\in [N]} \sqrt{\sum_{ (D^1,D)\in \mathbf{D}_\mathrm{y,adm}} \abs{\gamma_{y,D^1,D}}^2 2t} \quad \text{(By $\lVert \ket{y'}_{D^2_x}\ket{\psi^{y}_{D^1,D}}\rVert\leq 1$, $\abs{\im{D}\cup(\im{D^3}\oplus k_3)}\leq 2t$)}\\
        &\leq \frac{\sqrt{2t}}{N}\sum_{y\in [N]} \sqrt{\sum_{ (D^1,D)\in \mathbf{D}_\mathrm{y,adm}} \abs{\gamma_{y,D^1,D}}^2 }\\
        & \leq O(\sqrt{t/N}). \quad \text{(By Cauchy-Schwarz as in \eqref{eq:emsal-CS})}
    \end{align}
    Third term in \eqref{eq:p2-split-5-terms} is bounded as \begin{align}
        &\leq \frac{1}{N}\sum_{y\in [N]} \norm{ \sum_{(D^1,D)\in \mathbf{D}_\mathrm{y,adm}} \gamma_{y,D^1,D}\sum_{y'\in[N]:y'\in\im{D}\cup(\im{D^3}\oplus k_3)}\ket{y'}_{D^2_x}\ket{D^1}_{D^1}\ket{D}_{(D^2_x)^c}}\\
        &\leq \frac{1}{N}\sum_{y\in [N]} \sqrt{\sum_{ (D^1,D)\in \mathbf{D}_\mathrm{y,adm}} \abs{\gamma_{y,D^1,D}}^2 \sum_{y'\in[N]:y'\in\im{D}\cup(\im{D^3}\oplus k_3)} 1} \\
        &\leq \frac{\sqrt{t}}{N}\sum_{y\in [N]} \sqrt{\sum_{ (D^1,D)\in \mathbf{D}_\mathrm{y,adm}} \abs{\gamma_{y,D^1,D}}^2} \quad \text{(By  $\abs{\im{D}\cup(\im{D^3}\oplus k_3)}\leq 2t$)}\\
        & \leq O(\sqrt{t/N}). \quad \text{(By Cauchy-Schwarz as in \eqref{eq:emsal-CS})}
    \end{align} 
    Fourth term in \eqref{eq:p2-split-5-terms} is bounded as \begin{align}
        &\leq \frac{1}{\sqrt{N}}\sum_{y\in [N]}\norm{\sum_{ (D^1,D)\in \mathbf{D}_\mathrm{y,adm}} \gamma_{y,D^1,D} \ket{\bot}_{D^2_x}\otimes(\ket{\psi^{y}_{D^1,D}}-\ket{\psi^{\bot}_{D^1,D}})}\\
         &= \frac{1}{\sqrt{N}}\sum_{y\in [N]} \sqrt{\sum_{ (D^1,D)\in \mathbf{D}_\mathrm{y,adm}} \abs{\gamma_{y,D^1,D}}^2 \lVert \ket{\psi^{y}_{D^1,D}}-\ket{\psi^{\bot}_{D^1,D}}\rVert^2}\\
        &\qquad \text{(By the orthogonality of $\ket{y'}$'s and $\ket{\psi^y_{D^1,D}}-\ket{\psi^{\bot}_{D^1,D}}$'s as given in \Cref{lem:p2-psi-y-d})}\nonumber\\
        &\leq \frac{1}{\sqrt{N}}\sum_{y\in [N]} \sqrt{\sum_{ (D^1,D)\in \mathbf{D}_\mathrm{y,adm}} \abs{\gamma_{y,D^1,D}}^2 \frac{t}{N}} \quad \text{(By \Cref{lem:p2-psi-y-d})}\\
        &\leq O\Big(\frac{\sqrt{t}}{N}\Big)\sum_{y\in [N]} \sqrt{\sum_{ (D^1,D)\in \mathbf{D}_\mathrm{y,adm}} \abs{\gamma_{y,D^1,D}}^2 }\\
        & \leq O(\sqrt{t/N}). \quad \text{(By Cauchy-Schwarz as in \eqref{eq:emsal-CS})}\\
    \end{align}
    Fifth term in \eqref{eq:p2-split-5-terms} is bounded as \begin{align}
        &\leq \frac{1}{\sqrt{N}}\sum_{y\in [N]}\norm{\sum_{ (D^1,D)\in \mathbf{D}_\mathrm{y,adm}} \gamma_{y,D^1,D} \ket{\not\bot}_{D^2_x}\ket{\psi^{\not\bot}_{D^1,D}}}\\
        &\leq \frac{1}{\sqrt{N}}\sum_{y\in [N]} \sqrt{\sum_{ (D^1,D)\in \mathbf{D}_\mathrm{y,adm}} \abs{\gamma_{y,D^1,D}}^2 \lVert \ket{\not\bot}_{D^2_x}\ket{\psi^{\not\bot}_{D^1,D}} \rVert^2} \\
        &\leq \frac{1}{\sqrt{N}}\sum_{y\in [N]} \sqrt{\sum_{ (D^1,D)\in \mathbf{D}_\mathrm{y,adm}} \abs{\gamma_{y,D^1,D}}^2 \frac{t}{N}} \quad \text{(By \Cref{lem:p2-psi-y-d})}\\
        &\leq \frac{\sqrt{t}}{N}\sum_{y\in [N]} \sqrt{\sum_{ (D^1,D)\in \mathbf{D}_\mathrm{y,adm}} \abs{\gamma_{y,D^1,D}}^2}\\
        & \leq O(\sqrt{t/N}). \quad \text{(By Cauchy-Schwarz as in \eqref{eq:emsal-CS})}
    \end{align}

\item Lastly, bound the second term in \eqref{eq:p2-comm-vf-fc}. Observe that for any state preserved by $\Pi^\star$, we know $\Pi^g$ gives $x$ is not in $\im{D^1}\oplus k_2$. By definition, in order to preserve injectivity $\vf$ never puts $x\oplus k_2$ in $\im{D^1}$. Hence, $\vr$ never visits $D^2_x$ as a control register whether applied before or after $\fc$. As a result, what $\fc$ does at $D^2_x$ is irrelevant, and the difference of the commutator is 0.
        
    \end{enumerate} 
\end{enumerate}
      
\end{proof}
\section{$P_3$-query Bound Deferred Proofs}\label{app:p3-query-lemma}

Using \Cref{thm:inv-to-fwd-hybrid}, it is enough to consider the difference between the hybrids for forward queries, i.e., consider the real and ideal world oracle queries for $b=0$ and $i=3$.

In this query the oracles in the real and ideal worlds differ. To return the same answer to the adversary's output register, the isometry must perform the fill and the removal on the queried register of the database, namely $D^3_x$. This requires the factor $\vf\ppar{x \oplus k_1}$ to be active, that is, to perform the fill $\ket{\bot}\ket{\bot} \mapsto \ket{\psi_{k_2}}$, which in turn requires that both of its target registers be $\bot$; $\Gamma_1$ and $\Gamma_2$ enforce this for the relevant indices of $D^1$ and $D^2$ registers, respectively. Once $\vf\ppar{x \oplus k_1}$ is active, $\vr\ppar{x \oplus k_1}$ automatically does the rest of the work of uncomputing $D^3_x$.

The remaining factors of the isometry do not play a role here. We therefore split the isometry into a product controlled on the query register as in \Cref{def:vp-vpc}. After bounding the difference between $\vpc\vp$ and $V$ in \Cref{lem:p-vpvpc-V}, we discard the irrelevant parts, namely $\vpc$, by bounding its commutator with $\cco^R$ and using the unitarity of the norm in \Cref{lem:comm-real-vpc}. Hence, we have removed the order dependence of the factors in $V$. 

 \begin{definition} \label{def:vp-vpc}
    \begin{align}
        \vp&\coloneqq \sum_{x,k_1\in[N]} \proj{x}_X \otimes \proj{k_1}_{K_1} \otimes \vr\ppar{x\oplus k_1}\fp_{D_2} \vf\ppar{x\oplus k_1} \fp_{D_2} \label{eq:def-vp}\\
        \vpc&\coloneqq \sum_{x,k_1\in[N]} \proj{x}_X \otimes \proj{k_1}_{K_1} \otimes \prod_{\substack{x'\in[N]:\\x'\neq x\oplus k_1}}\vr\ppar{x'}\fp_{D_2}  \prod_{\substack{x'\in[N]:\\x'\neq x\oplus k_1}}\vf\ppar{x'} \fp_{D_2}
    \end{align}
    For fixed $x, k_1\in[N]$ at registers $X,K_1$, respectively, we use the notation
    \begin{align}
      \vfpc\coloneqq  \prod_{x'\in[N]:x'\neq x\oplus k_1}\vf\ppar{x'} \text{ and } \vrpc\coloneqq  \prod_{x'\in[N]:x'\neq x\oplus k_1}\vr\ppar{x'}. 
    \end{align}
\end{definition}

\begin{lemma}\label{lem:p-vpvpc-V}
\begin{align}
    \norm{(V-\vpc\vp)\Pi^\star\Pi^t}=0
\end{align}
\end{lemma}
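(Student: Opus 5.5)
Since $X$ and $K_1$ are preserved, the plan is to fix them to computational basis values $x,k_1$ and set $x^\ast\coloneqq x\oplus k_1$. We then need to show that on every basis state in the range of $\Pi^\star\Pi^t$,
\begin{align*}
\vr\fp_{D^2}\vf\fp_{D^2}\ket{\psi}=\vrpc\fp_{D^2}\vfpc\fp_{D^2}\,\vr\ppar{x^\ast}\fp_{D^2}\vf\ppar{x^\ast}\fp_{D^2}\ket{\psi}.
\end{align*}
Both sides are linear, so it suffices to check this on the basis states spanning the range, together with the good superposition over $k_2$. By \Cref{rem:fill-or-identity}, on good databases every factor of $\vf$ either performs the fill $\ket{\bot}\ket{\bot}\mapsto\ket{\psi_{k_2}}$ or acts as the identity, and all fills draw from the common set $J$ of \Cref{def:J}. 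By \Cref{rem:comm-vf-factors}, the factors $\vf\ppar{x'}$ then pairwise commute. This lets us pull $\vf\ppar{x^\ast}$ to the right end of the product $\vf=\vfpc\,\vf\ppar{x^\ast}$. The same argument, with targets of the form $D^3_{x'\oplus k_1}$, should apply to $\vr$.

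The first step is the factorization $\vf=\vfpc\vf\ppar{x^\ast}$ on good injective inputs, using \Cref{rem:comm-vf-factors}. The second step is to move $\vr\ppar{x^\ast}$ past $\fp_{D^2}\vfpc\fp_{D^2}$, so that $\vr\ppar{x^\ast}$ sits immediately after $\fp_{D^2}\vf\ppar{x^\ast}\fp_{D^2}$. The operator $\vr\ppar{x^\ast}$ is controlled on $D^1_{x^\ast}$ and on $D^2_{D^1(x^\ast)\oplus k_2}$, and it targets only $D^3_{x}$. The factors of $\vfpc$ touch only $D^1_{x'}$ with $x'\neq x^\ast$, and $D^2$ entries at points $u\oplus k_2$ with $u\neq D^1(x^\ast)$, where distinctness comes from injectivity of the fill. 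They are controlled on $D^3_{x'\oplus k_1}$ with $x'\oplus k_1\neq x$, and on the complements of their targets, meaning images of $D^1$ and domains of $D^2$.

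The part that needs care is whether $\vr\ppar{x^\ast}$ changes $D^3_x$ in a way that affects the controls of another $\vf\ppar{x'}$. It does not, because $\vf\ppar{x'}$ reads only $D^3_{x'\oplus k_1}$. Conversely, whether $\vf\ppar{x^\ast}$ acts must not depend on the other fills. This is the condition that its targets are $\bot$ and that $J$ is shared, which \Cref{rem:comm-vf-factors} gives. The flips $\fp_{D^2}$ are harmless because injectivity is preserved by the fills and by $\vr$ on good databases, as argued in the proof of \Cref{lem:v-dm-comm}. Hence conjugation by $\fp_{D^2}$ commutes with the reordering.

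I expect the main obstacle to be the subtle case in which the values $D^3(x'\oplus k_1)$ are not pairwise distinct, so that two factors of $\vf$ compete for the same $D^2$ target. In that case \Cref{def:set-S} warns that the action depends on the order of the factors. Here I would use injectivity of $D^3$, which $\Pi^{inj}$ enforces. Injectivity forces the values $D^3(x'\oplus k_1)$ to be pairwise distinct, so the $D^2$ targets $y\oplus k_3$ are distinct too. The exceptional case then never occurs on the range of $\Pi^\star\Pi^t$, and the operator difference vanishes exactly there. This gives norm $0$.
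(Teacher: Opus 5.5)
Your proposal is correct and follows essentially the same route as the paper: you reorder the $\vr$ factors (disjoint $D^3$ targets), merge $\vfpc\vf\ppar{x\oplus k_1}=\vf$ via \Cref{rem:comm-vf-factors} using injectivity of $D^3$ and goodness, and commute $\vr\ppar{x\oplus k_1}$ past $\fp_{D^2}\vfpc\fp_{D^2}$ because every fill draws from $J$, which excludes $D^1(x\oplus k_1)$. The one imprecision is your appeal to \Cref{lem:v-dm-comm} for the flips; all you need there is $\fp_{D^2}^2=\Id$, together with the fact that $\vr\ppar{x\oplus k_1}$ leaves $D^1$ and $D^2$ untouched.
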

\begin{proof}
    Fix $X$ and $K_1$ registers to $x$ and $k_1$, respectively, in computational basis since they are control registers. Then write \begin{align}
         &\norm{(V-\vpc\vp)\Pi^\star\Pi^t}\\
         &= \lVert(V\pm \vrpc\vr\ppar{x\oplus k_1}\fp_{D^2}\vfpc\fp_{D^2}\fp_{D^2}\vf\ppar{x\oplus k_1}\fp_{D^2} \nonumber\\
         &\quad-\vrpc\fp_{D^2}\vfpc\fp_{D^2}\vr\ppar{x\oplus k_1}\fp_{D^2}\vf\ppar{x\oplus k_1}\fp_{D^2})\Pi^\star\Pi^t\rVert\\
         &\leq \norm{(V -  \vrpc\vr\ppar{x\oplus k_1}\fp_{D^2}\vfpc\fp_{D^2}\fp_{D^2}\vf\ppar{x\oplus k_1}\fp_{D^2})\Pi^\star\Pi^t} \nonumber\\
         &\quad + \norm{\vrpc\Big[\vr\ppar{x\oplus k_1},\fp_{D^2}\vfpc\fp_{D^2}\Big] \fp_{D^2}\vf\ppar{x\oplus k_1}\fp_{D^2}\Pi^\star\Pi^t}.\label{eq:v-vpvpc-split}
    \end{align} 
    
    Observe that the first term in \eqref{eq:v-vpvpc-split} vanishes as \begin{align}
         &\norm{(V -  \vrpc\vr\ppar{x\oplus k_1}\fp_{D^2}\vfpc\fp_{D^2}\fp_{D^2}\vf\ppar{x\oplus k_1}\fp_{D^2})\Pi^\star\Pi^t}\\
         &=\norm{(V -  \vr\fp_{D^2}\vfpc\vf\ppar{x\oplus k_1}\fp_{D^2})\Pi^\star\Pi^t}\\
         &=\norm{(V -  \vr\fp_{D^2}\vf\fp_{D^2})\Pi^\star\Pi^t}
    \end{align} where the first equality is because $\vr\ppar{\cdot}$'s commute pairwise as they have disjoint targets and $\fp_{D^2}$ is self-inverse, and the second equality follows from \Cref{rem:comm-vf-factors} asserting on injective databases $\vf\ppar{\cdot}$'s have disjoint targets and on good databases the set of values to fill is fixed for all.

    The second term in \eqref{eq:v-vpvpc-split} also vanishes. Notice that operators of the commutator could only have a conflicting register on $D^2$ since they act on disjoint indices of $D^1$ and $D^3$. Both operators preserve $D^1_{x\oplus k_1}$, it is enough to consider the cases that it is $\bot$ or not. If it is $\bot$, $\vr\ppar{x\oplus k_1}$ acts as identity anyway and the commutator vanishes. If it is non-$\bot$, start by fixing $J$ for $\vpc$ as defined in \Cref{def:J} and note that $\vr\ppar{x\oplus k_1}$ preserves $J$ since it does not act on $D^1$ or $D^2$. Because $D^1_{x\oplus k_1}$ is non-$\bot$, $u\not\in J$. Hence, $D^2_{D^1(x)\oplus k_2}$ is not a target for $\fp_{D^2}\vfpc\fp_{D^2}$, it is preserved, and  is the only register $\vr\ppar{x\oplus k_1}$ acts on $D^2$.
\end{proof}

\begin{lemma}\label{lem:comm-real-vpc}
\begin{align}
   \norm{[\vpc,\cco^R]\vp\Gamma_2\Gamma_1\Pi^\star\Pi^t\Pi^{k_1} \Pi^{k_3}} \leq O(\sqrt{t/N}) .
\end{align}
\end{lemma}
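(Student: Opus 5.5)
We fix $X$, $K_1$ in the computational basis to $x,k_1$, so that $\vpc$ becomes $\vrpc\fp_{D^2}\vfpc\fp_{D^2}$, with all factors indexed by $x'\neq x\oplus k_1$. Using \Cref{rem:construction-query}, we write the real forward construction query as $\cco^R=\fc\ppar{1}\cco\ppar{2}\fc\ppar{1}$ with $\cco\ppar{2}=\fc\ppar{2}\pu\ppar{em}\fc\ppar{2}$. The decompressions act on $D^1_{x\oplus k_1}$ and on $D^2_{D^1(x\oplus k_1)\oplus k_2}$. We then expand the commutator by the product rule into commutators of $\vpc$ with the individual factors $\fc\ppar{1}$, $\fc\ppar{2}$ and $\pu\ppar{em}$, inserting $\Pi^\star\Pi^{t+c}$ between factors at cost $O(\sqrt{t/N})$ each, by \Cref{lem:pi-t-violation} and \Cref{lem:pi-star-violation}. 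Since $\vp$ at most adds one entry to $D^1$ and $D^2$ and removes one from $D^3$, we may also replace $\Pi^t$ by $\Pi^{t+2}$ after $\vp$.

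The key structural step is to describe the state $\vp\Gamma_2\Gamma_1\Pi^\star\Pi^t\ket{\psi}$. By $\Gamma_1$ and $\Gamma_2$, both targets of $\vf\ppar{x\oplus k_1}$ are $\bot$ whenever $D^3_x\neq\bot$. Hence $\vp$ fills $D^1_{x\oplus k_1}=u$ and $D^2_{u\oplus k_2}=D^3(x)\oplus k_3$ in uniform superposition over $u\in J$, and then clears $D^3_x$. If $D^3_x=\bot$, $\vp$ acts trivially and $D^1_{x\oplus k_1}=\bot$.

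In either case the registers touched by $\cco^R$ are $D^1_{x\oplus k_1}$ and one $D^2$ register reached from it. So it suffices to show that $\vpc$ neither acts on nor is controlled by these registers, except on a small-amplitude part. For $\vrpc$ this holds because its factors are controlled on $D^1_{x'}$ with $x'\neq x\oplus k_1$, and act on $D^2_{D^1(x')\oplus k_2}$ only as a control. For $\fp_{D^2}\vfpc\fp_{D^2}$, the admissible set $J$ depends on $\im{D^1}$ and $\dom{D^2}$. Hence when $\fc\ppar{1}$ or $\fc\ppar{2}$ changes $D^1_{x\oplus k_1}$ or the paired $D^2$ entry, the set $J$ of the other fills shifts by one element. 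This is exactly the situation of \Cref{lem:p1-psi-y-d} and \Cref{lem:p2-psi-y-d}. We would reuse those bounds directly: the states $\ket{\psi^y_D}$ for different values $y$ at the decompressed register differ by $O(\sqrt{t/N})$, and the orthogonality across $D$ follows from positivity on good databases (\Cref{rem:fill-or-identity}). This gives $\norm{[\fp_{D^2}\vfpc\fp_{D^2},\fc\ppar{1}]}$ and the analogous bound for $\fc\ppar{2}$ on these states, via the same Cauchy--Schwarz computation as in \Cref{thm:p1-bound}, case (2a).

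The main obstacle is the $\fc\ppar{2}$ and $\pu\ppar{em}$ terms. Their target register $D^2_{u\oplus k_2}$ is determined coherently by the superposed $u$, and the $D^2$ target of a factor $\vf\ppar{x'}$ is $D^2_{D^3(x'\oplus k_1)\oplus k_3}$, which is fixed. A collision between the two requires $D^3(x'\oplus k_1)\oplus k_3 = u\oplus k_2$ for some active $x'$. We would bound this amplitude by the uniformity of $u$ over $J$, with $\abs{J}\geq N-2t$ and at most $t$ bad values, giving $O(\sqrt{t/N})$. We would also use $\Pi^{k_3}$ as a fallback, in the spirit of \Cref{lem:gamma2-perp-bound}. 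A further complication is whether the flip on $D^2$ inside $\vpc$ remains valid after $\fc\ppar{2}$ introduces a value colliding with $\im{D^2}$. We would handle this by replacing $\fc$ with the sanitized $\widetilde\fc$ of \Cref{lem:diff-full-and-partial-decomp}, which excludes $\im{D^2}$ at $O(\sqrt{t/N})$ cost. On the remaining branch the operators act on disjoint registers and commute exactly, so collecting the terms yields the claimed $O(\sqrt{t/N})$.
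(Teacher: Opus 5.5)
Your skeleton matches the paper's: fix $x,k_1$, expand $\cco^R=\fc\ppar{1}\w\fc\ppar{1}$ by the product rule, split on $D^3_x$, sanitize decompressions, and reuse the $P_1$-type argument for the $\fc\ppar{1}$ commutator. However, two steps fail as written.

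\textbf{Inserting $\Pi^\star$ after $\vp$.} You cannot insert $\Pi^\star$ after $\vp$, nor appeal to positivity ``on good databases'' there. On the $D^3_x\neq\bot$ component, $\vp$ sets $D^1(x\oplus k_1)=u$ and $D^2(u\oplus k_2)=D^3(x)\oplus k_3$ on every $k_2$ branch. Hence $u\oplus k_2\in\dom{D^2}\cap(\im{D^1}\oplus k_2)$, i.e.\ $k_2\notin\kappa_{D^1,D^2}$. The whole component is therefore orthogonal to $\Pi^g$, and the insertion costs $O(1)$, not $O(\sqrt{t/N})$. \Cref{lem:pi-star-violation} does not help, since its input must already lie in the range of $\Pi^g$. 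The same happens after $\w$ whenever it decompresses, because it creates exactly the same linked pair. The paper's remedy is that goodness survives for $D^1$ restricted to $[N]\setminus\{x\oplus k_1\}$, which contains every active index of $\vfpc$. This is item~(2) of \Cref{rem:fill-or-identity}, and afterwards only $\Pi^{inj}$, never $\Pi^g$, is restored after $\fc\ppar{1}$ and $\w$.

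\textbf{The ``main obstacle'' compares indices in different orientations.} Since $\vf\ppar{x'}$ sits between two flips, its $D^2$ target $D^2_{D^3(x'\oplus k_1)\oplus k_3}$ is a register of $(D^2)^{-1}$, i.e.\ it is indexed by an \emph{output} value of $D^2$. Your condition $D^3(x'\oplus k_1)\oplus k_3=u\oplus k_2$ equates it with an input index, so it is not the relevant event. Uniformity of $u$ over $J$ therefore does not control the real interaction; that uniformity also holds only when $D^3_x\neq\bot$, since otherwise $u$ comes from $\fc\ppar{1}$.

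The real interaction goes through the value $w$ that $\fc\ppar{2}$ decompresses into $D^2_{u\oplus k_2}$. Suppose $w=D^3(x'\oplus k_1)\oplus k_3$ for an active $x'$. In the order $\vfpc\,\w$, the factor $\vf\ppar{x'}$ is blocked because its flipped target is occupied. In the order $\w\,\vfpc$, it fills first and $\fc\ppar{2}$ then collides with it. Sanitizing against the current $\im{D^2}$ does not fix this, because $\vfpc$ itself adds exactly these values to $\im{D^2}$. Your sanitized operator is thus order-dependent, and the claim that the remaining branch ``commutes exactly'' is false.

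The paper's fix has two parts. It uses $\widetilde{\fc\ppar{2}}$ excluding $\im{(D^3_x)^c}\oplus k_3$, a set left invariant by $\vfpc$. It also uses $\widetilde{\fc\ppar{1}}$ excluding $\im{D^1}\cup(\dom{D^2}\oplus k_2)$, so that $\w$ acts only on a $\bot$ register $D^2_{u\oplus k_2}$ with $u\in\im{D^1}$; hence $u\notin J$ and $J$ is unchanged. After projecting out the residual non-injectivity of $D^2$ with $\Pi^{inj}_{D^2}$ at cost $O(\sqrt{t/N})$, the commutator $[\widetilde{\w},\vfpc]$ vanishes on the relevant states.
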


\begin{proof}
 Throughout, we fix the $X$ and the key registers in the computational basis in a standard way since they are controlled on or trivially acted on by all operators. First, factor $\cco^R$ as $\fc\ppar{1}\cdot \w\cdot \fc\ppar{1}$ where $\w\coloneqq \fc\ppar{2}\cdot\pu^{em}\cdot\fc\ppar{2}$ and denote $\Pi\coloneqq \Gamma_2\Gamma_1\Pi^\star\Pi^t\Pi^{k_3}$, then we reduce the commutator to sum of simpler commutators as \begin{align}
      & \lVert[\vpc,\cco^R]\vp\Pi\rVert \\
       &\leq  \lVert [\fc\ppar{1},\vpc]\w \fc\ppar{1} \vp\Pi\rVert+ \lVert\fc\ppar{1}[\w,\vpc]\fc\ppar{1}\vp\Pi \rVert \nonumber+\lVert \fc\ppar{1} \w[\fc\ppar{1},\vpc]\vp\Pi\rVert\nonumber\\
       &\leq  \lVert[\w,\vpc]\fc\ppar{1}\vp\Pi \rVert + \lVert [\fc\ppar{1},\vpc]\w \fc\ppar{1} \vp\Pi\rVert+\lVert [\fc\ppar{1},\vpc]\vp\Pi \rVert
       \label{eq:splitted-vpc-cor}
    \end{align} where we used the unitarity of the norm on the last line. We bound each term separately.
    \begin{enumerate}[leftmargin=0pt, labelindent=0pt, itemindent=*, label=(\arabic*)]
        \item  We start with the first term in \eqref{eq:splitted-vpc-cor}. Use the triangle inequality to simplify it: \begin{align}
         \lVert[\w,\vpc]\fc\ppar{1}\vp\Pi \rVert &\leq \lVert\vrpc[\w,\vfpc]\fc\ppar{1}\vp\Pi \rVert +    \lVert[\w,\vrpc]\vfpc\fc\ppar{1}\vp\Pi \rVert\\
         &\leq \lVert[\w,\vfpc]\fc\ppar{1}\vp\Pi \rVert +    \lVert[\w,\vrpc]\vfpc\fc\ppar{1}\vp\Pi \rVert.
        \end{align} We will classify with respect to the state of $D^3_x$ being $\bot$ or not.
        \begin{enumerate} [leftmargin=0pt, labelindent=0pt, itemindent=*, label=(\alph*)]
            \item If $D^3_x=\bot$, then $\vp$ acts as identity by definition and the fact that $\Gamma_1$ sets $D^1_{x\oplus k_1}=\bot$. Define a partial decompression operator $\widetilde{\fc\ppar{1}}$ controlled on $\im{D^1}$ and $\dom{D^2}\oplus k_2$ so that they are not attained as image of $D^1_{x\oplus k_1}$, i.e.,
            \begin{align}\label{eq:fcppar1-tilde}
              \widetilde{\fc\ppar{1}}\coloneqq \sum_{x, k_1, k_2,k_3\in [N], (D^3_x)^c,D^2\in\mathbf{D}_{t}}
  \proj{x}_X \otimes \proj{k_1}_{K_1} \otimes \proj{k_2}_{K_2} \otimes\\ \otimes \proj{(D^1_x)^c}_{(D^1_x)^c}
  \otimes \proj{D^2}_{D^2} \otimes \tilde{\fc}_{D^1_{x \oplus k_1}}
\end{align}
where $\widetilde{\fc}_x\coloneqq \operatorname{Exc}\big(\ket{\bot}, \ket{[N]\setminus \im{(D^1_x)^c} \cup (\dom{D^2}\oplus k_2)}\big)$. Note that by \Cref{lem:diff-full-and-partial-decomp}, $\lVert (\fc\ppar{1}- \widetilde{\fc\ppar{1}})\Pi^t\rVert\leq O(\sqrt{t/N})$. Observe that using $\widetilde{\fc\ppar{1}}$, we make sure that $\w$ acts on the registers at $D^2$ that holds $\bot$ only. Similarly, replace the decompression operator $\fc\ppar{2}$ in $\w$ by $\widetilde{\w}$ such that it is controlled on $\im{(D^3_x)^c}\oplus k_3$,\begin{align}
                \widetilde{\w} \coloneqq  \widetilde{\fc\ppar{2}}\cdot\pu^{em}\cdot\widetilde{\fc\ppar{2}}  \label{eq:w-tilde}
            \end{align}where \begin{align}
    \widetilde{\fc\ppar{2}} &\coloneqq \sum_{\substack{x, y_1, k_1, k_2, k_3 \in [N] \\ (D^3_x)^c \in \mathbf{D}_t}}
    \proj{x}_X \otimes \proj{k_1}_{K_1} \otimes \proj{k_2}_{K_2} \otimes \proj{k_3}_{K_3} \nonumber\\
    &\otimes \proj{(D^3_x)^c}_{(D^3_x)^c} \otimes \proj{y_1}_{D^1_{x \oplus k_1}} \otimes \tilde{\fc}_{D^2_{y_1 \oplus k_2}}
\end{align}
where $\widetilde{\fc}_x\coloneqq \operatorname{Exc}\big(\ket{\bot}, \ket{[N]\setminus im{(D^3_x)^c}\oplus k_3 }\big)$, hence it does not introduce any value in $(D^3_x)^c \oplus k_3$ to $D^2$ after $\widetilde{\w}$ is applied, and again by \Cref{lem:diff-full-and-partial-decomp}, $\lVert (\w- \widetilde{\w})\Pi^t\rVert\leq O(\sqrt{t/N})$.\footnote{The lemma is stated for a single (de)compression; the case of an oracle query reduces to a difference of (de)compressions by the triangle inequality, since big-$O$ notation is insensitive to a constant number of additional terms.} Combining this with the preserved injectivity of $D^1$ by $\widetilde{\fc\ppar{1}}$, we see that $\widetilde{\w}$ and $\vfpc$ act on disjoint registers. Another interaction between the two is through the set $J$ as in \Cref{def:J} for $\vfpc$, which is preserved by $\widetilde{\w}$, since the relation between the two is through the domain of $D^2$, and the indices on which $\widetilde{\w}$ acts are already outside $J$, as they lie in $\im{D^1} \oplus k_2$. The only thing between the two is $\widetilde{\w}$ may result in failure of the flip, once we project out these, the difference vanishes. As a result,
            \begin{align}
                 &\lVert[\w,\vfpc]\fc\ppar{1}\vp\Pi \rVert \\
                 &\leq \lVert[\w,\vfpc]\widetilde{\fc\ppar{1}}\vp\Pi \rVert + 2\lVert \fc\ppar{1}- \widetilde{\fc\ppar{1}}\rVert\\
                 &\leq \lVert[\widetilde{\w},\vfpc]\widetilde{\fc\ppar{1}}\vp\Pi \rVert + \lVert  (\w- \widetilde{\w} ) \vfpc\widetilde{\fc\ppar{1}}\vp\Pi \rVert+ \lVert  (\w- \widetilde{\w} ) \widetilde{\fc\ppar{1}}\vp\Pi \rVert + O(\sqrt{t/N})\\
                 &\leq \lVert[\widetilde{\w},\vfpc]\widetilde{\fc\ppar{1}}\vp\Pi \rVert + \lVert  (\w- \widetilde{\w} ) \Pi^{2t} \rVert+ \lVert  (\w- \widetilde{\w} ) \Pi^{t+1} \rVert + O(\sqrt{t/N})\\
                 &\leq \lVert[\widetilde{\w},\vfpc]\widetilde{\fc\ppar{1}}\vp\Pi \rVert + O(\sqrt{t/N})\\
                 &\leq  \lVert(\Pi^{inj}_{D^2}\widetilde{\w}\vfpc- \vfpc\Pi^{inj}_{D^2}\widetilde{\w})\widetilde{\fc\ppar{1}}\vp\Pi \rVert + \lVert (\Pi^{inj}_{D^2})^\perp \widetilde{\w}\vfpc \widetilde{\fc\ppar{1}}\vp\Pi \rVert \nonumber\\
                 &\quad+ \lVert (\Pi^{inj}_{D^2})^\perp \widetilde{\w}\widetilde{\fc\ppar{1}}\vp\Pi \rVert  + O(\sqrt{t/N})\\
                 & \leq  \lVert(\Pi^{inj}_{D^2}\widetilde{\w}\vfpc- \vfpc\Pi^{inj}_{D^2}\widetilde{\w})\widetilde{\fc\ppar{1}}\vp\Pi \rVert + O(\sqrt{t/N})\\
                 &= 0 + O(\sqrt{t/N})
            \end{align} where the line before the last follows from \begin{align}
                &\lVert (\Pi^{inj}_{D^2})^\perp \widetilde{\w}\vfpc \widetilde{\fc\ppar{1}}\vp\Pi \rVert+ \lVert (\Pi^{inj}_{D^2})^\perp \widetilde{\w}\widetilde{\fc\ppar{1}}\vp\Pi \rVert\\
                &\leq \lVert (\Pi^{inj}_{D^2})^\perp \widetilde{\w}\Pi^{inj}_{D^2}\Pi^{2t} \rVert+ \lVert (\Pi^{inj}_{D^2})^\perp \widetilde{\w}\Pi^{inj}_{D^2}\Pi^{t+1} \rVert \\
                & \leq O(\sqrt{t/N-2t})
            \end{align} using the fact that all of $\vp, \vfpc, \widetilde{\fc\ppar{1}}$  preserve the injectivity of $D^2$ by definition and increase the size of the database by at most a constant factor, and $\widetilde{\w}$ is a decompression, using \Cref{lem:pi-star-violation}.

            Furthermore, by only using the injective preserving property of $\widetilde{\fc\ppar{1}}$ and the fact that $\fp_{D^2}\vfpc\fp_{D^2}$ preserves the injectivity by definition, we see that $D^1(x\oplus k_1)\oplus$ is not a control register for $\vrpc$, hence $\w$ and $\vrpc$ acts on disjoint registers and commute perfectly, i.e., \begin{align}
                &\lVert[\w,\vrpc]\vfpc\fc\ppar{1}\vp\Pi \rVert \\
                 &\leq \lVert[\w,\vrpc]\vfpc\widetilde{\fc\ppar{1}}\vp\Pi \rVert + 2\lVert \fc\ppar{1}- \widetilde{\fc\ppar{1}}\rVert\\
                 &\leq 0 + O(\sqrt{t/N}).
            \end{align}
        \end{enumerate}
\begin{enumerate}[leftmargin=0pt, labelindent=0pt, itemindent=*, label=(\alph*)]
    \item If $D^3_x\neq\bot$, then image of a computational basis database state $\ket{\psi}\in\Pi$ under $\vp$ is of the form, omitting untouched registers, \begin{align}
        \vp\ket{\psi}=\frac{1}{\sqrt{J}}\sum_{u\in J} \ket{u}_{D^1_{x\oplus k_1}}\ket{D^3(x\oplus k_1)\oplus k_3}_{D^2(u\oplus k_2)}.
    \end{align} Replacing $\fc\ppar{1}$ with $\widetilde{\fc\ppar{1}}$ as defined in \eqref{eq:fcppar1-tilde}, since it preserves $\dom{D^2 \oplus k_2}$ values, it acts as the identity on $\vp\ket{\psi}$ for every branch $u$. Replacing $\w$ with $\widetilde{\w}$ as defined in \Cref{eq:w-tilde}, the decompression neither introduces values that are targets of $\vfpc$ nor acts on them, by the injectivity preserving property of $\vfpc$ and the fact that the targets of $\widetilde{\w}$ already lie in $\im{D^1} \oplus k_2$. Therefore $\widetilde{\w}$ preserves $J$ for $\vfpc$. Consequently, removing the injectivity violation on $D^2$ introduced by $\w$, the remaining difference vanishes, and the exact same derivation as in the case $D^3_x = \bot$ concludes the proof giving \begin{align}
\lVert[\w,\vfpc]\fc\ppar{1}\vp\Pi \rVert \leq O(\sqrt{t/N}),        
    \end{align} as well as, \begin{align}
        \lVert[\w,\vrpc]\vfpc\fc\ppar{1}\vp\Pi \rVert\leq O(\sqrt{t/N}).
    \end{align}    
\end{enumerate}

\item  We bound $\lVert [\fc\ppar{1},\vpc]\vp\Pi \rVert$. We simplify the commutator to argue more explicitly \begin{align}
 &\norm{[\fc\ppar{1},\vpc]\vp\Pi } \leq  \norm{[\fc\ppar{1}, \vrpc] \fp_{D^2}\vfpc\fp_{D^2}} + \norm{[\fc\ppar{1}, \fp_{D^2}\vfpc\fp_{D^2}] \vp\Pi }\label{eq:comm-fc-vpc-split}
\end{align} We bound this by observing its similarity to
$\norm{[V,\fc]\Gamma_1\Pi^\star\Pi^t}$ in \eqref{eq:comm-v-fc} for the
$P_1$-query. The argument there rests on the following: before the
commutator, the database satisfies one of the criteria of
\Cref{rem:fill-or-identity}, so that every active $\vf\ppar{\cdot}$ either
performs the fill $\ket{\bot}\ket{\bot}\mapsto\ket{\psi_{k_2}}$ or acts as
the identity; consequently, on such computational basis states, the first
part of the isometry $\fp_{D^2}\vf\fp_{D^2}$ creates only terms with
positive coefficients. Here the situation is even simpler, since $\vpc$ does
not contain the factor $\vr\ppar{x\oplus k_1}$, so the action of
$\fc\ppar{1}$ makes no difference for the second part of the isometry. We
therefore do not rederive the bound, and only explain why the state
satisfies a criterion of \Cref{rem:fill-or-identity}.

Consider a state $\ket{\psi}$ in the image of $\Pi$; in particular
$\Pi^g\ket{\psi}=\ket{\psi}$. The state $\vp\ket{\psi}$ is no longer in the
image of $\Pi^g$. If, however, we restrict the index of the register $D^1$
to $[N]\setminus\set{x\oplus k_1}$, then the restricted database does
satisfy \Cref{def:pi-g}. Indeed, $\vp$ either acts as the identity (when
$D^3_x=\bot$) or performs the fill
$\ket{\bot}\ket{\bot}\mapsto\ket{\psi_{k_2}}$ (when $D^3_x\neq\bot$), and by
definition the indices of the registers filled at $D^2$ do not lie in
$\im{D^1}\oplus k_2$ for the initial database $D^1$. Hence the restricted
database after $\vp$ is good. Since every $x'$ for which $\vf\ppar{x'}$ is
active in $\vpc$ already lies in $[N]\setminus\set{x\oplus k_1}$,
\Cref{item:rem:fill-id-restricted-D1} of \Cref{rem:fill-or-identity} applies:
$\vpc$ performs only the fill $\ket{\bot}\ket{\bot}\mapsto\ket{\psi_{k_2}}$
or acts as the identity, so we may fix a set $J$ as in \Cref{def:J} for any
computational basis state before applying the commutator.

For the first term in \eqref{eq:comm-fc-vpc-split}, the only conflicting
register is $D^1_{x\oplus k_1}$: it is a target of $\fc\ppar{1}$ and a
control of $\vfpc$. The dependence is that $\vf\ppar{x'}$, for
$x'\neq x\oplus k_1$, no longer fills with the value held by
$D^1_{x\oplus k_1}$ if that value lies in $J$. Thus for the term
$\fc\ppar{1}\vfpc$ each $\vf\ppar{x'}$ fills from $J$, whereas for the term
$\vfpc\,\fc\ppar{1}$ it fills from a set that may differ from $J$ by one
missing element. The former therefore carries additional terms, which when
projected out are bounded by $O(\sqrt{t/N})$. The remaining difference
consists of the same terms, differing only in their coefficients, and is
bounded by the same quantity as in \eqref{eq:boring-counting}, used for
\eqref{eq:comm-v-fc}.

For the second term in \eqref{eq:comm-fc-vpc-split}, the two operators act
on disjoint registers, since $\vrpc$ neither acts on nor is controlled by
$D^1_{x\oplus k_1}$. Hence the commutator vanishes.

\item We bound $\lVert [\fc\ppar{1},\vpc]\w \fc\ppar{1} \vp\Pi\rVert$. We will give an explanation for why before the commutator, the database satisfies one of the criteria of \Cref{rem:fill-or-identity} and restore injectivity of $D^2$ for the success of the flip operator since $\w$ acts on $D^2$, then the rest follows identically.

Again observe that $\vp$ preserves $\Pi^g$ restricted to the index of the register $D^1$
to $[N]\setminus\set{x\oplus k_1}$. Because of the restricted indices $\fc\ppar{1}$ preserves goodness. If we just restore injectivity after $\fc\ppar{1}$ at a cost $O(\sqrt{t/N})$ by \Cref{lem:pi-star-violation}, then since the operator $\w$ acts on $D^2_{D^1(x\oplus k_1)\oplus k_2}$ on injective $D^1$ database, it preserves goodness. Lastly after $\w$ we restore injectivity of $D^2$ at a cost $O(\sqrt{t/N})$ by \Cref{lem:pi-star-violation} again.\footnote{Note that \Cref{lem:pi-star-violation} is for a standard query operator controlled on $X$, while $\w$ is controlled both on $X$ and $K_1$. However, the same argument applies since $\w$ is block diagonal on these registers.} Then the rest of the argument for the commutator exactly follows as in the previous case.
     \end{enumerate}
    
\end{proof}

The next lemma quantifies how far the decompression operators move a computational basis database state.
\begin{lemma}\label{lem:p3-decomp-divergence} 
Let $\ket{\psi}$ be a state in which the registers $X, K_1, K_2$ hold fixed values in $[N]$ and the register $(D^3_x)^c$ holds a fixed database in $\mathbf{D}_t$, omitting these fixed registers, we write
\begin{align}
  \ket{\psi} = \ket{+^n}_{K_3}\otimes
    \sum_{\substack{y\in I_1,\ y'\in I_2\\ D\in\mathbf{D}_{inj,t}}}
    \gamma_{y,y',D}\,\ket{\bot}_{D^1_{x\oplus k_1}}
    \ket{\hat{\ell}_y}_Y\,\ket{\hat{\ell}_{y'}}_{D^3_x}\,
    \ket{D}_{D^2}\,\ket{\psi_{y,y',D}},
\end{align}
where $I_1, I_2 \subseteq [N]\setminus\set{0}$; the joint state of $Y$ and $D^3_x$ is, in the Hadamard basis, pairwise orthogonal across distinct pairs $(y,y')\in I_1\times I_2$ and the $\ket{\psi_{y,y',D}}$ are subnormalized. Then
    \begin{align}
        \lVert(\Id-\fc\ppar{1})\vp\Gamma_2\ket{\psi}\rVert \,,\, \lVert(\Id-\fc\ppar{2})\vp\Gamma_2\ket{\psi}\rVert\leq  O(\sqrt{t/N}),\label{eq:p3-id-fc}
    \end{align} and consequently, $\lVert(\Id-\fc\ppar{2}\fc\ppar{1})\vp\Gamma_2\ket{\psi}\rVert \leq O(\sqrt{t/N})$. Moreover, if $\ell_{y'}=0$, then the bound for the first term in $\eqref{eq:p3-id-fc}$ still holds.

\end{lemma}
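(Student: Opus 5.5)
The plan is to use the fact that $\Id-\fc$ is a rank-one projection on the decompressed register. Writing $\fc=\operatorname{Exc}(\ket{\bot},\ket{+^n})=\Id-\ket v\bra v$ with $\ket v=(\ket\bot-\ket{+^n})/\sqrt2$ (after normalising), we get $\Id-\fc\ppar{i}=\ket v\bra v$ on the target register. So it suffices to bound the overlap of the target register of $\vp\Gamma_2\ket\psi$ with $\ket\bot$ and with $\ket{+^n}$. The targets are $D^1_{x\oplus k_1}$ for $\fc\ppar{1}$ and $D^2_{D^1(x\oplus k_1)\oplus k_2}$ for $\fc\ppar{2}$. We fix the outer registers $X,K_1,K_2,(D^3_x)^c$ as in the statement. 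We then compute $\vp\Gamma_2$ branchwise over $(y,y',D)$ and $k_3$, and only afterwards reassemble using orthogonality.

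\emph{Step 1: the action of $\vp\Gamma_2$.} The register $D^3_x$ holds $\ket{\hat\ell_{y'}}$, which is supported on $[N]$ and has no $\bot$ component. Moreover $D^1_{x\oplus k_1}=\bot$, and $\Gamma_2$ forces $D^2_{z\oplus k_3}=\bot$ (in flipped orientation) on each computational branch $D^3_x=z$. Hence $\vf\ppar{x\oplus k_1}$ is active on every branch, and $\fp_{D^2}\vf\ppar{x\oplus k_1}\fp_{D^2}$ maps the branch $z$ to
\[
\frac{1}{\sqrt{|J|}}\sum_{u\in J}\ket u_{D^1_{x\oplus k_1}}\ket{z\oplus k_3}_{D^2_{u\oplus k_2}}\ket z_{D^3_x},
\]
with $|J|\ge N-2t$. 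Then $\vr\ppar{x\oplus k_1}$ reads $u$ and the value $z\oplus k_3$, and uncomputes $D^3_x$ to $\bot$. The outcome is that the phase $\hat\ell_{y'}(z)$ is transferred onto the $D^2$ entry. The $\Gamma_2$-violating part was projected away, so we do not need to track it here.

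\emph{Step 2: the bound for $\fc\ppar{1}$.} The $D^1_{x\oplus k_1}$ register holds $u$ and is correlated with the $D^2$ register index $u\oplus k_2$. For distinct $u$ these branches are orthogonal. Therefore $\bra v$ applied on $D^1_{x\oplus k_1}$ yields
\[
-\frac{1}{\sqrt{N|J|}}\sum_{u}\ket{\phi_u},
\]
where the $\ket{\phi_u}$ are orthogonal and of equal norm. This has norm at most $1/\sqrt N$. The argument uses only that $D^3_x$ has no $\bot$ component, which is also true for $\ell_{y'}=0$; this gives the final claim of the statement.

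\emph{Step 3: reassembling over $(y,y',D)$.} We reassemble by orthogonality. Different $D$ give orthogonal images, by the positive-coefficient argument of \Cref{rem:fill-or-identity}. Different $y$ are orthogonal because $Y$ is untouched. Different $y'$ are orthogonal because the map $\ket{\hat\ell_{y'}}_{D^3_x}\mapsto\sum_z\hat\ell_{y'}(z)\ket{z\oplus k_3}_{D^2}$ is an isometry for each fixed $u$. This gives the sum of squares and hence $O(1/\sqrt N)\le O(\sqrt{t/N})$.

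\emph{Step 4: the bound for $\fc\ppar{2}$, the main obstacle.} Here the target register holds $\sum_z\hat\ell_{y'}(z)\ket{z\oplus k_3}$. It has no $\bot$ component, and its overlap with $\ket{+^n}$ is $\langle\hat 0|X^{k_3}|\hat\ell_{y'}\rangle=0$ for $\ell_{y'}\neq0$, up to a global phase. The difficulty is that $\Gamma_2$ restricts, coherently in $z$, which values can appear: branches with $z\oplus k_3\in\im{D^2}$ are removed. The filled $D^2$ register therefore holds a truncated Fourier state rather than an exact one. I would handle this by writing the $\Gamma_2$-truncated state as the full Fourier state minus the removed part. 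The removed part has at most $t$ values $z$ for each $k_3$, and it is averaged over the uniform $k_3$ exactly as in \Cref{lem:gamma2-perp-bound}. This gives an $O(\sqrt{t/N})$ error, while the untruncated part contributes exactly zero overlap. If the coherent $k_3$ averaging does not go through cleanly, the fallback is to bound this error by $\norm{\Gamma_2^\perp\Pi^t\Pi^{k_3}}$ directly.

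\emph{Step 5: the composed bound.} Finally, $\Id-\fc\ppar{2}\fc\ppar{1}=(\Id-\fc\ppar{2})+\fc\ppar{2}(\Id-\fc\ppar{1})$. Together with the unitarity of $\fc\ppar{2}$, the triangle inequality then gives the composed bound.
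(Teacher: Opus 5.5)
Your treatment of $\fc\ppar{2}$ and of the composition (Steps 4--5) is essentially the paper's argument. You add back the $\Gamma_2$-removed branches, so that for each $k_3$ and $u$ the register $D^2_{u\oplus k_2}$ holds an exact nonzero Hadamard state. By linearity, $\Id-\fc\ppar{2}$ annihilates this, and you pay only for the removed part. Your fallback for that part is sound. The map that fills regardless of $\Gamma_2$ is an isometry, since from $u$ and the resulting $D^2$ one recovers $D$ and $z$. Hence the removed part has norm $\norm{\Gamma_2^\perp\ket{\psi}}=O(\sqrt{t/N})$ by \Cref{lem:gamma2-perp-bound}. The paper instead computes this norm directly after substituting $w=z\oplus k_3$. (Also, $\Id-\fc$ is twice a rank-one projector, not one, but this is harmless.)

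The genuine gap is in Steps 2--3, the bound for $\fc\ppar{1}$. You contract $D^1_{x\oplus k_1}$ with $\bra{+^n}$ inside each branch. You then reassemble, asserting that the contracted images of distinct $D$ (and $y'$) remain orthogonal. The positive-coefficient argument of \Cref{rem:fill-or-identity} gives orthogonality only before the contraction. The contraction deletes exactly the register recording where the fill happened, because $D^3_x$ has already been uncomputed by $\vr\ppar{x\oplus k_1}$.

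For example, take $D=D_0\cup\{(a,w')\}$ filled at $b$ with $w$, and $D'=D_0\cup\{(b,w)\}$ filled at $a$ with $w'$, under the same $k_3$. They produce the same $D^2$ database and differ only in $D^1_{x\oplus k_1}$. Now take $D^1$ empty and $\gamma_{y,y',D}$ uniform over all injective $D$ of size $t$, with phases $\prod_{(a,v)\in D}(-1)^{\ell_{y'}\cdot v}$ if $\ell_{y'}\neq 0$. Then every final database $E$ receives its $|E|$ preimages coherently, and the contracted vector has norm $\Theta(\sqrt{t/N})$. So your intermediate bound $O(1/\sqrt N)$ is false, not merely unproved.

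The missing idea is to use the correlation you noticed as an operator statement conditioned on $D^2$, rather than branch by branch. On the support of $\vp\Gamma_2\ket{\psi}$, given $D^2$, the register $D^1_{x\oplus k_1}$ lies in $\dom{D^2}\oplus k_2$, a set of size at most $t+1$. Hence $\bra{+^n}$ restricted there has norm at most $\sqrt{(t+1)/N}$, uniformly over the whole superposition.

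The paper phrases this as replacing $\fc\ppar{1}$ by the partial decompression $\operatorname{Exc}(\ket{\bot},\ket{[N]\setminus(\dom{D^2}\oplus k_2)})$ controlled on $D^2$. This operator fixes $\vp\Gamma_2\ket{\psi}$ exactly and is $O(\sqrt{t/N})$-close to $\fc\ppar{1}$ by \Cref{lem:diff-full-and-partial-decomp}. It needs no orthogonality bookkeeping over $(y,y',D)$. It still uses only that $D^3_x$ has no $\bot$ component, so your remark on the case $\ell_{y'}=0$ survives.
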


\begin{proof}
    First, observe that \begin{align}
        &\lVert(\Id \pm \fc\ppar{2} -\fc\ppar{2} \fc\ppar{1})\vp\Gamma_2\ket{\psi}\rVert\\
        &\leq  \lVert\fc\ppar{2}(\Id-\fc\ppar{1})\vp\Gamma_2\ket{\psi}\rVert + \lVert(\Id-\fc\ppar{2})\vp\Gamma_2\ket{\psi}\rVert\\
        &=  \lVert(\Id-\fc\ppar{1})\vp\Gamma_2\ket{\psi}\rVert + \lVert(\Id-\fc\ppar{2})\vp\Gamma_2\ket{\psi}\rVert
    \end{align} by the triangle inequality on the first line and the unitarity of the norm on the second line. Recalling that $\vp=\vr\ppar{x\oplus k_1}\fp_{D^2}\vf\ppar{x\oplus k_1}\fp_{D^2}$ and $\ket{\hat{y}}=\sum(-1)^{y.z}\ket{z}$, $z\in[N]$, where $y\cdot z$ denotes the inner product of the $n$-bit strings $y$ and $z$ modulo $2$, and denoting $J_D$ in \Cref{def:J} for each $D$, we write \begin{align}
&\vp \Gamma_2 \ket{\psi}\\
  &= \vp \Biggl(\sum_{y\in I_1,y'\in I_2} \ket{\hat{\ell_y}}_Y 
  \frac{1}{\sqrt{N}}\sum_{z\in [N]} (-1)^{\ell_{y'}.z} \ket{z}_{D^3_x}\ket{\bot}_{D^1_{x\oplus k_1}}\sum_{\substack{D\in \mathbf D_{inj, t}}} \gamma_{y,y',D} \ket{D}_{D^2}\ket{\psi_{y,y',D}} \frac{1}{\sqrt{N}}\sum_{\substack{k_3\in[N]:\\
  z\oplus k_3 \not \in \im{D}}} \ket{k_3}_{K_3}\Biggr)\\
  &= \vr\ppar{x\oplus k_1} \fp_{D^2} \Biggl( \sum_{y\in I_1,y'\in I_2} \ket{\hat{\ell_y}}_Y 
  \frac{1}{\sqrt{N}}\sum_{z\in [N]} (-1)^{\ell_{y'}.z} \ket{z}_{D^3_x}\sum_{\substack{D\in \mathbf D_{inj, t}}} \gamma_{y,y',D} \ket{\psi_{y,y',D}} \frac{1}{\sqrt{N}} \nonumber\\
  &\quad \cdot\sum_{\substack{k_3\in[N]:\\
  z\oplus k_3 \not \in \im{D}}}\ket{k_3}_{K_3}\ket{((D^2
  _{z\oplus k_3})^c)^{-1}}_{(D^2_{z\oplus k_3})^c} \frac{1}{\sqrt{\abs{J_D}}} \sum_{u\in J_D} \ket{u}_{D^1_{x\oplus k_1}} \ket{u \oplus k_2}_{D^2_{z\oplus k_3}}\Biggr)\\
  &=\sum_{y\in I_1,y'\in I_2} \ket{\hat{\ell_y}}_Y \ket{\bot}_{D^3_x}
  \frac{1}{\sqrt{N}}\sum_{z\in [N]}  \sum_{\substack{D\in \mathbf D_{inj, t}}} \gamma_{y,y',D} \ket{\psi_{y,y',D}}\frac{1}{\sqrt{N}} \sum_{\substack{k_3\in[N]:\\
  z\oplus k_3 \not \in \im{D}}}\ket{k_3}_{K_3}\nonumber\\
  &\quad \cdot \frac{1}{\sqrt{\abs{J_D}}} \sum_{u\in J_D} (-1)^{\ell_{y'}.z} \ket{u}_{D^1_{x\oplus k_1}}   \ket{z\oplus k_3}_{D^2_{u \oplus k_2}}\ket{(D_{u \oplus k_2})^c}_{(D^2_{u \oplus k_2})^c}.
  \end{align}
Before applying $(\Id-\fc\ppar{1})$, we define a partial (de)compression operator as 
\begin{align}
    \widetilde{\fc\ppar{1}}\coloneqq \sum_{x, k_1, k_2\in [N], D^2\in\mathbf{D}_t}
  \proj{x}_X \otimes \proj{k_1}_{K_1} \otimes \proj{k_2}_{K_2}
  \otimes \proj{D^2}_{D^2} \otimes \tilde{\fc}_{D^1_{x \oplus k_1}}
\end{align}
where $\widetilde{\fc}_x\coloneqq \operatorname{Exc}(\ket{\bot}, \ket{[N]\setminus \dom{D^2}\oplus k_2})$, then using \Cref{lem:diff-full-and-partial-decomp}, we can conclude $\lVert\widetilde{\fc\ppar{1}}- \fc\ppar{1}\rVert\leq O(\sqrt{t/N})$. We introduce this because it acts as the identity whenever the target register holds a value outside $\dom{D^2}\oplus k_2$. We integrate this as\begin{align}
\lVert(\Id-\fc\ppar{1})\vp\Gamma_2\ket{\psi}\rVert 
&=\lVert(\Id\pm \widetilde{\fc\ppar{1}} -\fc\ppar{1})\vp\Gamma_2\ket{\psi}\rVert\\
&\leq \lVert(\Id- \widetilde{\fc\ppar{1}})\vp\Gamma_2\ket{\psi}\rVert + \lVert( \widetilde{\fc\ppar{1}} -\fc\ppar{1})\vp\Gamma_2\ket{\psi}\rVert\\
&\leq 0 + \sqrt{t/N}.
\end{align} 
Next, rearrange $\vp \Gamma_2 \ket{\psi}$ by introducing the factor $1=(-1)^{\ell_{y'}.k_3}(-1)^{\ell_{y'}.k_3}$ as \begin{align}
&\vp \Gamma_2 \ket{\psi}\\
&=\sum_{y\in I_1,y'\in I_2} \ket{\hat{\ell_y}}_Y \ket{\bot}_{D^3_x}
  \frac{1}{\sqrt{N}}\sum_{z\in [N]}  \sum_{\substack{D\in \mathbf D_{inj, t}}} \gamma_{y,y',D} \ket{\psi_{y,y',D}}\frac{1}{\sqrt{N}} \sum_{\substack{k_3\in[N]:\\
  z\oplus k_3 \not \in \im{D}}}(-1)^{\ell_{y'}.k_3}\ket{k_3}_{K_3}\nonumber\\
&\quad \cdot \frac{1}{\sqrt{\abs{J_D}}} \sum_{u\in J_D} (-1)^{\ell_{y'}.(z\oplus k_3)} \ket{u}_{D^1_{x\oplus k_1}}   \ket{z\oplus k_3}_{D^2_{u \oplus k_2}}\ket{(D_{u \oplus k_2})^c}_{(D^2_{u \oplus k_2})^c}\\
&=\sum_{y\in I_1,y'\in I_2} \ket{\hat{\ell_y}}_Y \ket{\bot}_{D^3_x}
  \frac{1}{\sqrt{N}}\sum_{z\in [N]}  \sum_{\substack{D\in \mathbf D_{inj, t}}} \gamma_{y,y',D} \ket{\psi_{y,y',D}}\frac{1}{\sqrt{N}} \sum_{\substack{k_3\in[N]:\\
  z\oplus k_3 \not \in \im{D}}}(-1)^{\ell_{y'}.k_3}\ket{k_3}_{K_3}\nonumber\\
&\quad \cdot \frac{1}{\sqrt{\abs{J_D}}} \sum_{u\in J_D} (-1)^{\ell_{y'}.(z\oplus k_3)} \ket{u}_{D^1_{x\oplus k_1}}   \ket{z\oplus k_3}_{D^2_{u \oplus k_2}}\ket{(D_{u \oplus k_2})^c}_{(D^2_{u \oplus k_2})^c} \pm \ket{\phi\ppar{k_3}}\\
&=\sum_{y\in I_1,y'\in I_2} \ket{\hat{\ell_y}}_Y \ket{\bot}_{D^3_x}
  \frac{1}{\sqrt{N}}  \sum_{\substack{D\in \mathbf D_{inj, t}}} \gamma_{y,y',D} \ket{\psi_{y,y',D}}\frac{1}{\sqrt{N}} \sum_{k_3\in[N]}(-1)^{\ell_{y'}.k_3}\ket{k_3}_{K_3}\nonumber\\
&\quad \cdot \frac{1}{\sqrt{\abs{J_D}}} \sum_{u\in J_D} \ket{u}_{D^1_{x\oplus k_1}}   \ket{\hat{\ell_{y'}}}_{D^2_{u \oplus k_2}}\ket{(D_{u \oplus k_2})^c}_{(D^2_{u \oplus k_2})^c} - \ket{\phi\ppar{k_3}}
\end{align} where \begin{align}
  \ket{\phi\ppar{k_3}}&\coloneqq\sum_{y\in I_1,y'\in I_2} \ket{\hat{\ell_y}}_Y \ket{\bot}_{D^3_x}
  \frac{1}{\sqrt{N}}\sum_{z\in [N]}  \sum_{\substack{D\in \mathbf D_{inj, t}}} \gamma_{y,y',D} \ket{\psi_{y,y',D}}\frac{1}{\sqrt{N}} \sum_{\substack{k_3\in[N]:\\
  z\oplus k_3 \in \im{D}}}(-1)^{\ell_{y'}.k_3}\ket{k_3}_{K_3}\nonumber\\
&\quad \cdot \frac{1}{\sqrt{\abs{J_D}}} \sum_{u\in J_D} (-1)^{\ell_{y'}.(z\oplus k_3)} \ket{u}_{D^1_{x\oplus k_1}}   \ket{z\oplus k_3}_{D^2_{u \oplus k_2}}\ket{(D_{u \oplus k_2})^c}_{(D^2_{u \oplus k_2})^c}.  
\end{align} Using the fact that $\fc\ppar{2}$ acts as the identity on non-zero Hadamard basis states, we conclude \begin{align}
    \lVert(\Id-\fc\ppar{2})\vp\Gamma_2\ket{\psi}\rVert&\leq \lVert(\Id-\fc\ppar{2})(\vp\Gamma_2\ket{\psi}\pm \ket{\phi\ppar{k_3}}) \rVert\\
    &\leq \lVert(\Id-\fc\ppar{2})(\vp\Gamma_2\ket{\psi}+ \ket{\phi\ppar{k_3}}) \rVert +  2 \lVert\ket{\phi\ppar{k_3}}) \rVert\\
    &\leq 0 + O(\sqrt{t/N})
\end{align} where the second term follows by the change of variable $w\coloneqq z\oplus k_3$ in $\ket{\phi\ppar{k_3}}$ as \begin{align}
\ket{\phi\ppar{k_3}}&\sum_{y\in I_1,y'\in I_2} \ket{\hat{\ell_y}}_Y \ket{\bot}_{D^3_x}
  \frac{1}{\sqrt{N}}  \sum_{\substack{D\in \mathbf D_{inj, t}}} \gamma_{y,y',D} \ket{\psi_{y,y',D}}\frac{1}{\sqrt{N}} \sum_{k_3\in[N]}(-1)^{\ell_{y'}.k_3}\ket{k_3}_{K_3}\nonumber\\
&\quad \otimes \sum_{w\in [N]: w\in \im{D}}\frac{1}{\sqrt{\abs{J_D}}} \sum_{u\in J_D} (-1)^{\ell_{y'}.w} \ket{u}_{D^1_{x\oplus k_1}}   \ket{w}_{D^2_{u \oplus k_2}}\ket{(D_{u \oplus k_2})^c}_{(D^2_{u \oplus k_2})^c}\\
&=\sum_{y\in I_1,y'\in I_2} \ket{\hat{\ell_y}}_Y \ket{\bot}_{D^3_x}
  \frac{1}{\sqrt{N}}  \sum_{\substack{D\in \mathbf D_{inj, t}}} \gamma_{y,y',D} \ket{\psi_{y,y',D}}\ket{\hat{\ell_{y'}}}_{K_3}\nonumber\\
&\quad \otimes \sum_{w\in [N]: w\in \im{D}}\frac{1}{\sqrt{\abs{J_D}}} \sum_{u\in J_D} (-1)^{\ell_{y'}.w} \ket{u}_{D^1_{x\oplus k_1}}   \ket{w}_{D^2_{u \oplus k_2}}\ket{(D_{u \oplus k_2})^c}_{(D^2_{u \oplus k_2})^c}
\end{align} whose norm is $\sqrt{\sum_{y\in I_1,y'\in I_2}  \sum_{\substack{D\in \mathbf D_{inj, t}}} \abs{\gamma_{y,y',D}}^2 \sum_{w\in [N]: w\in \im{D}} \frac{1}{N^2\abs{J_D}} \lVert (-1)^{\ell_{y'}.w} \ket{\psi_{y,y',D} } \rVert^2 }=O(\sqrt{t/N})$.
\end{proof}

\subsection{Proof of \Cref{thm:p3-bound}}
\begin{theorem}[$P_3$-query]
\begin{align}
     \lVert (\cco^R V - V \Pi^\star \cco^I) \Pi^\star \Pi^t \Pi^{k_1} \Pi^{k_3} \rVert \leq O(\sqrt{t/N}).
    \end{align} 
\end{theorem}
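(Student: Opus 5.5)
Throughout, $\Pi\coloneqq\Gamma_2\Gamma_1\Pi^\star\Pi^t\Pi^{k_1}\Pi^{k_3}$, and the plan follows the template of \Cref{thm:p1-bound}. First strip the projector between $V$ and $\cco^I$: writing $\Pi^\star=\Id-(\Pi^\star)^\perp$ and applying \Cref{lem:pi-star-violation} bounds $\lVert V(\Pi^\star)^\perp\cco^I\Pi^\star\Pi^t\rVert$ by $O(\sqrt{t/N})$. Next insert $\Gamma_1$ and $\Gamma_2$ on the right. \Cref{lem:gamma1-perp-bound} and \Cref{lem:gamma2-perp-bound} each cost $O(\sqrt{t/N})$, so it suffices to bound $\lVert(\cco^R V-V\cco^I)\Pi\rVert$. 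By \Cref{lem:p-vpvpc-V}, on $\Pi^\star\Pi^t$ we may replace $V$ by $\vpc\vp$ on the input side of both terms. For the term $V\cco^I$ we use that $\cco^I$ maps $\Pi^\star\Pi^t$ into $\Pi^\star\Pi^{t+1}$ up to $O(\sqrt{t/N})$, by \Cref{lem:pi-t-violation} and \Cref{lem:pi-star-violation}.

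Then pull $\vpc$ to the left. \Cref{lem:comm-real-vpc} gives $\cco^R\vpc\vp\Pi=\vpc\cco^R\vp\Pi+O(\sqrt{t/N})$. For the ideal term, $\cco^I$ acts only on $X,Y,D^3_x$, and we argue that it commutes with $\vpc$ up to $O(\sqrt{t/N})$. The register $D^3_x$ is a control only for the factor $\vf\ppar{x\oplus k_1}$, which is excluded from $\vpc$; the other factors are controlled on $D^3_{x'\oplus k_1}$ with $x'\neq x\oplus k_1$. The only coupling is through $\vr\ppar{x\oplus k_1}$ being in $\vp$, which sits to the right of $\vpc$. Hence $\vpc\vp\cco^I=\vpc\cco^I\vp'$ holds after checking that $\vp$ and $\cco^I$ interact correctly. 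After cancelling the isometry $\vpc$, it remains to show
\[
\lVert(\cco^R\vp-\vp\cco^I)\Pi\rVert\le O(\sqrt{t/N}),
\]
which is a statement about the single factor $\vr\ppar{x\oplus k_1}\fp_{D^2}\vf\ppar{x\oplus k_1}\fp_{D^2}$ and the single query.

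For this core estimate, pass to the Hadamard basis on $Y$ and $D^3_x$: $\cco^I=\fc\pu\fc$ acts on $\ket{\hat{\ell}}_Y\ket{\hat{\ell}'}_{D^3_x}$ by shifting $\ell'$ by $\ell$ when $\ell\neq0$. Case-split on $\ell,\ell'$ into computation, uncomputation and trivial queries, as the overview suggests. On the real side, \Cref{rem:construction-query} writes $\cco^R=\fc\ppar{1}\cco\ppar{2}\fc\ppar{1}$. Under $\Gamma_1\Gamma_2$, $\vp$ turns a $D^3_x$ entry into a uniform superposition over $u\in J_D$ of the pair $D^1_{x\oplus k_1}=u$, $D^2_{u\oplus k_2}$ holding the value shifted by $k_3$, and clears $D^3_x$. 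The inner $\pu\ppar{1}\pu\ppar{em}$ then reproduces exactly the ideal answer $D^3(x)$. \Cref{lem:p3-decomp-divergence} says that the decompressions $\fc\ppar{1},\fc\ppar{2}$ in $\cco^R$ act as the identity on $\vp\Gamma_2$-images up to $O(\sqrt{t/N})$. For nonzero $\ell'$ this uses the uniformity of $k_3$ and the Hadamard phase $(-1)^{\ell'\cdot k_3}$ to move the phase into $D^2$. The case $\ell'=0$ (empty or uncomputed $D^3_x$) is covered by the last clause of that lemma. Matching the remaining $\pu$ parts gives agreement in each Hadamard sector. Orthogonality across sectors, and across databases because $\vp$ on good databases produces only positive-coefficient branches (\Cref{rem:fill-or-identity}), lets us sum the errors in quadrature rather than linearly.

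The main obstacle is this last step, specifically the uncomputation sector. There the ideal query returns $D^3_x$ to $\bot$, while the real query must uncompute both $D^1_{x\oplus k_1}$ and $D^2_{u\oplus k_2}$ through $\fc\ppar{1}$ and $\fc\ppar{2}$. One has to check that $\vp$ applied after the ideal query (acting trivially) agrees with the real decompressions collapsing the $u$-superposition back to $\ket{\bot}\ket{\bot}$. I would handle this by writing $\ket{\psi_{k_2}}$ explicitly and comparing with $\ket{+^n}$-type states through \Cref{lem:diff-full-and-partial-decomp}. The $|J_D|\ge N-2t$ normalization gives the $O(\sqrt{t/N})$ loss.
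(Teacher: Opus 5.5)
Your reduction matches the paper's. You insert $\Gamma_1,\Gamma_2$, drop $\Pi^\star$, split $V=\vpc\vp$ by \Cref{lem:p-vpvpc-V}, move $\vpc$ past $\cco^R$ by \Cref{lem:comm-real-vpc}, and then do a Hadamard-basis case analysis for the single factor $\vp$. No commutation of $\vpc$ with $\cco^I$ is needed: in $V\cco^I=\vpc\vp\cco^I$ the isometry $\vpc$ is already leftmost.

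The gap is in the sector analysis, where you identify $\ell'=0$ with an empty $D^3_x$. In the compressed register these are different states. $\ket{\hat 0}=\ket{+^n}$ is a fixed point of $\cco^I$: it is compressed to $\bot$, left alone by $\pu$, and decompressed back. Yet $\vp$ does act on it, filling $D^1_{x\oplus k_1}$ and $D^2_{u\oplus k_2}$. The last clause of \Cref{lem:p3-decomp-divergence} covers only $\fc\ppar{1}$ in this sector. Once the $k_3$-correlation is removed (the $\ket{\phi^{k_3}}$ term, of norm $O(\sqrt{t/N})$), the filled entry $D^2_{u\oplus k_2}$ is $\ket{+^n}$, and $\fc\ppar{2}$ maps it to $\ket{\bot}$. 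If you treat both decompressions as the identity and ``match the $\pu$ parts'' as proposed, $\pu^{em}$ imprints the phase $(-1)^{\ell\cdot z}$ from $Y=\ket{\hat\ell}$ onto that entry. For $\ell\neq 0$ this gives an $\Omega(1)$ mismatch with the trivial ideal query. The correct argument (case $i=5$ in the paper) runs the other way: $\fc\ppar{2}$ sends the entry to $\bot$, $\pu^{em}$ is the identity on $\bot$, and the real query is therefore trivial as well.

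Conversely, when $D^3_x=\bot$ on input, $\vp$ acts trivially and $\fc\ppar{1}$ genuinely decompresses $D^1_{x\oplus k_1}=\bot$. So \Cref{lem:p3-decomp-divergence}, which assumes $D^3_x$ holds a Hadamard basis state, says nothing on the real side there. The computation sector must instead be reduced to the uncomputation one via $\ket{\psi_3}=\cco^I\ket{\psi_2}$ and unitarity of the query operators, as the paper does.

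In addition, $\Gamma_2$ must act after $\cco^I$ in the ideal term; the paper writes $V\Gamma_2\Pi^\star\cco^I$. Your input-side $\Gamma_2$ constrains nothing about the value the ideal query newly writes into $D^3_x$. But $\vf\ppar{x\oplus k_1}$ fires only if that value, shifted by $k_3$, avoids $\im{D^2}$. This costs only $O(\sqrt{t/N})$, since $K_3$ is still uniform and $|D^2|\le t$ after the ideal query, but the projector has to be placed there. Your plan for the uncomputation sector itself agrees with the paper's.
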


\begin{proof}
We insert two additional projectors separately, which ensures that the
isometry transfers the ideal-world query answer from the database $D^3$ to
$D^1$ and $D^2$, 
\begin{align}
     &\lVert (\cco^R V - V \Pi^\star \cco^I) \Pi^\star \Pi^t \Pi^{k_1} \Pi^{k_3} \rVert \\
     &\leq \lVert (\cco^R V \Gamma_2 - V \Gamma_2 \Pi^\star \cco^I)\Gamma_1\Pi^\star \Pi^t \Pi^{k_1}  \Pi^{k_3} \rVert + O(\sqrt{t/N})\\
     &\leq \lVert (\cco^R V \Gamma_2 - V \Gamma_2 \cco^I)\Gamma_1\Pi^\star \Pi^t \Pi^{k_1}  \Pi^{k_3} \rVert + O(\sqrt{t/N}) 
\end{align} where the $\Gamma_1$ violation is bounded by \Cref{lem:gamma1-perp-bound} and the $\Gamma_2$ violation by \Cref{lem:gamma2-perp-bound}, each contributing $O(\sqrt{t/N})$. And in the third line we remove $\Pi^\star$ using $\Pi^\star = \Id - (\Pi^\star)^\perp$ and \Cref{lem:pi-star-violation} together with triangle inequality.

We continue by reducing $V$ to $\vp$
 \begin{align}
   &\lVert (\cco^R V \Gamma_2 - V \Gamma_2 \cco^I)\Gamma_1 \Pi^\star \Pi^t \Pi^{k_1}  \Pi^{k_3} \rVert \\
   &\leq \lVert (\cco^R \vpc\vp \Gamma_2 - \vpc\vp \Gamma_2 \cco^I)\Gamma_1\Pi^\star \Pi^t \Pi^{k_1}  \Pi^{k_3} \rVert \quad \text{(By \Cref{lem:p-vpvpc-V})}\\
   &\leq \lVert (\vpc \cco^R \vp \Gamma_2 - \vpc\vp \Gamma_2 \cco^I)\Gamma_1\Pi^\star \Pi^t \Pi^{k_1}  \Pi^{k_3} \rVert + O(\sqrt{t/N}) \quad \text{(By \Cref{lem:comm-real-vpc})}\\
   & \leq \lVert (\cco^R \vp \Gamma_2 - \vp \Gamma_2 \cco^I)\Gamma_1\Pi^\star \Pi^t \Pi^{k_1}  \Pi^{k_3} \rVert + O(\sqrt{t/N})
\end{align} where the last line follows from unitarity of the norm. We note that the bound on $\Gamma_2^{\perp}$ is the same for any state whose $K_3$ register holds a uniform superposition and whose $D^2$ register holds
a database of size at most $t$. In particular, it does not matter whether $\cco^{I}$ has been applied, so in the rest we do not pay for $\Gamma_2^{\perp}$ at each occurrence.

In what follows we fix the registers $X$ and $K_1$ in the computational
basis to $x$ and $k_1$, since all remaining operators are either controlled
on them or act trivially on them; by $\Gamma_1$ the register
$D^1_{x\oplus k_1}$ is then in $\bot$. We may therefore drop $\Pi^{k_1}$, as
the uniformity of $k_1$ was needed only for \Cref{lem:gamma1-perp-bound} and
plays no role below. Moreover, $\vp$ now has the single fill factor
$\vf\ppar{x\oplus k_1}$, whose target registers are already set to $\bot$ by
$\Gamma_1$ and $\Gamma_2$; we may therefore also drop the good and
injective database projectors, keeping only $\Pi^{inj}_{D^2}$ for the
flip, and still fix the set of admissible elements $J$ of \Cref{def:J} for
each computational basis state. Hence the hybrid difference to be bounded
becomes
\begin{align}
    \norm{(\cco^R \vp \Gamma_2 - \vp \Gamma_2 \cco^I) \Gamma_1\Pi^{inj}_{D^2} \Pi^{k_3}\Pi^t} .\label{eq:p-hybrids-with-vp} 
\end{align} 

We fix the registers $K_2$ to $k_2\in[N]$ and registers $(D^1_{x\oplus k_1})^c, (D^3_x)^c$ to a database of size at most $t$ in the computational basis (while noting that $\abs{\dom{(D^3_x)^c}}\leq t-1 $ if $ D^3_x\neq \bot$) since the operators preserve their orthogonality. We analyze \eqref{eq:p-hybrids-with-vp} with the registers $D^3_x$ and $Y$ in Hadamard basis states, where the effect of $\cco^{I}$, whether it defines or uncomputes a database entry, is easier to classify. Take the Hadamard basis $\set{\ket{\hat{y}}}_{y\in [N]}$ where $\ket{\hat{y}}= H^{\otimes n} \ket{y}$ for $y\in \set{0,1}^n$.\footnote{When $\ket{\hat{\bot}}$ appears, it is interpreted as $\ket{\bot}$, by extending $H^{\otimes n }$ to act as the identity on $\ket{\bot}$.} We write the general state as, omitting the fixed and untouched registers,
\begin{align}
    \ket{\psi}\coloneqq \ket{+^n}_{K_3}\otimes \sum_{y\in [N],y'\in [N]\cup \set{\bot},D\in \mathbf D_{inj, t}} \gamma_{y,y',D} \ket{\bot}_{D^1_{x\oplus k_1}} \ket{\hat{y}}_Y \ket{\hat{y'}}_{D^3_x}\ket{D}_{D^2}\ket{\psi_{y,y',D}},
\end{align}
where $\ket{\psi_{y,y',D}}$ are subnormalized, then write $\ket{\psi} = \sum \ket{\psi_i}$ for $i\in\set{1,2,3,4,5}$, partitioning the cases with respect to the effect of the ideal world query on the database register $D^{3}_x$ as
\begin{enumerate}[label={\ensuremath{i=\arabic*}:}]
  \item $\bot \mapsto \bot$, which occurs when $Y$ holds $\ket{\hat 0}$; 
    bounded in \eqref{eq:p-psi1};
  \item non-$\bot$ and non-$\ket{\hat 0}$ $\mapsto \bot$, which occurs when
    $Y$ and $D^3_x$ hold the same Hadamard basis state; bounded in
    \eqref{eq:p-psi2};
  \item $\bot \mapsto$ non-$\bot$, which occurs when $Y$ holds a Hadamard
    basis state other than $\ket{\hat 0}$; bounded in \eqref{eq:p-psi3};
  \item non-$\bot$ $\mapsto$ non-$\bot$, which occurs when $Y$ and $D^3_x$
    hold distinct Hadamard basis states; bounded in \eqref{eq:p-psi4};
  \item $\ket{\hat 0} \mapsto \ket{\hat 0}$, the case in which $D^3_x$ is
    in $\ket{\hat 0}$ throughout; bounded in \eqref{eq:p-psi5}.
\end{enumerate}
We go through each case separately and conclude the result using triangle inequality.
\begin{enumerate}[leftmargin=0pt, labelindent=0pt, itemindent=*, label={\ensuremath{(i=\arabic*)}:}]
    \item After being queried, $D^3_x$ changes from $\ket{\bot}$ to $\ket{\bot}$ if and only if $Y$ register is in $\ket{\hat{0}}$ and $D^3_x$ register is in $\ket{\bot}$. Define \begin{align}\label{eq:p-psi1}
    \ket{\psi_1}\coloneqq \ket{+^n}_{K_3}\otimes \sum_{D\in \mathbf D_{inj, t}} \gamma_{0,\bot,D} \ket{\bot}_{D^1_{x\oplus k_1}} \ket{\hat{0}}_Y \ket{\bot}_{D^3_x}\ket{D}_{D^2}\ket{\psi_{0,\bot,D}},
    \end{align} and bound the difference between the hybrids in \eqref{eq:p-hybrids-with-vp} when applied to $\ket{\psi_1}$. To this end, observe that when $Y$ register is in $\ket{\hat{0}}$ the query operators $\cco^{R}$ and $\cco^{I}$ act as identity on any state and also $\vp$ does not act on $Y$, so we can write
    \begin{align}
        \norm{(\cco^{R} \vp\Gamma_2- \vp\Gamma_2 \cco^{I})\ket{\psi_1}} &\leq \norm{(\cco^{R}\otimes \ket{\hat{0}}_Y) \vp\Gamma_2-\vp\Gamma_2 (\cco^{I}\otimes \ket{\hat{0}}_Y)}\\
        &\leq \norm{(\ket{\hat{0}}_Y \otimes\vp-\ket{\hat{0}}_Y \otimes\vp)\Gamma_2}\\
        &=0.
    \end{align}
    \item After being queried, $D^3_x$ changes from non-$\ket{\bot}$ (excluding $\ket{\hat{0}}$) to $\ket{\bot}$ if and only if both $Y$ and $D^3_x$ register is in $\ket{\hat{y}}$. Define
    \begin{align}\label{eq:p-psi2}
        \ket{\psi_2}\coloneqq \ket{+^n}_{K_3}\otimes \sum_{y\in [N]\setminus\set{0},D\in \mathbf D_{inj, t}} \gamma_{y,y,D} \ket{\bot}_{D^1_{x\oplus k_1}} \ket{\hat{y}}_Y \ket{\hat{y}}_{D^3_x}\ket{D}_{D^2}\ket{\psi_{y,y,D}},
    \end{align} and bound the difference between the hybrids in \eqref{eq:p-hybrids-with-vp} when applied to $\ket{\psi_2}$. For each $D$, we denote the associated set of admissable values in \Cref{def:J} by $J_D$. We modify and bound the difference of the hybrids as follows
    \begin{align}
        &\norm{(\cco^{R} \vp\Gamma_2- \vp\Gamma_2 \cco^{I})\ket{\psi_2}}\\
        &\leq \norm{(\fc\ppar{1}\fc\ppar{2}\pu^{em}\vp\Gamma_2- \vp\Gamma_2 \cco^{I})\ket{\psi_2} } + \norm{{(\Id - \fc^{(2)}\fc\ppar{1}) \vp \Gamma_2 \ket{\psi_2}}}\\
        &\leq \norm{(\pu^{em}\vp\Gamma_2- \fc\ppar{2}\fc\ppar{1}\vp\Gamma_2 \cco^{I})\ket{\psi_2} } + O(\sqrt{t/N}). \quad \text{(By \Cref{lem:p3-decomp-divergence})} \label{eq:modified-bound-on-hybrid-psi2}
    \end{align}
Recalling that $\vp=\vr\ppar{x\oplus k_1}\fp_{D^2}\vf\ppar{x\oplus k_1}\fp_{D^2}$ and $\ket{\hat{y}}=\sum(-1)^{y.z}\ket{z}$, $z\in[N]$, where $y\cdot z$ denotes the inner product of the $n$-bit strings $y$ and $z$ modulo $2$, we bound the first term in \eqref{eq:modified-bound-on-hybrid-psi2} by writing each term separately; the first one is \begin{align}
&\pu^{em}\vp \Gamma_2 \ket{\psi_2}\\
  &= \pu^{em}\vp \Biggl(\sum_{y\in [N]\setminus\set{0}} \ket{\hat{y}}_Y 
  \frac{1}{\sqrt{N}}\sum_{z\in [N]} (-1)^{y.z} \ket{z}_{D^3_x}\ket{\bot}_{D^1_{x\oplus k_1}}\sum_{\substack{D\in \mathbf D_{inj, t}}} \gamma_{y,y,D} \ket{D}_{D^2}\ket{\psi_{y,y,D}} \frac{1}{\sqrt{N}}\sum_{\substack{k_3\in[N]:\\
  z\oplus k_3 \not \in \im{D}}} \ket{k_3}_{K_3}\Biggr)\\
  &= \pu^{em}\vr\ppar{x\oplus k_1} \fp_{D^2} \Biggl( \sum_{y\in [N]\setminus\set{0}} \ket{\hat{y}}_Y 
  \frac{1}{\sqrt{N}}\sum_{z\in [N]} (-1)^{y.z} \ket{z}_{D^3_x}\sum_{\substack{D\in \mathbf D_{inj, t}}} \gamma_{y,y,D} \ket{\psi_{y,y,D}} \frac{1}{\sqrt{N}} \nonumber\\
  &\quad \cdot\sum_{\substack{k_3\in[N]:\\
  z\oplus k_3 \not \in \im{D}}}\ket{k_3}_{K_3}\ket{((D^2
  _{z\oplus k_3})^c)^{-1}}_{(D^2_{z\oplus k_3})^c} \frac{1}{\sqrt{\abs{J_D}}} \sum_{u\in J_D} \ket{u}_{D^1_{x\oplus k_1}} \ket{u \oplus k_2}_{D^2_{z\oplus k_3}}\Biggr)\\
  &=\pu^{em}\Biggl(\sum_{y\in [N]\setminus\set{0}} \ket{\hat{y}}_Y \ket{\bot}_{D^3_x}
  \frac{1}{\sqrt{N}}\sum_{z\in [N]}  \sum_{\substack{D\in \mathbf D_{inj, t}}} \gamma_{y,y,D} \ket{\psi_{y,y,D}}\frac{1}{\sqrt{N}} \sum_{\substack{k_3\in[N]:\\
  z\oplus k_3 \not \in \im{D}}}\ket{k_3}_{K_3}\nonumber\\
  &\quad \cdot \frac{1}{\sqrt{\abs{J_D}}} \sum_{u\in J_D} (-1)^{y.z} \ket{u}_{D^1_{x\oplus k_1}}   \ket{z\oplus k_3}_{D^2_{u \oplus k_2}}\ket{(D_{u \oplus k_2})^c}_{(D^2_{u \oplus k_2})^c}\Biggr) \label{eq:vp-gamma2-psi2-reference}\\
  &=\sum_{y\in [N]\setminus\set{0}}  \ket{\bot}_{D^3_x}
  \frac{1}{\sqrt{N}}\sum_{z\in [N]}  \sum_{\substack{D\in \mathbf D_{inj, t}}} \gamma_{y,y,D} \ket{\psi_{y,y,D}} \frac{1}{\sqrt{N}} \sum_{\substack{k_3\in[N]:\\
  z\oplus k_3 \not \in \im{D}}}\ket{k_3}_{K_3} \frac{1}{\sqrt{\abs{J_D}}} \sum_{u\in J_D} \ket{u}_{D^1_{x\oplus k_1}} \nonumber\\
  &\quad \otimes\ket{(D_{u \oplus k_2})^c}_{(D^2_{u \oplus k_2})^c}  \frac{1}{\sqrt{N}}\sum_{v\in[N]}(-1)^{y.v}(-1)^{y.z} \ket{v\oplus(z\oplus k_3)\oplus k_3}_Y \ket{z\oplus k_3}_{D^2_{u \oplus k_2}}\\
  &=\sum_{y\in [N]\setminus\set{0}}  \ket{\hat{y}}_Y \ket{\bot}_{D^3_x}
  \frac{1}{\sqrt{N}}\sum_{z\in [N]}  \sum_{\substack{D\in \mathbf D_{inj, t}}} \gamma_{y,y,D} \ket{\psi_{y,y,D}} \frac{1}{\sqrt{N}} \sum_{\substack{k_3\in[N]:\\
  z\oplus k_3 \not \in \im{D}}}\ket{k_3}_{K_3} \frac{1}{\sqrt{\abs{J_D}}} \sum_{u\in J_D} \ket{u}_{D^1_{x\oplus k_1}} \nonumber\\
  &\quad \otimes  \ket{z\oplus k_3}_{D^2_{u \oplus k_2}} \ket{(D_{u \oplus k_2})^c}_{(D^2_{u \oplus k_2})^c}, \label{eq:psi2-real-term}
    \end{align}
    and the second one is \begin{align}
     &\fc\ppar{2}\fc\ppar{1}\vp\Gamma_2 \cco^{I}\ket{\psi_2}\\
     &=\fc\ppar{2}\fc\ppar{1} \vp\Gamma_2 \Biggl(\sum_{y\in [N]\setminus\set{0}}  \ket{\hat{y}}_Y \ket{\bot}_{D^3_x}\ket{\bot}_{D^1_{x\oplus k_1}}\ket{+^n}_{K_3}\sum_{D\in \mathbf D_{inj, t}} \gamma_{y,y,D} \ket{D}_{D^2} \ket{\psi_{y,y,D}} \Biggr)\\
     &= \fc\ppar{2}\fc\ppar{1}\Biggl( \sum_{y\in [N]\setminus\set{0}}  \ket{\hat{y}}_Y \ket{\bot}_{D^3_x}\ket{\bot}_{D^1_{x\oplus k_1}}\ket{+^n}_{K_3}\sum_{D\in \mathbf D_{inj, t}} \gamma_{y,y,D} \ket{D}_{D^2} \ket{\psi_{y,y,D}} \Biggr)\\
     &=\fc\ppar{2}\Biggl( \sum_{y\in [N]\setminus\set{0}}  \ket{\hat{y}}_Y \ket{\bot}_{D^3_x}\ket{+^n}_{K_3}\sum_{D\in \mathbf D_{inj, t}} \gamma_{y,y,D} \ket{D}_{D^2} \ket{\psi_{y,y,D}}  \frac{1}{\sqrt{N}}\sum_{u\in[N]}\ket{u}_{D^1_{x\oplus k_1}}\Biggr)\\
     &=\fc\ppar{2}\Biggl( \sum_{y\in [N]\setminus\set{0}}  \ket{\hat{y}}_Y \ket{\bot}_{D^3_x}\ket{+^n}_{K_3}\sum_{D\in \mathbf D_{inj, t}} \gamma_{y,y,D} \ket{D}_{D^2} \ket{\psi_{y,y,D}}  \frac{1}{\sqrt{N}}\sum_{u\in J_D}\ket{u}_{D^1_{x\oplus k_1}}\Biggr) + E\\
     &=\sum_{y\in [N]\setminus\set{0}}  \ket{\hat{y}}_Y \ket{\bot}_{D^3_x}\ket{+^n}_{K_3}\sum_{D\in \mathbf D_{inj, t}} \gamma_{y,y,D}\ket{\psi_{y,y,D}} \frac{1}{\sqrt{N}}\sum_{u\in J_D}\ket{u}_{D^1_{x\oplus k_1}} \frac{1}{\sqrt{N}}\sum_{z\in[N]}\ket{z}_{D^2_{u\oplus k_2}} \nonumber\\
     &\quad\otimes\ket{(D_{u \oplus k_2})^c}_{(D^2_{u \oplus k_2})^c}+E \\
     &=\sum_{y\in [N]\setminus\set{0}}  \ket{\hat{y}}_Y \ket{\bot}_{D^3_x}\frac{1}{\sqrt{N}}\sum_{z\in[N]}\sum_{D\in \mathbf D_{inj, t}} \gamma_{y,y,D}\ket{\psi_{y,y,D}} \frac{1}{\sqrt{N}}\sum_{k_3\in[N]}\ket{k_3}_{K_3} \frac{1}{\sqrt{N}}\sum_{u\in J_D}\ket{u}_{D^1_{x\oplus k_1}}  \nonumber\\
     &\quad \otimes\ket{z\oplus k_3}_{D^2_{u\oplus k_2}}\ket{(D_{u \oplus k_2})^c}_{(D^2_{u \oplus k_2})^c}+E\\
        &\eqqcolon \ket{\phi} + E
    \end{align} where \begin{align}
        E\coloneqq \fc\ppar{2} \sum_{y\in [N]\setminus\set{0}}  \ket{\hat{y}}_Y \ket{\bot}_{D^3_x}\ket{+^n}_{K_3}\sum_{D\in \mathbf D_{inj, t}} \gamma_{y,y,D} \ket{D}_{D^2} \ket{\psi_{y,y,D}}  \frac{1}{\sqrt{N}}\sum_{u\not\in J_D}\ket{u}_{D^1_{x\oplus k_1}}.
    \end{align}
    Hence, by defining and using the following term as a stepping stone \begin{align}
       \ket{\phi\ppar{k_3}}&\coloneqq \sum_{y\in [N]\setminus\set{0}}  \ket{\hat{y}}_Y \ket{\bot}_{D^3_x}\frac{1}{\sqrt{N}}\sum_{z\in[N]}\sum_{D\in \mathbf D_{inj, t}} \gamma_{y,y,D}\ket{\psi_{y,y,D}} \frac{1}{\sqrt{N}}\sum_{\substack{k_3\in[N]:\\
       z\oplus k_3 \not\in \im{D}}}\ket{k_3}_{K_3} \frac{1}{\sqrt{N}}\nonumber\\
     &\quad \cdot\sum_{u\in J_D}\ket{u}_{D^1_{x\oplus k_1}}  \ket{z\oplus k_3}_{D^2_{u\oplus k_2}}\ket{(D_{u \oplus k_2})^c}_{(D^2_{u \oplus k_2})^c}.
    \end{align}In words, $\ket{\phi\ppar{k_3}}$ is the same as $\ket{\phi}$ except that the index for $k_3$ is the complement set in order to restore the independence between $z$ and $k_3$. We can express the difference as \begin{align}
       &\norm{(\pu^{em}\vp\Gamma_2- \fc\ppar{2}\fc\ppar{1}\vp\Gamma_2 \cco^{I})\ket{\psi_2} }\\
       &\leq \norm{\ket{\phi}-\ket{\phi\ppar{k_3}}} +\norm{\ket{\phi\ppar{k_3}}-\eqref{eq:psi2-real-term}}+ \norm{E} \\
       &=\Big\lVert \sum_{y\in [N]\setminus\set{0}}  \ket{\hat{y}}_Y \ket{\bot}_{D^3_x}\frac{1}{\sqrt{N}}\sum_{z\in[N]}\sum_{D\in \mathbf D_{inj, t}} \gamma_{y,y,D}\ket{\psi_{y,y,D}} \frac{1}{\sqrt{N}}\sum_{\substack{k_3\in[N]:\\
       z\oplus k_3 \in \im{D}}}\ket{k_3}_{K_3} \frac{1}{\sqrt{N}}\nonumber\\
     &\quad \cdot\sum_{u\in J_D}\ket{u}_{D^1_{x\oplus k_1}}  \ket{z\oplus k_3}_{D^2_{u\oplus k_2}}\ket{(D_{u \oplus k_2})^c}_{(D^2_{u \oplus k_2})^c} \Big\rVert + \Big\lVert \sum_{y\in [N]\setminus\set{0}}  \ket{\hat{y}}_Y \ket{\bot}_{D^3_x}\frac{1}{\sqrt{N}}\sum_{z\in[N]}\nonumber\\
     &\quad\sum_{D\in \mathbf D_{inj, t}} \gamma_{y,y,D}\ket{\psi_{y,y,D}} \frac{1}{\sqrt{N}}\sum_{\substack{k_3\in[N]:\\
       z\oplus k_3 \not\in \im{D}}}\ket{k_3}_{K_3} \Bigl(\frac{1}{\sqrt{\abs{J_D}}} - \frac{1}{\sqrt{N}}\Bigr) \sum_{u\in J_D}\ket{u}_{D^1_{x\oplus k_1}}  \ket{z\oplus k_3}_{D^2_{u\oplus k_2}}\nonumber\\
     &\quad\ket{(D_{u \oplus k_2})^c}_{(D^2_{u \oplus k_2})^c} \Big\rVert + \norm{E}\\
       &\leq \sqrt{ \sum_{y\in [N]\setminus\set{0}} \frac{1}{N} \sum_{z\in[N]}\sum_{D\in \mathbf D_{inj, t}} \abs{\gamma_{y,y,D}}^2 \frac{1}{{N}}\sum_{\substack{k_3\in[N]:\\
       z\oplus k_3 \in \im{D}}  }\frac{1}{N}\sum_{u\in J_D}\norm{\ket{\psi_{y,y,D}}^2}} \nonumber\\
       &\quad + \sqrt{ \sum_{y\in [N]\setminus\set{0}} \frac{1}{N} \sum_{z\in[N]}\sum_{D\in \mathbf D_{inj, t}} \abs{\gamma_{y,y,D}}^2 \frac{1}{{N}}\sum_{\substack{k_3\in[N]:\\
       z\oplus k_3 \not\in \im{D}} }\sum_{u\in J_D} \Big(\frac{1}{\sqrt{\abs{J_D}}}-\frac{1}{\sqrt{N}}\Big)^2\norm{\ket{\psi_{y,y,D}}^2}} \nonumber\\
       &\quad+\sqrt{ \sum_{y\in [N]\setminus\set{0}}  \frac{1}{N} \sum_{z\in[N]}\sum_{D\in \mathbf D_{inj, t}} \abs{\gamma_{y,y,D}}^2 \sum_{u\not\in J_D} \norm{\ket{\psi_{y,y,D}}^2}}\\
       &\leq O(\sqrt{t/N}) \label{eq:p-psi2-hybrids-bound-one}
    \end{align} using that the orthogonality of both the $y$'s and the $D$'s is preserved, that the $\ket{\psi_{y,y,D}}$ are subnormalized, and that $\abs{J_D}\geq N-2t$ in all terms. For the first term, orthogonal $D$ and $D'$ yield orthogonal images on the joint state of $D^1_{x\oplus k_1}$ and $D^2$. To see this, suppose the two branches hold the same value $u$ at $D^1_{x\oplus k_1}$; then $D$ and $D'$ both had $\bot$ at the index $u\oplus k_2$, and after the fill both hold a non-$\bot$ value there. Since $D$ and $D'$ were orthogonal to begin with, they must differ at some other index, and they remain orthogonal there. If instead the two branches hold different values at $D^1_{x\oplus k_1}$, they are already orthogonal on that register. The same argument applies to the second term. For the third term, namely $E$, one can remove $\fc\ppar{2}$ by unitarity of the norm, and then orthogonality of both the $y$'s and the $D$'s is preserved clearly.

\item After being queried, $D^3_x$ changes from $\ket{\bot}$ to non-$\ket{\bot}$ if and only if $Y$ register is in $\ket{\hat{y}}$ s.t. $\ket{\hat{y}}\neq \ket{\hat{0}}$ and $D^3_x$ register is in $\ket{\bot}$. Define
    \begin{align}\label{eq:p-psi3}
        \ket{\psi_3}\coloneqq \ket{+^n}_{K_3} \otimes \sum_{y\in [N]\setminus\set{0}} \ket{\hat{y}}_Y \ket{\bot}_{D^3_x}\ket{\bot}_{D^1_{x\oplus k_1}}\sum_{D\in \mathbf D_{inj, t}} \gamma_{y,\bot,D} \ket{D}_{D^2} \ket{\psi_{y,\bot,D}}
    \end{align} and bound the difference between the hybrids in \eqref{eq:p-hybrids-with-vp} when applied to $\ket{\psi_3}$.

We modify and bound the difference of the hybrids as follows
    \begin{align}
        &\norm{(\cco^{R} \vp\Gamma_2- \vp\Gamma_2 \cco^{I})\ket{\psi_3}}\\
        &=\norm{(\pu^{em}\fc\ppar{2}\fc\ppar{1}\vp\Gamma_2- \fc\ppar{2}\fc\ppar{1}\vp\Gamma_2 \cco^{I})\ket{\psi_3} } \\
&\leq\norm{(\pu^{em}\fc\ppar{2}\fc\ppar{1}\vp\Gamma_2- \vp\Gamma_2 \cco^{I})\ket{\psi_3} }+ \norm{{(\Id - \fc^{(2)}\fc\ppar{1}) \vp \Gamma_2 \cco^{I}\ket{\psi_3}}}\\
&=\norm{(\fc\ppar{2}\fc\ppar{1}\vp\Gamma_2-\pu^{em} \vp\Gamma_2 \cco^{I})\ket{\psi_3} }+ \norm{{(\Id - \fc^{(2)}\fc\ppar{1}) \vp \Gamma_2 \cco^{I}\ket{\psi_3}}}\\
&=\norm{\fc\ppar{2}\fc\ppar{1}\ket{\psi_3}-\pu^{em} \vp\Gamma_2 \cco^{I}\ket{\psi_3} }+ \norm{{(\Id - \fc^{(2)}\fc\ppar{1}) \vp \Gamma_2 \cco^{I}\ket{\psi_3}}}\label{eq:modified-bound-on-hybrid-psi3}
\end{align}

Observe that setting $\gamma_{y,y,D}$ to $\gamma_{y,\bot,D}$ in \eqref{eq:p-psi2}, we get $\cco^{I}\ket{\psi_3}=\ket{\psi_2}$, equivalently, $\ket{\psi_3}=\cco^{I}\ket{\psi_2}$ (meaning double query on the same input uncomputes the database entry). Replacing this in \eqref{eq:modified-bound-on-hybrid-psi3} gives \begin{align}
    &\norm{(\cco^{R} \vp\Gamma_2- \vp\Gamma_2 \cco^{I})\ket{\psi_3}}\\
&\leq \norm{\fc\ppar{2}\fc\ppar{1}\cco^{I}\ket{\psi_2}-\pu^{em} \vp\Gamma_2 \ket{\psi_2} }+ \norm{{(\Id - \fc^{(2)}\fc\ppar{1}) \vp \Gamma_2 \ket{\psi_2}}}\\
&\leq O(\sqrt{t/N}) 
\end{align} where in the last line the first term follows from \eqref{eq:p-psi2-hybrids-bound-one} and the second follows from \Cref{lem:p3-decomp-divergence}. 
   
\item After being queried, $D^3_x$ changes from non-$\ket{\bot}$ (excluding $\ket{\hat{0}}$) to non-$\ket{\bot}$ if and only if $D^3_x$ register is in non-$\ket{\bot}$, say $\ket{\hat{y'}}$ and $Y$ register is in $\ket{\hat{y'}}$ s.t. $\ket{\hat{y'}}\neq \ket{\hat{y}}$. Define
    \begin{align}\label{eq:p-psi4}
        \ket{\psi_4}\coloneqq \sum_{y\in[N],y'\in [N]\setminus\set{0}:y\neq y'}  \ket{\hat{y}}_Y \ket{\hat{y'}}_{D^3_x}\ket{\bot}_{D^1_{x\oplus k_1}}\ket{+^n}_{K_3}\sum_{D\in \mathbf D_{inj, t}} \gamma_{y,y',D} \ket{D}_{D^2} \ket{\psi_{y,y',D}}
    \end{align} and bound the difference between the hybrids in \eqref{eq:p-hybrids-with-vp} when applied to $\ket{\psi_4}$.

 We modify and bound the difference of the hybrids as follows
\begin{align}
    &\norm{(\cco^{R} \vp\Gamma_2- \vp\Gamma_2 \cco^{I})\ket{\psi_4}}\\
    &= \norm{(\pu^{em}\fc\ppar{2}\fc\ppar{1} \vp\Gamma_2- \fc\ppar{2}\fc\ppar{1}\vp\Gamma_2 \cco^{I})\ket{\psi_4}}\\
    &= \norm{(\pu^{em}\fc\ppar{2}\fc\ppar{1} \vp\Gamma_2- \vp\Gamma_2 \cco^{I})\ket{\psi_4}} + \norm{(\Id-\fc\ppar{2}\fc\ppar{1})\vp\Gamma_2 \cco^{I}\ket{\psi_4}} \\
    &= \norm{(\pu^{em}\vp\Gamma_2- \vp\Gamma_2 \cco^{I})\ket{\psi_4}} + \norm{(\Id-\fc\ppar{2}\fc\ppar{1})\vp\Gamma_2 \cco^{I}\ket{\psi_4}}\nonumber\\
    &\quad + \norm{(\Id-\fc\ppar{2}\fc\ppar{1}) \vp\Gamma_2\ket{\psi_4}} \label{eq:p-hybrids-psi4}\\
    &\leq \norm{(\pu^{em}\vp\Gamma_2- \vp\Gamma_2 \cco^{I})\ket{\psi_4}} + O(\sqrt{t/N})
\end{align} where the last line follows by \Cref{lem:p3-decomp-divergence}. For the last term it is clear that the lemma applies. For the middle term, notice that \begin{align}
    \cco^{I}\ket{\psi_4}=\ket{+^n}_{K_3}\otimes\sum_{y\in[N],y'\in [N]\setminus\set{0}:y\neq y'}  \ket{\hat{y}}_Y \ket{\widehat{y\oplus y'}}_{D^3_x}\ket{\bot}_{D^1_{x\oplus k_1}}\sum_{D\in \mathbf D_{inj, t}} \gamma_{y,y',D} \ket{D}_{D^2} \ket{\psi_{y,y',D}} \label{eq:p3-co-r-psi4}
\end{align} so that the joint state of the registers $Y$ and $D^3_x$ remains orthogonal
across distinct pairs $(y,y')$. We bound the remaining term in \eqref{eq:p-hybrids-psi4} by writing the each term in the norm separately, the first one is written by adapting \eqref{eq:vp-gamma2-psi2-reference} for $\ket{\psi_4}$\footnote{The adaptation consists only of changing the coefficients set by the index of the first sum from $\gamma_{y,y,D}$ to $\gamma_{y,y',D}$, and denoting the queried database entry by $\hat{y'}$ rather than $\hat{y}$.}, \begin{align}
    &\pu^{em}\vp\Gamma_2 \ket{\psi_4}\\
&=\pu^{em}\sum_{\substack{y\in[N],y'\in [N]\setminus\set{0}:\\y\neq y'}}  \ket{\hat{y}}_Y \ket{\bot}_{D^3_x}
  \frac{1}{\sqrt{N}}\sum_{z\in [N]}  \sum_{\substack{D\in \mathbf D_{inj, t}}} \gamma_{y,y',D}\ket{\psi_{y\in[N],y',D}} \frac{1}{\sqrt{N}} \sum_{\substack{k_3\in[N]:\\
  z\oplus k_3 \not \in \im{D}}}\ket{k_3}_{K_3}\nonumber\\
  &\quad \cdot \frac{1}{\sqrt{\abs{J_D}}} \sum_{u\in J_D} (-1)^{y'.z}\ket{u}_{D^1_{x\oplus k_1}}   \ket{z\oplus k_3}_{D^2_{u \oplus k_2}}\ket{(D_{u \oplus k_2})^c}_{(D^2_{u \oplus k_2})^c}\\
  &=\sum_{\substack{y\in[N],y'\in [N]\setminus\set{0}:\\y\neq y'}}   \ket{\bot}_{D^3_x}
  \frac{1}{\sqrt{N}}\sum_{z\in [N]}  \sum_{\substack{D\in \mathbf D_{inj, t}}} \gamma_{y,y',D}\ket{\psi_{y,y',D}} \frac{1}{\sqrt{N}} \sum_{\substack{k_3\in[N]:\\
  z\oplus k_3 \not \in \im{D}}}\ket{k_3}_{K_3} \frac{1}{\sqrt{\abs{J_D}}} \sum_{u\in J_D} \ket{u}_{D^1_{x\oplus k_1}}\nonumber\\
  &\quad \otimes \ket{(D_{u \oplus k_2})^c}_{(D^2_{u \oplus k_2})^c} \frac{1}{\sqrt{N}}\sum_{v\in[N]}(-1)^{y.v} \ket{v\oplus(z\oplus k_3)\oplus k_3}_Y (-1)^{y'.z}\ket{z\oplus k_3}_{D^2_{u \oplus k_2}}\\
  &=\sum_{\substack{y\in[N],y'\in [N]\setminus\set{0}:\\y\neq y'}}   \ket{\bot}_{D^3_x}
  \frac{1}{\sqrt{N}}\sum_{z\in [N]}  \sum_{\substack{D\in \mathbf D_{inj, t}}} \gamma_{y,y',D} \ket{\psi_{y,y',D}}\frac{1}{\sqrt{N}} \sum_{\substack{k_3\in[N]:\\
  z\oplus k_3 \not \in \im{D}}}\ket{k_3}_{K_3} \frac{1}{\sqrt{\abs{J_D}}} \sum_{u\in J_D} \ket{u}_{D^1_{x\oplus k_1}} \nonumber\\
  &\quad \otimes\ket{(D_{u \oplus k_2})^c}_{(D^2_{u \oplus k_2})^c} \frac{1}{\sqrt{N}}\sum_{v\in[N]}(-1)^{y.(v\oplus z)} \ket{v\oplus z}_Y (-1)^{(y'\oplus y).z}\ket{z\oplus k_3}_{D^2_{u \oplus k_2}}\\
  &=\sum_{\substack{y\in[N],y'\in [N]\setminus\set{0}:\\y\neq y'}}   \ket{\hat{y}}_Y\ket{\bot}_{D^3_x}
  \frac{1}{\sqrt{N}}\sum_{z\in [N]}  \sum_{\substack{D\in \mathbf D_{inj, t}}} \gamma_{y,y',D}\ket{\psi_{y,y',D}} \frac{1}{\sqrt{N}} \sum_{\substack{k_3\in[N]:\\
  z\oplus k_3 \not \in \im{D}}}\ket{k_3}_{K_3}\nonumber\\
  &\quad \cdot \frac{1}{\sqrt{\abs{J_D}}} \sum_{u\in J_D}  (-1)^{(y'\oplus y).z}\ket{u}_{D^1_{x\oplus k_1}} \ket{z\oplus k_3}_{D^2_{u \oplus k_2}} \ket{(D_{u \oplus k_2})^c}_{(D^2_{u \oplus k_2})^c}, 
\end{align}and similarly, by adapting \eqref{eq:vp-gamma2-psi2-reference} for $\cco^I\ket{\psi_4}$ given in $\eqref{eq:p3-co-r-psi4}$, the second term is \begin{align}
 \vp \Gamma_2 \cco^{I}\ket{\psi_4} 
    &=\sum_{\substack{y\in[N],y'\in [N]\setminus\set{0}:\\y\neq y'}}  \ket{\hat{y}}_Y \ket{\bot}_{D^3_x}
  \frac{1}{\sqrt{N}}\sum_{z\in [N]}  \sum_{\substack{D\in \mathbf D_{inj, t}}} \gamma_{y,y',D} \frac{1}{\sqrt{N}} \sum_{\substack{k_3\in[N]:\\
  z\oplus k_3 \not \in \im{D}}}\ket{k_3}_{K_3}\nonumber\\
  &\quad \cdot \frac{1}{\sqrt{\abs{J_D}}} \sum_{u\in J_D} (-1)^{(y'\oplus y).z}\ket{u}_{D^1_{x\oplus k_1}}  \ket{z\oplus k_3}_{D^2_{u \oplus k_2}}\ket{(D_{u \oplus k_2})^c}_{(D^2_{u \oplus k_2})^c}.
\end{align}
Hence, their difference becomes zero, i.e., \begin{align}
    \norm{(\pu^{em}\vp\Gamma_2- \vp\Gamma_2 \cco^{I})\ket{\psi_4}}=0.
\end{align}
\item  Assume $D^3_x=\ket{\hat{0}}$, and define \begin{align}\label{eq:p-psi5}
     \ket{\psi_5}\coloneqq \sum_{y\in [N]}\ket{\hat{y}}_Y \ket{\hat{0}}_{D^3_x}\ket{\bot}_{D^1_{x\oplus k_1}}\ket{+^n}_{K_3}\sum_{D\in \mathbf D_{inj, t}} \gamma_{y,0,D} \ket{D}_{D^2} \ket{\psi_{y,0,D}}
    \end{align} and bound the difference between the hybrids in \eqref{eq:p-hybrids-with-vp} when applied to $\ket{\psi_5}$. We modify and bound the difference of the hybrids as follows
    \begin{align}
        &\norm{(\cco^{R}\vp\Gamma_2 - \vp\Gamma_2\cco^{I}) \ket{\psi_5}}\\
&=\lVert(\pu^{em}\fc\ppar{2}\fc\ppar{1}\vp\Gamma_2- \fc\ppar{2}\fc\ppar{1}\vp\Gamma_2\cco^{I})\ket{\psi_5}\rVert\\
&=\lVert(\pu^{em}-\Id)\fc\ppar{2}\fc\ppar{1}\vp\Gamma_2\ket{\psi_5}\rVert \quad\text{(Since $\cco^{I}\ket{\psi_5}=\ket{\psi_5}$)}\nonumber\\
&=\norm{(\pu^{em}-\Id)\fc\ppar{2}\fc\ppar{1}\vp\Gamma_2\ket{\psi_5}\pm (\pu^{em}-\Id)\fc\ppar{2}\vp\Gamma_2\ket{\psi_5}}\\
&\leq \norm{(\pu^{em}-\Id)\fc\ppar{2}\vp\Gamma_2\ket{\psi_5}} +\norm{ (\pu^{em}-\Id)\fc\ppar{2}(\fc\ppar{1}-\Id)\vp\Gamma_2\ket{\psi_5}}\\
&\leq \norm{(\pu^{em}-\Id)\fc\ppar{2}\vp\Gamma_2\ket{\psi_5}} +2 \norm{ (\fc\ppar{1}-\Id)\vp\Gamma_2\ket{\psi_5}}\\
& \leq \norm{(\pu^{em}-\Id)\fc\ppar{2}\vp\Gamma_2\ket{\psi_5}\pm (\pu^{em}-\Id)\fc\ppar{2}\ket{\phi^{k_3}}} + O(\sqrt{t/N})\quad \text{(By \Cref{lem:p3-decomp-divergence})}\\
& \leq \norm{(\pu^{em}-\Id)\fc\ppar{2}(\vp\Gamma_2\ket{\psi_5} + \ket{\phi^{k_3}})} + \norm{ (\pu^{em}-\Id)\fc\ppar{2}\ket{\phi^{k_3}} }+2 \norm{ (\fc\ppar{1}-\Id)\vp\Gamma_2\ket{\psi_5}}\\
& \leq \norm{(\pu^{em}-\Id)\fc\ppar{2}(\vp\Gamma_2\ket{\psi_5} + \ket{\phi^{k_3}})} + 2\norm{\ket{\phi^{k_3}} } +2 \norm{ (\fc\ppar{1}-\Id)\vp\Gamma_2\ket{\psi_5}}
\label{eq:p-psi5-split}
    \end{align} where \begin{align}
        \ket{\phi^{k_3}}\coloneqq &\sum_{y\in [N]} \ket{\hat{y}}_Y \ket{\bot}_{D^3_x}
  \frac{1}{\sqrt{N}}\sum_{z\in [N]}  \sum_{\substack{D\in \mathbf D_{inj, t}}} \gamma_{y,0,D} \ket{\psi_{y,0,D}}\frac{1}{\sqrt{N}} \sum_{\substack{k_3\in[N]:\\
  z\oplus k_3 \in \im{D}}}\ket{k_3}_{K_3}\nonumber\\
  &\quad \cdot \frac{1}{\sqrt{\abs{J_D}}} \sum_{u\in J_D} \ket{u}_{D^1_{x\oplus k_1}} \ket{z\oplus k_3}_{D^2_{u \oplus k_2}}\ket{(D_{u \oplus k_2})^c}_{(D^2_{u \oplus k_2})^c} .
    \end{align} In words, $\ket{\phi^{k_3}}$ is the same as $\vp\Gamma_2\ket{\psi_5}$ except that the index for $k_3$ is the complement set in order to restore the independence between $z$ and $k_3$. We start by writing state of the first term in \Cref{eq:p-psi5-split}. By adapting \eqref{eq:vp-gamma2-psi2-reference} for $\ket{\psi_5}$,\footnote{The adaptation consists only of extending the index $y$ of the first summation to include $0$, and changing $y,y$ to $y,0$, and hence denoting the queried database entry by $\hat{0}$ rather than $\hat{y}$.}\begin{align}
\vp\Gamma_2\ket{\psi_5}
    &=\sum_{y\in [N]}  \ket{\hat{y}}_Y \ket{\bot}_{D^3_x}
  \frac{1}{\sqrt{N}}\sum_{z\in [N]}  \sum_{\substack{D\in \mathbf D_{inj, t}}} \gamma_{y,0,D} \ket{\psi_{y,0,D}}\frac{1}{\sqrt{N}} \sum_{\substack{k_3\in[N]:\\
  z\oplus k_3 \not \in \im{D}}}\ket{k_3}_{K_3}\nonumber\\
  &\quad \cdot \frac{1}{\sqrt{\abs{J_D}}} \sum_{u\in J_D} \ket{u}_{D^1_{x\oplus k_1}} \ket{z\oplus k_3}_{D^2_{u \oplus k_2}}\ket{(D_{u \oplus k_2})^c}_{(D^2_{u \oplus k_2})^c}, 
\end{align} then \begin{align}
    \vp\Gamma_2\ket{\psi_5} + \ket{\phi^{k_3}}&=\sum_{y\in [N]}  \ket{\hat{y}}_Y \ket{\bot}_{D^3_x}
  \frac{1}{\sqrt{N}}\sum_{z\in [N]}  \sum_{\substack{D\in \mathbf D_{inj, t}}} \gamma_{y,0,D} \ket{\psi_{y,0,D}}\frac{1}{\sqrt{N}} \sum_{k_3\in[N]}\ket{k_3}_{K_3}\nonumber\\
  &\quad \cdot \frac{1}{\sqrt{\abs{J_D}}} \sum_{u\in J_D} \ket{u}_{D^1_{x\oplus k_1}}  \ket{z\oplus k_3}_{D^2_{u \oplus k_2}}\ket{(D_{u \oplus k_2})^c}_{(D^2_{u \oplus k_2})^c}\\
  &=\sum_{y\in [N]}  \ket{\hat{y}}_Y \ket{\bot}_{D^3_x}
    \sum_{\substack{D\in \mathbf D_{inj, t}}} \gamma_{y,0,D} \ket{\psi_{y,0,D}} \frac{1}{\sqrt{N}} \sum_{k_3\in[N]}\ket{k_3}_{K_3}\nonumber\\
  &\quad \cdot \frac{1}{\sqrt{\abs{J_D}}} \sum_{u\in J_D} \ket{u}_{D^1_{x\oplus k_1}}  \Bigl( \frac{1}{\sqrt{N}}\sum_{z\in [N]}\ket{z\oplus k_3}_{D^2_{u \oplus k_2}}\Bigr)\ket{(D_{u \oplus k_2})^c}_{(D^2_{u \oplus k_2})^c}\\
  &=\sum_{y\in [N]}  \ket{\hat{y}}_Y \ket{\bot}_{D^3_x}
    \sum_{\substack{D\in \mathbf D_{inj, t}}} \gamma_{y,0,D}\ket{\psi_{y,0,D}} \frac{1}{\sqrt{N}} \sum_{k_3\in[N]}\ket{k_3}_{K_3}\nonumber\\
  &\quad \cdot \frac{1}{\sqrt{\abs{J_D}}} \sum_{u\in J_D} \ket{u}_{D^1_{x\oplus k_1}}  \ket{+^n}_{D^2_{u \oplus k_2}}\ket{(D_{u \oplus k_2})^c}_{(D^2_{u \oplus k_2})^c},
\end{align} applying $\fc\ppar{2}$ to this gives \begin{align}
  &=\sum_{y\in [N]}  \ket{\hat{y}}_Y \ket{\bot}_{D^3_x}
    \sum_{\substack{D\in \mathbf D_{inj, t}}} \gamma_{y,0,D}\ket{\psi_{y,0,D}} \frac{1}{\sqrt{N}} \sum_{k_3\in[N]}\ket{k_3}_{K_3}\nonumber\\
  &\quad \cdot \frac{1}{\sqrt{\abs{J_D}}} \sum_{u\in J_D} \ket{u}_{D^1_{x\oplus k_1}} \ket{\bot}_{D^2_{u \oplus k_2}}\ket{(D_{u \oplus k_2})^c}_{(D^2_{u \oplus k_2})^c} .  
\end{align} This means the first term in \Cref{eq:p-psi5-split} becomes 0 since definition of $\pu^{em}$ is extended to act as identity on $\bot$ state.

Finally, we bound the second term in \Cref{eq:p-psi5-split} as \begin{align}
\norm{\ket{\phi^{k_3}}}&=\sqrt{\sum_{y\in [N]}\frac{1}{N}\sum_{z\in[N]} \sum_{\substack{D\in \mathbf D_{inj, t}}} \abs{\gamma_{y,0,D}}^2 \frac{1}{N} \sum_{\substack{k_3\in[N]:\\
  z\oplus k_3 \in \im{D}}} \frac{1}{\abs{J_D}} \sum_{u\in J_D}\norm{\ket{\psi_{y,0,D}}}^2}\\
  &\leq O(\sqrt{t/N}).
\end{align}

\end{enumerate}

\end{proof}

\end{document}